\documentclass[12pt]{article}
\usepackage{amsmath}
\usepackage{graphicx}
\usepackage{enumerate}
\usepackage{url} 

\newcommand{\blind}{1}

\addtolength{\oddsidemargin}{-.5in}%
\addtolength{\evensidemargin}{-1in}%
\addtolength{\textwidth}{1in}%
\addtolength{\textheight}{1.7in}%
\addtolength{\topmargin}{-1in}%

\usepackage{graphicx}
\usepackage{amssymb}
\usepackage{amsmath}
\usepackage{amsfonts}
\usepackage{amsthm}
\usepackage{mathtools}
\usepackage{algorithm}
\usepackage{algcompatible}
\usepackage{thmtools,thm-restate}
\usepackage[affil-it]{authblk}
\usepackage[square,numbers,compress]{natbib}
\usepackage{nicefrac}
\usepackage{comment}
\usepackage{ifpdf}
\usepackage{bm}
\usepackage[table]{xcolor}
\usepackage{color}
\usepackage[displaymath,mathlines]{lineno}
\usepackage{scalerel}
\usepackage{stackengine,wasysym}
\usepackage{hyperref}
\usepackage[stable]{footmisc}
\usepackage{enumerate}
\usepackage{tikz-cd}
\usetikzlibrary{arrows,positioning}
\tikzset{
  shift left/.style ={commutative diagrams/shift left={#1}},
  shift right/.style={commutative diagrams/shift right={#1}}
}
\usetikzlibrary{matrix}
\usepackage{subcaption}
\usepackage{enumitem}
\usepackage{booktabs}
\usepackage{rotating}
\usepackage{soul,xcolor}
\usepackage{cancel}

\usepackage{hhline} 
\ProvidesPackage{highlight}[Cell background highlighting based on user data]
\RequirePackage{etoolbox}
\RequirePackage{pgf} 
\RequirePackage{xcolor} 
\definecolor{high}{gray}{0.3}  
\definecolor{low}{gray}{0.9}   
\newcommand*{\opacity}{90}
\newcommand*{\minval}{0.0}
\newcommand*{\maxval}{1.0}
\newcommand{\gradient}[1]{
    \ifdimcomp{#1pt}{>}{\maxval pt}{#1}{
    \ifdimcomp{#1pt}{<}{\minval pt}{#1}{
         \pgfmathparse{int(round(100*(#1/(\maxval-\minval))-(\minval*(100/(\maxval-\minval)))))}
        \xdef\tempa{\pgfmathresult}
        \cellcolor{high!\tempa!low!\opacity} #1
    }}
 }


\newtheorem{thm}{Theorem}[section]

\newtheorem{lem}{Lemma}[section]
\newtheorem{cor}{Corollary}[section]
\newtheorem{prop}{Proposition}[section]

\newtheorem{dfn}{Definition}
\newtheorem{asump}{Assumption}

\newtheorem{rk}{Remark}

\newcommand{\vect}[1]{\boldsymbol{#1}}
\newcommand{\tp}[1]{{#1}^{\mathsf T}}
\renewcommand{\bar}{\overline}
\newcommand{\tr}{\mathrm{tr}}

\DeclareMathAlphabet\mathbfcal{OMS}{cmsy}{b}{n}
\DeclareMathOperator{\diag}{diag}
\newcommand{\half}{\frac12}
\newcommand{\mbP}{\mathbb P}
\newcommand{\mbE}{\mathbb E}
\newcommand{\E}{\mathrm E}

\newcommand{\mCv}{\mathrm{Cov}}

\newcommand{\eps}{\epsilon}

\renewcommand{\eps}{\varepsilon}
\renewcommand{\epsilon}{\varepsilon}
\renewcommand{\Sigma}{\varSigma}

\newcommand{\bmu}{\vect\mu}
\newcommand{\bxi}{\vect\xi}
\newcommand{\bXi}{{\boldsymbol{\Xi}}}

\newcommand{\bPhi}{{\boldsymbol{\Phi}}}
\newcommand{\bgamma}{{\boldsymbol{\gamma}}}

\newcommand{\bzeta}{\vect\zeta}
\newcommand{\Zeta}{\mathrm{Z}}
\newcommand{\bZeta}{\vect\Zeta}

\newcommand{\bzero}{{\bf 0}}
\newcommand{\bu}{{\bf u}}
\newcommand{\bU}{{\bf U}}
\newcommand{\by}{{\bf y}}
\newcommand{\bY}{{\bf Y}}
\newcommand{\bx}{{\bf x}}
\newcommand{\bX}{{\bf X}}
\newcommand{\bt}{{\bf t}}
\newcommand{\bz}{{\bf z}}
\newcommand{\bZ}{{\bf Z}}

\newcommand{\bC}{{\bf C}}
\newcommand{\bI}{{\bf I}}
\newcommand{\bL}{{\bf L}}

\newcommand{\mC}{{\mathcal C}}
\newcommand{\mG}{{\mathcal G}}
\newcommand{\mK}{\mathcal{K}}
\newcommand{\mH}{\mathcal{H}}
\newcommand{\mI}{{\mathcal I}}

\newcommand{\mN}{{\mathcal N}}

\newcommand{\mS}{{\mathcal S}}
\newcommand{\mX}{{\mathcal X}}
\newcommand{\mT}{{\mathcal T}}
\newcommand{\mZ}{{\mathcal Z}}

\newcommand{\mbX}{{\mathbb X}}
\newcommand{\mbY}{{\mathbb Y}}
\newcommand{\mbT}{{\mathbb T}}

\newcommand{\mbN}{{\mathbb N}}
\newcommand{\mbQ}{{\mathbb Q}}
\newcommand{\mbR}{{\mathbb R}}

\newcommand{\qED}{\mathrm{q}\!-\!\mathrm{ED}}

\newcommand{\qEP}{\mathrm{q}\!-\!\mathcal{EP}}
\newcommand{\ep}{\mathrm{EP}}

\newcommand{\GP}{\mathcal{GP}}
\newcommand{\mB}{{\mathcal B}}
\newcommand{\STBP}{\mathcal{STBP}}
\newcommand{\STGP}{\mathcal{STGP}}

\ifpdf
   \graphicspath{{./figure/PNG/}{./figure/PDF/}{./figure/}}
\else
   \graphicspath{{./figure/EPS/}{./figure/}}
\fi

\begin{document}

\def\spacingset#1{\renewcommand{\baselinestretch}%
{#1}\small\normalsize} \spacingset{1}


\if1\blind
{
  \title{\bf Spatiotemporal Besov Priors for Bayesian Inverse Problems}
  \author[1]{Shiwei Lan \thanks{The corresponding author. Email: \texttt{slan@asu.edu}.}}
  \author[2]{Mirjeta Pasha}
  \author[1]{Shuyi Li}
  \author[3]{Weining Shen}
  \affil[1]{School of Mathematical \& Statistical Sciences, Arizona State University}
  \affil[2]{Department of Mathematics, Virginia Polytechnic and State University}
  \affil[3]{Department of Statistics, University of California - Irvine}
    \date{}
  \maketitle
} \fi

\if0\blind
{
  \bigskip
  \bigskip
  \bigskip
  \begin{center}
    {\LARGE\bf Spatiotemporal Besov Priors for Bayesian Inverse Problems}
\end{center}
  \medskip
} \fi

\bigskip
\begin{abstract}
Fast development in science and technology has driven the need for proper statistical tools to capture special data features such as abrupt changes or sharp contrast.
Many inverse problems in data science require spatiotemporal solutions derived from a sequence of time-dependent objects with these spatial features, 
e.g., the dynamic reconstruction of computerized tomography (CT) images with edges.
Conventional methods based on Gaussian processes (GP) often fall short in providing satisfactory solutions since they tend to offer oversmooth priors. Recently, the Besov process (BP), defined by wavelet expansions with random coefficients, has emerged as a more suitable prior for Bayesian inverse problems of this nature.  While BP excels in handling spatial inhomogeneity, it does not automatically incorporate temporal correlation inherited in the dynamically changing objects.
In this paper, we generalize BP to a novel spatiotemporal Besov process (STBP) by replacing the random coefficients in the series expansion with stochastic time functions as Q-exponential process (Q-EP) which governs the temporal correlation structure.
We thoroughly investigate the mathematical and statistical properties of STBP.  
Simulations, two limited-angle CT reconstruction examples, a highly non-linear inverse problem involving Navier-Stokes equation, and a spatiotemporal temperature imputation problem are used to demonstrate the advantage of the proposed STBP 
compared with the classic STGP and a time-uncorrelated approach.
\end{abstract}

\noindent%
{\it Keywords:}  Spatiotemporal functional data analysis, Inhomogeneous data, $L_q$ regularization, Q-Exponential process, Edge-preserving priors

\spacingset{1.67} 

\section{Introduction}
Many modern science and engineering applications are presented as inverse problems whose main goal is to recover parameters of interest from observed data. These data may possess inhomogeneity in the sense that certain portions differ from others significantly. For example, some medical images exhibit sharp edges where properties undergo dramatic changes. Furthermore, these complex datasets may be spatiotemporal, with solutions extending across both space and time. One of the key challenges in solving these types of inverse problems is to  effectively capture the distinctive characteristics of high-dimensional (potentially infinite-dimensional) objects using limited data. Surging need has been posed for statistical methodology to appropriately impose regularization or fill in prior information for these ill-posed inverse problems in order to construct meaningful solutions. 

In nonparametric statistics, Gaussian process (GP) \citep{Rasmussen_2005} 
has been widely used as an $L_2$ penalty 
or a prior on the function space. However, despite their flexibility, random functions generated from GPs often exhibit excessive smoothing, which is not ideal for modeling heterogeneous objects such as images with sharp edges (See Figure \ref{fig:simulation_MAP_whiten_q} for illustration). 
To address this issue, researchers have proposed a class of $L_1$ penalty based priors including Laplace random field \citep{
KOZUBOWSKI_2013,
Atchade_2017} and Besov process (BP) \citep{Lassas_2009,Dashti_2012,
dashti2017}. 
There are also many heavy-tailed priors 
such as Cauchy \citep{
Suuronen2022}, total variation (TV) \citep{
Yao_2016}, and those constructed by normal variance mixture \citep{
Bolin_2020}, and data-informed priors based on level set functions \citep{
dunlop2016map} 
proposed for handling inhomogeneity.
These approaches have found significant applications in 
signal processing \citep{
KOZUBOWSKI_2013}, imaging analysis \citep{
Pasha_2023} 
and inverse problems \citep{Dashti_2012}.

In spatiotemporal modeling, GP (STGP) has long been used as a flexible prior to capture space-time interactions. A large class of models bear a non-separable kernel structure constructed by parametric functions \citep{Cressie_1999, Gneiting_2002}, spectral representation \citep{Fuentes_2008}, kernel convolution \citep{marco2015,Wang_2020} or mixing \citep{Fonseca_2011}, and nonparametric hierarchical modeling \citep{Zhang_2020, Lan_2022}. While capable of characterizing the spatiotemporal relationship in data, these GP based priors tend to oversmooth spatial features due to their $L_2$ nature.

On the other hand, most of the sparsity-promoting and edge-preserving priors in the literature work well in characterizing spatial inhomogeneity, but few are tailored to specifically address spatiotemporal targets and their temporal correlations. 
In this paper, we focus on the BP proposed by \cite{Lassas_2009} for imaging analysis and generalize it to the spatiotemporal domain. Historically, \cite{lassas2004can} discovered that the TV prior degenerates to a Gaussian prior as the discretization mesh becomes denser and loses the edge-preserving properties in high-dimensional applications. Therefore, \cite{Lassas_2009} proposed the BP prior defined using wavelet basis and random coefficients following a (univariate) $q$-exponential distribution and proved its discretization-invariant property. 
Recently, \cite{Li_2023} introduced a stochastic process based on a consistent multivariate generalization of the $q$-exponential distribution, hence named the $Q$-exponential process (Q-EP). The Q-EP can be viewed as an explicit probabilistic definition of BP with direct control on the correlation structure and tractable prediction formula. 
In this paper, we propose a novel \emph{spatiotemporal Besov process (STBP)} by replacing the (univariate) $q$-exponential random coefficients in the series definition of BP with stochastic time functions as Q-EP.
The proposed STBP offers a flexible prior in modeling functional data with spatial features while controlling the temporal correlations explicitly through a covariance kernel. Similarly as BP, STBP also includes spatiotemporal GP (STGP) as a special case for $q=2$ (See Figure \ref{fig:relationship} for their relationship).
Since our motivation is to model heterogeneous data with priors imposing sharp $L_q$ regularization, we focus on $1\leq q\leq 2$ in this paper (See Figure \ref{fig:simulation_MAP_whiten_q} for the regularization effect of parameter $q$).
To the best of our knowledge, this is by far the first spatiotemporal generalization of BP. 

To justify STBP as a working prior for spatiotemporal inverse problems, Bayes theorem in this setting is re-examined based on \cite{Dashti_2012,dashti2017}. With proper assumptions on the likelihood, the posterior contraction theorems for Bayesian inverse models with STBP priors are established based on \cite{ghosal2007,vanderVaart08,Agapiou_2021}. Posterior contraction properties with Gaussian priors have been extensively studied by \cite{vanderVaart08,vandervaart09,
Ghosal_2017} for regression and classification, density estimation, white noise models, and Bayesian linear inverse problems \citep{knapik2011bayesian} and nonlinear inverse problems \citep{Vollmer_2013}. 
\cite{Gine_2011} also studied the posterior contraction of density estimation for a class of $L^r$-metrics ($1\leq r\leq\infty$) based priors  including GP, wavelet series and normal mixture.
\cite{Agapiou_2021} studied the posterior contraction theorems for a re-branded BP named $p$-exponential process (which only differs from BP by a constant in the univariate $q$-exponential distribution) for density estimation and white noise model.
Other works on the posterior contraction of Besov-type priors include \cite{Rivoirard_2012,
Gao_2020}. 
Compared with the existing literature, our theoretic results are novel in terms of: 1) generalization to spatiotemporal models for inverse problems; 2) simplified contraction rates given in a class of Besov-type spaces contained in $L^q$ spaces.

To facilitate Bayesian inference for models with STBP priors, we introduce a novel white noise representation \citep{chen2018dimension} of the random function drawn from STBP and take advantage of dimension-independent MCMC algorithms \citep{beskos2017} for efficient implementation.
The numerical advantages of the proposed STBP over STGP have been supported by multiple experiments in dynamic CT reconstruction, highly nonlinear inverse problems, and spatiotemporal imputation.
Our proposed work on STBP has multiple contributions to the literature of spatiotemporal inverse problems:
\begin{enumerate}[itemsep=1pt]
\item It generalizes BP to the spatiotemporal domain to simultaneously model the spatial inhomogeneity and the temporal correlations.
\item It provides theoretic characterization on the posterior contraction in the infinite data limit, justifying its validity as a nonparametric learning tool.
\item It demonstrates utility in spatiotemporal modeling inhomogeneous data (dynamic CT reconstruction) and indicates broader impact on imaging analysis.
\end{enumerate}

The rest of the paper is organized as follows. Section \ref{sec:bkgd} provides a background review on the Bayesian inverse problems and BP used as a flexible edge-preserving prior. Section \ref{sec:QEP} introduces Q-EP as random coefficient functions on the time domain. We then formally define STBP and study its theoretic properties in Section \ref{sec:STBP}. In Section \ref{sec:inference} we describe a white noise representation of STBP that facilitates the inference for models with STBP prior. In Section \ref{sec:numerics} we demonstrate the advantage of the proposed STBP prior in retaining spatial features and capturing temporal correlations for the spatiotemporal inverse problems using a simulated regression, two dynamic CT reconstruction examples, a nonlinear inverse problem involving Navier-Stokes equation, and a spatiotemporal temperature imputation. Finally we conclude with some discussion on future research in Section \ref{sec:conclusion}.

\section{Background on Besov priors for Inverse Problems}\label{sec:bkgd}
The Bayesian approach to inverse problems \citep{dashti2017} has gained increasing popularity because it provides a natural framework for model calibration and uncertainty quantification (UQ). In this section, we review some background about Bayesian inverse problems and BP as a flexible prior for modeling objects with spatial features.

\subsection{Bayesian Inverse Problems}
We consider the inverse problem of recovering an unknown parameter $u\in \mbX$ from a noisy observation $y\in \mbY$ based on the following Bayesian model
\begin{equation}\label{eq:bip}
\begin{aligned}
y &= \mG(u) + \eta , \quad \eta \sim \mbQ_0 ,\\
u &\sim \Pi ,
\end{aligned}
\end{equation}
where both $\mbX$ and $\mbY$ are separable Banach spaces,
$\mG:\mbX \to \mbY$ is a forward mapping from the parameter space $\mbX$ to the data space $\mbY$, and $\eta\in \mbY$ denotes the random noise whose distribution $\mbQ_0$ is independent from the prior $\Pi$.
We assume the conditional $y|u$ is distributed according to the measure $\mbQ_u\ll \mbQ_0$ for $u$, $\Pi$-almost surely (a.s.), and hence define the potential (negative log-likelihood) function $\Phi:\mbX\times\mbY\to\mbR$:
\begin{equation}\label{eq:potential}
\frac{d\mbQ_u}{d\mbQ_0}(y) = \exp(-\Phi(u;y)).
\end{equation}
The objective of Bayesian inverse problem is to seek the posterior solution of $u|y$ whose distribution, denoted as $\Pi(\cdot|y)$, according to the Bayes' theorem \citep{
dashti2017}, satisfies the requirement that if $0< Z:=\int_{\mbX} \exp(-\Phi(u; y)) \Pi(d u) < +\infty$ for $y$ $\mbQ_0$-a.s., then
\begin{equation} \label{eq:Bayes}
\frac{d\Pi(\cdot|y)}{d\Pi}(u) = \frac{1}{Z}\,\exp(-\Phi(u; y)). 
\end{equation}
The forward operator $\mG$ could be linear or nonlinear, possibly encoding physical information represented by a system of ordinary or partial differential equations (ODE/PDE). The resulted posterior $\Pi(\cdot|y)$ is usually non-Gaussian with a complicated geometric structure even if a Gaussian prior $\Pi=\GP(0,\mC)$ is adopted. 
When there are sparse data but the targets are high-dimensional, the inverse problems are ill-posed.
Proper prior information is crucial to induce well-defined solutions. 

\subsection{Besov Process}\label{sec:Besov}
Let $\mX\subset\mbR^d$ be the spatial domain, e.g., a $d$-dimensional torus, $\mX=\mbT^d=(0,1]^d$ for $d\leq3$.
Consider a separable Banach space $(\mbX, \Vert\cdot\Vert)$ with a Schauder basis $\{\phi_\ell\}_{\ell=1}^\infty$, e.g., a square integrable function space $L^2(\mX):=\{u:\mX \to\mbR | \int_\mX|u(\bx)|^2 d\bx<\infty\}$ with Fourier basis. Any function $u\in \mbX$ can then be represented by the following series:
\begin{equation}\label{eq:series_repn}
    u(\bx) = \sum_{\ell =1}^{\infty}u_\ell\phi_\ell(\bx).
\end{equation}
Based on \eqref{eq:series_repn}, we consider a norm $\|\cdot\|_{s,q}$ defined with a smoothness parameter $s>0$ and an integrability parameter $q\geq 1$ \citep{Lassas_2009, Dashti_2012}:
\begin{equation}\label{eq:Besov_norm}
    \Vert u(\cdot) \Vert_{s,q} = \left(\sum_{\ell=1}^\infty\ell^{\tau_q(s)q}|u_\ell|^q\right)^{\frac{1}{q}}, \quad \tau_q(s) = \frac{s}{d}+\half-\frac{1}{q}.
\end{equation}
We define the Banach space $B^{s,q}(\mX):=\{u: \mX\to\mbR \,|\, \Vert u(\cdot)\Vert_{s,q}<\infty\}$.
If $\{\phi_\ell\}_{\ell=1}^{\infty}$ is an $r$-regular wavelet basis for $r > s$, then $B^{s,q}(\mX)$ becomes the Besov space $B^s_{qq}$ \citep{Triebel_1983}. 
In particular, if $q=2$ and $\{\phi_\ell\}_{\ell=1}^\infty$ form the Fourier basis, then $B^{s,2}(\mX)$ reduces to the Sobolev space $H^s(\mX)$ with the special case $B^{0,2}(\mX)=L^2(\mX)$ assuming $s=0$.

For a given basis $\{\phi_\ell\}_{\ell=1}^\infty$, based on the the series expansion \eqref{eq:series_repn}, there is a one-to-one correspondence between the $B^{s,q}(\mX)$ function $u(\cdot)$ and the infinite sequence $u:=\{u_\ell\}_{\ell=1}^\infty$ in a weighted $\ell^q$ space, $\ell^{q,\tau}:=\{u\in\mbR^\infty \,|\, \Vert u\Vert_{\tau, q}=\left(\sum_{\ell=1}^\infty\ell^{\tau q}|u_\ell|^q\right)^{\frac{1}{q}} <\infty\}$, which reduces to the regular $\ell^q$ space when $\tau=0$.
Hence $\Vert u(\cdot)\Vert_{s,q}=\Vert u\Vert_{\tau_q(s),q}$.
In the following, we will use $u$ to refer to both notations when there is no confusion.

Now we define a \emph{Besov process (BP)} $u(\cdot)$ based on \eqref{eq:series_repn} 
by randomizing the coefficients $\{u_\ell\}_{\ell=1}^\infty$.
More specifically, we set for $\ell\in\mbN$
\begin{equation}\label{eq:q_exp}
 u_\ell := \gamma_\ell \xi_\ell, \quad \gamma_\ell = \kappa^{-\frac{1}{q}} \ell^{-\tau_q(s)}, \quad  \xi_\ell\overset{i.i.d.}{\sim}\pi_{\xi}(\cdot) \propto \exp\left(- \half|\cdot|^q\right),
\end{equation}
where $\kappa>0$ is a scaling factor,
and $\pi_{\xi}$ denotes the probability density function of the \emph{$q$-exponential distribution} \citep{Dashti_2012,dashti2017}.
Though not spelled out, such $q$-exponential distribution is actually a special case of the following exponential power (EP, a.k.a. generalized normal) distribution $\ep(\mu, \sigma, q)$ with $\mu=0$, $\sigma=1$:
\begin{equation}\label{eq:epd}
    p(\xi|\mu, \sigma, q) = \frac{q}{2^{1+1/q}\sigma\Gamma(1/q)}\exp\left\{-\half \left|\frac{\xi-\mu}{\sigma}\right|^q\right\}.
\end{equation}
When $q=2$, this is just a normal distribution $\mN(\mu,\sigma^2)$. When $q=1$, it becomes a Laplace distribution $L(\mu, b)$ with $\sigma=2^{-1/q} b$.

Denote infinite sequences $\gamma=\{\gamma_\ell\}_{\ell=1}^\infty$ and $\xi=\{\xi_\ell\}_{\ell=1}^\infty$. 
Then $\xi$ is a random element of the probability space $(\Omega, \mB(\Omega), \mbP)$ with $\Omega=\mbR^\infty$, product $\sigma$-algebra $\mB(\Omega)$ and probability measure $\mbP$ defined by extending the finite product of $\pi_\xi$ to the infinite product by the Kolmogorov extension theorem \citep[c.f. Theorem 29 in section A.2.1 of][]{dashti2017}. Then we define the \emph{Besov measure} as the pushforward of $\mbP$ as follows.
\begin{dfn}[Besov Measure]
Let $\mbP$ be the measure of random sequences $\xi\in\Omega$. 
Suppose we have the following map
\begin{equation}\label{eq:fun_expn}
f_\gamma: \Omega \to B^{s,q}(\mX), \quad \xi \mapsto u = \sum_{\ell=1}^\infty u_\ell\phi_\ell = \sum_{\ell =1}^\infty \gamma_\ell \xi_\ell \phi_\ell, 
\end{equation}
where $\gamma_\ell$ and $\xi_\ell$ are defined in \eqref{eq:q_exp}.
Then the pushforward $f_\gamma^\sharp\mbP$ is Besov measure on $B^{s,q}(\mX)$, denoted as $\mB(\kappa, B^{s,q}(\mX))$, and we say BP $u$ follows the Besov measure, i.e., $u\sim \mB(\kappa, B^{s,q}(\mX))$.
\end{dfn}
\begin{rk}
If $q = 2$ and $\{\phi_\ell\}_{\ell=1}^{\infty}$ is either a wavelet or Fourier basis, we obtain a Gaussian measure with the Cameron-Martin space $B_{22}^s$ \citep{
Triebel_1983}, which is the Hilbert space $H^s(\mX)$, and 
\eqref{eq:fun_expn} is reduced to a zero mean Gaussian random element based on Karhunen-Lov\`eve representation (refer to Figure \ref{fig:relationship}):
    $u(\bx) = \kappa^{-\half} \sum_{\ell =1}^\infty \ell^{-\frac{s}{d}}\xi_\ell \phi_\ell(\bx), \, \xi_\ell \overset{iid}{\sim} \mN(0,1)$.
\end{rk}

Define $\Vert \xi\Vert_q := \left(\sum_{\ell=1}^\infty |\xi_\ell|^q\right)^{\frac{1}{q}}$ for $\xi\in\Omega$. Then we have $\Vert u\Vert_{s,q} = \kappa^{-\frac{1}{q}}\Vert \xi\Vert_q$. 
The following formal Lebesgue density can be made rigorous by Fernique theorem \citep{Dashti_2012}:
\begin{equation}\label{eq:lebesgue}
\mbP(d\xi) = p(\xi) d\xi, \quad p(\xi) = \prod_{\ell=1}^\infty \pi_\xi(\xi_\ell) \propto \exp\left\{-\half\Vert \xi\Vert_q^q\right\} = \exp\left\{-\frac{\kappa}{2}\Vert u\Vert_{s,q}^q\right\}.
\end{equation}
When used in the optimization to obtain a parameter estimate, the logarithm of Besov prior density \eqref{eq:lebesgue} serves as an $L_q$ regularization term. Larger regularization parameter ($q>0$) promotes smoother solutions, as demonstrated in a simulated regression example in Figure \ref{fig:simulation_MAP_whiten_q}. In practice, Besov prior is often adopted for $q=1$ to preserve edges in imaging analysis.

Define the Banach space of $q$-integrable functions in $(\Omega, \mB(\Omega), \mbP)$ as $L^q_\mbP(\Omega; B^{s,q}(\mX)) = \{u: \mX\times \Omega\to \mbR | \mbE(\Vert u\Vert_{s,q}^q)<\infty \}$.
We notice that for $u \sim \mB(\kappa, B^{s,q}(\mX))$, $u\notin L^q_\mbP(\Omega; B^{s,q}(\mX))$ because $\mbE(\Vert u\Vert_{s,q}^q)=\kappa^{-1}\mbE(\Vert \xi\Vert_q^q)=\infty$ due to the iid assumption on $\xi$ in \eqref{eq:q_exp}.
However, the following theorem states that Besov random draw as in \eqref{eq:fun_expn} has limit in a proper $q$-integrable function space \citep[Thorem 4 of][]{dashti2017}. 
\begin{restatable}{thm}{Lqbound}
\label{thm:Lqbound}
    If $u \sim \mB(\kappa, B^{s,q}(\mX))$ as in \eqref{eq:fun_expn}, then $u\in L^q_\mbP(\Omega; B^{s',q}(\mX))$ for all $s'<s-\frac{d}{q}$.
\end{restatable}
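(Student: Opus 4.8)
\emph{Proof proposal.} The statement is, at heart, a moment computation: by the isometry $\Vert u(\cdot)\Vert_{s',q} = \Vert u\Vert_{\tau_q(s'),q}$ between a $B^{s',q}(\mX)$-function and its coefficient sequence in the weighted $\ell^q$ space, it suffices to show $\mbE\big(\Vert u\Vert_{s',q}^q\big)<\infty$ whenever $s'<s-d/q$, and then to check that the random series \eqref{eq:fun_expn} genuinely converges in $B^{s',q}(\mX)$. I do not expect any delicate estimate; the crux is an elementary $p$-series threshold.

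First I would substitute the coefficients from \eqref{eq:q_exp}. Writing $u_\ell=\gamma_\ell\xi_\ell$ with $\gamma_\ell=\kappa^{-1/q}\ell^{-\tau_q(s)}$ and using $\tau_q(s')-\tau_q(s)=(s'-s)/d$ from \eqref{eq:Besov_norm}, one obtains
\begin{equation*}
\Vert u\Vert_{s',q}^q=\sum_{\ell=1}^\infty \ell^{\tau_q(s')q}\gamma_\ell^q|\xi_\ell|^q=\kappa^{-1}\sum_{\ell=1}^\infty \ell^{(s'-s)q/d}|\xi_\ell|^q .
\end{equation*}
Since the summands are nonnegative, Tonelli's theorem and the i.i.d.\ assumption on $\xi$ give
\begin{equation*}
\mbE\big(\Vert u\Vert_{s',q}^q\big)=\kappa^{-1}\,c_q\sum_{\ell=1}^\infty \ell^{(s'-s)q/d},\qquad c_q:=\mbE\big(|\xi_1|^q\big)=\int_{\mbR}|\xi|^q\pi_\xi(\xi)\,d\xi .
\end{equation*}
I would then note that $c_q<\infty$ because $\pi_\xi\propto\exp(-\half|\cdot|^q)$ has Gaussian-or-lighter tails (the integral is in fact a Gamma-function value), and that $\sum_{\ell\ge1}\ell^{(s'-s)q/d}$ converges precisely when $(s-s')q/d>1$, i.e.\ when $s'<s-d/q$. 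This settles $\mbE\big(\Vert u\Vert_{s',q}^q\big)<\infty$ under the stated hypothesis.

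To conclude that $u$ is a bona fide element of $L^q_\mbP(\Omega;B^{s',q}(\mX))$ --- not merely that a formal norm has finite expectation --- I would examine the partial sums $u^{(N)}:=\sum_{\ell=1}^N\gamma_\ell\xi_\ell\phi_\ell$, each a finite combination of basis functions with measurable scalar coefficients and hence strongly measurable into $B^{s',q}(\mX)$ (here $r>s>s'$ guarantees $B^{s',q}(\mX)$ is the Besov space $B^{s'}_{qq}$). The same computation applied to the tail yields $\mbE\big(\Vert u-u^{(N)}\Vert_{s',q}^q\big)=\kappa^{-1}c_q\sum_{\ell>N}\ell^{(s'-s)q/d}\to 0$, so $(u^{(N)})_N$ is Cauchy in $L^q_\mbP(\Omega;B^{s',q}(\mX))$; moreover $\sum_\ell\ell^{(s'-s)q/d}|\xi_\ell|^q<\infty$ almost surely (its expectation is finite), so $u^{(N)}\to u$ in $B^{s',q}(\mX)$ almost surely as well. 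The limit is therefore strongly measurable with finite $q$-th Bochner moment and agrees coefficient-by-coefficient with \eqref{eq:fun_expn}, giving $u\in L^q_\mbP(\Omega;B^{s',q}(\mX))$.

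The only care-points, and where the bulk of a careful write-up would go, are the finiteness of the $q$-th absolute moment of the $q$-exponential distribution (a one-line explicit integral) and the measurability/completeness bookkeeping in the last paragraph; the quantitative content is exactly the $p$-series threshold $s'<s-d/q$.
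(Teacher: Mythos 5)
Your proof is correct and follows essentially the same route as the paper's: compute $\mbE[\Vert u\Vert_{s',q}^q]=\kappa^{-1}\mbE[|\xi_1|^q]\sum_{\ell}\ell^{(s'-s)q/d}$ via Tonelli and the i.i.d.\ assumption, and observe that the series converges precisely when $s'<s-\frac{d}{q}$. The extra paragraph on strong measurability and Cauchy partial sums is sound additional bookkeeping that the paper omits, but it does not change the argument.
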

\begin{proof}
See Supplement \ref{apx:Lqbound}.
\end{proof}
\begin{rk}
This theorem implies that for any random draw from $\mB(\kappa, B^{s,q}(\mX))$, we need to consider its $q$-integrability in a larger ambient space $B^{s',q}(\mX)$ for some $s'<s-\frac{d}{q}$ ($B^{s,q}(\mX)\subset B^{s',q}(\mX)$, refer to Proposition \ref{prop:embeddings}).
\end{rk}

Next, we generalize the series representation of a Besov random function \eqref{eq:fun_expn} to a representation for STBP by replacing the random variable $\xi_\ell$ with a stochastic process $\xi_\ell(\cdot)$ on the temporal domain $\mT\subset \mbR_+$.
For this purpose, in the following we will first introduce a properly defined process Q-EP for $\xi_\ell(\cdot)$ that generalizes the $q$-exponential random variable $\xi_\ell$ and has the capability of capturing the temporal dependence in data.

\section{$Q$-exponential Process Valued Random Coefficients}\label{sec:QEP}
\subsection{Multivariate Generalization of $Q$-exponential Distribution}

To generalize the aforementioned $q$-exponential \eqref{eq:q_exp} (or univariate EP \eqref{eq:epd}) random variable, $\xi_\ell$, to a multivariate random vector, $\bxi_\ell$, and further a stochastic process, $\xi_\ell(\cdot)$, we have two important requirements by the Kolmogorov' extension theorem \citep{Oksendal_2003}: i) {\bf exchangeability} of the joint distribution, i.e., $p(\bxi_{1:J}) = p(\bxi_{\tau(1:J)})$ for any finite permutation $\tau$; and ii) {\bf consistency} of the marginalization, i.e., $p(\bxi_1) = \int p(\bxi_1, \bxi_2)d\bxi_2$.

Consider the process $\xi_\ell(\cdot)$ defined in a finite temporal domain $\mT\subset \mbR_+$, i.e., there exists $T<\infty$ such that $\mT\subset [0,T]$.
Suppose we observe $\xi_\ell(t)$ at $J$ time points, $t_1,\cdots,t_J\in \mT$, then we need to define the distribution of $\bxi_\ell=(\xi_\ell(t_1),\cdots, \xi_\ell(t_J))$.
\cite{Li_2023} investigate the family of elliptic contour distributions \citep{Johnson_1987} 
and propose the following consistent \emph{multivariate $q$-exponential distribution} for $\bxi_\ell$. 
\begin{dfn}\label{dfn:qED}
    A multivariate $q$-exponential distribution, $\qED_J(\bmu, \bC)$, has the density
    \begin{equation}\label{eq:qED}
        p(\bxi|\bmu, \bC, q) = \frac{q}{2} (2\pi)^{-\frac{J}{2}} |\bC|^{-\half} r^{(\frac{q}{2}-1)\frac{J}{2}} \exp\left\{-\frac{r^\frac{q}{2}}{2}\right\}, \quad r(\bxi) = \tp{(\bxi-\bmu)} \bC^{-1} (\bxi-\bmu).
    \end{equation}
\end{dfn}

\cite{Li_2023} prove 
that the multivariate $q$-exponential random vector following distribution \eqref{eq:qED} satisfies the conditions of Kolmogorov's extension theorem (both {\bf exchangeability} and {\bf consistency}) \citep[Theorem 3.3 of][]{Li_2023} hence can be generalized to a stochastic process.

To generate random vectors $\bxi \sim \qED_J(\bmu, \bC)$, one can take advantage of the stochastic representation, 
as defined in the following proposition \citep[c.f. Theorem 2.1 and Proposition A.1][]{Li_2023}. This will be needed for the Bayesian inference in Section \ref{sec:inference}.
\begin{prop}\label{prop:QED_stochrep}
    If $\bxi \sim \qED_J(\bmu, \bC)$, then we have
    \begin{equation}\label{eq:stoch_QED}
        \bxi = \bmu + R \bL S, 
    \end{equation}
where $S\sim \mathrm{Unif}(\mS^{J+1})$ uniformly distributed on the unit-sphere $\mS^{J+1}$, $\bL$ is the Cholesky factor of $\bC$ such that $\bC=\bL\tp{\bL}$, $R\perp S$ and $R^q \overset{d}{=} r^\frac{q}{2} \sim \Gamma\left(\alpha=\frac{J}{2}, \beta=\half \right) = \chi^2(J)$.
\end{prop}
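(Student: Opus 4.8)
The plan is to specialize the classical polar (stochastic) representation of elliptically contoured laws \cite{Schoenberg_1938,CAMBANIS_1981,Johnson_1987} to the density generator appearing in \eqref{eq:qED}; this is essentially Theorem~2.1 / Proposition~A.1 of \cite{Li_2023}, and I would reconstruct it as follows. First I would strip off the covariance by the linear substitution $\bzeta := \bL^{-1}(\bxi-\bmu)$, where $\bL$ is the Cholesky factor, which exists because $\bC$ is positive definite. The Jacobian of this map is $|\bL| = |\bC|^{\half}$, exactly cancelling the $|\bC|^{-\half}$ in \eqref{eq:qED}, so $\bzeta$ has the spherically symmetric density $p_{\bzeta}(\bzeta)\propto g(\Vert\bzeta\Vert^2)$ with generator $g(r)=\tfrac q2(2\pi)^{-J/2}\,r^{(\frac q2-1)\frac J2}\exp\{-\half r^{q/2}\}$, and moreover $r(\bxi)=\tp{(\bxi-\bmu)}\bC^{-1}(\bxi-\bmu)=\Vert\bzeta\Vert^2$. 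Since $\bxi=\bmu+\bL\bzeta$, it then suffices to prove $\bzeta\overset{d}{=}R\,S$ with $S\sim\mathrm{Unif}(\mS^{J+1})$ (the uniform law on the unit sphere of $\mbR^J$), $R\perp S$, and $R^q\sim\chi^2(J)$.

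For the spherically symmetric $\bzeta$ I would pass to polar coordinates $\bzeta=\rho\,S$, $\rho=\Vert\bzeta\Vert\ge 0$, $S=\bzeta/\Vert\bzeta\Vert$. Because $p_{\bzeta}$ depends on $\bzeta$ only through $\Vert\bzeta\Vert$, the disintegration of Lebesgue measure into radial and angular parts gives that $S$ is uniform on the sphere and independent of $\rho=:R$, with radial density proportional to $\rho^{J-1}g(\rho^2)\propto \rho^{J-1}\rho^{(\frac q2-1)J}\exp\{-\half\rho^q\}=\rho^{\frac{qJ}{2}-1}\exp\{-\half\rho^q\}$ on $\mbR_+$. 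Substituting $w=\rho^q$ (so $\rho=w^{1/q}$, $d\rho=\tfrac1q w^{1/q-1}\,dw$) turns this into $w^{\frac J2-1}\exp\{-\half w\}$ up to a constant, i.e.\ $W:=R^q\sim\Gamma(\tfrac J2,\tfrac12)=\chi^2(J)$, which is precisely the proposition's claim; this is also consistent with $r^{q/2}=\Vert\bzeta\Vert^{q}=R^q$. Reassembling, $\bxi=\bmu+\bL\bzeta=\bmu+R\,\bL\,S$, which is \eqref{eq:stoch_QED}.

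I expect the only slightly delicate points to be the bookkeeping: verifying that $g\ge 0$ and $\int_0^\infty \rho^{J-1}g(\rho^2)\,d\rho<\infty$ so the polar disintegration is valid, and tracking normalizing constants through the two changes of variables so the $\tfrac q2$ and $(2\pi)^{-J/2}$ prefactors in \eqref{eq:qED} come out exactly --- both are immediate from the explicit form of $g$. If one prefers to avoid the derivation, an equivalent route is to take \eqref{eq:stoch_QED} as given and compute the density of $\bmu+R\bL S$ directly from the known laws of $R$ (a $q$-th root of a $\chi^2(J)$ variable) and $S$ (uniform on the sphere), then match term by term with \eqref{eq:qED}; that verification is routine. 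Either way, the substantive input is the classical elliptical-distribution machinery cited in the statement together with the Cholesky factorizability of $\bC$.
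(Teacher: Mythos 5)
Your proof is correct. The paper itself gives no proof of this proposition --- it is quoted directly from the cited reference (Theorem~2.1 and Proposition~A.1 of \cite{Li_2023}) and rests on the classical polar/stochastic representation of elliptically contoured distributions \cite{Schoenberg_1938,CAMBANIS_1981,Johnson_1987}, which is exactly what you reconstruct: whitening by the Cholesky factor, the radial--angular disintegration giving independence of $R$ and the uniform spherical part $S$, and the change of variables $w=\rho^q$ turning the radial density $\rho^{\frac{qJ}{2}-1}e^{-\rho^q/2}$ into the $\Gamma(\tfrac J2,\tfrac12)=\chi^2(J)$ law for $R^q=r^{q/2}$. All steps check out (including your sensible reading of the paper's slightly nonstandard notation $\mS^{J+1}$ for the unit sphere in $\mbR^J$), so your argument supplies the derivation the paper delegates to its reference.
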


\subsection{$Q$-exponential Process}

With a covariance (symmetric and positive-definite) kernel $\mC : \mT\times \mT\to \mbR$,
we define \emph{$q$-exponential process (Q-EP)} based on the multivariate $q$-exponential distribution \eqref{eq:qED}. 
\begin{dfn}\label{dfn:qEP}
    A (centered) $q$-exponential process $\xi(t)$ with kernel $\mC$, $\qEP(0, \mC)$, is a collection of random variables such that any finite set, $\bxi:=(\xi(t_1),\cdots, \xi(t_J))$, 
    follows a multivariate $q$-exponential distribution $\qED_J(\bzero, \bC)$, where $\bC=[\mC(t_j,t_{j'})]_{J\times J}$. 
\end{dfn}
\begin{rk}
When $q=2$, $\qED_J(\bmu, \bC)$ reduces to $\mN_J(\bmu, \bC)$ and thus $\qEP(0, \mC)$ becomes $\GP(0,\mC)$. When $q\in[1,2)$, $\qEP(0, \mC)$ lends flexibility to modeling functional data with more regularization than GP.
See Figures \ref{fig:relationship} and \ref{fig:simulation_MAP_whiten_q}, and more details in Section \ref{sec:numerics}.
\end{rk}

The covariance kernel $\mC$ is associated with a Hilbert-Schmidt (HS) integral operator $T_\mC: L^2(\mT) \to L^2(\mT), \xi(\cdot)\mapsto\int_\mT \mC(\cdot, t') \xi(t')\mu(dt')$ which has eigen-pairs $\{\lambda_\ell, \psi_\ell(\cdot)\}_{\ell=1}^\infty$ such that for $\forall\ell\in\mbN$, $T_\mC \psi_\ell(t)=\psi_\ell(t) \lambda_\ell$ and $\Vert\psi_\ell\Vert_2=1$. Then $\{\psi_\ell\}_{\ell=1}^\infty$ serves as a basis of $L^2(\mT)$.
Denote $\lambda:=\{\lambda_\ell\}_{\ell=1}^\infty$.
We assume $T_\mC$ is a trace-class operator, i.e. $\tr(T_\mC):=\Vert \lambda\Vert_1=\sum_{\ell=1}^\infty \lambda_\ell<\infty$.
\cite[Theorem 3.4 of][]{Li_2023} shows that we have a Karhunen-Lo\'eve type of theorem on the series representation of random function $\xi(\cdot)$ drawn from Q-EP.
\begin{thm}[Karhunen-Lo\'eve]\label{thm:KL}
If $\xi(\cdot)\sim \qEP(0,\mC)$ with a trace-class HS operator $T_\mC$ having eigen-pairs $\{\lambda_\ell, \psi_\ell(\cdot)\}_{\ell=1}^\infty$, then we have the following series representation for $\xi(t)$:
\begin{equation}\label{eq:KL}
\xi(t) = \sum_{\ell=1}^\infty \xi_\ell \psi_\ell(t), \quad \xi_\ell:=\int_\mT \xi(t) \psi_\ell(t) \mu(dt) \sim \qED(0, \lambda_\ell),
\end{equation}
where $\E[\xi_\ell]=0$ and $\mCv(\xi_\ell, \xi_{\ell'})=\lambda_\ell\delta_{\ell\ell'}$ with Dirac function $\delta_{\ell\ell'}=1$ if $\ell=\ell'$ and $0$ otherwise.
Moreover, we have
$\E[\Vert \xi(\cdot)\Vert_2^2] = \sum_{\ell=1}^\infty \E[\xi_\ell^2] = 
\tr(T_\mC) < \infty$.
\end{thm}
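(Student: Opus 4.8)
The plan is to adapt the classical Karhunen--Lo\`eve argument from the Gaussian to the elliptical setting. The two structural ingredients I would use are: (i) Mercer's theorem for the trace-class kernel $\mC$, giving the uniformly convergent eigen-expansion of $\mC$ and an orthonormal basis of $L^2(\mT)$; and (ii) the fact that the $\qED$ family is closed under finite-dimensional linear maps and is consistent across dimensions (Theorem \ref{thm:K_conds}), so that linear functionals of a Q-EP again have a $\qED$ law with the same characteristic generator. Everything else is Hilbert-space bookkeeping that mirrors the Gaussian proof.

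First I would set up the spectral side. Since $\mT\subset[0,T]$ is compact and $\mC$ is a continuous symmetric positive-definite kernel whose operator $T_\mC$ is trace-class, Mercer's theorem yields $\mC(t,t')=\sum_{\ell=1}^\infty\lambda_\ell\psi_\ell(t)\psi_\ell(t')$ with uniform convergence, $\{\psi_\ell\}_{\ell=1}^\infty$ orthonormal in $L^2(\mT)$ (completed to a basis if $T_\mC$ has nontrivial kernel); in particular $\mC(t,t)=\sum_\ell\lambda_\ell\psi_\ell(t)^2$ and $\int_\mT\mC(t,t)\mu(dt)=\sum_\ell\lambda_\ell=\tr(T_\mC)<\infty$. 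Next I would check that $\xi_\ell:=\int_\mT\xi(t)\psi_\ell(t)\mu(dt)$ is a well-defined random variable: continuity of $\mC$ forces mean-square continuity of $\xi(\cdot)$ (the increment $\E|\xi(t)-\xi(s)|^2$ being controlled by $\mC(t,t)-2\mC(t,s)+\mC(s,s)$ up to the $q$-dependent scale constant coming from Proposition \ref{prop:QED_stochrep}), hence a jointly measurable modification exists, and $\E\int_\mT|\xi(t)\psi_\ell(t)|\mu(dt)\le\int_\mT\sqrt{\mC(t,t)}\,|\psi_\ell(t)|\,\mu(dt)<\infty$ by Cauchy--Schwarz; Fubini then legitimizes interchanging $\E$ and $\int$ in all moment computations below, and gives $\E[\xi_\ell]=\int_\mT\E[\xi(t)]\psi_\ell(t)\mu(dt)=0$.

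The crux is to identify the joint law of $(\xi_1,\dots,\xi_L)$ for each fixed $L$. Fix a finite grid $\{t_j\}_{j=1}^J\subset\mT$ with quadrature weights $\{w_j\}$; by Definition \ref{dfn:qEP}, $\bxi^{(J)}:=(\xi(t_1),\dots,\xi(t_J))\sim\qED_J(\bzero,\bC^{(J)})$, whose characteristic function is $s\mapsto\phi(\tp s\,\bC^{(J)}\,s)$ for a characteristic generator $\phi$ that, by the consistency property underlying Theorem \ref{thm:K_conds}, does not depend on the dimension $J$. The vector $\bxi^{(J)}_L:=\bA^{(J)}\bxi^{(J)}$ with $(\bA^{(J)})_{\ell j}=w_j\psi_\ell(t_j)$ is a linear image, hence has characteristic function $t\mapsto\phi\big(\tp t\,\bA^{(J)}\bC^{(J)}\tp{\bA^{(J)}}\,t\big)$. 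Refining the grid, $\bxi^{(J)}_L\to(\xi_1,\dots,\xi_L)$ in mean square (by mean-square continuity of $\xi(\cdot)$), while the quadrature matrix satisfies $\bA^{(J)}\bC^{(J)}\tp{\bA^{(J)}}\to\big[\langle\psi_\ell,T_\mC\psi_{\ell'}\rangle\big]_{L\times L}=\diag(\lambda_1,\dots,\lambda_L)$; by continuity of $\phi$ the limiting characteristic function is $t\mapsto\phi(\tp t\,\diag(\lambda_\ell)\,t)$, i.e. $(\xi_1,\dots,\xi_L)\sim\qED_L(\bzero,\diag(\lambda_1,\dots,\lambda_L))$. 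In particular each $\xi_\ell\sim\qED(0,\lambda_\ell)$, and the diagonal scale matrix together with Proposition \ref{prop:QED_stochrep} gives $\mCv(\xi_\ell,\xi_{\ell'})=\lambda_\ell\delta_{\ell\ell'}$ (equality holding with the normalization adopted here).

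Finally I would establish convergence of the series. Using orthonormality of $\{\psi_\ell\}$ and Fubini, $\E\Vert\xi-\sum_{\ell=1}^L\xi_\ell\psi_\ell\Vert_2^2=\E\Vert\xi\Vert_2^2-\sum_{\ell=1}^L\E[\xi_\ell^2]$, while $\E\Vert\xi\Vert_2^2=\int_\mT\E[\xi(t)^2]\mu(dt)=\int_\mT\mC(t,t)\mu(dt)=\sum_{\ell=1}^\infty\lambda_\ell$ and $\sum_{\ell=1}^L\E[\xi_\ell^2]=\sum_{\ell=1}^L\lambda_\ell$; since $T_\mC$ is trace-class the tail $\sum_{\ell>L}\lambda_\ell\to0$, so $\sum_\ell\xi_\ell\psi_\ell\to\xi$ in $L^2_\mbP(\Omega;L^2(\mT))$ and simultaneously $\E\Vert\xi\Vert_2^2=\sum_\ell\E[\xi_\ell^2]=\tr(T_\mC)<\infty$, as claimed. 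I expect the third step --- transferring the $\qED$ structure from the point-evaluation laws (which are all that Definition \ref{dfn:qEP} provides) to the integral functionals $\xi_\ell$ --- to be the main obstacle, since it requires mean-square continuity of the sample paths together with the linear-closure and consistency properties of the elliptical family to survive the passage to the limit; the remaining steps are the routine Hilbert-space arguments of the Gaussian KL proof.
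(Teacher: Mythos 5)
The paper does not actually prove this theorem: it is imported verbatim as Theorem 3.4 of \cite{Li_2023}, and no proof appears anywhere in the manuscript or its appendix. So there is no in-paper argument to compare yours against, and I can only assess your proposal on its own terms.

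Your route is the natural adaptation of the classical Karhunen--Lo\`eve construction and the overall architecture is sound: Mercer/spectral decomposition of the trace-class kernel, mean-square continuity to make the functionals $\xi_\ell=\int_\mT\xi(t)\psi_\ell(t)\mu(dt)$ well defined, closure of the consistent elliptical family under linear maps (justified by the dimension-free characteristic generator that Theorem \ref{thm:K_conds} guarantees via Kano's criterion) to pass from the finite-dimensional point-evaluation laws to the law of $(\xi_1,\dots,\xi_L)$, and the trace-class tail bound for $L^2_\mbP$ convergence. You correctly identify the third step as the crux, and the characteristic-function/quadrature-limit argument there is the right tool. The one substantive caveat concerns the exact identities $\mCv(\xi_\ell,\xi_{\ell'})=\lambda_\ell\delta_{\ell\ell'}$ and $\E\Vert\xi(\cdot)\Vert_2^2=\tr(T_\mC)$: from Proposition \ref{prop:QED_stochrep} one gets $\mCv(\bxi)=\frac{\E[R^2]}{J}\bC$ with $R^2\overset{d}{=}(\chi^2(J))^{2/q}$, and this multiplicative constant equals $1$ only when $q=2$; for $q\neq2$ your argument delivers the covariance structure only up to a $q$-dependent factor. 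You flag this with ``the normalization adopted here,'' which is honest, but note that the discrepancy sits in the quoted statement itself (inherited from \cite{Li_2023}) rather than being something your proof can close without renormalizing the scale matrix of the $\qED$ family. Apart from that constant, and the routine gaps one expects in a sketch (joint measurability of the modification, continuity of the eigenfunctions for the Riemann-sum convergence), the proposal is a correct and complete plan of proof.
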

\begin{rk}
By re-scaling $\xi_\ell$ in \eqref{eq:KL}, we have the series representation of Q-EP $\xi(\cdot)$ in the same format as BP in \eqref{eq:q_exp}:
 $\xi_\ell := \gamma_\ell \xi_\ell^\star, \quad \gamma_\ell = \sqrt{\lambda_\ell}, \quad \xi_\ell^\star \overset{iid}{\sim} \qED(0,1) \sim \pi_{\xi}(\cdot)$. 
If we choose $\sqrt{\lambda_\ell}=\ell^{-\tau_q(s)}$, 
then $\qEP(0, \mC)$ process becomes equivalent to $\mB(1, B^{s,q}(\mT))$ process.
From this perspective, we can view Q-EP as a probabilistic definition of BP (See Figure \ref{fig:relationship}).
\end{rk}


Theorem \ref{thm:KL} states that for the given basis $\{\psi_\ell\}_{\ell=1}^\infty$ on the time domain $\mT$, we can identify any random draw $\xi(\cdot) \sim \qEP(0,\mC)$ with the associated infinite sequence $\xi=\{\xi_\ell\}_{\ell=1}^\infty$ as in \eqref{eq:KL}.
Similarly as in Section \ref{sec:Besov}, we can define $\Vert \xi(\cdot)\Vert_{s,q} = \Vert \xi\Vert_{\tau_q(s),q} = \left(\sum_{\ell=1}^\infty \ell^{\tau_q(s)q}|\xi_\ell|^q\right)^{\frac{1}{q}}$ and 
$B^{s,q}(\mT) = \{\xi: \mT\to\mbR \,|\, \Vert \xi(\cdot)\Vert_{s,q}<\infty\}$.
In the probability space $(\Omega, \mB(\Omega), \mbP)$ with $\Omega=\mbR$ and $\mbP$ defined by $\qEP(0, \mC)$, we consider a Banach space defined as $L^p_\mbP(\Omega, L^p(\mT))=\{\xi:\mT\times\Omega\to\mbR \,|\, \mbE(\Vert \xi\Vert_p^p)<\infty\}$.
From Theorem \ref{thm:KL}, we immediately have that if $\xi(\cdot) \sim \qEP(0,\mC)$ with a trace-class HS operator $T_\mC$, then $\xi(\cdot) \in L^2_\mbP(\Omega, L^2(\mT))$.
More general integrability of $\xi(\cdot)$ relates to the summability of eigenvalues $\lambda$ of the HS operator $T_\mC$, as expressed in the following assumption.
\begin{asump}\label{asmp:eigen_summable}
Suppose $\lambda=\{\lambda_\ell\}_{\ell=1}^\infty$ are eigenvalues of HS operator $T_\mC$ for the kernel $\mC$ 
in Definition \ref{dfn:qEP}. We assume
\begin{enumerate}[label=(\roman*),nosep]
\item 
    $\sqrt{\lambda}\in \ell^{q,\tau_q(s')}, \quad i.e. \quad \Vert\sqrt{\lambda}\Vert_{\tau_q(s'),q}^q = \sum_{\ell=1}^\infty \ell^{\tau_q(s')q} \lambda_\ell^{\frac{q}{2}}<\infty$ for $s'<s-\frac{d}{q}$.
\item 
    $\lambda\in \ell^{\frac{q}{2}}, \quad i.e. \quad \Vert\lambda\Vert_{\frac{q}{2}}^{\frac{q}{2}}= \sum_{\ell=1}^\infty \lambda_\ell^{\frac{q}{2}}<\infty$.
\end{enumerate}
\end{asump}
\begin{restatable}{thm}{qepint}
\label{thm:qepint}
If $\xi(\cdot) \sim \qEP(0,\mC)$ with a trace-class HS operator $T_\mC$ 
satisfying Assumption \ref{asmp:eigen_summable}-(i),
then $\xi(\cdot)\in L^q_\mbP(\Omega, B^{s',q}(\mT))$.
If Assumption \ref{asmp:eigen_summable}-(ii) holds instead, then $\xi(\cdot)\in L^q_\mbP(\Omega, L^q(\mT))$ and in particular, $\mbE[\Vert \xi(\cdot)\Vert_{q}^q] = \Vert\lambda\Vert_{\frac{q}{2}}^{\frac{q}{2}} <\infty$.
\end{restatable}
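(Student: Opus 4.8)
The plan is to expand $\xi(\cdot)$ in its Karhunen--Lo\`eve basis, reduce the $q$-th moment of the $B^{s',q}(\mT)$-norm to a weighted series of single-coordinate moments $\mbE[|\xi_\ell^\star|^q]$, and observe that each such moment equals $1$, so that Assumption~\ref{asmp:eigen_summable} closes the estimate immediately. Concretely, by Theorem~\ref{thm:KL} together with the rescaling \eqref{eq:q_exp_equiv} we may write $\xi(t)=\sum_{\ell=1}^\infty\sqrt{\lambda_\ell}\,\xi_\ell^\star\psi_\ell(t)$, where each $\xi_\ell^\star\sim\qED(0,1)$ marginally. Under the coefficient identification fixed right after Theorem~\ref{thm:KL}, the Besov norm is
\[
\Vert\xi(\cdot)\Vert_{s',q}^q=\Vert\xi\Vert_{\tau_q(s'),q}^q=\sum_{\ell=1}^\infty\ell^{\tau_q(s')q}|\xi_\ell|^q=\sum_{\ell=1}^\infty\ell^{\tau_q(s')q}\lambda_\ell^{q/2}|\xi_\ell^\star|^q .
\]
Since every summand is nonnegative, Tonelli's theorem gives $\mbE[\Vert\xi(\cdot)\Vert_{s',q}^q]=\sum_{\ell=1}^\infty\ell^{\tau_q(s')q}\lambda_\ell^{q/2}\,\mbE[|\xi_\ell^\star|^q]$; note that only the marginal law of each coordinate is used, so no independence (nor even lack of correlation) of the $\xi_\ell^\star$ is required.

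The key computation is therefore $\mbE[|\xi_\ell^\star|^q]$ for $\xi_\ell^\star\sim\qED(0,1)$. The quickest way is the stochastic representation of Proposition~\ref{prop:QED_stochrep} with $J=1$: there $\xi_\ell^\star=R\,L\,S$ with Cholesky factor $L=1$, $|S|=1$, and $R^q\overset{d}{=}\chi^2(1)$, so $|\xi_\ell^\star|^q\overset{d}{=}\chi^2(1)$ and hence $\mbE[|\xi_\ell^\star|^q]=1$. Equivalently one may integrate $|\xi|^q$ directly against the $\qED(0,1)$ density $\propto|\xi|^{q/2-1}\exp(-\tfrac12|\xi|^q)$ from the remark after Definition~\ref{dfn:qED}; the substitution $u=|\xi|^q$ turns both the numerator and the normalizing integral into Gamma integrals whose ratio is $2\,\Gamma(3/2)/\Gamma(1/2)=1$.

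Plugging this in yields $\mbE[\Vert\xi(\cdot)\Vert_{s',q}^q]=\sum_{\ell=1}^\infty\ell^{\tau_q(s')q}\lambda_\ell^{q/2}=\Vert\sqrt{\lambda}\Vert_{\tau_q(s'),q}^q$, which is finite precisely by Assumption~\ref{asmp:eigen_summable}-(i), so $\xi(\cdot)\in L^q_\mbP(\Omega,B^{s',q}(\mT))$. The second statement is the special case $\tau_q(s')=0$: then the series reduces to $\sum_{\ell=1}^\infty\lambda_\ell^{q/2}=\Vert\lambda\Vert_{q/2}^{q/2}$, finite by Assumption~\ref{asmp:eigen_summable}-(ii), which gives $\xi(\cdot)\in L^q_\mbP(\Omega,B^{0,q}(\mT))$ and the claimed identity $\mbE[\Vert\xi(\cdot)\Vert_q^q]=\Vert\lambda\Vert_{q/2}^{q/2}$; since $1\le q\le2$, the Besov embedding $B^{0,q}(\mT)\hookrightarrow L^q(\mT)$ (a standard embedding, cf.\ Proposition~\ref{prop:embeddings}) upgrades this to $\xi(\cdot)\in L^q_\mbP(\Omega,L^q(\mT))$.

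I expect the only genuine obstacle to be the moment identity $\mbE[|\xi_\ell^\star|^q]=1$: it is what makes the norm moment collapse exactly onto the sequence $\Vert\sqrt{\lambda}\Vert_{\tau_q(s'),q}^q$ around which Assumption~\ref{asmp:eigen_summable} is phrased, and it relies on the extra factor $|\cdot|^{q/2-1}$ in the $\qED$ density (precisely the factor introduced for consistency). A lesser point requiring care is that, for $q\ne2$, the coefficient norm $\Vert\cdot\Vert_{0,q}$ and the genuine $L^q(\mT)$ functional norm differ, so the passage from $B^{0,q}(\mT)$-integrability to $L^q(\mT)$-integrability must go through the embedding rather than a Parseval-type identity. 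Everything else (the interchange of expectation and summation, the identification of norms) is routine.
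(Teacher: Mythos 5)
Your proof is correct and follows essentially the same route as the paper's: both reduce $\mbE[\Vert\xi(\cdot)\Vert_{s',q}^q]$ via the Karhunen--Lo\`eve coefficients to $\sum_\ell \ell^{\tau_q(s')q}\lambda_\ell^{q/2}\,\mbE[\chi^2_\ell]$ using Proposition~\ref{prop:QED_stochrep} to identify $\lambda_\ell^{-q/2}|\xi_\ell|^q\sim\chi^2(1)$ (your $\mbE[|\xi_\ell^\star|^q]=1$ is exactly this), and then invoke Assumption~\ref{asmp:eigen_summable}. Your extra remark distinguishing the coefficient norm $\Vert\cdot\Vert_{0,q}$ from the genuine $L^q(\mT)$ norm for $q\neq 2$ is a legitimate point of care that the paper's proof passes over silently by identifying the two.
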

\begin{proof}
See Supplement \ref{apx:qepint}.
\end{proof}


Under Assumption \ref{asmp:eigen_summable}-(i), e.g., $\sqrt{\lambda_\ell}=\ell^{-\tau_q(s)}$, the $\qEP(0, \mC)$ process becomes equivalent to the BP, $\mB(1, B^{s,q}(\mT))$, which only differs from the \emph{$p$-exponential process} \citep{Agapiou_2021} by a constant in the definition $\pi_\xi(\cdot)\propto \exp\left(-\frac{1}{p}|\cdot|^p\right)$. See Figure \ref{fig:relationship} for more illustration of their relationship.
Therefore, the posterior concentration theory \citep[Theorem 3.1 and Lemma 5.14 of][]{Agapiou_2021} developed for the $p$-exponential process applies to the $\qEP(0, \mC)$ process.
This result (see Theorem \ref{thm:p-exp_contr} in Supplement \ref{apx:posthm}) will be used in the proof of posterior contraction Theorem \ref{thm:postcontr} for STBP priors.

\section{Spatiotemporal Besov Process}\label{sec:STBP}
Now we generalize the Banach space $B^{s,q}(\mX)$ to include the temporal domain.
Let the coefficients $\{u_\ell\}_{\ell=1}^\infty$ in \eqref{eq:series_repn} be $L^p(\mT)$ functions over some bounded temporal domain $\mT\subset \mbR_+$. 
Denote $\mZ=\mX\times \mT$ and $\bz=(\bx, t)$. Then we obtain a spatiotemporal function $u(\bz)=u(\bx, t)$ on $\mZ$ by the following series expansion with an infinite sequence of $L^p(\mT)$ functions:
\begin{equation}\label{eq:STseries_repn}
    u(\bz) = \sum_{\ell =1}^{\infty}u_\ell(t)\phi_\ell(\bx),\quad u_\ell(\cdot)\in L^p(\mT),\; \forall\ell\in\mbN.
\end{equation}
Denote the infinite sequence $u_\mT:=\{u_\ell(\cdot)\}_{\ell=1}^\infty$. We define the following norm $\Vert \cdot \Vert_{\tau,q,p}$ for $u_\mT$ with a spatial (BP) index $q\geq 1$ and a temporal (Q-EP) index $p\geq 1$: 
\begin{equation}\label{eq:STBSV_norm}
    \Vert u_\mT \Vert_{\tau,q,p} = \left(\sum_{\ell=1}^\infty\ell^{\tau q}\Vert u_\ell(\cdot)\Vert^q_p\right)^{\frac{1}{q}},
    \quad \Vert u_\ell(\cdot)\Vert_p = \left(\int_\mT |u_\ell(t)|^p dt\right)^{\frac{1}{p}}.
\end{equation}
Denote the space of such infinite sequences as $\ell^{q,\tau}(L^p(\mT)) := \{u_\mT \,|\, \Vert u_\mT \Vert_{\tau,q,p} < \infty \}$.
For a fixed spatial basis $\{\phi_\ell(\bx)\}_{\ell=1}^\infty$, 
we can identify $u$ with $u_\mT$ based on the series representation \eqref{eq:STseries_repn}. 
Let $\Vert u \Vert_{s,q,p}=\Vert u_\mT \Vert_{\tau_q(s),q,p}$ with $\tau_q(s)=\frac{s}{d}+\half-\frac{1}{q}$ as in \eqref{eq:Besov_norm}.
Then we define the Banach space of spatiotemporal functions $B^{s,q,p}(\mZ) := \{u : \mZ \to \mbR \,|\, \Vert u \Vert_{s,q,p} < \infty\}$. 

Next we generalize BP $u(\bx)\sim \mB(\kappa, B^{s,q}(\mX))$ as in \eqref{eq:fun_expn} to be spatiotemporal by letting the random coefficients $\{\xi_\ell\}_{\ell=1}^\infty$ vary in time domain according to $\qEP(0,\mC)$. For this purpose, we make the following assumption.
\begin{asump}\label{asmp:decaying_constant}
In \eqref{eq:STseries_repn}, we let
\begin{equation}\label{eq:q_EP}
    u_\ell(t) = \gamma_\ell \xi_\ell(t), \quad \gamma_\ell = \kappa \ell^{-\tau_q(s)}, \quad \xi_\ell(\cdot)\overset{\text{i.i.d.}}{\sim} \qEP(0,\mC).
\end{equation}
\end{asump}
Compared with \eqref{eq:q_exp}, we absorb the scaling factor $\kappa>0$ into the covariance kernel $\mC$ and set $\kappa=1$ 
except in Section \ref{sec:post_contr}.
Under Assumption \ref{asmp:decaying_constant}, we have $u$ in \eqref{eq:STseries_repn} as a stochastic process termed \emph{spatiotemporal Besov process (STBP)}, denoted as $\STBP(\mC, B^{s,q,p}(\mZ))$. 
Similarly as in Section \ref{sec:Besov}, the infinite random sequence $\xi_\mT:=\{\xi_\ell(\cdot)\}_{\ell=1}^\infty$ is a random element of the probability space $(\Omega, \mB(\Omega), \mbP)$ with $\Omega=(L^p(\mT))^\infty$, product $\sigma$-algebra $\mB(\Omega)$ and probability measure $\mbP$ defined by the infinite product of $\qEP(0,\mC)$ measures. 
Then we can define a spatiotemporal Besov measure on $B^{s,q,p}(\mZ)$ as the law of STBP.
\begin{dfn}[Spatiotemporal Besov Measure]\label{dfn:STBP}
Let $\mbP$ be the measure of random sequences $\xi_\mT\in\Omega$.
Suppose we have the following map
\begin{equation}\label{eq:STBP_randf}
f_\gamma: \Omega \to B^{s,q,p}(\mZ), \quad \xi_\mT \mapsto u(\bz) = \sum_{\ell=1}^{\infty} u_\ell(t)\phi_\ell(\bx) = \sum_{\ell=1}^{\infty} \gamma_\ell \xi_\ell(t)\phi_\ell(\bx), 
\end{equation}
where $\gamma_\ell$ and $\xi_\ell(\cdot)$ are defined in \eqref{eq:q_EP}.
Then the pushforward $f_\gamma^\sharp\mbP$ is a spatiotemporal Besov measure $\Pi$ on $B^{s,q,p}(\mZ)$.
\end{dfn}


Based on \eqref{eq:STBP_randf}, we need to bound $\Vert u_\ell(\cdot)\Vert_p^q$ (or $\Vert\xi_\ell(\cdot)\Vert_p^q$) so the norm \eqref{eq:STBSV_norm} is well defined.
By Theorem \ref{thm:qepint}, $\Vert\xi_\ell(\cdot)\Vert_p^q$ has a bounded mean for $1\leq p\leq q$.
For the convenience of exposition and theoretical investigation, in the following we only consider $p=q\in[1,2]$. This is also when most interesting applications happen (See more details in Section \ref{sec:numerics}). 
Denote $\Vert u\Vert_{s,q}:=\Vert u\Vert_{s,q,q} = \Vert \xi_\mT\Vert_{q,q} = \left(\sum_{\ell=1}^\infty \Vert \xi_\ell(\cdot)\Vert^q_q\right)^{\frac{1}{q}}$ and $B^{s,q,q}(\mZ):=B^{s,q}(\mZ)$.
Figure \ref{fig:relationship} summarizes the relationship between GP, BP, Q-EP and their spatiotemporal variants.

\begin{figure}[t]
\centering
\begin{tikzpicture}\fontsize{9.5pt}{10.25pt}\selectfont
\matrix (m) [matrix of math nodes,row sep=3.5em,column sep=4em,minimum width=2em]
{
 & p\textrm{-exponential} \textrm{\citep{Agapiou_2021}} & & \\
 \GP(0,\mC_\bx) & \mB(\kappa, B^{s,q}(\mX)) \textrm{\citep{Lassas_2009}} & \STBP(\mC_t, B^{s,q}(\mZ)) & \STGP(0,\mC_\bz) \\
 & \qEP(0, \mC_t) \textrm{\citep{Li_2023}} & \textrm{time-uncorrelated} & \\};
\path[-stealth]
(m-2-2) edge node[right,transform canvas={shift={(-40pt,0pt)}}]{\footnotesize modified $\pi_\xi(\cdot)\propto \exp\left(-\frac{1}{p}|\cdot|^p\right)$} (m-1-2)
(m-2-2) edge[transform canvas={shift={(0pt,+2pt)}}] node[above]{$q=2$} (m-2-1)
(m-2-1) edge[transform canvas={shift={(0pt,-2pt)}}] node[below]{\scriptsize K-L expansion} (m-2-2)
(m-2-2) edge node[above]{\scriptsize $\xi_\ell\to\xi_\ell(\cdot)$} (m-2-3) 
(m-2-3) edge[transform canvas={shift={(0pt,+2pt)}}] node[above]{$q=2$} (m-2-4)
(m-2-3) edge node[right]{$\mC_t=\mI_t$} (m-3-3)
(m-2-4) edge[transform canvas={shift={(0pt,-2pt)}}] node[below]{\scriptsize K-L expansion} (m-2-3)
(m-3-2) edge node[below,rotate=-20]{{$q=2$}} (m-2-1)
(m-3-2) edge node[right,transform canvas={shift={(-20pt,+2pt)}}]{\scriptsize K-L expansion} (m-2-2)
(m-3-2) edge[transform canvas={shift={(-15pt,2pt)}}] node[below,rotate=16,transform canvas={shift={(+2pt,+2pt)}}]{\scriptsize $\xi_\ell(\cdot)\sim \qEP(0, \mC_t)$} (m-2-3);
\end{tikzpicture}
\caption{Relationship between GP, BP, Q-EP and their spatiotemporal variants.}
\label{fig:relationship}
\end{figure}
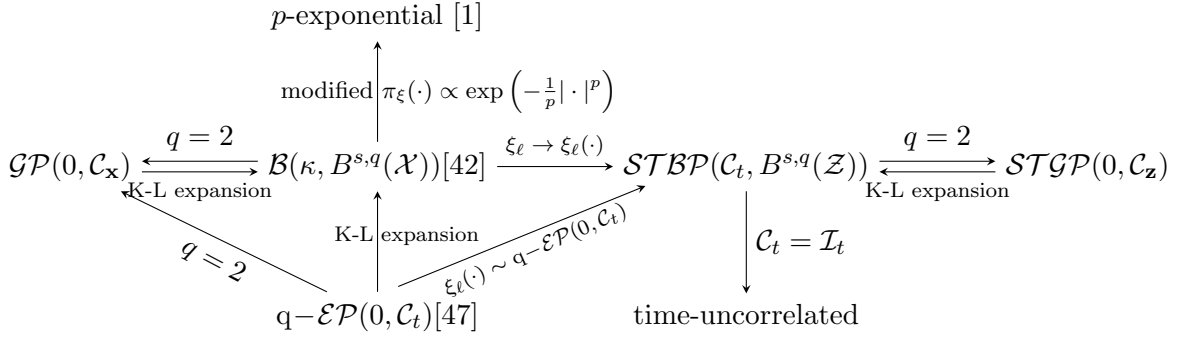

\subsection{STBP as A Prior}
In this section, we study multiple properties of STBP when used as a prior.
Similarly as in Section \ref{sec:Besov},
given a random draw $u\sim \STBP(\mC, B^{s,q}(\mZ))$ as in \eqref{eq:STBP_randf}, we have 
$\mbE[\Vert u\Vert_{s,q}^q] = \mbE[\Vert \xi_\mT\Vert_{q,q}^q]=\infty$ due to the iid assumption on $\xi_\mT$ in \eqref{eq:q_EP}. 
However, we have the following integrability of STBP function in the ambient space $B^{s',q}(\mZ)$ similar to Theorem \ref{thm:Lqbound}.
\begin{thm}\label{thm:Lqbound_ST}
    If $u \sim \STBP(\mC, B^{s,q}(\mZ))$ as in \eqref{eq:STBP_randf} satisfies both Assumptions \ref{asmp:eigen_summable}-(ii) and \ref{asmp:decaying_constant}, then $u\in L^q_\mbP(\Omega; B^{s',q}(\mZ))$ for all $s'<s-\frac{d}{q}$.
\end{thm}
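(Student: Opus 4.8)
The plan is to compute the $q$-th moment $\mbE[\Vert u\Vert_{s',q}^q]$ directly from the series representation \eqref{eq:STBP_randf} and show it is finite exactly when $s'<s-\frac{d}{q}$. Throughout I take $p=q$ as fixed in the text. By the norm definition \eqref{eq:STBSV_norm} and Assumption \ref{asmp:decaying_constant} (so that $u_\ell(t)=\gamma_\ell\xi_\ell(t)$ with $\gamma_\ell=\kappa\ell^{-\tau_q(s)}$),
\[
\Vert u\Vert_{s',q}^q = \sum_{\ell=1}^\infty \ell^{\tau_q(s')q}\Vert u_\ell(\cdot)\Vert_q^q = \kappa^q\sum_{\ell=1}^\infty \ell^{(\tau_q(s')-\tau_q(s))q}\Vert \xi_\ell(\cdot)\Vert_q^q .
\]

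First I would take expectations. Since every summand is non-negative, Tonelli's theorem lets me interchange $\mbE$ with the infinite sum, giving $\mbE[\Vert u\Vert_{s',q}^q] = \kappa^q\sum_{\ell} \ell^{(\tau_q(s')-\tau_q(s))q}\,\mbE[\Vert\xi_\ell(\cdot)\Vert_q^q]$. Because the $\xi_\ell(\cdot)$ are i.i.d.\ copies of a $\qEP(0,\mC)$ draw and Assumption \ref{asmp:eigen_summable}-(ii) is in force, the second conclusion of Theorem \ref{thm:qepint} applies to each factor: $\mbE[\Vert\xi_\ell(\cdot)\Vert_q^q]=\Vert\lambda\Vert_{\frac{q}{2}}^{\frac{q}{2}}=:M<\infty$, a constant not depending on $\ell$. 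The problem thus reduces to the convergence of the deterministic series $M\kappa^q\sum_\ell \ell^{(\tau_q(s')-\tau_q(s))q}$.

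Then I would evaluate the exponent. From $\tau_q(s)=\frac{s}{d}+\frac12-\frac1q$ one has $\tau_q(s')-\tau_q(s)=\frac{s'-s}{d}$, so the series is $\sum_\ell \ell^{(s'-s)q/d}$, which converges if and only if $(s'-s)q/d<-1$, i.e.\ $s'<s-\frac{d}{q}$. Under this hypothesis $\mbE[\Vert u\Vert_{s',q}^q]<\infty$, which is by definition the statement $u\in L^q_\mbP(\Omega;B^{s',q}(\mZ))$.

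The argument is essentially bookkeeping once Theorem \ref{thm:qepint} is available, so I do not anticipate a serious obstacle. The only points needing a little care are: (i) that $\Vert\xi_\ell(\cdot)\Vert_q$ is a.s.\ finite, so each summand above makes sense pointwise in the sample space — this is guaranteed by $\xi_\ell(\cdot)\in L^q_\mbP(\Omega,L^q(\mT))$ from Theorem \ref{thm:qepint}; and (ii) that finiteness of the $q$-th moment genuinely promotes the formal series \eqref{eq:STBP_randf} to a well-defined element of $B^{s',q}(\mZ)$ with partial sums converging in $\Vert\cdot\Vert_{s',q}$, which follows from Fubini together with a Markov/Borel--Cantelli argument exactly as in the non-temporal Theorem \ref{thm:Lqbound}.
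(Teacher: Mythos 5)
Your proof is correct and is exactly the argument the paper intends: it omits an explicit proof, remarking only that the result is ``similar to Theorem \ref{thm:Lqbound}'', and the intended computation is precisely yours --- apply Tonelli, use the i.i.d.\ structure together with Theorem \ref{thm:qepint} under Assumption \ref{asmp:eigen_summable}-(ii) to get $\mbE[\Vert\xi_\ell(\cdot)\Vert_q^q]=\Vert\lambda\Vert_{\frac{q}{2}}^{\frac{q}{2}}$, and reduce to the convergence of $\sum_\ell \ell^{(s'-s)q/d}$. No gaps; your care with the constant $\gamma_\ell=\kappa\ell^{-\tau_q(s)}$ from Assumption \ref{asmp:decaying_constant} (versus $\kappa^{-1/q}$ in the purely spatial case) is also consistent with the paper.
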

The following (Fernique type) theorem enhances such well-definedness in the context of almost sure convergence for STBP random draws \citep{Lassas_2009,Dashti_2012}.
\begin{restatable}{thm}{Fernique}
\label{thm:Fernique}
Let $u\sim \STBP(\mC, B^{s,q}(\mZ))$ as in \eqref{eq:STBP_randf} satisfy both Assumptions \ref{asmp:eigen_summable}-(ii) and \ref{asmp:decaying_constant}. The following statements are equivalent:
\begin{enumerate}[label=(\roman*),itemsep=0pt]
\item $u\in B^{s',q}(\mZ)$ $\Pi$-a.s.
\item $\mathbb{E}[\exp(\alpha \Vert u \Vert_{s',q}^q)] <\infty \; for\; any\; \alpha\in \left(0, (\sup_\ell \lambda_\ell)^{-\frac{q}{2}}/2\right)$.
\item $s'<s-\frac{d}{q}$.
\end{enumerate}
\end{restatable}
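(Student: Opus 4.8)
The plan is to prove the equivalence of the three statements by establishing the cycle (iii) $\Rightarrow$ (ii) $\Rightarrow$ (i) $\Rightarrow$ (iii), mirroring the classical Fernique argument for Besov measures in \cite{Dashti_2012,dashti2017} but now tracking the extra temporal randomness coming from the $\qEP(0,\mC)$ coefficients. The key object throughout is the scalar random variable $\Vert u\Vert_{s',q}^q = \sum_{\ell=1}^\infty \ell^{\tau_q(s')q}\gamma_\ell^q\Vert\xi_\ell(\cdot)\Vert_q^q$, where $\gamma_\ell=\kappa\ell^{-\tau_q(s)}$, so that the combined weight is $\kappa^q\ell^{(\tau_q(s')-\tau_q(s))q} = \kappa^q \ell^{(s'-s)q/d}$. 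Since $\Vert\xi_\ell(\cdot)\Vert_q^q$ for $\xi_\ell\sim\qEP(0,\mC)$ has finite mean equal to $\Vert\lambda\Vert_{q/2}^{q/2}$ by Theorem \ref{thm:qepint} (Assumption \ref{asmp:eigen_summable}-(ii)), and moreover, via the stochastic representation of Proposition \ref{prop:QED_stochrep}, $\Vert\xi_\ell(\cdot)\Vert_q^q$ has exponential moments (the radial part $R^q$ is a $\chi^2$ in the finite-dimensional marginals, and one passes to the limit), the series has summable means precisely when $\sum_\ell \ell^{(s'-s)q/d}<\infty$, i.e. $(s-s')q/d>1$, i.e. $s'<s-d/q$.

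Concretely, for (iii) $\Rightarrow$ (ii), I would first show that each term $Y_\ell:=\Vert\xi_\ell(\cdot)\Vert_q^q$ satisfies a bound of the form $\mbE[\exp(\alpha_0 Y_\ell)]\leq M<\infty$ for some fixed $\alpha_0>0$ depending only on $\sup_\ell\lambda_\ell$ (this is where the constant $(\sup_\ell\lambda_\ell)^{-q/2}/2$ enters — it is the critical exponent for the $\chi^2$-type tail of the radial component in the K-L representation of $\xi_\ell$). Then, writing $\Vert u\Vert_{s',q}^q=\sum_\ell w_\ell Y_\ell$ with $w_\ell=\kappa^q\ell^{(s'-s)q/d}$ and $\sum_\ell w_\ell<\infty$ under (iii), I would use independence of the $Y_\ell$ together with a convexity/Jensen argument: for $\alpha$ small enough that $\alpha\sup_\ell w_\ell<\alpha_0$, bound $\mbE[\exp(\alpha w_\ell Y_\ell)]$ by interpolating between $\mbE[\exp(\alpha_0 Y_\ell)]$ and $1$, obtaining a factor $1+C\alpha w_\ell$, so that $\prod_\ell(1+C\alpha w_\ell)<\infty$. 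The implication (ii) $\Rightarrow$ (i) is immediate: finiteness of an exponential moment forces $\Vert u\Vert_{s',q}<\infty$ almost surely, hence $u\in B^{s',q}(\mZ)$ $\Pi$-a.s.

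For the converse direction (i) $\Rightarrow$ (iii), I would argue by contraposition: if $s'\geq s-d/q$ then $\sum_\ell w_\ell=\infty$, and I want to conclude $\Vert u\Vert_{s',q}^q=\sum_\ell w_\ell Y_\ell=\infty$ a.s. Since the $Y_\ell$ are i.i.d. nonnegative with $Y_\ell>0$ a.s. and the same distribution up to the deterministic rescaling by $\lambda_\ell$ (more precisely $Y_\ell\overset{d}{=}\lambda_\ell^{q/2}\cdot(\text{something bounded below in expectation})$ — one needs a uniform lower bound $\mbE[Y_\ell]\geq c\,\lambda_\ell^{q/2}$ or a uniform-in-$\ell$ lower bound on $\mbP(Y_\ell>\delta)$ using that $\lambda_\ell$ are bounded below away from $0$ along a subsequence, or invoking that the normalized $\Vert\xi_\ell^\star\Vert_q^q$ with $\xi_\ell^\star\sim\qED(0,1)$ is identically distributed), a Kolmogorov three-series / Borel–Cantelli type argument shows the divergent-weight sum of i.i.d.-type positive terms diverges a.s. The main obstacle I anticipate is exactly this step: controlling $\Vert\xi_\ell(\cdot)\Vert_q^q$ — both its upper exponential tail and its lower bound — uniformly in $\ell$, because the K-L coefficients within each $\xi_\ell$ are themselves only uncorrelated (not independent) $\qED(0,\lambda_{\ell'})$ variables, so one cannot naively treat $\Vert\xi_\ell(\cdot)\Vert_q^q$ as a sum of independent pieces; instead one must work through the stochastic representation $\bxi=R\bL S$ of Proposition \ref{prop:QED_stochrep} and pass to the $L^q(\mT)$-norm limit carefully, using Theorem \ref{thm:KL} and Theorem \ref{thm:qepint} to justify the interchange of limit and expectation. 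Once the uniform two-sided control on $Y_\ell$ is in hand, the three implications assemble routinely.
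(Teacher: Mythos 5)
Your proposal follows essentially the same route as the paper: the paper proves (iii)$\Rightarrow$(ii) by expanding $\Vert u\Vert_{s',q}^q=\sum_{\ell=1}^\infty\sum_{\ell'=1}^\infty \ell^{\frac{s'-s}{d}q}\lambda_{\ell'}^{\frac{q}{2}}\chi^2_{\ell\ell'}$ with $\chi^2_{\ell\ell'}\sim\chi^2(1)$, factorizing the exponential moment into the product of moment generating functions $[1-2\alpha_{\ell\ell'}]^{-1/2}$ (whence the threshold $\alpha<\tfrac12(\sup_\ell\lambda_\ell)^{-q/2}$, exactly as you predict), bounding the infinite product via summability of $\sum_\ell \ell^{\frac{s'-s}{d}q}$ and $\sum_{\ell'}\lambda_{\ell'}^{q/2}$, and then referring (ii)$\Rightarrow$(i)$\Rightarrow$(iii) to the classical Besov-prior argument of Lassas et al.\ and Dashti--Stuart, just as you do. The ``main obstacle'' you flag --- that for $q\neq 2$ the Karhunen--Lo\`eve coefficients within a single $\qEP(0,\mC)$ draw are only uncorrelated, not independent, so $\Vert\xi_\ell(\cdot)\Vert_q^q$ cannot naively be treated as a sum of independent $\chi^2$ pieces --- is a genuine issue, but the paper's own proof does not address it either (it simply asserts $\chi^2_{\ell\ell'}\overset{iid}{\sim}\chi^2(1)$ across both indices in order to factorize the expectation), so your more cautious treatment via the stochastic representation would be a refinement of, rather than a departure from, the paper's argument.
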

\begin{proof}
See Supplement \ref{apx:Fernique}.
\end{proof}
\begin{rk}
    Fernique type result (ii) is important to make rigorous the formal Lebesgue density \eqref{eq:lebesgue} and the conditions of similar format in Assumption \ref{asmp:welldefpos} as well.
    The results $(i)$ and $(iii)$ immediately imply that the $\STBP(\mC, B^{s,q}(\mZ))$ measure $\Pi$ is supported on the ambient space $B^{s',q}(\mZ)$ for $s'<s-\frac{d}{q}$, as stated in the following corollary.
\end{rk}
\begin{cor}\label{cor:STBP_supp}
Let $\Pi$ be an $\STBP(\mC, B^{s,q}(\mZ))$ measure satisfying both Assumptions \ref{asmp:eigen_summable}-(ii) and \ref{asmp:decaying_constant}. Then $\Pi(B^{s',q}(\mZ))=1$ for any $s'<s-\frac{d}{q}$ and $\Pi(B^{s',q}(\mZ))=0$ for any $s'\geq s-\frac{d}{q}$.
\end{cor}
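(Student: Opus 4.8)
The plan is to reduce everything to the series identity that, for $u\sim\STBP(\mC, B^{s,q}(\mZ))$ built from the coefficients of Assumption~\ref{asmp:decaying_constant} (with $p=q$ as adopted in this section),
\[
\Vert u\Vert_{s',q}^q \;=\; \kappa^q\sum_{\ell=1}^\infty \ell^{(s'-s)q/d}\,X_\ell,\qquad X_\ell:=\Vert\xi_\ell(\cdot)\Vert_q^q ,
\]
where, since the $\xi_\ell(\cdot)$ are i.i.d.\ $\qEP(0,\mC)$, the $X_\ell$ form an i.i.d.\ sequence of nonnegative (and, by Theorem~\ref{thm:qepint} under Assumption~\ref{asmp:eigen_summable}-(ii), $\Pi$-integrable, hence a.s.\ finite) random variables with $\mbE[X_1]=\Vert\lambda\Vert_{q/2}^{q/2}\in(0,\infty)$. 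The deterministic exponent is what governs the dichotomy: $\sum_\ell\ell^{(s'-s)q/d}<\infty$ precisely when $(s-s')q/d>1$, i.e.\ when $s'<s-\tfrac dq$.

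For the first half, $\Pi(B^{s',q}(\mZ))=1$ when $s'<s-\tfrac dq$, I would just invoke the equivalence in Theorem~\ref{thm:Fernique}: its implication (iii)$\Rightarrow$(i) states exactly that $u\in B^{s',q}(\mZ)$ holds $\Pi$-almost surely for every such $s'$. (This is also immediate from the $q$-integrability in Theorem~\ref{thm:Lqbound_ST}.)

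For the second half, $\Pi(B^{s',q}(\mZ))=0$ when $s'\ge s-\tfrac dq$, the key point is that $\{u\in B^{s',q}(\mZ)\}=\{\sum_\ell\ell^{(s'-s)q/d}X_\ell<\infty\}$ is a \emph{tail event} of the independent sequence $(\xi_\ell(\cdot))_{\ell\ge1}$, since convergence of a nonnegative series is unaffected by deleting finitely many (a.s.\ finite) terms; Kolmogorov's zero--one law therefore forces $\Pi(B^{s',q}(\mZ))\in\{0,1\}$. It then remains only to rule out the value $1$, which is precisely the contrapositive of the implication (i)$\Rightarrow$(iii) of Theorem~\ref{thm:Fernique}: for $s'\ge s-\tfrac dq$ one does \emph{not} have $u\in B^{s',q}(\mZ)$ $\Pi$-a.s., so the measure of $B^{s',q}(\mZ)$ cannot be $1$, and the dichotomy pins it to $0$. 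If one prefers a route independent of Theorem~\ref{thm:Fernique}, the alternative is a direct divergence estimate: with $a_\ell=\ell^{(s'-s)q/d}$ and $A_n=\sum_{\ell\le n}a_\ell\to\infty$, pick $\delta>0$ with $p:=\mbP(X_1>\delta)>0$, bound $\sum_\ell a_\ell X_\ell\ge\delta S_n$ for $S_n:=\sum_{\ell\le n}a_\ell\mathbf 1_{\{X_\ell>\delta\}}$, note $\mbE[S_n]=pA_n$ and $\V[S_n]=p(1-p)\sum_{\ell\le n}a_\ell^2$, check $\V[S_n]=o((\mbE[S_n])^2)$ in each regime of the decay exponent, and conclude via Chebyshev and monotonicity of $S_n$ that $S_n\uparrow\infty$ and hence $\Vert u\Vert_{s',q}=\infty$ $\Pi$-a.s.

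I do not anticipate a real obstacle: the only slightly delicate points are recognizing the divergence event as a genuine tail event so the zero--one law is available, and (on the self-contained route) carrying the two-moment bookkeeping uniformly over the regimes $(s-s')q/d\lessgtr\tfrac12$. I would also flag that the statement tacitly assumes $\mC$ is non-degenerate (so $\mbE[X_1]>0$); otherwise $u\equiv0\in B^{s',q}(\mZ)$ for all $s'$ and the second half fails --- but non-degeneracy is already built into the trace-class setting in force.
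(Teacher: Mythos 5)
Your proposal is correct and follows essentially the route the paper intends: the paper states this corollary as an immediate consequence of the equivalence (i)$\Leftrightarrow$(iii) in Theorem \ref{thm:Fernique}, and your first half (invoking (iii)$\Rightarrow$(i), or equivalently Theorem \ref{thm:Lqbound_ST}) matches that exactly. For the second half you correctly observe that the equivalence alone only yields $\Pi(B^{s',q}(\mZ))\neq 1$ when $s'\geq s-\frac{d}{q}$, and that upgrading this to $\Pi(B^{s',q}(\mZ))=0$ requires recognizing $\{\Vert u\Vert_{s',q}<\infty\}=\{\sum_\ell \ell^{(s'-s)q/d}\Vert\xi_\ell(\cdot)\Vert_q^q<\infty\}$ as a tail event of the independent sequence $(\xi_\ell(\cdot))_{\ell\geq 1}$ and applying Kolmogorov's zero--one law; this is exactly the step that is implicit in the paper (and in the cited arguments of Lassas et al.\ and Dashti--Stuart), so making it explicit is a genuine improvement rather than a deviation. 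Your alternative second-moment/Chebyshev divergence argument is also sound and would make the zero half self-contained, and your caveat about non-degeneracy of $\mC$ (so that $\mbE\Vert\xi_1(\cdot)\Vert_q^q=\Vert\lambda\Vert_{q/2}^{q/2}>0$) is a legitimate, if minor, hypothesis check that the trace-class covariance setting already guarantees.
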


Note $B^{s,q}(\mZ)\subset B^{s',q}(\mZ)$ for $s'<s$.
The following general embedding highlights the relationship among various Besov spaces needed in the contraction theory (Section \ref{sec:post_contr}).
\begin{restatable}{prop}{embeddings}
\label{prop:embeddings}
For $q,q^\dagger\in [1,2]$ and
$s'<s^\dagger-\left(\frac{d}{q^\dagger}-\frac{d}{q}\right)_+$ with $x_+:=\max\{x,0\}$, we have $B^{s^\dagger,q^\dagger\wedge q, q}(\mZ)\hookrightarrow B^{s',q}$, where $q^\dagger\wedge q:=\min\{q^\dagger,q\}$.
\end{restatable}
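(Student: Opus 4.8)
The plan is to reduce the asserted continuous embedding to an elementary inequality between weighted sequence norms. Set $r:=q^\dagger\wedge q$, so that $r\in[1,q]$. The key preliminary observation is that both spaces involved carry the \emph{same} temporal integrability index $p=q$: by convention $B^{s',q}(\mZ)=B^{s',q,q}(\mZ)$, and the source space $B^{s^\dagger,q^\dagger\wedge q,q}(\mZ)=B^{s^\dagger,r,q}(\mZ)$ also has temporal index $q$. Hence, using the series representation \eqref{eq:STseries_repn}, the identification of $u$ with $\{u_\ell(\cdot)\}_{\ell\in\mbN}$ through the fixed spatial basis, and \eqref{eq:STBSV_norm}, both norms depend on $u$ only through the nonnegative scalar sequence $a_\ell:=\Vert u_\ell(\cdot)\Vert_{L^q(\mT)}$: concretely $\Vert u\Vert_{s',q,q}=\bigl(\sum_\ell \ell^{\tau_q(s')q}a_\ell^q\bigr)^{1/q}$ and $\Vert u\Vert_{s^\dagger,r,q}=\bigl(\sum_\ell \ell^{\tau_r(s^\dagger)r}a_\ell^r\bigr)^{1/r}$. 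So it suffices to prove $\Vert u\Vert_{s',q,q}\le \Vert u\Vert_{s^\dagger,r,q}$ for every such $u$; this makes the inclusion map bounded (with operator norm $\le 1$), hence continuous.

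I would then combine two standard facts. First, since $r\le q$, one has the nesting $\Vert b\Vert_{\ell^q}\le\Vert b\Vert_{\ell^r}$ for any scalar sequence $b$ (for the nontrivial case $\Vert b\Vert_{\ell^r}<\infty$, normalize so $\Vert b\Vert_{\ell^r}=1$, whence $|b_\ell|\le1$, so $|b_\ell|^q\le|b_\ell|^r$, and summation yields the claim). Applying this with $b_\ell=\ell^{\tau_q(s')}a_\ell$ gives $\Vert u\Vert_{s',q,q}\le\bigl(\sum_\ell \ell^{\tau_q(s')r}a_\ell^r\bigr)^{1/r}$. Second, a pointwise comparison of the weight exponents: if $\tau_q(s')\le\tau_r(s^\dagger)$, then $\ell^{\tau_q(s')}\le\ell^{\tau_r(s^\dagger)}$ for every $\ell\ge1$ (monotonicity of $t\mapsto\ell^t$ for base $\ell\ge1$), so the right-hand side above is $\le\bigl(\sum_\ell \ell^{\tau_r(s^\dagger)r}a_\ell^r\bigr)^{1/r}=\Vert u\Vert_{s^\dagger,r,q}$. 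Chaining the two inequalities delivers the desired bound.

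It remains to check that the hypothesis on $s'$ is exactly the weight condition $\tau_q(s')\le\tau_r(s^\dagger)$. Since $\tau_q(s)=\frac{s}{d}+\half-\frac1q$, this is equivalent to $\frac{s'}{d}-\frac1q\le\frac{s^\dagger}{d}-\frac1r$, i.e. $s'\le s^\dagger-d\bigl(\frac1r-\frac1q\bigr)$. A short case split identifies the right-hand side: if $q^\dagger\ge q$ then $r=q$ and $d\bigl(\frac1r-\frac1q\bigr)=0=\bigl(\frac{d}{q^\dagger}-\frac{d}{q}\bigr)_+$; if $q^\dagger<q$ then $r=q^\dagger$ and $d\bigl(\frac1r-\frac1q\bigr)=\frac{d}{q^\dagger}-\frac{d}{q}=\bigl(\frac{d}{q^\dagger}-\frac{d}{q}\bigr)_+$. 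So the weight condition reads $s'\le s^\dagger-\bigl(\frac{d}{q^\dagger}-\frac{d}{q}\bigr)_+$, which is implied, with room to spare, by the strict inequality assumed in the statement. The only point needing genuine care is this matching of the $(\,\cdot\,)_+$ truncation with the $q^\dagger\wedge q$ appearing in the source index; beyond that, the argument is entirely mechanical, and in fact even equality of the weight exponents would already yield a bounded embedding, so no compactness-type subtlety enters.
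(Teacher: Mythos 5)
Your proof is correct, and it takes a genuinely different and more unified route than the paper's. The paper argues by a three-way case split on the sign of $q^\dagger-q$: for $q^\dagger=q$ it compares weights directly, for $q^\dagger>q$ it passes to the $\Vert\cdot\Vert_{s^\dagger,q^\dagger}$ norm and applies H\"older's inequality with exponents $(\tfrac{q^\dagger}{q},\tfrac{q^\dagger}{q^\dagger-q})$ (which requires the strict inequality $s'<s^\dagger$ to make the auxiliary series summable, and in fact establishes an embedding from $B^{s^\dagger,q^\dagger}(\mZ)$ rather than from the space $B^{s^\dagger,q,q}(\mZ)$ named in the statement), and for $q^\dagger<q$ it bounds $\Vert u_\ell(\cdot)\Vert_q^{q-q^\dagger}$ by a single term of the defining sum --- which is exactly the $\ell^r\hookrightarrow\ell^\infty$ step hiding inside your $\ell^r\hookrightarrow\ell^q$ nesting, so the paper's third case is essentially your argument with the exponents written out. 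Your version buys three things: it handles all cases at once with a single mechanism (nesting of $\ell^p$ norms plus monotonicity of the weights $\ell\mapsto\ell^t$), it yields the embedding with operator norm exactly $1$, and it shows the non-strict condition $\tau_q(s')\le\tau_r(s^\dagger)$ already suffices, so the strict hypothesis in the statement is not needed for boundedness. Your case analysis matching $d(\tfrac1r-\tfrac1q)$ with $\bigl(\tfrac{d}{q^\dagger}-\tfrac{d}{q}\bigr)_+$ is also correct.
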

\begin{proof}
See Supplement \ref{apx:embeddings}.
\end{proof}


Similarly to Theorem \ref{thm:KL}, we have the Karhunen-Lo\'eve theorem for an STBP $u(\cdot)$ as in \eqref{eq:STBP_randf} represented completely in the spatial ($\{\phi_\ell\}_{\ell=1}^\infty$) and temporal ($\{\psi_{\ell'}\}_{\ell'=1}^\infty$) bases.
\begin{restatable}{thm}{KLSTBP}[Karhunen-Lo\'eve]
\label{thm:KLSTBP}
If $u\sim \STBP(\mC, B^{s,q}(\mZ))$ as in \eqref{eq:STBP_randf} with a trace-class HS operator $T_\mC$ having eigen-pairs $\{\lambda_\ell, \psi_\ell(\cdot)\}_{\ell=1}^\infty$, then we have 
\begin{equation}\label{eq:STBP_serexp}
u(\bz) = \sum_{\ell=1}^\infty \sum_{\ell'=1}^\infty u_{\ell\ell'} \phi_\ell(\bx)\psi_{\ell'}(t), \quad u_{\ell\ell'}:=\int_\mT u_\ell(t) \psi_{\ell'}(t) dt \sim \qED(0, \gamma_\ell^2\lambda_{\ell'}) \,.
\end{equation}
Moreover, the spatiotemporal covariance of STBP bears a separable structure, i.e.
\begin{equation}\label{eq:STBP_cov}
    \mCv(u(\bz), u(\bz')) = \sum_{\ell=1}^{\infty}\gamma_\ell^2 \phi_\ell(\bx)\phi_\ell(\bx') \mC(t,t') \,.
\end{equation}
\end{restatable}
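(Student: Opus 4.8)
The plan is to derive the double series \eqref{eq:STBP_serexp} by inserting the Karhunen--Lo\`eve expansion of each temporal coefficient process into the spatial series \eqref{eq:STBP_randf}, and then to read off the separable covariance \eqref{eq:STBP_cov} directly from the iid structure of the $\xi_\ell(\cdot)$. First, since by Assumption \ref{asmp:decaying_constant} each $\xi_\ell(\cdot)\overset{iid}{\sim}\qEP(0,\mC)$ and $T_\mC$ is trace-class with eigen-pairs $\{\lambda_{\ell'},\psi_{\ell'}\}_{\ell'=1}^\infty$, Theorem \ref{thm:KL} applies fibre-wise: for every $\ell$ we get $\xi_\ell(t)=\sum_{\ell'=1}^\infty\xi_{\ell\ell'}\psi_{\ell'}(t)$ in $L^2(\mT)$ (a.s.), with $\xi_{\ell\ell'}:=\int_\mT\xi_\ell(t)\psi_{\ell'}(t)\mu(dt)\sim\qED(0,\lambda_{\ell'})$ and $\mCv(\xi_{\ell\ell'},\xi_{\ell\ell''})=\lambda_{\ell'}\delta_{\ell'\ell''}$. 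Substituting into \eqref{eq:STBP_randf} and setting $u_{\ell\ell'}:=\gamma_\ell\xi_{\ell\ell'}$ then formally gives $u(\bz)=\sum_\ell\sum_{\ell'}u_{\ell\ell'}\phi_\ell(\bx)\psi_{\ell'}(t)$; since $u_\ell(t)=\gamma_\ell\xi_\ell(t)$ we indeed have $u_{\ell\ell'}=\int_\mT u_\ell(t)\psi_{\ell'}(t)\mu(dt)$, as claimed.

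The distributional statement $u_{\ell\ell'}\sim\qED(0,\gamma_\ell^2\lambda_{\ell'})$ follows from the scaling property of the univariate $q$-exponential distribution: from the stochastic representation in Proposition \ref{prop:QED_stochrep} (or directly from the density \eqref{eq:qED}), if $X\sim\qED_1(0,\sigma^2)$ then $aX\sim\qED_1(0,a^2\sigma^2)$, because $R$ is unaffected and the Cholesky factor scales by $|a|$. Applying this with $a=\gamma_\ell$ gives the claim. Note that only the marginal law of each $u_{\ell\ell'}$ is asserted, so no joint distribution over the double index is needed.

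The delicate point, and the part I expect to require the most care, is justifying that the iterated series can be handled as a genuine double series converging in $B^{s',q}(\mZ)$ for $s'<s-d/q$. I would first invoke Theorem \ref{thm:Lqbound_ST} (equivalently the Fernique Theorem \ref{thm:Fernique}) under Assumptions \ref{asmp:eigen_summable}-(ii) and \ref{asmp:decaying_constant} to ensure $u\in L^q_\mbP(\Omega;B^{s',q}(\mZ))$, and hence $\sum_\ell\ell^{\tau_q(s')q}\Vert u_\ell(\cdot)\Vert_q^q<\infty$ $\Pi$-a.s.; then, for each fixed $\ell$, Theorem \ref{thm:qepint} gives the a.s. convergence of $\sum_{\ell'}\xi_{\ell\ell'}\psi_{\ell'}$ to $\xi_\ell$ in the relevant norm. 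Combining these via a Fubini/dominated-convergence argument on the product probability space $(\Omega,\mB(\Omega),\mbP)$ --- the summands are $q$-integrable by Theorem \ref{thm:Lqbound_ST} --- lets me interchange the order of summation and identify the partial sums of the double series with those of \eqref{eq:STBP_randf}, so that \eqref{eq:STBP_serexp} converges in $L^q_\mbP(\Omega;B^{s',q}(\mZ))$ and along a subsequence a.s. This bookkeeping is essentially the only nontrivial step; everything else is algebraic.

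Finally, for the covariance: since STBP is centred, $\mCv(u(\bz),u(\bz'))=\mbE[u(\bz)u(\bz')]=\sum_{\ell,m}\gamma_\ell\gamma_m\phi_\ell(\bx)\phi_m(\bx')\,\mbE[\xi_\ell(t)\xi_m(t')]$, where the interchange of $\mbE$ with the sum is justified under the standing eigenvalue-summability assumption by finiteness of $\sum_\ell\gamma_\ell^2\phi_\ell(\bx)^2\mC(t,t)$. The iid assumption on $\{\xi_\ell(\cdot)\}$ annihilates all cross terms with $\ell\neq m$, while for $\ell=m$ we have $\mbE[\xi_\ell(t)\xi_\ell(t')]=\mC(t,t')$ because $\xi_\ell\sim\qEP(0,\mC)$ --- equivalently, $\sum_{\ell'}\lambda_{\ell'}\psi_{\ell'}(t)\psi_{\ell'}(t')=\mC(t,t')$ by Mercer's theorem for the trace-class kernel $\mC$. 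This leaves $\mCv(u(\bz),u(\bz'))=\sum_\ell\gamma_\ell^2\phi_\ell(\bx)\phi_\ell(\bx')\,\mC(t,t')$, the asserted separable form.
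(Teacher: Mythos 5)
Your proposal is correct and follows essentially the same route as the paper's proof: apply the Karhunen--Lo\`eve theorem for Q-EP (Theorem \ref{thm:KL}) fibre-wise to each iid coefficient process $\xi_\ell(\cdot)$, substitute into \eqref{eq:STBP_randf}, and compute the covariance directly, with the iid assumption killing the cross terms and $\E[\xi_\ell(t)\xi_\ell(t')]=\mC(t,t')$ giving the separable form. The extra detail you supply on the scaling law $u_{\ell\ell'}=\gamma_\ell\xi_{\ell\ell'}\sim\qED(0,\gamma_\ell^2\lambda_{\ell'})$ and on the mode of convergence of the double series is a welcome elaboration of steps the paper leaves implicit, not a different argument.
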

\begin{proof}
See Supplement \ref{apx:KLSTBP}.
\end{proof}
\begin{rk}\label{rk:STBP_cov}
In practice (Section \ref{sec:numerics}), we do not directly use the series-based representation \eqref{eq:STBP_serexp} because it is not straightforward to specify correlations among sample functions which requires tweaking basis functions according to \eqref{eq:STBP_cov}. 
Instead, we include a kernel $\mC$ in Definition \ref{dfn:STBP} of STBP to directly model the temporal correlations through covariance functions, e.g., squared exponential and Mat\'ern (See Section \ref{sec:hyperpar}). In Section \ref{sec:numerics}, we will also investigate the importance of temporal kernel $\mC$ by comparing with a time-uncorrelated method where $\mC=\mI$ (Refer to Figure \ref{fig:relationship}).
\end{rk}



The regularity of an STBP random draw $u(\bx ,t)$ as in \eqref{eq:STBP_serexp} also depends on the properties of spatial  ($\{\phi_\ell\}_{\ell=1}^\infty$) and temporal ($\{\psi_{\ell'}\}_{\ell'=1}^\infty$) bases. See Theorem \ref{thm:continuity} in Supplement \ref{apx:priorthm} for the H\"older continuity of $u$ proved by the Kolmogorov continuity test \citep[Theorem 30 in Section A.2.5 of][]{dashti2017}.
\subsection{Posterior Theorems of Bayesian Inverse Problems}\label{sec:post_contr}
In this section, we study the posterior properties of the Bayesian inverse model \eqref{eq:bip} with STBP priors. 
In particular, we consider the separable Banach space $\mbX = B^{s',q}(\mZ)$ for some $s'<s-\frac{d}{q}$, and let $\mbY$ be another separable Banach space, e.g., $\mbY=H^s(\mX)$ or $\mbY=\mbR^m$, depending on the applications.
For the potential (negative log-likelihood) function $\Phi:B^{s',q}(\mZ)\times\mbY\to\mbR$ as in \eqref{eq:potential}, we impose Lipschitz continuity in $u$ as stated in the following assumption which will be needed to bound the model complexity (refer to Lemma \ref{lem:KL_bounds}) in the proof of posterior contraction Theorem \ref{thm:postcontr}.
\begin{asump}\label{asmp:Lipschitz_u}
For every $r>0$, there exists $L=L(r)>0$ such that for every $y\in \mbY$ and for all $u_1,u_2\in B^{s',q}(\mZ)$ with $\max\{\Vert u_1\Vert_{s',q}, \Vert u_2\Vert_{s',q}\}<r$, 
\begin{equation*}
    |\Phi(u_1,y)-\Phi(u_2,y)|\leq L\Vert u_1-u_2\Vert_{s',q} \,. 
\end{equation*}
\end{asump}

Recall that $\Pi$ is the STBP prior defined by \eqref{eq:STBP_randf} and $\Pi(\cdot|y)$ is the resulting posterior measure of model \eqref{eq:bip}. The well-definedness (Theorem \ref{thm:bayespost}) and well-posedness (Theorem \ref{thm:wellpose}) of the posterior measure are re-examined in Supplement \ref{apx:posthm}.

Suppose that $u\in\mbX$ is evaluated at $I$ spatial locations and $J$ time points. Denote $n=I\wedge J=\min\{I,J\}$. Let the observations $Y^{(n)}:=\{Y_j\}_{j=1}^n$ be independent but not identically distributed.
Now we consider the concentration property of the posterior $\Pi_n(\cdot|Y^{(n)})$ in the limit $n\to\infty$. Unlike the Gaussian measure with reproducible kernel Hilbert space (RKHS), the lack of inner product structure on $B^{s,q}(\mZ)\subset B^{s',q}(\mZ)$ makes the posterior contraction theories more challenging \citep{Agapiou_2021}. 
We consider the separable Banach space $(B^{s',q}(\mZ), \Vert\cdot\Vert_{s',q})$. 
Define the concentration function of STBP measure $\Pi_n$ at $u=u^\dagger$ as
\begin{equation}\label{eq:contr_fun}
\varphi_{u^\dagger}(\eps) = \inf_{h\in B^{s,q}(\mZ): \Vert h-u^\dagger\Vert_{s',q}\leq \eps} \half \Vert h\Vert_{s,q}^q - \log \Pi_n(\Vert u\Vert_{s',q}\leq\eps).
\end{equation}

Denote $P_u^{(n)}:=\bigotimes_{j=1}^n P_{u,j}$ as the product measure on $\bigotimes_{j=1}^n(\mbY_j,\mB_j,\mu_j)$. Each $P_{u,j}$ has a density $p_{u_j}$ with respect to the $\sigma$-finite reference measure $\mu_j$, i.e. $\frac{d P_{u,j}}{d\mu_j} = p_{u_j}$.
Define the average Hellinger distance as $d^2_{n,H}(u,u')=\frac{1}{n}\sum_{j=1}^n\int(\sqrt{p_{u_j}}-\sqrt{p_{u'_j}})^2 d\mu_j$.
The following posterior contraction theorem states that the posterior, $u|Y^{(n)}$, converges to the true value $u^\dagger$ at a rate $\eps_n$ on sets $\Theta_n$ with dominant probability in $B^{s',q}(\mZ)$, justifying STBP as a valid learning tool in the infinite data limit.
\begin{restatable}{thm}{postcontr}[Posterior Contraction]
\label{thm:postcontr}
Let $u$ be an $\STBP(\mC, B^{s,q}(\mZ))$ random element as in \eqref{eq:STBP_randf} satisfying both Assumptions \ref{asmp:eigen_summable}-(ii) and \ref{asmp:decaying_constant} in $\Theta:=B^{s',q}(\mZ)$ with $s'<s-\frac{d}{q}$ and $P_u^{(n)}:=\bigotimes_{j=1}^n P_{u,j}$ is the product measure of $Y^{(n)}$ parameterized by $u$ with the potential function $\Phi$ \eqref{eq:potential} satisfying Assumption \ref{asmp:Lipschitz_u}. 
If the true value $u^\dagger\in \Theta$ is in the support of $u$, and $\eps_n$ satisfies the rate equation $\varphi_{u^\dagger}(\eps_n)\leq n\eps_n^2$ with $\eps_n\geq n^{-\half}$, then there exists a measurable set $\Theta_n\subset \Theta$ such that $P_{u^\dagger}^{(n)} \Pi_n(u\in \Theta_n: d_{n,H}(u,u^\dagger)\geq M_n\eps_n|Y^{(n)})\to 0$ for every $M_n\to\infty$.
Moreover, $P_{u^\dagger}^{(n)}\Pi_n(\Theta\backslash \Theta_n|Y^{(n)})\to 0$ as $n\to\infty$.
\end{restatable}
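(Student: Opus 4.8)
The plan is to verify that the STBP prior $\Pi$ and the statistical model $\{P_u^{(n)}\}$ satisfy the hypotheses of the general posterior contraction theorem for i.n.i.d.\ observations (in the spirit of \citep{ghosal2007,vanderVaart08,Ghosal_2017}), and then feed in the three prior-mass/entropy estimates that are already available to us from Theorem \ref{thm:p-exp_contr}. Concretely, one needs to produce a sieve $\Theta_n\subset\Theta$ together with constants so that: (a) the metric entropy $\log N(\eps_n,\Theta_n,d_{n,H})$ is $O(n\eps_n^2)$; (b) the prior excess mass $\Pi(u\notin\Theta_n)$ is exponentially small, $\le e^{-(C+4)n\eps_n^2}$ say; and (c) the prior puts enough mass near the truth, $\Pi\bigl(u:\ \mathrm{KL\text{-}type\ ball\ around\ }u^\dagger\bigr)\ge e^{-n\eps_n^2}$, where the ``KL-type ball'' is expressed through an average Kullback--Leibler neighborhood $B_n^{KL}(u^\dagger)=\{u: \tfrac1n\sum_j \mathrm{KL}(P_{u^\dagger,j}\Vert P_{u,j})\le \eps_n^2,\ \tfrac1n\sum_j V_{2}(P_{u^\dagger,j}\Vert P_{u,j})\le\eps_n^2\}$.

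The first step is to translate these three requirements, which are naturally stated in terms of the $\Vert\cdot\Vert_{s',q}$ geometry of $B^{s',q}(\mZ)$, into statements about the Hellinger/KL geometry of the model. Here Assumption \ref{asmp:welldefpos}(i)--(iii) does the work: the local Lipschitz bound (iii) on $\Phi$ together with the lower bound (i) gives, for $u,u'$ in a norm ball of radius $r$, a bound $d_H^2(P_{u,j},P_{u',j}) \lesssim \Phi\text{-increments} \lesssim \Vert u-u'\Vert_{s',q}$ (and similarly control of the KL divergence and its second variation), so that a $\Vert\cdot\Vert_{s',q}$-ball of radius $\eps$ around $u^\dagger$ is contained, up to constants, in $B_n^{KL}(u^\dagger)$ at radius comparable to $\eps$, and a $\Vert\cdot\Vert_{s',q}$-cover of $\Theta_n$ is a $d_{n,H}$-cover of comparable cardinality. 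I would take $\Theta_n := B_n$, the sieve supplied by Theorem \ref{thm:p-exp_contr} applied to the STBP measure $\Pi$ (which is legitimate: $\Pi$ satisfies Assumption \ref{asmp:eigen_summable}-(ii) via \ref{asmp:decaying_constant}, and by the Remark following Theorem \ref{thm:qepint} is a Besov / $p$-exponential type measure on $B^{s',q}(\mZ)$, with $q\in[1,2]$). Then \eqref{eq1:LeCam_dim} gives the entropy bound (a), \eqref{eq2:residual} gives the excess-mass bound (b) with any prescribed constant $C$, and \eqref{eq3:small_ball} together with the concentration-function hypothesis $\varphi_{u^\dagger}(\eps_n)\le n\eps_n^2$ gives a lower bound $\Pi(\Vert u-u^\dagger\Vert_{s',q}<2\eps_n)\ge e^{-n\eps_n^2}$, which by the containment above yields (c).

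With (a), (b), (c) in hand, the conclusion follows by the standard testing argument: the entropy bound (a) furnishes exponentially powerful tests separating $u^\dagger$ from $\{u\in\Theta_n: d_{n,H}(u,u^\dagger)\ge M_n\eps_n\}$ (Le Cam / Birgé construction over a minimal $d_{n,H}$-net of $\Theta_n$), the excess-mass bound (b) discards $\Theta\setminus\Theta_n$, and the small-ball bound (c) lower-bounds the evidence $\int \prod_j (p_{u_j}/p_{u^\dagger_j})\,\Pi(du)$ so that the denominator in Bayes' formula is $\gtrsim e^{-2n\eps_n^2}$ with $P_{u^\dagger}^{(n)}$-probability tending to one. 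Combining these three pieces in the usual way drives $\Pi_n(u\in\Theta_n: d_{n,H}(u,u^\dagger)\ge M_n\eps_n\mid Y^{(n)})\to 0$ in $P_{u^\dagger}^{(n)}$-probability for any $M_n\to\infty$. I expect the main obstacle to be the first step, i.e.\ cleanly relating the average Hellinger/KL quantities $d_{n,H}$, $B_n^{KL}$ to the fixed ambient norm $\Vert\cdot\Vert_{s',q}$ uniformly in $j$ and in $n$: Assumption \ref{asmp:welldefpos} only gives bounds on norm balls of a fixed radius $r$, so one must first argue that the relevant $u$'s (those in $\Theta_n$, and those in the small ball around $u^\dagger$) lie in such a ball with the required uniformity, and track how the constants $M,K,L$ interact with the exponent in (b) — this is exactly where the Fernique-type integrability (Theorem \ref{thm:Fernique}) and the boundedness of $\sup_\ell\lambda_\ell$ are needed. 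The rest is bookkeeping along well-trodden lines.
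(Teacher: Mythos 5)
Your overall architecture is the same as the paper's: reduce to the general i.n.i.d.\ contraction theorem of Ghosal--van der Vaart, translate the Hellinger/KL geometry into the $\Vert\cdot\Vert_{s',q}$ geometry via Assumption \ref{asmp:welldefpos}(i)--(iii) (the paper isolates this as a lemma showing $d_H(p_u,p_{u'})\lesssim\Vert u-u'\Vert_{s',q}$, $K\lesssim\Vert u-u'\Vert_{s',q}$, $V\lesssim\Vert u-u'\Vert_{s',q}^2$), and then feed in the entropy and small-ball estimates of Theorem \ref{thm:p-exp_contr}. Two points of divergence are worth flagging. First, and most substantively, you propose to take $\Theta_n:=B_n$ by applying Theorem \ref{thm:p-exp_contr} \emph{to the STBP measure $\Pi$ itself}. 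But that theorem is stated for a $\qEP(0,\mC)$ measure on $B^{s',q}(\mT)$ — a single temporal process — whereas $\Pi$ lives on $B^{s',q}(\mZ)\cong\ell^{q,\tau_q(s')}(L^q(\mT))$, an infinite product of such processes weighted by $\gamma_\ell$. Your parenthetical justification (``by the Remark following Theorem \ref{thm:qepint} [$\Pi$] is a Besov / $p$-exponential type measure on $B^{s',q}(\mZ)$'') is plausible after a double Karhunen--Lo\`eve reindexing, but it is exactly the step the paper does not take: instead, the paper applies Theorem \ref{thm:p-exp_contr} coordinate-wise to each $\xi_\ell(\cdot)$, with coordinate-dependent radii $\eps_{n,\ell}=c\,2^{-\ell}\ell^{-\frac{s'-s}{d}}\eps_n^2$ chosen so that the product of coordinate small balls sits inside a global $\Vert\cdot\Vert_{s',q}$-ball, and then assembles the sieve as $\Theta_n=\{u=f_\gamma(\xi_\mT):\xi_\ell(\cdot)\in B_{n,\ell},\ \ell=1,\dots,n\}$, bounding the global entropy by the maximum of the coordinate entropies and the global prior mass by the product of coordinate small-ball probabilities. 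To make your shorter route rigorous you would have to verify that the STBP measure satisfies the hypotheses of the Agapiou et al.\ framework directly on $B^{s',q}(\mZ)$ with the concentration function \eqref{eq:contr_fun}; this is believable but is an additional argument, not a citation. Second, your condition (b) (exponentially small excess prior mass of $\Theta\setminus\Theta_n$) is not needed here: the theorem's conclusion is deliberately restricted to $u\in\Theta_n$, and the paper uses the ratio form of the prior-mass condition (numerator trivially bounded by $1$), so no residual-mass estimate enters. Neither point invalidates your plan, but the first one is a genuine missing step as written; the rest of your outline, including the caveat about the local (radius-$r$) nature of Assumption \ref{asmp:welldefpos} and the role of the Fernique bound, matches the paper's reasoning.
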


\begin{proof}
See Supplement \ref{apx:postcontr}.
\end{proof}

Denote $a\wedge b=\min\{a,b\}$, $a\vee b=\max\{a,b\}$, and $x_+=x\vee 0$.
By solving the inequality $\varphi_{u^\dagger}(\eps_n)\leq n\eps_n^2$ for the minimal $\eps_n$, we obtain the posterior contraction rate as follows. 
\begin{restatable}{thm}{postcontrate}[Posterior Contraction Rate]
\label{thm:postcontrate}
Let $u$ be an $\STBP(\mC, B^{s,q}(\mZ))$ random element in $\Theta:=B^{s',q}(\mZ)$ with $s'<s-\frac{d}{q}$. The rest of the settings are the same as in Theorem \ref{thm:postcontr}. If the true value $u^\dagger\in B^{s^\dagger,q^\dagger}(\mZ)$ with $s^\dagger>s'+\left(\frac{d}{q^\dagger}-\frac{d}{q}\right)_+$ and $q^\dagger, q\in[1,2]$, then we have the rate of the posterior contraction as 
$\eps_n= n^{-\frac{\sigma(s,q,s^\dagger,q^\dagger) - s'}{2(\sigma(s,q,s^\dagger,q^\dagger) - s')+q (s-\sigma(s,q,s^\dagger,q^\dagger))}}$, where $\sigma(s,q,s^\dagger,q^\dagger)=\left(s-\frac{d}{q}\right)\bigwedge \left(s^\dagger-\left(\frac{d}{q^\dagger}-\frac{d}{q}\right)_+\right)$.
\end{restatable}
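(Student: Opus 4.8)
The plan is to derive the rate $\eps_n$ by bounding the concentration function $\varphi_{u^\dagger}(\eps)$ in \eqref{eq:contr_fun} from above by a power of $\eps$, and then solving $\varphi_{u^\dagger}(\eps_n)\leq n\eps_n^2$ for the smallest admissible $\eps_n$; Theorem \ref{thm:postcontr} then delivers the contraction. The concentration function has two pieces: the \emph{approximation term} $\inf_{h}\half\Vert h\Vert_{s,q}^q$ over $h\in B^{s,q}(\mZ)$ with $\Vert h-u^\dagger\Vert_{s',q}\leq\eps$, and the \emph{small-ball term} $-\log\Pi(\Vert u\Vert_{s',q}\leq\eps)$. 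For the approximation term, I would take $h$ to be a finite truncation (in both the spatial basis $\{\phi_\ell\}$ and temporal basis $\{\psi_{\ell'}\}$) of $u^\dagger$, keeping the first $\sim k$ coefficients; using $u^\dagger\in B^{s^\dagger,q^\dagger}(\mZ)$ and the embedding Proposition \ref{prop:embeddings}, the tail in the $\Vert\cdot\Vert_{s',q}$-norm is bounded by $k^{-(\sigma - s')/d}$-type quantity where $\sigma = \sigma(s,q,s^\dagger,q^\dagger) = (s-\frac dq)\wedge(s^\dagger - (\frac{d}{q^\dagger}-\frac dq)_+)$ is the effective smoothness that $u^\dagger$ enjoys in the scale compatible with the prior, while $\Vert h\Vert_{s,q}^q$ grows like $k^{q(s-\sigma)/d}$. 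Matching the truncation level to $\eps$ gives the approximation term as $\eps^{-\frac{q(s-\sigma)}{\sigma-s'}}$ up to constants.

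For the small-ball term, I would invoke the small-ball estimates for $p$-exponential / Besov measures from \cite{Agapiou_2021} (applicable here since, as noted in the excerpt, STBP with $p=q$ is a Besov-type measure and Theorem \ref{thm:p-exp_contr} already packages the needed estimates): under Assumption \ref{asmp:eigen_summable}-(ii) one has $-\log\Pi(\Vert u\Vert_{s',q}\leq\eps)\lesssim \eps^{-\frac{q(s-s')}{s'-(s-d/q)}}$, i.e. the same functional form with $\sigma$ replaced by its extreme value $s-\frac dq$ corresponding to the prior's own regularity. Since $u^\dagger$ can be no smoother than the prior allows in this scale, $\sigma\le s-\frac dq$, and one checks that the small-ball exponent dominates (or ties) the approximation exponent; hence $\varphi_{u^\dagger}(\eps)\lesssim \eps^{-\frac{q(s-\sigma)}{\sigma-s'}}$ overall — actually one should be slightly careful and keep $\sigma$ throughout, writing $\varphi_{u^\dagger}(\eps)\lesssim\eps^{-\rho}$ with $\rho = \frac{q(s-\sigma)}{\sigma-s'}$, which is the combined exponent. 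Then $\varphi_{u^\dagger}(\eps_n)\leq n\eps_n^2$ reads $\eps_n^{-\rho}\lesssim n\eps_n^2$, i.e. $\eps_n^{2+\rho}\gtrsim n^{-1}$, giving
\begin{equation*}
\eps_n = n^{-\frac{1}{2+\rho}} = n^{-\frac{\sigma - s'}{2(\sigma-s') + q(s-\sigma)}},
\end{equation*}
which is exactly the claimed rate; one verifies $\eps_n\geq n^{-1/2}$ automatically since $\rho\geq0$, so Theorem \ref{thm:postcontr} applies.

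The main obstacle I expect is getting the two exponents in the concentration function correct and seeing that they assemble into the single clean exponent $\rho$ with the stated $\sigma$; in particular, (a) the embedding bookkeeping — precisely tracking how $u^\dagger\in B^{s^\dagger,q^\dagger}$ sits inside $B^{s',q}$ and inside the Cameron–Martin-like space $B^{s,q}$ via Proposition \ref{prop:embeddings}, which is where the $(\frac{d}{q^\dagger}-\frac dq)_+$ correction and the minimum defining $\sigma$ enter; and (b) confirming that the small-ball exponent, which a priori involves only $s,s',q$, does not produce a worse rate than the approximation term when $\sigma < s-\frac dq$ — this requires checking the inequality $\frac{q(s-s')}{(s')-(s-d/q)} \le \frac{q(s-\sigma)}{\sigma - s'}$ reduces correctly (it does, since $t\mapsto \frac{q(s-t)}{t-s'}$ is decreasing in the relevant range, wait—more carefully, one must confront that the small-ball term uses the prior's intrinsic smoothness $s-\frac dq$ while the approximation term uses $\sigma$, and argue the binding constraint). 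I would handle this by treating $q=q^\dagger$ first (the clean case $\sigma = s^\dagger\wedge(s-\frac dq)$, directly matching \cite{Agapiou_2021}), then absorbing the cross-integrability loss $(\frac{d}{q^\dagger}-\frac dq)_+$ through the embedding as an a priori reduction of $s^\dagger$, so that the general case follows formally from the equal-index case with $s^\dagger$ replaced by $s^\dagger - (\frac{d}{q^\dagger}-\frac dq)_+$. The remaining verification that $\eps_n$ as written solves the rate equation with the minimal value is then a routine algebraic manipulation.
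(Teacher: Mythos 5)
Your proposal follows essentially the same route as the paper: split the concentration function \eqref{eq:contr_fun} into a decentering term (bounded by truncating $u^\dagger$ in the basis and using H\"older/embedding arguments, the paper's Lemma \ref{lem:decenter}) and a small-ball term (the paper's Lemma \ref{lem:small_ball}, via the $\chi^2$-mixture representation and \cite{Aurzada2007}), observe that both are of the form $\eps^{-\rho(t)}$ with $\rho(t)=\frac{q(s-t)}{t-s'}$ decreasing in $t$ so that the binding exponent is $\rho(\sigma)$ at $\sigma=\bigl(s-\frac{d}{q}\bigr)\wedge\bigl(s^\dagger-(\frac{d}{q^\dagger}-\frac{d}{q})_+\bigr)$, and solve $\eps_n^{-\rho}\leq n\eps_n^2$. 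Two small caveats: your displayed small-ball exponent $\eps^{-\frac{q(s-s')}{s'-(s-d/q)}}$ is garbled (it should be $\eps^{-\frac{1}{\frac{s-s'}{d}-\frac1q}}$, which is $\rho$ evaluated at $t=s-\frac{d}{q}$, consistent with your verbal description); and for $q^\dagger<q$ the ``formally reduce $s^\dagger$ by the integrability loss'' shortcut via Proposition \ref{prop:embeddings} only places $u^\dagger$ in $B^{s'',q}$ for $s''$ strictly below $s^\dagger-(\frac{d}{q^\dagger}-\frac{d}{q})$, so to get the endpoint exponent without an $\epsilon$-loss you need the direct H\"older computation on the truncation tail (as in \eqref{eq:qs_less_q}), which also quietly requires $u^\dagger\in B^{s^\dagger,q^\dagger,q}(\mZ)$ rather than just $B^{s^\dagger,q^\dagger}(\mZ)$.
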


\begin{proof}
See Supplement \ref{apx:postcontrate}.
\end{proof}

\begin{rk}\label{rk:whyq1}
The contraction rate $\eps_n$ becomes optimal, $\eps_n^* = n^{-\frac{1}{2+\frac{d}{s^\dagger-s'-\left(\frac{d}{q^\dagger}-\frac{d}{q}\right)_+}}}$, if $s=s^\dagger +\frac{d}{q}-\left(\frac{d}{q^\dagger}-\frac{d}{q}\right)_+$, which is further maximized as $\eps_n^\dagger = n^{-\frac{1}{2+\frac{d}{s^\dagger-s'}}}$ when $q\leq q^\dagger$.
Note that such optimal rate is achieved regardless of the value of modeling regularization parameter $q$ as long as $q\leq q^\dagger$. This implies that when modeling inhomogeneous data, under-smoothing (with smaller regularization parameter $q$) is preferred to over-smoothing (with larger $q$, refer to Figure \ref{fig:simulation_MAP_whiten_q}). This is also the reason why $q=1$ is often adopted -- the posterior converges the fastest if the true integrability $q^\dagger$ is at least $L_1$.
\end{rk}
\begin{rk}
Another observation is that the ambient space $B^{s',q}(\mZ)$ can be chosen for the smoothness parameter $s'<\sigma(s,q,s^\dagger,q^\dagger)$. 
In particular, we consider two cases:
\begin{enumerate}[label=(\roman*),itemsep=0pt]
\item If we set $\tau_q(s')=0$, i.e. $s'=\frac{d}{q}-\frac{d}{2}\geq 0$, then $B^{s',q}(\mZ)\cong \ell^q(L^q(\mT))$.
For the Gaussian case ($q^\dagger=2$), if we adopt $q=2$ and hence $s'=0$ and $B^{0,2}(\mZ)\cong \ell^2(L^2(\mT))\cong L^2(\mZ)$, then the optimal rate $\eps_n^\dagger =n^{-\frac{s^\dagger}{2s^\dagger+d}}$ is minimax \citep{vanderVaart08}.
For other sub-Gaussian cases ($q^\dagger<2$), such optimal rate $\eps_n^\dagger = n^{-\frac{s^\dagger-s'}{2(s^\dagger-s')+d}}$ is not minimax \citep{Agapiou_2021} regardless of the choice of $q\in[1,2]$ because either $s'>0$ ($q\leq q^\dagger<2$) or such optimal rate is not attained ($q>q^\dagger$).
\item On the other hand, if we allow $s'=0$ and consider a larger ambient space $B^{0,q}(\mZ)\cong \ell^{q,\half-\frac{1}{q}}(L^q(\mT)) \supset \ell^q(L^q(\mT))$, then the minimax rate $\eps_n^\dagger =n^{-\frac{s^\dagger}{2s^\dagger+d}}$ can be obtained for $q\leq q^\dagger$.
\end{enumerate}
\end{rk}

In general, the minimax posterior contraction rate cannot be achieved when $q>q^\dagger$. Therefore we typically rescale the prior to infuse additional regularity \citep{vandervaart09,Agapiou_2021}.
That is, we vary the scaling factor $\kappa>0$ as in \eqref{eq:q_EP} and rescale the Banach space $(\kappa B^{s,q}(\mZ),\Vert\cdot\Vert_{s,q}) \cong (B^{s,q}(\mZ), \kappa^{-1}\Vert\cdot\Vert_{s,q})$. Denote $(B^{s,q}_\kappa(\mZ),\Vert\cdot\Vert):=(B^{s,q}(\mZ), \kappa^{-1}\Vert\cdot\Vert_{s,q})$ and the corresponding (rescaled) STBP measure as $\Pi_\kappa$ in Definition \ref{dfn:STBP}. Now we redefine the posterior concentration function \eqref{eq:contr_fun} to be
\begin{equation}\label{eq:contr_fun_rescaled}
\varphi_{u^\dagger,\kappa}(\eps) = \inf_{h\in B^{s,q}_\kappa(\mZ): \Vert h-u^\dagger\Vert_{s',q}\leq \eps} \frac{\kappa^{-q}}{2} \Vert h\Vert_{s,q}^q - \log \Pi_\kappa(\Vert u\Vert_{s',q}\leq\eps).
\end{equation}
The following theorem regards the posterior contraction rate with rescaled STBP prior.
\begin{restatable}{thm}{adapostcontrate}[Adaptive Posterior Contraction Rate]
\label{thm:adapostcontrate}
Let $u$ be an $\STBP(\mC, B^{s,q}_\kappa(\mZ))$ random element in $\Theta:=B^{s',q}_\kappa(\mZ)$ with $s'<s-\frac{d}{q}$. Suppose $\eps_n$ satisfies the rate equation $\varphi_{u^\dagger,\kappa}(\eps_n)\leq n\eps_n^2$ with $\eps_n\geq n^{-\half}$. The rest of the settings are the same as in Theorem \ref{thm:postcontr}. If the true value $u^\dagger\in B^{s^\dagger,q^\dagger}_\kappa(\mZ)$ with $s^\dagger>s'+\left(\frac{d}{q^\dagger}-\frac{d}{q}\right)_+$ and $1\leq q^\dagger<q\leq 2$, then the minimax posterior contraction rate $\eps_n^\dagger =n^{-\frac{s^\dagger}{2s^\dagger+d}}$ can be attained at $s=\frac{s^\dagger(s^\dagger-s')}{s'+\left(\frac{d}{q^\dagger} -\frac{d}{q}\right)}+s'$ with the scaling factor $\kappa_n\asymp n^{-\frac{1}{2s^\dagger+d}\left[-\frac{s^\dagger(s^\dagger-s')}{s'+\left(\frac{d}{q^\dagger} -\frac{d}{q}\right)}+\frac{d}{q}+s^\dagger\right]}$.
\end{restatable}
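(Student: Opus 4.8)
The plan is to obtain the result by applying the general contraction Theorem~\ref{thm:postcontr} to the rescaled prior $\Pi_\kappa$, so that everything reduces to bounding the rescaled concentration function $\varphi_{u^\dagger,\kappa}$ of \eqref{eq:contr_fun_rescaled} and then choosing the two free quantities---the scaling factor $\kappa$ and the prior smoothness $s$---so that the minimal $\eps_n$ solving $\varphi_{u^\dagger,\kappa}(\eps_n)\le n\eps_n^2$ equals the minimax rate $\eps_n^\dagger=n^{-s^\dagger/(2s^\dagger+d)}$. The key preliminary observation is a scaling identity: since $\Pi_\kappa$ is the pushforward of Definition~\ref{dfn:STBP} with $\gamma_\ell=\kappa\,\ell^{-\tau_q(s)}$, a draw from $\Pi_\kappa$ equals $\kappa$ times a draw from $\Pi_1$, so $\varphi_{u^\dagger,\kappa}(\eps)=\kappa^{-q}A(\eps)+S(\eps/\kappa)$, where the scale-free decentering term is $A(\eps):=\inf\{\tfrac12\|h\|_{s,q}^q:\|h-u^\dagger\|_{s',q}\le\eps\}$ and the centered small-ball term is $S(\delta):=-\log\Pi_1(\|u\|_{s',q}\le\delta)$.

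I would then bound the two pieces as in the proof of Theorem~\ref{thm:postcontrate}, now tracking the $\kappa$-dependence. For $A$: since $u^\dagger\in B^{s^\dagger,q^\dagger}(\mZ)$, the continuous embedding of Proposition~\ref{prop:embeddings} places $u^\dagger$ in $B^{\,s^\dagger-(d/q^\dagger-d/q)_+,\,q}(\mZ)$; truncating the series \eqref{eq:STBP_serexp} at level $L$ and balancing the tail $\|u^\dagger-h_L\|_{s',q}$ against the $B^{s,q}$-cost $\|h_L\|_{s,q}$ then yields a polynomial bound $A(\eps)\lesssim\eps^{-\mu_A}$, with $\mu_A$ governed by the effective smoothness $\sigma(s,q,s^\dagger,q^\dagger)$ of Theorem~\ref{thm:postcontrate}. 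For $S$: the small-ball estimates for $q$-exponential (equivalently $p$-exponential/Besov) measures on the weighted sequence space---the machinery behind Theorem~\ref{thm:p-exp_contr}---give $S(\delta)\lesssim\delta^{-\mu_S}$, finite precisely because $s'<s-d/q$. Combining, $\varphi_{u^\dagger,\kappa}(\eps)\lesssim\kappa^{-q}\eps^{-\mu_A}+(\kappa/\eps)^{\mu_S}$.

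It remains to carry out the optimization. Requiring each summand to be at most $n\eps_n^2$ yields a nonempty admissible interval for $\kappa$ exactly when $\eps_n$ exceeds a threshold depending on $s$ through $\mu_A,\mu_S$; minimizing this threshold over $s$ singles out $s=\tfrac{s^\dagger(s^\dagger-s')}{s'+d/q^\dagger-d/q}+s'$, at which the threshold equals $\eps_n^\dagger$ and the admissible interval degenerates to the single value $\kappa=\kappa_n$ stated. Along the way one checks the interlocking structural conditions: $s'<s-d/q$ (so $\Pi_\kappa$ is supported on $\Theta=B^{s',q}_\kappa(\mZ)$ and $\mu_S<\infty$), $\sigma(s,q,s^\dagger,q^\dagger)>s'$ with the correct branch of the defining minimum, $\eps_n^\dagger\ge n^{-1/2}$, and $u^\dagger$ lying in the prior's support, which is exactly the hypothesis $s^\dagger>s'+(d/q^\dagger-d/q)_+$ via Proposition~\ref{prop:embeddings}; Theorem~\ref{thm:postcontr} then gives contraction at $\eps_n^\dagger$. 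I expect the main obstacle to be this final stage---securing the sharp, $\kappa$-explicit forms of $A$ and $S$ in the spatiotemporal ($p=q$, double-series) setting, where the small-ball exponent in particular needs care because $\|\xi_\ell(\cdot)\|_q$ is itself an infinite-dimensional Q-EP functional and not a scalar, and then executing the two-parameter optimization so that the claimed $s$ and $\kappa_n$ emerge while every regularity constraint is met simultaneously.
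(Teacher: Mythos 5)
Your proposal follows essentially the same route as the paper: decompose the rescaled concentration function as $\kappa^{-q}\times(\text{decentering})+(\text{small ball at radius }\eps/\kappa)$, invoke the polynomial bounds from Lemmas \ref{lem:decenter} and \ref{lem:small_ball}, balance the two terms to fix $\kappa$ as a function of $\eps$ and $s$, and then solve for the $s$ that forces the resulting rate to be minimax --- which, after checking that only the $q^\dagger<q$ branch of the decentering bound is consistent, yields exactly the stated $s$ and $\kappa_n$. The remaining algebra you defer is precisely what the paper's proof carries out, so the approach is correct and matches.
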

\begin{proof}
See Supplement \ref{apx:adapostcontrate}.
\end{proof}

\section{Bayesian Inference}\label{sec:inference}
In this section, we describe the inference of the Bayesian inverse problem \eqref{eq:bip} with spatiotemporal observations using an STBP prior.
Assume the unknown function $u$ is evaluated at $I$ locations $\bX:=\{\bx_i\}_{i=1}^I$ and $J$ time points $\bt:=\{t_j\}_{j=1}^J$, that is,  $u(\bX,\bt):=\{u(\bx_i, t_j)\}_{i,j=1}^{I,J}$. 
The data $\bY=\{\by_j\}_{j=1}^J$ with $\by_{j}\in \mbR^{m}$ is observed through the forward operator $\mG$, which could be a linear mapping or a nonlinear one governed by a PDE.
Here we consider Gaussian noise and rewrite the model \eqref{eq:bip} as follows:
\begin{equation}\label{eq:STBP_model}
\begin{aligned}
\by_j &= \mG(u)(\bX, t_j) +  \vect\eps_j, \quad \vect\eps_j \overset{iid}{\sim} \mN_I(0, \Gamma_\mathrm{noise}), \quad j = 1,2,\dots, J, \\
u &\sim \STBP(\mC, B^{s,q}(\mZ)).
\end{aligned}
\end{equation}

In our applications of inverse problems, the spatial dimension $I$ is usually much larger than the temporal dimension ($I\gg J$). 
Therefore we truncate $u$ in \eqref{eq:STBP_randf} for the first $L>0$ terms:
$u(\bx, t) \approx u^L(\bx, t) = \sum_{\ell=1}^L \gamma_\ell \xi_\ell(t)\phi_\ell(\bx)$, 
and choose $L=2000$ in the numerical experiments (Section \ref{sec:numerics}).
Denote $\bu_j=u(\bX, t_j)\in\mbR^I$, and $\bU=[\bu_1,\cdots,\bu_J]_{I\times J}=u^L(\bX, \bt)=\bPhi \diag(\bgamma) \tp{\bXi}$ where $\bPhi=[\phi_1(\bX), \cdots, \phi_L(\bX)]_{I\times L}$, $\bgamma=(\gamma_1,\cdots, \gamma_L)$, and $\bXi=[\bxi_1, \cdots, \bxi_L]_{J\times L}$ with $\bxi_\ell=\xi_\ell(\bt)$. 
Instead of the large dimensional matrix $\bU$, we work with $\bXi$ of much smaller size.
Let $r_\ell = \tp{\bxi_\ell}\bC_J^{-1} \bxi_\ell$.
The log posterior for $\bXi$ is computed directly as
\begin{equation}\label{eq:logpost}
\begin{aligned}
\log p(\bXi, \theta|\bY) =& -\frac{J}{2} \log|\Gamma_\mathrm{noise}| - \half \sum_{j=1}^J \Vert \by_j - \mG(\bu_j)\Vert^2_{\Gamma_\mathrm{noise}} \\
&-\frac{L}{2}\log|\bC_J| + \frac{J}{2}(\frac{q}{2}-1) \sum_{\ell=1}^L \log r_\ell - \half \sum_{\ell=1}^L r_\ell^{\frac{q}{2}}.
\end{aligned}
\end{equation}



We optimize \eqref{eq:logpost} to obtain the maximum a posterior (MAP) estimate. To quantify the uncertainty efficiently, we need effective inference algorithms for high-dimensional models with non-Gaussian priors. We refer to the work of dimension-robust MCMC proposed by \cite{chen2018dimension} based on the pushforward of Gaussian white noise which in turn takes advantage of the dimension-independent sampling algorithms for Gaussian priors \citep{
beskos2017}. For the convenience of applications, in the following, we introduce a new white noise representation for STBP which is different from the one used in \cite{chen2018dimension} for series based priors.

\subsection{White Noise Representation}

Recall we have the stochastic representation \eqref{eq:stoch_QED} of $\vect\xi\sim \qED_J(\bzero, \bC)$: $\bxi=R\bL S$ with $R^q\sim \chi^2(J)$ and $S\sim \mathrm{Unif}(\mS^{J+1})$.
We can write
\begin{equation*}
    S= \frac{\bzeta}{\Vert\bzeta\Vert_2}, \quad R^q = \Vert\bzeta\Vert_2^2, \quad \textrm{for}\; \bzeta\sim \mN_J(\bzero, \bI_J).
\end{equation*}
Therefore, $\bxi$ can be represented in terms of the white noise $\bzeta$ by a pushforward mapping $\Lambda:\mbR^J\to\mbR^J$ and vice versa with its inverse $\Lambda^{-1}$:
\begin{equation}\label{eq:Lambda}
    \bxi = \Lambda(\bzeta) = \bL \bzeta \Vert\bzeta\Vert_2^{\frac{2}{q}-1}, \quad \bzeta=\Lambda^{-1}(\bxi)=\bL^{-1}\bxi \Vert \bL^{-1}\bxi\Vert_2^{\frac{q}{2}-1}.
\end{equation}
For $\zeta(\cdot) =\sum_{\ell'=1}^\infty \zeta_{\ell'}\psi_{\ell'}(\cdot),\; \xi(\cdot) =\sum_{\ell'=1}^\infty \xi_{\ell'}\psi_{\ell'}(\cdot) \in L^2(\mT)$, we can extend $\Lambda$ and its inverse $\Lambda^{-1}$ to $L^2(\mT)$ and have
\begin{equation*}
\xi(\cdot) = \Lambda(\zeta(\cdot)) = \sum_{\ell'=1}^\infty \lambda_{\ell'}^\half \zeta_{\ell'} \psi_{\ell'}(\cdot) \Vert \zeta(\cdot)\Vert_2^{\frac{2}{q}-1}, \quad \zeta(\cdot) = \Lambda^{-1}(\xi(\cdot)) = \sum_{\ell'=1}^\infty \lambda_{\ell'}^{-\half} \xi_{\ell'} \psi_{\ell'}(\cdot) \Vert \xi(\cdot)\Vert_{2,\mC}^{\frac{2}{q}-1},
\end{equation*}
where $\Vert \zeta(\cdot)\Vert_2^2 = \sum_{\ell'=1}^\infty \zeta_{\ell'}^2$ and $\Vert \xi(\cdot)\Vert_{2,\mC}^2 = \sum_{\ell'=1}^\infty \lambda_{\ell'}^{-1}\xi_{\ell'}^2$.
We propose the following representation of $u(\bz)$ in terms of an infinite sequence of white noises, i.e., $\zeta:=\{\zeta_\ell(\cdot)\}_{\ell=1}^\infty$:
\begin{equation*}\label{eq:T}
    u(\bz) = T(\zeta) = \sum_{\ell=1}^\infty \gamma_\ell \Lambda(\zeta_\ell(t))\phi_\ell(\bx), \quad \zeta_\ell(\cdot) \overset{i.i.d.}{\sim} \GP(0, \mI).
\end{equation*}

Denote $\bZeta=[\bzeta_1, \cdots, \bzeta_L]_{J\times L}$ with $\bzeta_\ell=\zeta_\ell(\bt)$. 
From the above equation, 
we have $\bU=T(\bZeta)=\bPhi \diag(\bgamma) \tp{\Lambda(\bZeta)}$. 
Then the log-posterior in \eqref{eq:logpost} can be rewritten in terms of $\bZeta$:
\begin{equation}\label{eq:logpost_whiten}
\begin{aligned}
\log p(\bZeta, \theta|\bY) =& -\frac{J}{2} \log|\Gamma_\mathrm{noise}| - \half \sum_{j=1}^J \tr (\tp{(\bY - \mG(T(\bZeta)))} \Gamma_\mathrm{noise}^{-1} (\bY - \mG(T(\bZeta))) )\\
&-\frac{L}{2}\log|\bC_J| + \frac{J}{2}(\frac{q}{2}-1) \sum_{\ell=1}^L \log r_\ell - \half \sum_{\ell=1}^L r_\ell^{\frac{q}{2}},
\end{aligned}
\end{equation}
where $r_\ell = \tp{\Lambda(\bzeta_\ell)}\bC_J^{-1} \Lambda(\bzeta_\ell)$.
Once the MAP $\bZeta_\text{\tiny MAP}$ is obtained by maximizing the above log-posterior, we can obtain $\bU_\text{\tiny MAP}=T(\bZeta_\text{\tiny MAP})$.
We refer to this process as ``optimization in the whitened space".
The objective function can be explored more efficiently in the whitened space with variables de-correlated (See Figure \ref{fig:STEMPO_err}).


\subsection{White Noise MCMC}
Denote the measure formed by infinite product of $\GP(0, \mI)$ as $\Pi_0$. Then our STBP prior measure $\Pi$ can be regained by the pushforward using $T$, i.e. $\Pi=T^\sharp \Pi_0$.
A class of dimension-independent MCMC algorithms \citep{beskos2017} for models with Gaussian prior $\Pi_0$ 
can be reintroduced to posterior sampling with STBP prior $\Pi$.

Let $u=T(\zeta)$ with $\zeta\sim \Pi_0$.
Consider the continuous-time Hamiltonian dynamics:
\begin{equation}\label{eq:HD}
\frac{d^2\zeta}{dt^2} + \mK(\zeta)\,
[ \zeta + \nabla\Phi(\zeta) ] = 0, \quad \left. \left(\eta:= \frac{d \zeta}{dt}\right)\right|_{t=0} \sim\mN(0,\mK(\zeta)),
\end{equation}
where $\Phi(\zeta):=\Phi(T(\zeta)) -\log |dT(\zeta)|$.
Generally, we set $\mK(\zeta)^{-1}=\mI+\beta \mH(\zeta)$,  where $\mH(\zeta)$ can be chosen as 
Gauss-Newton Hessian computed as $\mH(\zeta)= dT^* \mH(u) dT$ with $dT$ being the Jacobian.
Let $g(\zeta):=-\mK(\zeta)\{\alpha\nabla\Phi(\zeta)-\beta\mH(\zeta)\zeta\}$, where $\nabla_\zeta\Phi(\zeta)=dT^* \nabla_u\Phi(u)-\nabla_\zeta \log |dT(\zeta)|$.
The Hamiltonian Monte Carlo (HMC) algorithm \citep{neal10} solves the dynamics \eqref{eq:HD}
using the St\"ormer-Verlet symplectic (leapfrog) integrator with step size $\eps$: 
\begin{equation}\label{eq:mHDdiscret}
\begin{aligned}
\eta^- &= \eta_0 + \tfrac{\eps}{2}\,g(\zeta_0)\ ; \\
\begin{bmatrix} \zeta_\eps\\ \eta^{+}\end{bmatrix} &= \begin{bmatrix} \cos\eps & \sin\eps\\ -\sin\eps & \cos\eps
\end{bmatrix}  \begin{bmatrix} \zeta_0\\ \eta^{-}\end{bmatrix}\  ;\\
\eta_\eps &= \eta^{+} + \tfrac{\eps}{2}\,g(\zeta_\eps)\  .
\end{aligned}
\end{equation}
%
Equation \eqref{eq:mHDdiscret} gives rise to the
leapfrog map $\Psi_\eps: (\zeta_{0},\eta_{0})\mapsto (\zeta_\eps, \eta_\eps)$.
Given a time horizon $\tau$ and current position 
$\zeta$, the MCMC mechanism proceeds 
by concatenating $I=\lfloor \tau/\epsilon \rfloor$ steps of leapfrog map consecutively,
$\zeta' =\mathcal{P}_\zeta\big\{\Psi_\eps^I(\zeta,\eta)\big\}\ , \; \eta\sim\mN(0,\mK(\zeta))$,  
where $\mathcal{P}_\zeta$ denotes the projection onto the $\zeta$-argument.
Then, the proposal $\zeta'$ is accepted with probability $a(\zeta,\zeta')=1\wedge \exp(-\Delta E(\zeta, \eta))$ \citep{beskos2017}. 
At last we convert the sample $\zeta$ back to $u=T(\zeta)$.
This yields a white-noise infinite-dimensional manifold HMC (wn-$\infty$-mHMC) 
which reduces to white-noise infinite-dimensional manifold Metropolis adjusted Langevin algorithm (wn-$\infty$-mMALA) 
when $I=1$, and white-noise infinite-dimensional HMC (wn-$\infty$-HMC) 
when $\beta=0$ \citep{beskos2017}. We set $\alpha=1$ for both scenarios and summarize all these methods in Algorithm \ref{alg:wn-infMC} of Supplement \ref{apx:inference} named as \emph{white-noise dimension-independent MCMC (wn-$\infty$-MCMC)}.


\subsection{Hyper-parameter Tuning}\label{sec:hyperpar}
In Definitions \ref{dfn:qEP} and \ref{dfn:STBP}, there is a temporal kernel $\mC$ that has not been specified. This is the key component to capture the temporal correlation which is absent in a pure series based approach (See Remark \ref{rk:STBP_cov} and Section \ref{sec:numerics}).
There are hyper-parameters, denoted as $\theta$, in the covariance kernel $\mC$, e.g., variance magnitude ($\kappa$) and correlation length ($\rho$), i.e., $\theta=(\kappa, \rho)$, that require careful adjustment and fine tuning as in, e.g., Mat\'ern kernel:
\begin{equation}\label{eq:matern}
\mC(t,t')=\kappa \frac{2^{1-\nu}}{\Gamma(\nu)} w^\nu K_\nu(w), \quad w=\sqrt{2\nu} (\Vert t-t'\Vert/\rho)^s.
\end{equation}
Unless we assume the likelihood in the Bayesian inverse model \eqref{eq:bip} is another $\qED$ and the forward mapping is linear \citep[c.f. Theorem 3.5 of][]{Li_2023}, we do not have a tractable marginal likelihood to optimize for these hyper-parameters \citep{Rasmussen_2005}.
In general settings, e.g., in the model \eqref{eq:STBP_model} with a Gaussian likelihood, we need to jointly update $(\bXi, \theta)$ based on \eqref{eq:logpost} or $(\bZeta, \theta)$ according to \eqref{eq:logpost_whiten}.
Denote $\bC_0=\kappa^{-1}\bC$ and $r_{0,\ell}=\tp{\bxi_\ell} \bC_0^{-1} \bxi_\ell$.
Proposition \ref{prop:condconj} in Supplement \ref{apx:inference} states that 
$\kappa^{\frac{q}{2}}|\bu \sim \Gamma^{-1}(\alpha',\beta'), \quad 
\alpha'=\alpha+\frac{JL}{2}, \quad \beta'=\beta+\half\sum_{\ell=1}^L r_{0,\ell}^\frac{q}{2}$.
Therefore, we could either update $\kappa \leftarrow \left(\frac{\beta'}{\alpha'+1}\right)^{\frac{2}{q}}$ or sample $\kappa$ according to \eqref{eq:cond_kappa}.
In general, there is no such conditional conjugacy for the correlation length ($\rho$).
We impose a hyper-prior for $\rho$ and optimize with or sample from $p(\rho|\bXi)$.

\section{Numerical Experiments}\label{sec:numerics}

In this section, we compare the proposed STBP ($\STBP(\mC, B^{s,q}(\mZ))$) with STGP (equivalent to $\STBP(\mC, B^{s,2}(\mZ))$) and a time-uncorrelated prior ($\STBP(\mI, B^{s,q}(\mZ))$) using a simulated regression, two dynamic tomography imaging examples, an inverse problem of recovering a spatiotemporal function, and a spatiotemporal imputation of temperature anomalies. 
Since the main focus is to model inhomogeneous data such as images with edges, we tend to adopt sharper regularization and set $q=1$ for STBP throughout this section (See also Remark \ref{rk:whyq1}).
Our numerical results demonstrate the advantage of Besov ($L_1$) type priors over Gaussian ($L_2$) type priors in modeling inhomogeneity. Moreover, these examples highlight the importance of appropriately modeling temporal correlations in spatiotemporal inverse problems.
All computer codes are publicly available at \url{https://github.com/lanzithinking/Spatiotemporal-Besov-prior}.

In all these applications, 
$u(\bx_i, t_j)$ refers to the image pixel value of point $\bx_i$ at time $t_j$ with resolution $I = n_x\times n_y$.
To assess the quality of reconstructed images, 
we refer to several quantitative measures including the relative error,
$\text{RLE} = \frac{\Vert u^* - u^\dagger\Vert}{\Vert u^\dagger\Vert}$, 
where $u^\dagger$ denotes the reference/true image and $u^*$ is its reconstruction. 
Additionally, we adopt the peak signal-to-noise ratio,
$\text{PSNR} = 10*\log_{10}(\frac{\Vert u^\dagger\Vert_\infty^2}{\Vert u^* - u^\dagger\Vert_2^2})$,
by using the maximum possible pixel value as a reference point to normalize the MSE.
We also consider the structured similarity index \citep{wang2004image},
$\text{SSIM($u^*,u^{\dagger}$)} = \frac{(2\bar{u^*}\bar{u^{\dagger}}+c_1)(2s_{u^*u^{\dagger}}+c_2)}{(\bar{u^*}^2+\bar{u^{\dagger}}^2+c_1)(s^2_{u^*}+s^2_{u^{\dagger}}+c_2)}$,
where $\bar{u}$, $s^2_u$ and $s_{u_1u_2}$ denote the sample mean, sample variance, and sample covariance, respectively, $c_i = (k_iL)^2$ for $i=1,2$, $k_1=0.01$, $k_2=0.03$ and $L$ is the dynamic range of the pixel values of the reference images.
\begin{figure}[t]
\begin{tabular}{cccccc}
\quad Truth & \qquad Observations & \qquad STBP & \qquad STGP & \quad time-uncorrelated &  pure-Besov  \\
\end{tabular}
\vspace{-5pt}
\begin{tabular}{cccccc}
\includegraphics[width=0.16\textwidth,height=.16\textwidth]{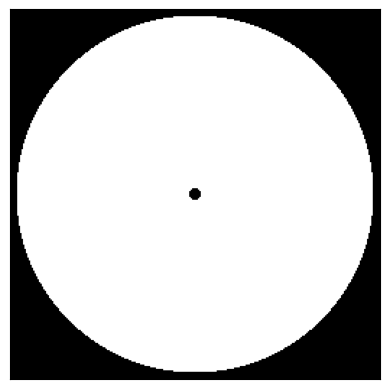}
\includegraphics[width=0.16\textwidth,height=.16\textwidth]{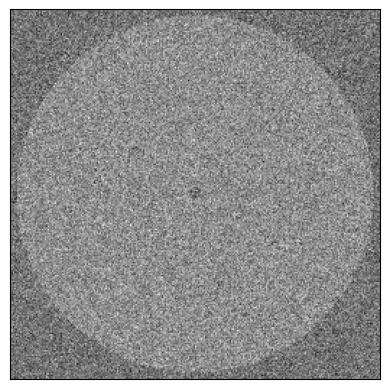}
\includegraphics[width=0.16\textwidth,height=.16\textwidth]{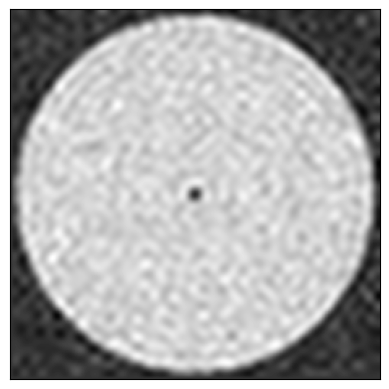}
\includegraphics[width=0.16\textwidth,height=.16\textwidth]{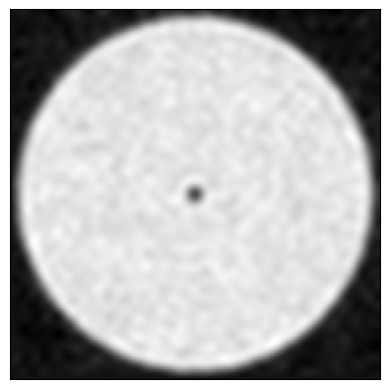}
\includegraphics[width=0.16\textwidth,height=.16\textwidth]{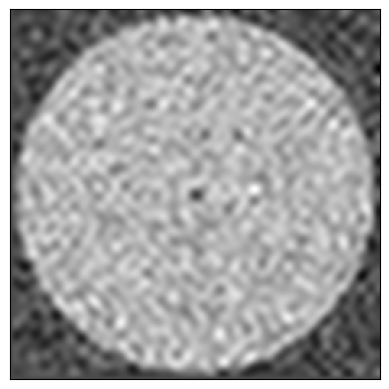}
\includegraphics[width=0.16\textwidth,height=.16\textwidth]{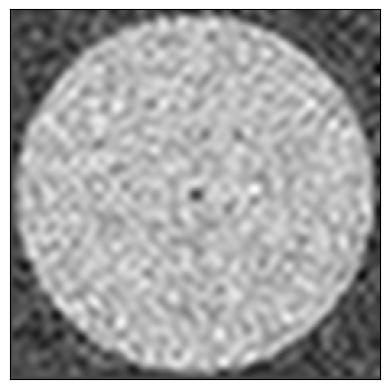}
\end{tabular}
\vspace{-5pt}
\begin{tabular}{cccccc}
\includegraphics[width=0.16\textwidth,height=.16\textwidth]{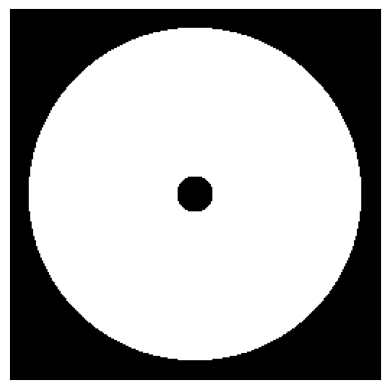}
\includegraphics[width=0.16\textwidth,height=.16\textwidth]{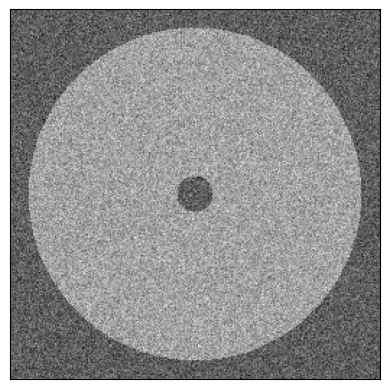}
\includegraphics[width=0.16\textwidth,height=.16\textwidth]{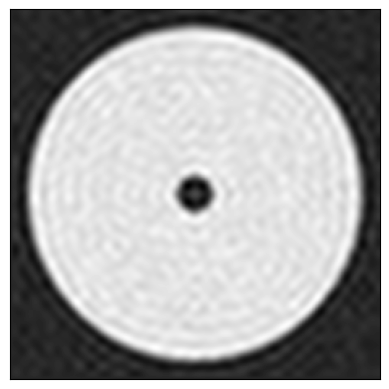}
\includegraphics[width=0.16\textwidth,height=.16\textwidth]{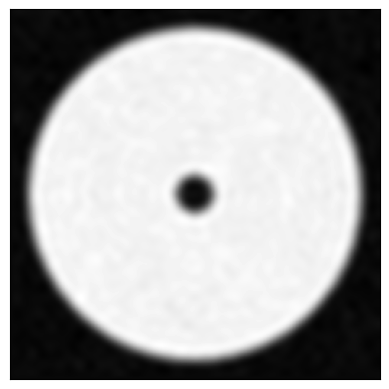}
\includegraphics[width=0.16\textwidth,height=.16\textwidth]{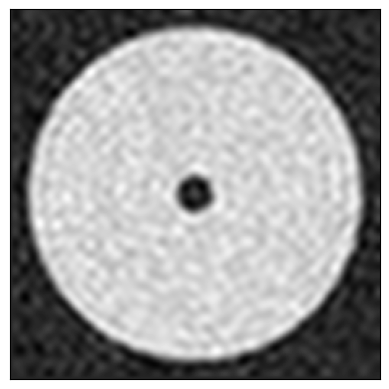}
\includegraphics[width=0.16\textwidth,height=.16\textwidth]{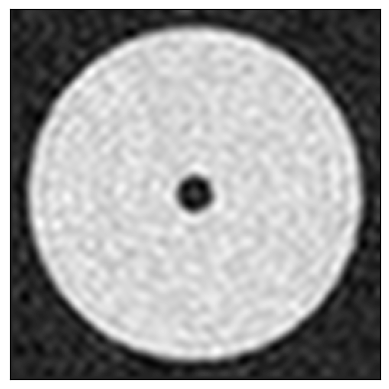}
\end{tabular}
\vspace{-5pt}
\begin{tabular}{cccccc}
\includegraphics[width=0.16\textwidth,height=.16\textwidth]{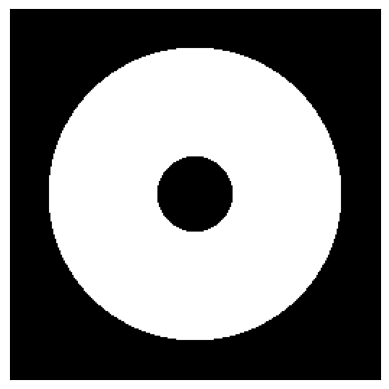}
\includegraphics[width=0.16\textwidth,height=.16\textwidth]{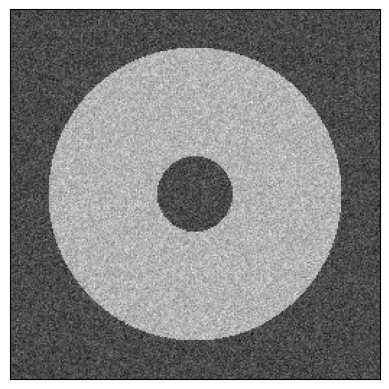}
\includegraphics[width=0.16\textwidth,height=.16\textwidth]{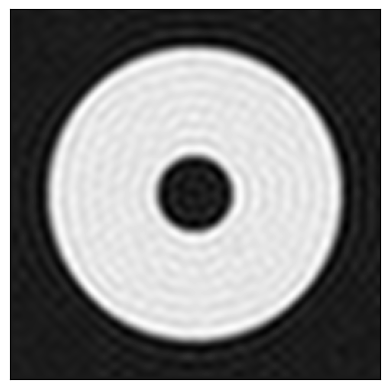}
\includegraphics[width=0.16\textwidth,height=.16\textwidth]{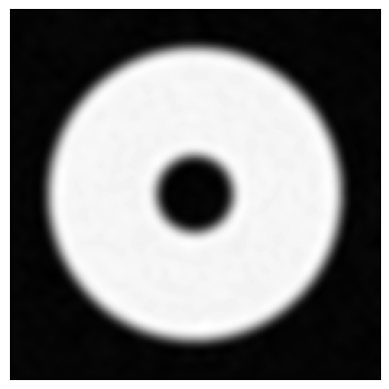}
\includegraphics[width=0.16\textwidth,height=.16\textwidth]{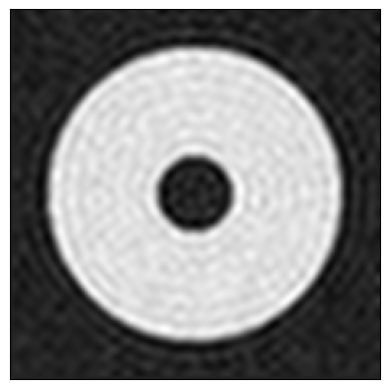}
\includegraphics[width=0.16\textwidth,height=.16\textwidth]{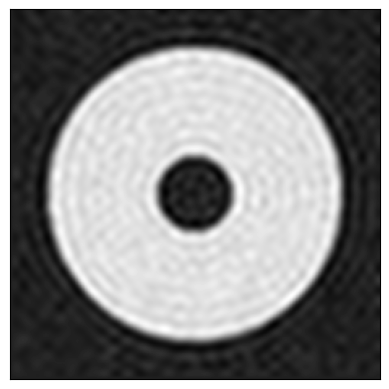}
\end{tabular}
\vspace{-5pt}
\begin{tabular}{cccccc}
\includegraphics[width=0.16\textwidth,height=.16\textwidth]{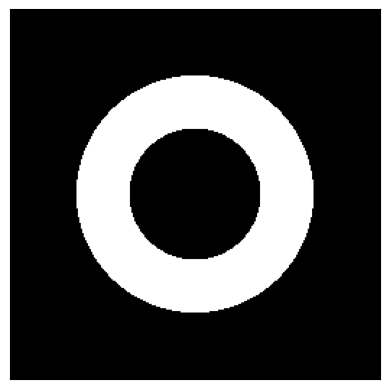}
\includegraphics[width=0.16\textwidth,height=.16\textwidth]{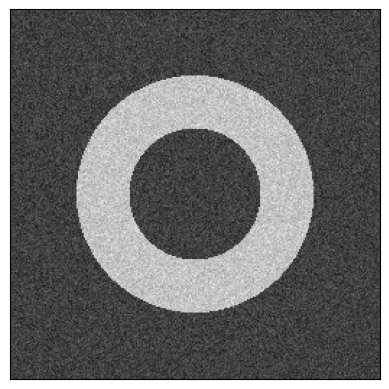}
\includegraphics[width=0.16\textwidth,height=.16\textwidth]{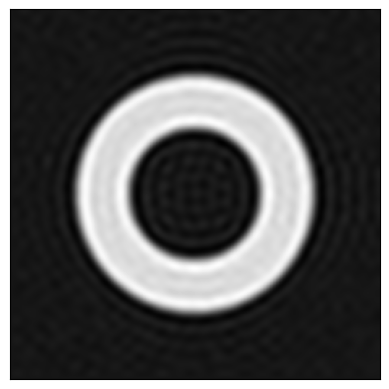}
\includegraphics[width=0.16\textwidth,height=.16\textwidth]{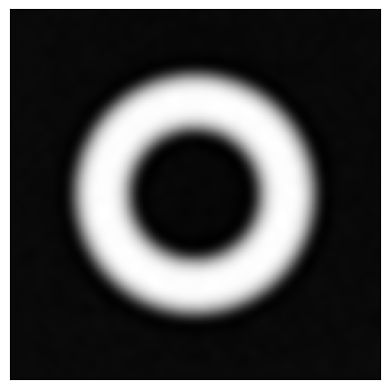}
\includegraphics[width=0.16\textwidth,height=.16\textwidth]{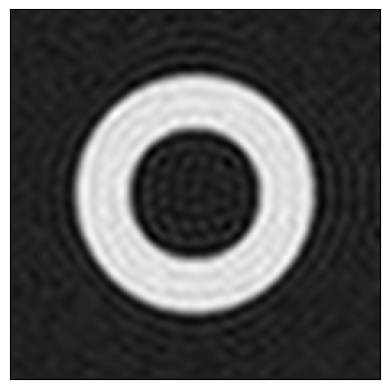}
\includegraphics[width=0.16\textwidth,height=.16\textwidth]{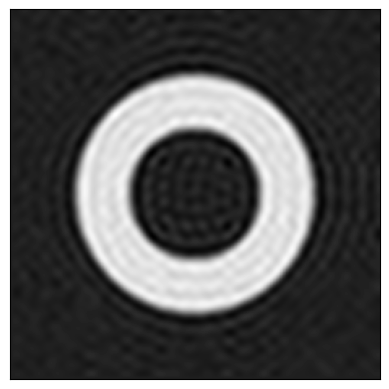}
\end{tabular}
\vspace{-5pt}
\caption{MAP reconstruction of simulated annulus with $I=256\times 256$, $J=100$. Columns: true images, observations, MAP estimates by STBP, STGP, time-uncorrelated and pure-Besov models, respectively. Rows from top to bottom: time step $t_j = 0.1, 0.3, 0.6$, and $0.9$.}
\label{fig:simulation_MAP_whiten}
\end{figure}
\subsection{Simulation}
First, we consider a simulated regression problem of a rising and shrinking 2d annulus: 
\begin{equation}\label{eq:sim_truth}
u(\bx, t) = t \delta(\sin(\pi\Vert \bx\Vert_2)\geq t), \quad \bx\in\mbR^2 \; such \; that \; \Vert \bx \Vert_\infty\leq 1, \quad t\in(0,1],
\end{equation}
where $\delta(\cdot)$ is the Dirac function. 
The first column of Figure \ref{fig:simulation_MAP_whiten} plots this function at a few time points illustrating a 2d annulus forming and shrinking as time goes by.
We simulate the data by discretizing the function $u(\bx, t)$ in \eqref{eq:sim_truth} on an $I=n_x\times n_y$ mesh (denoted as $\bX$) in the boxed spatial domain $\mX=[-1,1]^2$ over a grid of $J$ time points (denoted as $\bt$) in $\mT=(0,1]$, and adding some Gaussian noise with $\sigma_\eps=0.1$, i.e.,
$\by_j = u(\bX, t_j) + \vect\eps_j, \quad \vect\eps_j\overset{iid}{\sim} \mN_I(0, \sigma^2_\eps \bI)$.
The noisy spatiotemporal data are demonstrated in the second column of Figure \ref{fig:simulation_MAP_whiten}.
Based on the observed data, the goal of this Bayesian inverse problem is to recover the ground truth \eqref{eq:sim_truth} using STBP, STGP and time-uncorrelated priors.
Here, to contrast the effect of series based priors, we also include a pure-Besov prior whose random function is represented in \eqref{eq:STBP_serexp} with Fourier basis.
We use this example to numerically investigate the posterior contraction studied in Section \ref{sec:post_contr} as $n=I\wedge J\to \infty$. In particular, we will consider the problem with data observed at various spatiotemporal resolutions by considering combinations of $I=16\times 16, 32\times 32, 128\times 128, 256\times 256$ and $J=10, 20, 50, 100$, respectively.

\renewcommand{\arraystretch}{0.6}
\renewcommand*{\minval}{0.05}
\renewcommand*{\maxval}{0.40}
\begin{table}[t]\footnotesize
\centering
\caption{Comparison of MAP estimates for simulated annulus generated by STBP, STGP, time-uncorrelated and pure-Besov prior models
in terms of RLE with increasing data. Standard deviations (in bracket) are obtained by repeating the experiments for 10 times with different random seeds for initialization.}
\begin{tabular}{|*{6}{c|}}
    \hline
    $I\!=\!n_x\!\times\! n_y$ & $J$ & pure-Besov & time-uncorrelated & STGP & STBP \\ \hline
    & 10 & \gradient{0.2754} (3.47e-6) & \gradient{0.2768} (5.32e-6) & \gradient{0.2632} (7.91e-6) & \gradient{0.2770} (4.14e-6)   \\ \cline{2-5}
    $16 \times 16$ & 20 & \gradient{0.2406} (3.31e-6) & \gradient{0.2399} (3.52e-6) & \gradient{0.2063} (5.97e-6) & \gradient{0.2421} (6.10e-6)   \\ \cline{2-5}
    & 50 & \gradient{0.1910} (4.38e-6) & \gradient{0.2067} (1.02e-6) & \gradient{0.1594} (1.08e-5) & \gradient{0.1553} (1.07e-5)    \\ \cline{2-5}
    & 100 & \gradient{0.1605} (4.72e-6) & \gradient{0.1850} (8.12e-7) & \gradient{0.1217} (8.60e-6) & \gradient{0.1211} (7.79e-6)    \\ \hline
    & 10 & \gradient{0.3650} (4.61e-6) & \gradient{0.3677} (3.00e-6) & \gradient{0.2465} (7.08e-6) & \gradient{0.3672} (6.81e-6)    \\ \cline{2-5}
    $32 \times 32$ & 20 & \gradient{0.2407} (5.99e-6) & \gradient{0.2460} (1.49e-6) & \gradient{0.1911} (4.88e-6) & \gradient{0.2359} (5.02e-6)   \\ \cline{2-5}
    & 50 & \gradient{0.2103} (2.78e-6) & \gradient{0.2236} (1.30e-6) & \gradient{0.1464} (7.27e-6) & \gradient{0.1662} (7.36e-6)   \\ \cline{2-5}
    & 100 & \gradient{0.1626} (5.87e-6) & \gradient{0.1918} (9.90e-7) & \gradient{0.1203} (7.73e-6) & \gradient{0.1241} (9.31e-6)    \\ \hline
    & 10 & \gradient{0.1926} (1.49e-5) & \gradient{0.1943} (1.57e-5) & \gradient{0.2052} (4.51e-6) & \gradient{0.1937} (1.36e-5)   \\ \cline{2-5}
    $128 \times 128$ & 20 & \gradient{0.1440} (7.30e-6) & \gradient{0.1474} (7.23e-6) & \gradient{0.1497} (4.16e-6) & \gradient{0.1399} (1.77e-5)  \\ \cline{2-5}
    & 50 & \gradient{0.1182} (1.17e-5) & \gradient{0.1227} (4.95e-6) & \gradient{0.1083} (1.03e-5) & \gradient{0.1030} (1.51e-5)   \\ \cline{2-5}
    & 100 & \gradient{0.1073} (7.02e-6) & \gradient{0.1146} (2.31e-6) & \gradient{0.0934} (1.51e-5) & \gradient{0.0909} (1.24e-5)   \\ \hline
    & 10 & \gradient{0.1630} (1.50e-5) & \gradient{0.1635} (1.47e-5) & \gradient{0.1970} (2.70e-6) & \gradient{0.1633} (1.03e-5)   \\ \cline{2-5}
    $256 \times 256$ & 20 & \gradient{0.1159} (8.73e-6) & \gradient{0.1190} (7.24e-6) & \gradient{0.1442} (7.18e-6) & \gradient{0.1088} (1.24e-5)  \\ \cline{2-5}
    & 50 & \gradient{0.0949} (7.89e-6) & \gradient{0.0966} (5.77e-6) & \gradient{0.1012} (1.46e-5) & \gradient{0.0858} (1.31e-5)   \\ \cline{2-5}
    & 100 & \gradient{0.0892} (8.48e-6) & \gradient{0.0910} (3.89e-6) & \gradient{0.0864} (6.68e-6) & \gradient{0.0808} (1.36e-5)   \\ \hline
\end{tabular}
\label{tab:simulation}
\end{table}

Note, the spatial image of the function at each time point, when viewed as a picture, has clear edges. This imposes challenges for GP as it tends to oversmooth when modeling inhomogeneous objects while BP is more amenable. On the other hand, these sequential images are not isolated from each other in time, and the temporal kernel $\mC$ in STBP (STGP) can be well-used to capture such dependence. More specifically, we adopt the Mat\'ern kernel \eqref{eq:matern}
with $\nu=\half$, $\sigma^2=1$, $\rho=0.1$ and $s=1$.
The MAP estimate for $\bU=u(\bX, \bt)$ is obtained by minimizing the negative log-posterior \eqref{eq:logpost_whiten} in the whitened space of $\bZeta$ and converting $\bZeta_\text{\tiny MAP}$ back to $\bU_\text{\tiny MAP}=T(\bZeta_\text{\tiny MAP})$.
The last four columns of Figure \ref{fig:simulation_MAP_whiten} compare the MAP estimates by STBP, STGP, time-uncorrelated and pure-Besov models at $I=256\times 256$, $J=100$. The STGP model indeed returns an over-smoothed result; while the time-uncorrelated model yields a more noisy estimate due to the negligence of temporal correlation.
The results by pure-Besov model are comparably noisy to those obtained by the time-uncorrelated model (See also Table \ref{tab:simulation}).
Figure \ref{fig:simulation_MAP_whiten_q} also demonstrates different degrees of regularization interpolating with parameter $q$ in the range of $(0,2]$ with $q=2$ (STGP) yielding the most blurry solution.

Next, we vary the spatiotemporal resolution by changing the mesh and the time interval for observations. Figure \ref{fig:simulation_postcontr_whiten} investigates the MAP estimates by the STBP model with increasing data. They gradually approximate the true function \eqref{eq:sim_truth} as the spatiotemporal resolution is refined. This verifies the posterior consistency described in Theorem \ref{thm:postcontr} in terms of point estimation.
Table \ref{tab:simulation} also shows an error reducing phenomenon with increasing data for all three models. To make a fair comparison across different resolutions, we adopt $\Vert \bU\Vert_{\infty,1}=\max_{1\leq i\leq I}\sum_{j=1}^J|u(\bx_i,t_j)|$ in the RLE to focus on the pixel differences while averaging over the time domain. The STBP model outperforms the other two in most cases. Though not a direct verification of the posterior contraction rate in Theorem \ref{thm:postcontrate}, it shows that STBP reduces error with increasing data at rates not slower than STGP.

Though having similar performance as the time-uncorrelated prior, the pure-Besov prior has non-zero temporal correlations, as expressed in $\mC_t(t, t')=\sum_{\ell'=1}^\infty \lambda_{\ell'}\psi_{\ell'}(t)\psi_{\ell'}(t')$. Since our motivation is to contrast different strategies on temporal dependence in spatiotemporal modeling, we omit the pure-Besov prior from the following comparison.

\subsection{Dynamic Tomography Reconstruction}
In this section, we investigate 
the dynamic reconstruction of a simulated (STEMPO) and a real (emoji in Supplement \ref{apx:emoji}) tomography problem. Computed tomography (CT) is a medical imaging technique used to non-intrusively obtain detailed internal images of a subject such as human body \citep{Shepp_1974}. 
CT scanners project (Radon transformation) X-ray over the subject at different angles and measure the attenuated signals by an array of sensors recorded as sinograms. 

\begin{figure}[t]
\begin{tabular}{ccccc}
\quad Truth & \qquad \qquad Observations & \quad \qquad STBP & \qquad \qquad STGP & \quad \quad time-uncorrelated  \\
\end{tabular}
\vspace{-5pt}
\begin{tabular}{ccccc}
\includegraphics[width=0.19\textwidth,height=.18\textwidth]{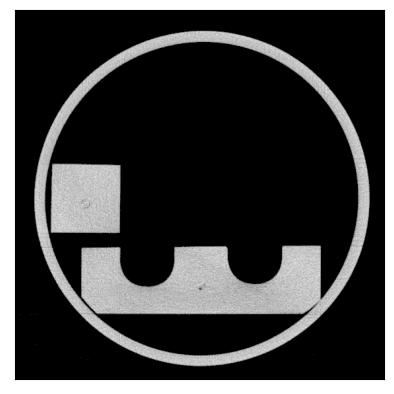}
\includegraphics[width=0.19\textwidth,height=.18\textwidth]{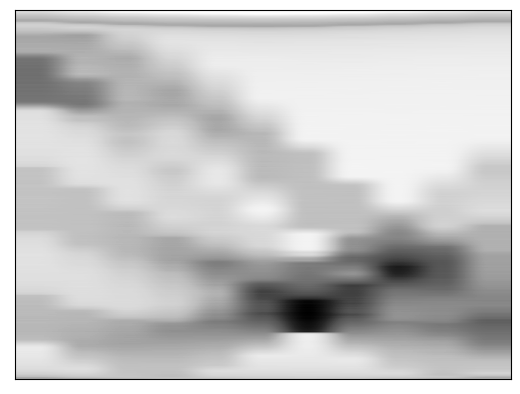}
\includegraphics[width=0.19\textwidth,height=.18\textwidth]{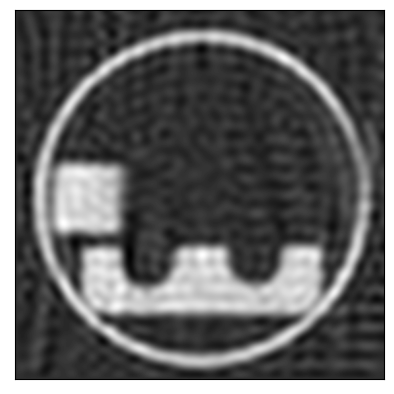}
\includegraphics[width=0.19\textwidth,height=.18\textwidth]{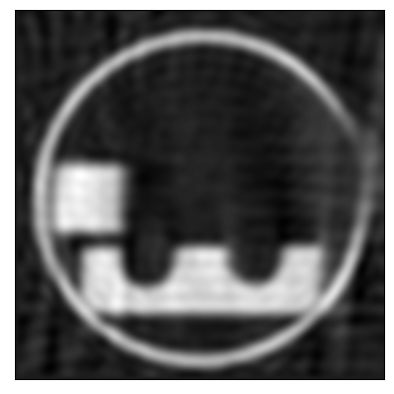}
\includegraphics[width=0.19\textwidth,height=.18\textwidth]{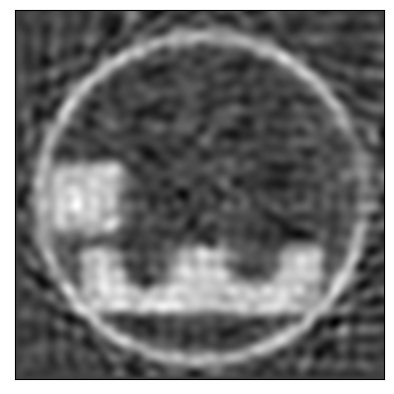}
\end{tabular}
\vspace{-5pt}
\begin{tabular}{ccccc}
\includegraphics[width=0.19\textwidth,height=.18\textwidth]{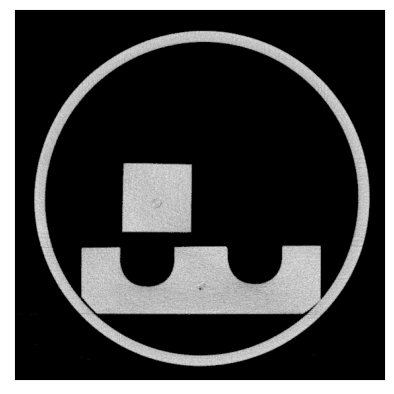}
\includegraphics[width=0.19\textwidth,height=.18\textwidth]{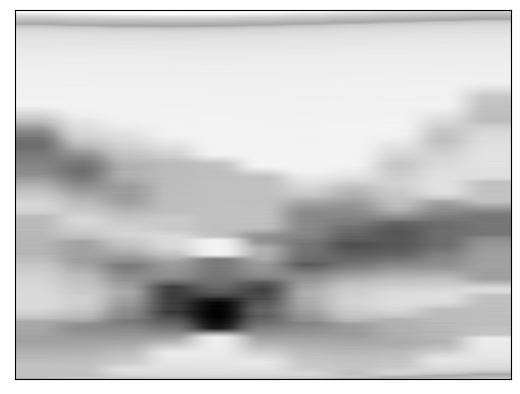} 
\includegraphics[width=0.19\textwidth,height=.18\textwidth]{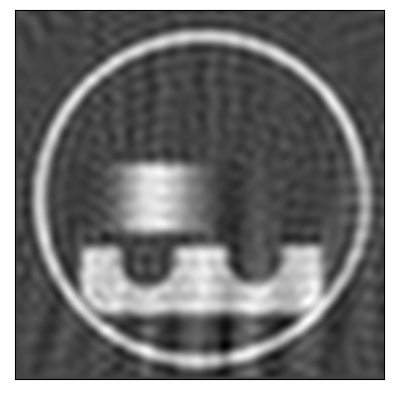} 
\includegraphics[width=0.19\textwidth,height=.18\textwidth]{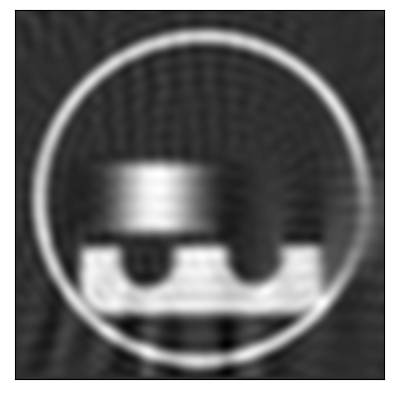} 
\includegraphics[width=0.19\textwidth,height=.18\textwidth]{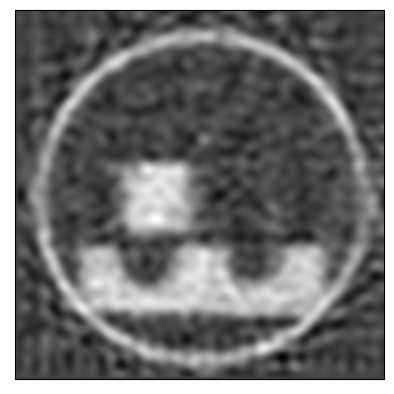} 
\end{tabular}
\vspace{-5pt}
\begin{tabular}{ccccc}
\includegraphics[width=0.19\textwidth,height=.18\textwidth]{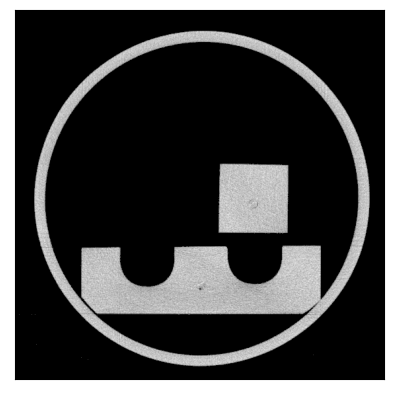}
\includegraphics[width=0.19\textwidth,height=.18\textwidth]{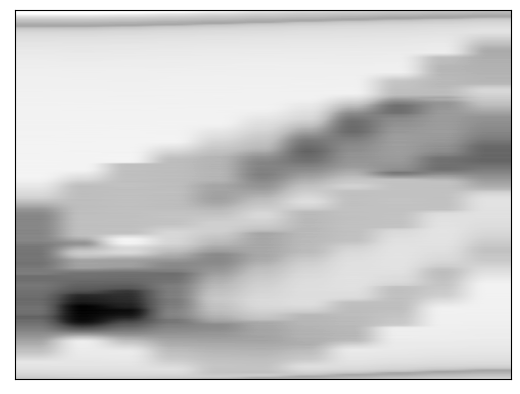}
\includegraphics[width=0.19\textwidth,height=.18\textwidth]{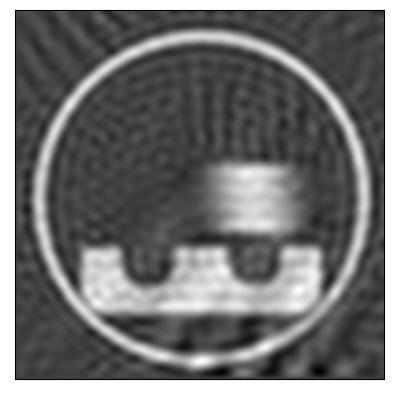}
\includegraphics[width=0.19\textwidth,height=.18\textwidth]{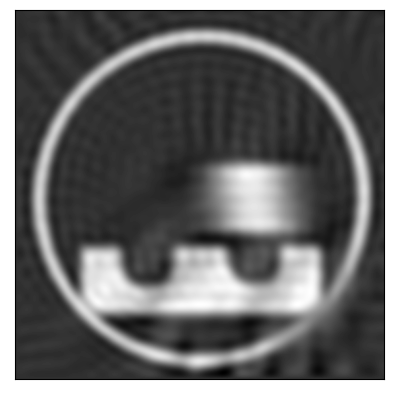}
\includegraphics[width=0.19\textwidth,height=.18\textwidth]{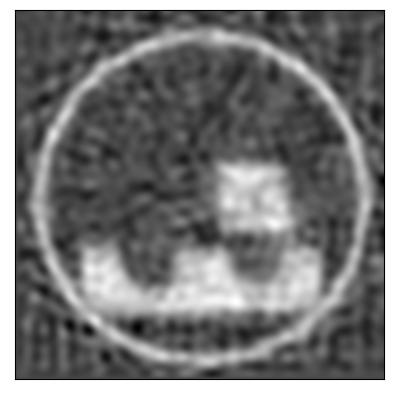}
\end{tabular}
\vspace{-5pt}
\begin{tabular}{ccccc}
\includegraphics[width=0.19\textwidth,height=.18\textwidth]{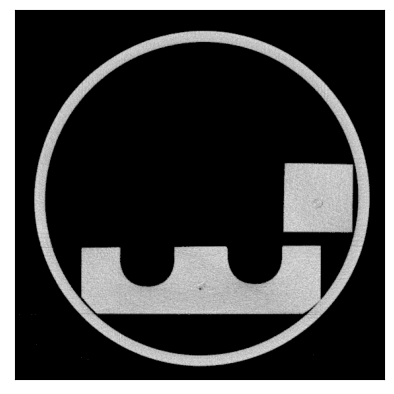}
\includegraphics[width=0.19\textwidth,height=.18\textwidth]{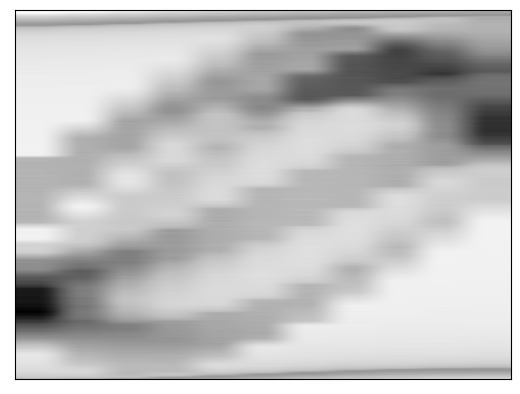}
\includegraphics[width=0.19\textwidth,height=.18\textwidth]{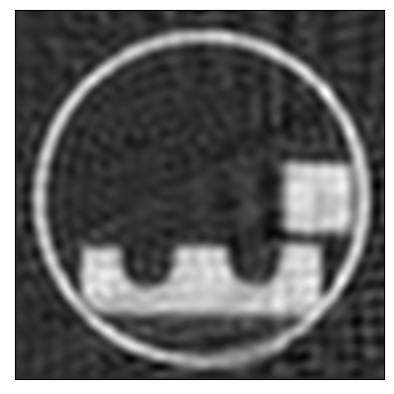}
\includegraphics[width=0.19\textwidth,height=.18\textwidth]{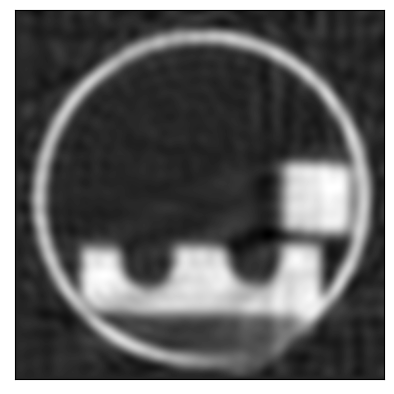}
\includegraphics[width=0.19\textwidth,height=.18\textwidth]{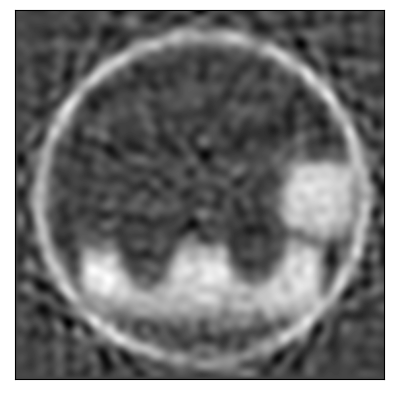}
\end{tabular}
\vspace{-5pt}
\caption{MAP reconstruction of dynamic STEMPO tomography in the whitened space. Columns from left to right: true images, sinograms, MAP estimates by STBP, STGP and time-uncorrelated models respectively. Rows from top to bottom: time step $j = 0, 6, 13, 19$.}
\label{fig:STEMPO_MAP_whiten}
\end{figure}

Firstly, we consider a simulated dynamic tomography named Spatio-TEmporal Motor-POwered (STEMPO) ground truth phantom from \cite{heikkila2022stempo}. 
The dataset contains 360 snapshots of CT images each of size $I=560\times 560$, and we choose $J=20$ at equal time intervals.
Each linear operator, $\mG_j\in\mbR^{8701 \times 313600}$, 
then projects the true image $u^\dagger(\bX,t_j)$ to a sinogram $\mG_j(u^\dagger(\bX, t_j)) \in \mbR^{791 \times 11}$ at $n_a=11$ angles ($x$-axis) with $n_s=791$ equally spaced X-ray detectors ($y$-axis) as shown in the second column of Figure \ref{fig:STEMPO_MAP_whiten}. Finally, following Model \eqref{eq:STBP_model}, Gaussian white noise $\vect\eps_j\sim \mN_{n_a n_s}(0, \sigma_j^2 \bI_{n_a n_s})$ is added to the signogram to obtain the observation $\bY_{j}=\mG_j(u^\dagger(\bX, t_j))+\vect\eps_j$ such that the 
noise level 
$\Vert\sigma_j\Vert_2/\Vert\mG_j\left(u^\dagger(\bX, t_j))\right)\Vert_2=0.01$.
The true images $\{u^\dagger(\bX,t_j)\}_{j=1}^J$ and the noisy observations $\{\bY_{j}\}_{j=1}^J$ at time $j=0,6,13,19$ are shown in the first two columns of Figure \ref{fig:STEMPO_MAP_whiten}, respectively.

We minimize the negative log-posterior densities \eqref{eq:logpost_whiten} in terms of the whitened coordinates $\bZeta$ for the three models 
to obtain the MAP estimates. The rightmost three columns of Figure \ref{fig:STEMPO_MAP_whiten} compare these MAP estimates obtained in the whitened space and mapped to the original space. STBP has the sharpest reconstruction that is the closest to the truth. However, the results by the other two models are either blurry (by STGP on the forth column) or noisy (by the time-uncorrelated model on the last column).
Table \ref{tab:STEMPO} confirms that the STBP model yields the best reconstruction with the lowest $\textrm{RLE}=32.17\%$ on average in 10 experiments repeated for different random seeds. Though their log-likelihood values are not comparable in the regularized optimization, the same advantage is supported by the high values in other quality measures such as PSNR and SSIM. 

On the other hand, the MAP estimates generated by minimizing the negative log-posterior \eqref{eq:logpost} in terms of the original parameters $\bXi$ are compared in Figure \ref{fig:STEMPO_MAP}. They have more than $40\%$ RLE's and are generally more blurry than those obtained in the whitened space (See Figure \ref{fig:STEMPO_MAP_whiten}). Such difference is also observed in Figure \ref{fig:STEMPO_err} where the objective functions and RLE's are compared between optimization in the original space (w.r.t. $\bXi$, left two panels) and optimization in the whitened space (w.r.t. $\bZ$, right two panels) for these three models: the latter yields better results within fewer iterations bearing lower errors, 
possibly due to faster exploration in the whitened space with variables de-correlated.
In general, STBP converges fastest to the lowest error state among the three models.

Lastly, we apply 
wn-$\infty$-mMALA (Algorithm \ref{alg:wn-infMC}) to sample from the posterior distributions of the two models with STBP and STGP priors, respectively (the result for time-uncorrelated prior is far worse and hence omitted) and compare their posterior estimates in Figure \ref{fig:STEMPO_MCMC}.
We generate 3000 samples and discard the first 1000 samples. The remaining 2000 samples are used to estimate the posterior means (the second and the third columns) and posterior standard deviations (the last two columns). Due to the large dimensionality ($560\times 560\times 20$) and limited number of samples, these posterior estimates tend to be noisy. The posterior mean estimates are not as good as their MAP estimates. Yet the posterior standard deviations by STBP (the forth column) provides uncertainty information with more clear spatial features than those by STGP (the last column).

\subsection{Navier-Stokes Inverse Problem}
Let us consider a complex non-linear inverse problem involving the following 2-d Navier-Stokes equation (NSE) for a viscous, incompressible fluid in vorticity form on 
$\mbT^2=(0,1)^2$:
\begin{align*}
\begin{split}
\partial_t w(x,t) + u(x,t) \cdot \nabla w(x,t) &= \nu \Delta w(x,t) + f(x), \qquad x \in (0,1)^2, t \in (0,T],  \\
\nabla \cdot u(x,t) &= 0, \qquad \qquad \qquad \qquad \quad x \in (0,1)^2, t \in [0,T],  \\
w(x,0) &= w_0(x), \qquad \qquad \qquad \quad x \in (0,1)^2 .
\end{split}
\end{align*}
where $u \in C([0,T]; H^r(\mbT^2; \mbR^2))$ for any $r>0$ 
is the velocity field, $w = \nabla \times u$ is the vorticity, $w_0 \in L^2(\mbT^2;\mbR)$ is the initial vorticity,  $\nu \in \mbR_+$ is the viscosity coefficient, and $f \in L^2(\mbT^2;\mbR)$ is the forcing function. 

Because NSE is computationally intensive to solve,
we 
build an emulator based on the Fourier operator neural network (FNO) \citep{Li_2021} that maps the vorticity up to time $T_0=10$ to the vorticity up to some later time $T>10$:
\begin{equation*}
\mG : C([0,T_0]; H^r(\mbT^2; \mbR^2)) \to C((T_0,T]; H^r(\mbT^2; \mbR^2)), \quad
 w|_{(0,1)^2\times[0,10]} \mapsto w|_{(0,1)^2\times(10,T]}.
\end{equation*}
One of the attractive features of FNO is that the neural network is built to learn operators defined on function spaces. Compared with traditional neural networks for simulating PDE solutions including CNN 
and PINNs \citep{RAISSI2019}, 
FNO is mesh-independent and very efficient for the inference of Bayesian inverse problem constrained by NSE.

In this example, we choose the viscosity $\nu=1 \mathrm{e}-3$ and set $T-T_0=30$. Since the target operator, $\mG^\dagger$, is time-dependent, we train a 3-d FNO (FNO-3d) based on 5000 pairs of input vorticity (for the first 10 unit time) and output vorticity (for the following 30 unit time) solved on $I=64\times 64$ spatial mesh (denoted as $\bX$) using the same network configuration as in \cite{Li_2021}. We initialize the vorticity $w_0$ with a (star-convex) polygon shown as in the top left of Figure \ref{fig:NSE_whiten_MAP} which also demonstrates a few snapshots of true vorticity trajectory, $w^\dagger|_{(0,1)^2\times[0,10]}(\bX, t_j)$, at $j=0,3,6,9$ in the first column.
We then observe data of vorticity $w|_{(0,1)^2\times(10,40]}(\bX, t_j)$ with $t_j\in(T_0,T]$ for $j=0,\cdots,29$ based on the true initial inputs $w^\dagger|_{(0,1)^2\times[0,10]}$, with Gaussian noise contamination, i.e., $\by_j = \mG(w^\dagger|_{(0,1)^2\times[0,10]}(\bX,t_j)) + \vect\eta_j$ with $\vect\eta_j\sim N(0,\Gamma_\mathrm{noise})$, and $\Gamma_\mathrm{noise}$ empirically estimated as in the previous example. A few time snapshots of the observed vorticity are illustrated in the second column of Figure \ref{fig:NSE_whiten_MAP}.
Figure \ref{fig:NSE_FNO} compares the trajectory emulated by the FNO network (lower row) against that solved by the classical PDE solver (upper row) in the observation time window $(T_0,T]$. Their visual difference is hardly discernible.


\renewcommand{\arraystretch}{0.6}
\begin{table}[t]
\caption{Comparison of MAP estimates of NSE trajectory generated by STBP, STGP and time-uncorrelated prior models in terms of RLE, 
log-likelihood, PSNR, and SSIM 
measures. Standard deviations (in bracket) are obtained by repeating the experiments for 10 times with different random seeds for initialization.}
\centering
\begin{tabular}{|c|c|c|c|}
\toprule
 & time-uncorrelated  & STGP  & STBP \\
 \midrule
RLE  & 0.7656 (7.60e-5) &   0.7457 (3.041e-5) & \cellcolor{lightgray} 0.6618 (1.07e-4) \\
log-lik  &   -229.18 (0.21) & -1586.11 (0.47) &  -173.33 (0.10)\\
PSNR &  15.7267 (8.62e-4) &  15.9555 (3.54e-4) & \cellcolor{gray} 16.9921 (1.40e-3) \\
SSIM &  0.1842 (7.84e-5) & 0.2213 (5.68e-5) & \cellcolor{gray} 0.3416 (1.67e-4) \\
\bottomrule
\end{tabular}
\label{tab:NSE}
\end{table}

Unlike the traditional time-dependent inverse problems seeking the solution of the initial condition $w_0$ alone, we are interested in the inverse solution of vorticity for an initial period, i.e., $w|_{(0,1)^2\times[0,10]}$. What is more, we want to obtain UQ for such spatiotemporal object in addition to its point estimate (MAP) using STBP, STGP and time-uncorrelated priors.
Their MAP estimates are compared in the last three columns of Figure \ref{fig:NSE_whiten_MAP}.
Note, this inverse problem for a spatiotemporal solution is much more challenging than the traditional inverse problem for just the initial condition based on the same amount of downstream observations. STBP still yields the inverse solution (the third column) closest to the true trajectory (the first column) among the three models, especially the initial condition at $t=0$ (the first row) which is the most difficult because it is the farthest from the observation window $(T_0,T]$. Note, due to the lack of temporal correlation, the solution trajectory from the time-uncorrelated prior model appears excessively erratic.
Table \ref{tab:NSE} further confirms that STBP prior model yields the best inverse solution with the lowest RLE, $66.18\%$, compared with the true trajectory, almost $10\%$ lower than the other two methods.
The high values of image reconstruction metrics also support the superiority of STBP model compared with the other two.
Figure \ref{fig:NSE_err} compares the optimization objective (the negative log-posterior) and the relative error as functions of iterations. STBP converges to lower RLE value, while STGP terminates earlier at a higher RLE value.

Lastly, because of the computational cost, we apply 
wn-$\infty$-HMC (Algorithm \ref{alg:wn-infMC}) instead of wn-$\infty$-mMALA for the UQ. We run 20,000 iterations, discard the first 5,000, and sub-sample one of every three. The remaining 5,000 samples are used to obtain posterior estimates illustrated in Figure \ref{fig:NSE_MCMC} comparing STBP model (the second and the forth columns) against STGP model (the third and the last columns). The posterior mean by STBP (the second column) is more noisy compared with that by STGP (the third row). However the posterior standard deviation by STBP (the forth column) is more informative than that of STGP (the last column).

\begin{figure}[t]
\begin{tabular}{cccc}
\qquad Observations & \qquad \qquad STBP & \quad \qquad \qquad STGP & \quad \qquad \qquad time-uncorrelated  \\
\end{tabular}
\vspace{-5pt}
\begin{tabular}{ccccc}
\includegraphics[width=0.24\textwidth,height=.18\textwidth]{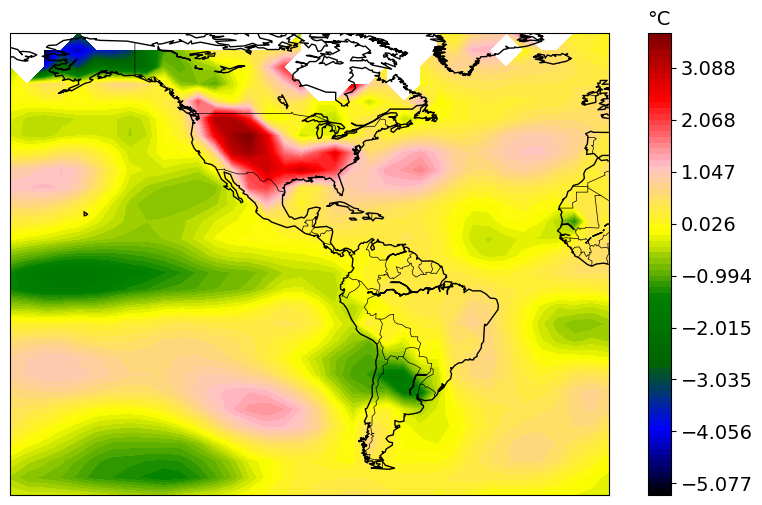}
\includegraphics[width=0.24\textwidth,height=.18\textwidth]{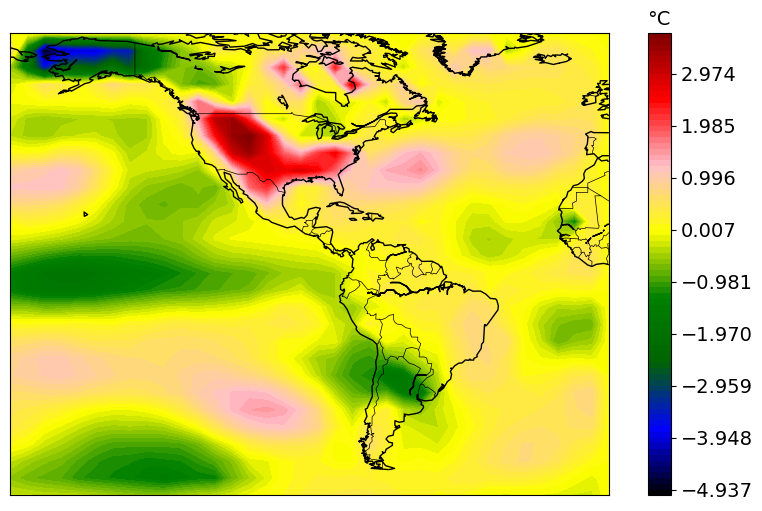}
\includegraphics[width=0.24\textwidth,height=.18\textwidth]{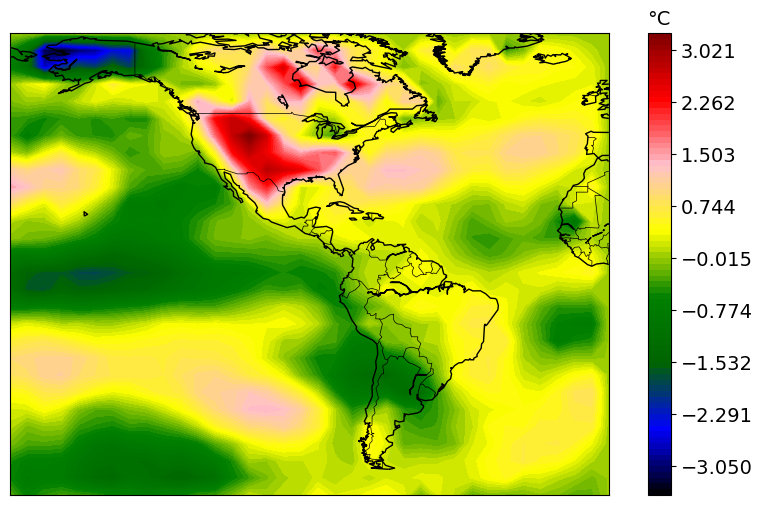}
\includegraphics[width=0.24\textwidth,height=.18\textwidth]{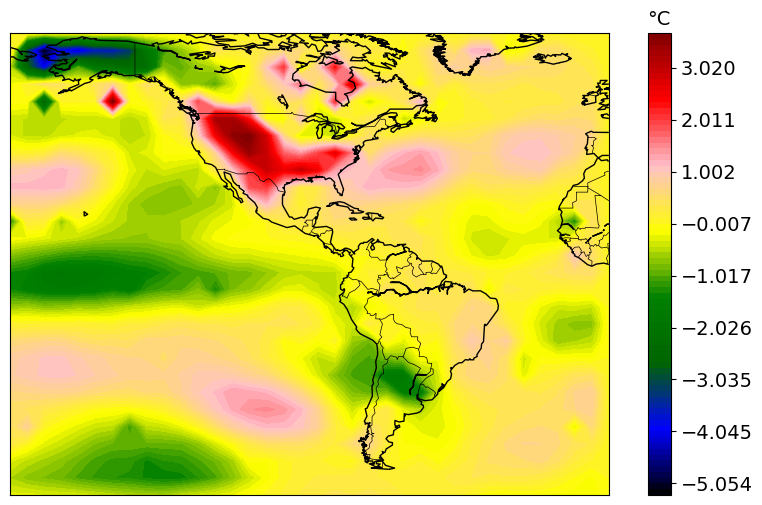}
\end{tabular}
\vspace{-5pt}
\begin{tabular}{ccccc}
\includegraphics[width=0.24\textwidth,height=.18\textwidth]{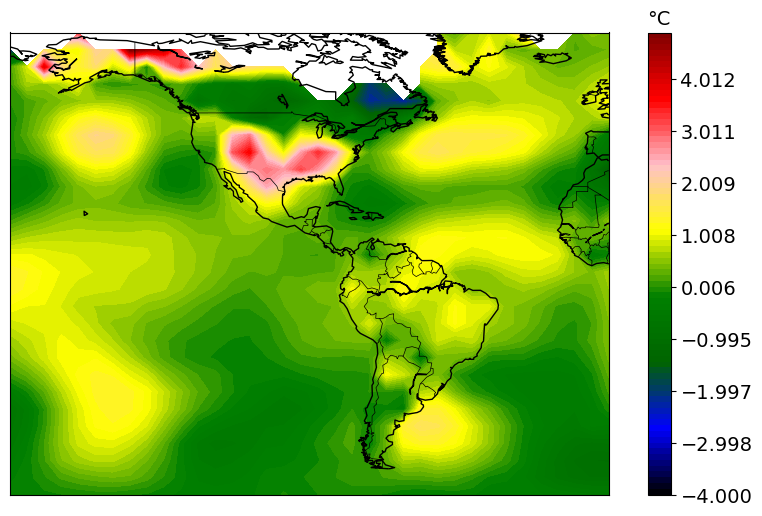}
\includegraphics[width=0.24\textwidth,height=.18\textwidth]{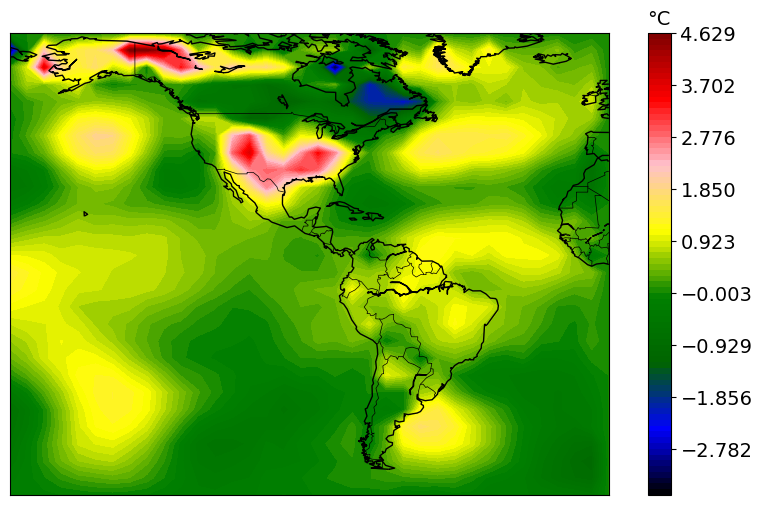}
\includegraphics[width=0.24\textwidth,height=.18\textwidth]{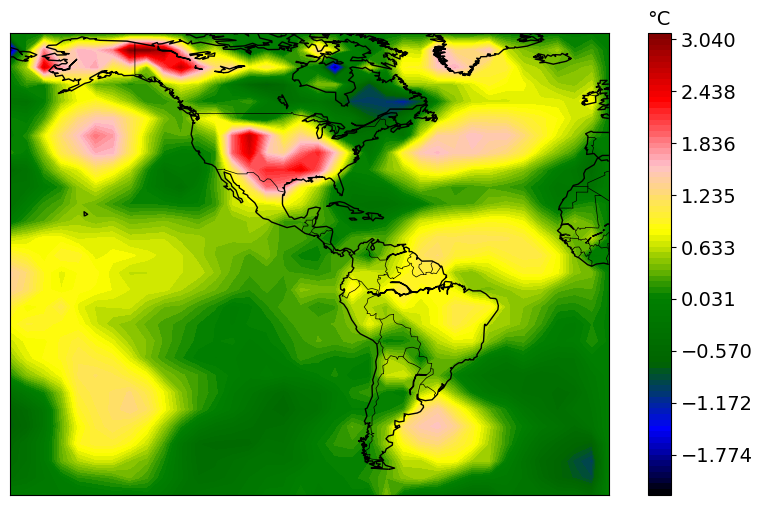}
\includegraphics[width=0.24\textwidth,height=.18\textwidth]{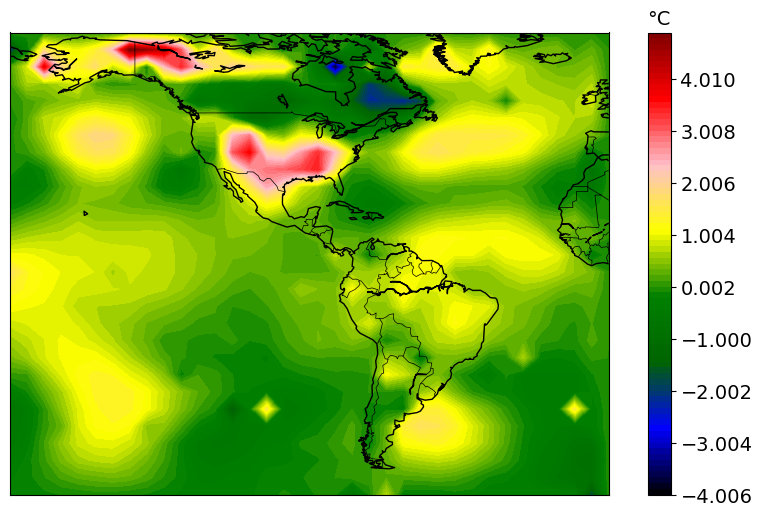}
\end{tabular}
\vspace{-5pt}
\begin{tabular}{ccccc}
\includegraphics[width=0.24\textwidth,height=.18\textwidth]{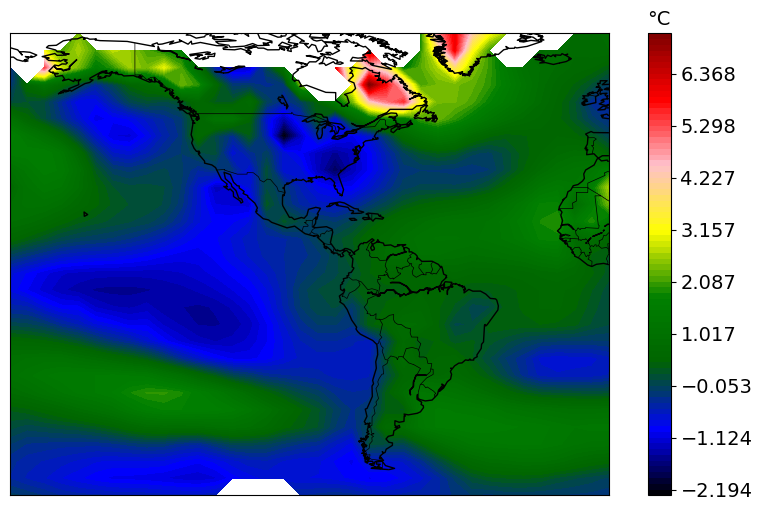}
\includegraphics[width=0.24\textwidth,height=.18\textwidth]{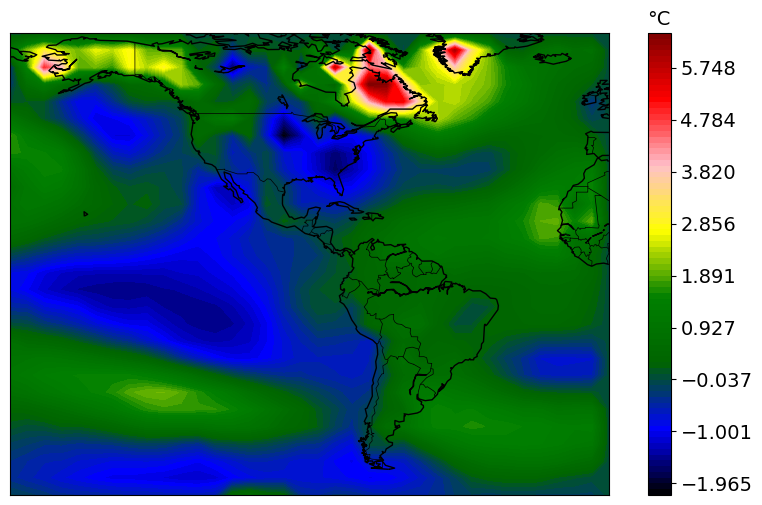}
\includegraphics[width=0.24\textwidth,height=.18\textwidth]{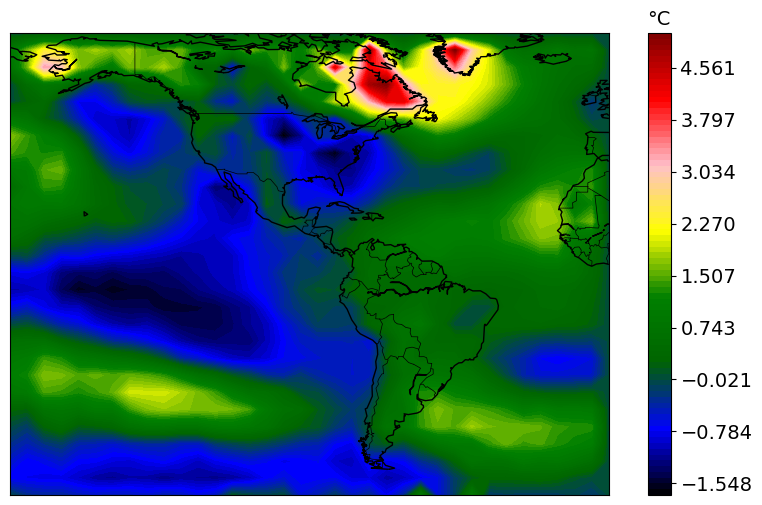}
\includegraphics[width=0.24\textwidth,height=.18\textwidth]{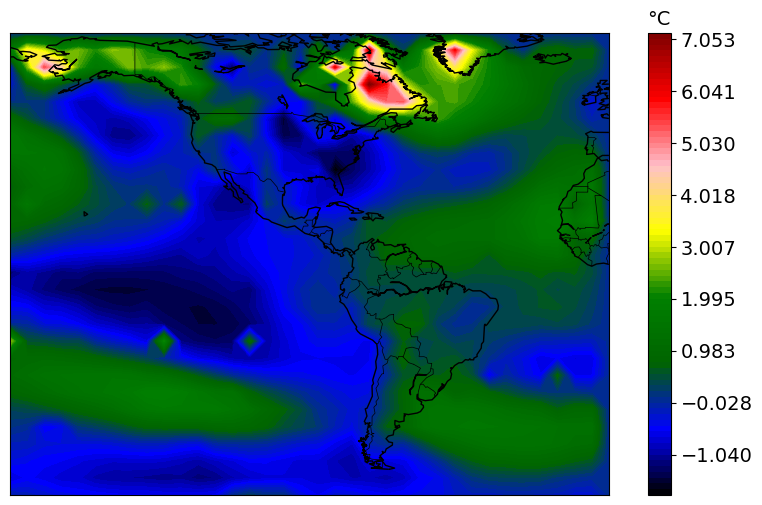}
\end{tabular}
\vspace{-5pt}
\begin{tabular}{ccccc}
\includegraphics[width=0.24\textwidth,height=.18\textwidth]{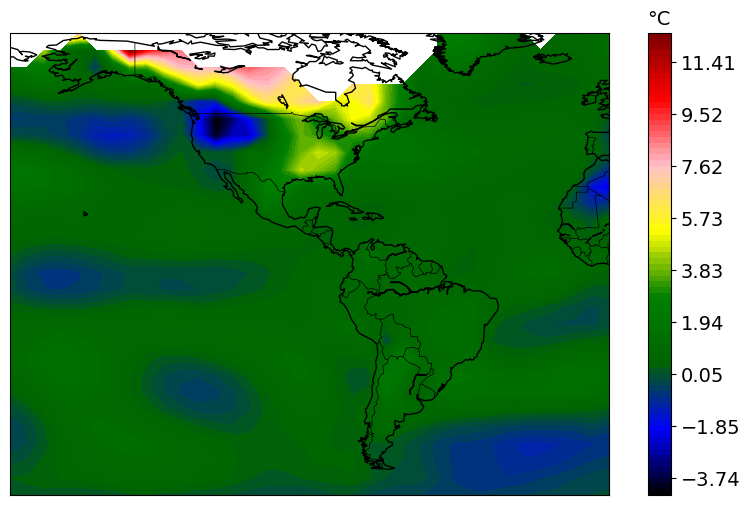}
\includegraphics[width=0.24\textwidth,height=.18\textwidth]{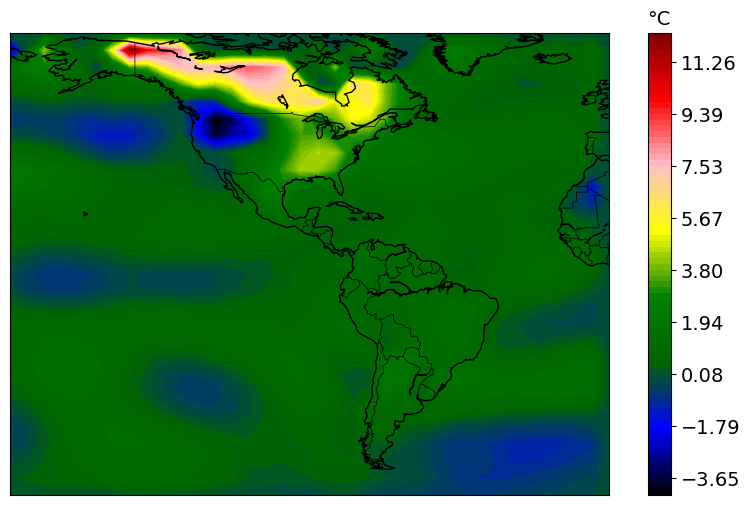}
\includegraphics[width=0.24\textwidth,height=.18\textwidth]{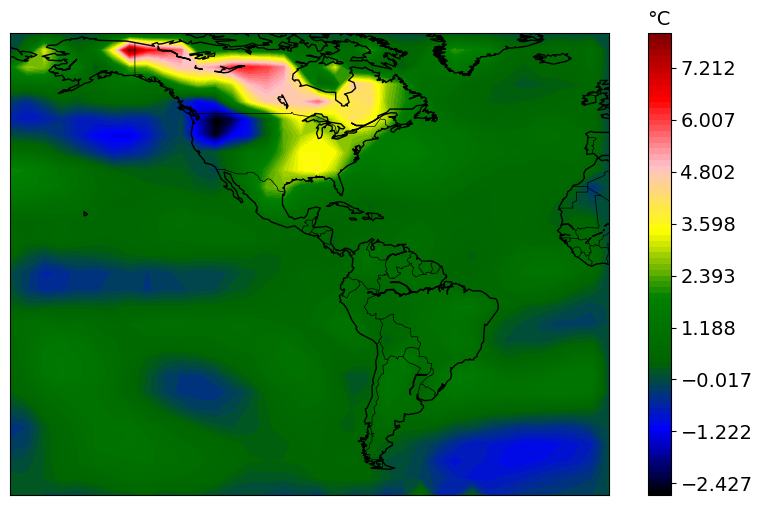}
\includegraphics[width=0.24\textwidth,height=.18\textwidth]{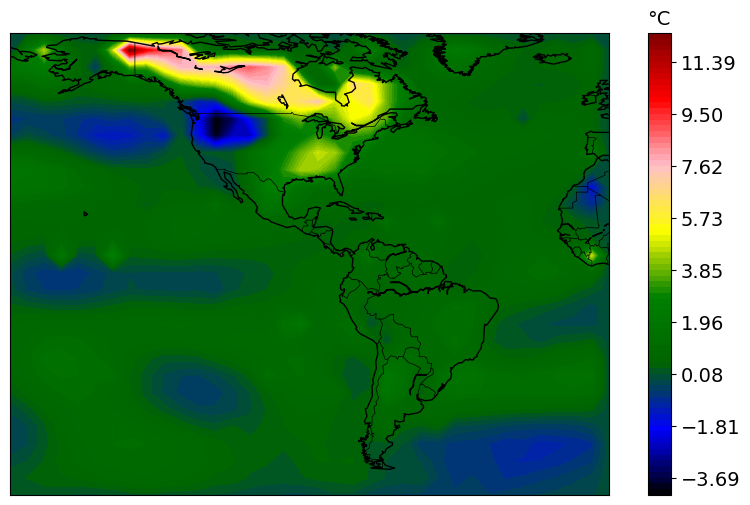}
\end{tabular}
\vspace{-5pt}
\caption{MAP reconstruction of NOAA temperature anomalies. Columns: observations with missing values, MAP estimates by STBP, STGP and time-uncorrelated models, respectively. Rows from top to bottom: time step $t_j = 1999, 2005, 2011$, and $2017$ (Januaries).}
\label{fig:noaatmp_MAP_whiten}
\end{figure}
\subsection{NOAA Temperature Anomalies}
In this section, we conduct a spatiotemporal analysis on a real dataset of monthly gridded temperature anomalies from U.S. National Oceanic and Atmospheric Administration (NOAA) \citep{Gu_2020, Gu_2022}. This dataset consists of the average air and marine temperate anomalies at 5 degrees longitude-latitude grids ranging from 182.5 to 357.5 in longitude and from -62.5 to 72.5 in latitude ($I=36\times 28$), with time spanning from Jan 1999 to Dec 2018 ($J=240$). 
This results in a dataset with $241,920$ items of which $11, 122$ are missing. The leftmost column of Figure \ref{fig:noaatmp_MAP_whiten} illustrates a few timestamps of the data with blank area corresponding to the missing values.
In addition to the missing data, we hold out $10\%$ random samples of $230,798$ valid entries and train STBP, STGP and time-uncorrelated models respectively on the remaining $207,719$ observations.
We then test their prediction performance on the held-out data and compare their MAP estimates in the right three columns of Figure \ref{fig:noaatmp_MAP_whiten}.
In all the snapshots, those generated by STBP prior model match the observations the closest. However, besides the unmatched ranges, the STGP model misses more spatial details in the recovered temperature data. Due to the negligence of temporal association, the time-uncorrelated model yields many noisy star-shaped irregular estimates deviating from the actual observations. The superior performance of STGP is further confirmed by the quantitative comparison summarized in Table \ref{tab:noaatmp}.
We also compare the MCMC estimates by applying wn-$\infty$-MALA (Algorithm \ref{alg:wn-infMC}) to the three models to generate 5000 samples after discarding the first 2000 samples. Their posterior median estimates are plotted in Figure \ref{fig:noaatmp_MCMC} with RLEs $32.4\%$ for STBP compared with $32.43\%$ for STGP and $40.83\%$ for time-uncorrelated model tested respectively on the $10\%$ held-out data.


\section{Conclusion}\label{sec:conclusion}

In this paper, we propose a nonparametric Bayesian framework to solve spatiotemporal inverse problems with inhomogeneous data, such as sequential images with edges.
Our proposed STBPs are generalizations of BP from the spatial to spatiotemporal domain. The key idea is to replace random coefficients (following a $q$-exponential distribution) in the series definition of BP with the recently proposed $Q$-EP \citep{Li_2023}
to account for the temporal correlations among spatial function images through a covariance kernel, similarly as in GP.
Moreover, STBP controls the regularization of posterior solutions through a parameter $q\in[1,2]$ and includes STGP as a special case ($q=2$). 

We conduct a thorough theoretical investigation regarding well-definedness, 
series representation of STBP priors and their 
posterior 
properties 
to justify the suitability and superiority of the proposed methodology. 
To address the challenges of posterior inference, we propose dimension-independent MCMC algorithms based on a new white noise representation for series-based priors \citep{chen2018dimension}. 
Through extensive numerical experiments from various 
spatiotemporal inverse problems we demonstrate that 
STBP ($q=1$) has the advantage in handling spatial inhomogeneity over STGP (which tends to be oversmooth) and in capturing temporal correlations over a time-uncorrelated approach.

Several directions remain open for future research.
The inference can be sped up by 
more efficient variational Bayes approach.
The equal constraint of the STBP prior on the spatial ($q$) and temporal ($p$) regularization parameters 
(Section \ref{sec:STBP}) can be relaxed to allow for independent control on the spatial and temporal regularities of the posterior solution. 
We also aim to extend the current work to the non-convex regime for $q\in (0,1)$, 
which imposes more regularization (refer to Figure \ref{fig:simulation_MAP_whiten_q}) with considerably more complexity.

\section*{Acknowledgments}
The authors gratefully acknowledge NSF grant DMS-2134256 (SL), NSF awards 2202846 and DMS-2410699 (MP).
The authors report there are no competing interests to declare.

{\footnotesize
\setlength{\bibsep}{7pt}
\setlength{\baselineskip}{7pt}
\bibliographystyle{chicago}
\bibliography{ref}
}

\clearpage
\appendix
\begin{center}
{\large\bf SUPPLEMENTARY MATERIAL of\\ ``Spatiotemporal Besov Priors for Bayesian Inverse Problems"}
\end{center}

\numberwithin{equation}{section}
\numberwithin{asump}{section}
\numberwithin{lem}{section}
\numberwithin{prop}{section}
\numberwithin{thm}{section}
\numberwithin{figure}{section}
\numberwithin{table}{section}
\numberwithin{algorithm}{section}

\section{Proofs}\label{apx:proof}
\subsection{Theorems of BP, STBP Priors}\label{apx:priorthm}

\Lqbound*
\begin{proof}[Proof of Theorem \ref{thm:Lqbound}]
\label{apx:Lqbound}
Based on \eqref{eq:q_exp}, it is straightforward to verify
\begin{equation*}
    \mbE[\Vert u\Vert_{s',q}^q] = \sum_{\ell=1}^\infty \ell^{\tau_q(s')q}\mbE|u_\ell|^q =\kappa^{-1} \mbE[|\xi_1|^q] \sum_{\ell=1}^\infty \ell^{(\tau_q(s')-\tau_q(s))q} <\infty
\end{equation*}
if $(\tau_q(s')-\tau_q(s))q = \frac{s'-s}{d}q<-1$, i.e. $s'<s-\frac{d}{q}$.
\end{proof}

\qepint*
\begin{proof}[Proof of Theorem \ref{thm:qepint}]
\label{apx:qepint}
Note $r(\xi_\ell)^{\frac{q}{2}}=\lambda_\ell^{-\frac{q}{2}} |\xi_\ell|^q\sim \chi^2(1)$ for all $\ell\in\mbN$ by Proposition \ref{prop:QED_stochrep}.
Denote $\chi^2_\ell := \lambda_\ell^{-\frac{q}{2}} |\xi_\ell|^q \overset{iid}{\sim} \chi^2(1)$.
Hence $\Vert \xi\Vert_{s',q}^q = \sum_{\ell=1}^\infty \ell^{\tau_q(s')q}\lambda_\ell^{\frac{q}{2}} \chi^2_\ell$ becomes an infinite mixture of chi-squared random variables whose density is analytically intractable.
Yet we have
\begin{equation*}
    \mbE[\Vert \xi(\cdot)\Vert_{s',q}^q] = \sum_{\ell=1}^\infty \ell^{\tau_q(s')q}\lambda_\ell^{\frac{q}{2}} \mbE[\chi^2_\ell] = \sum_{\ell=1}^\infty \ell^{\tau_q(s')q}\lambda_\ell^{\frac{q}{2}} <\infty
\end{equation*}
if Assumption \ref{asmp:eigen_summable}-(i) holds.
Hence we have proved the first conclusion.
From above we have $\mbE[\Vert \xi(\cdot)\Vert_{q}^q] = \Vert\lambda\Vert_{\frac{q}{2}}^{\frac{q}{2}} <\infty$ if Assumption \ref{asmp:eigen_summable}-(ii) holds.
Thus it completes the proof.
\end{proof}

\Fernique*
\begin{proof}[Proof of Theorem \ref{thm:Fernique}]
\label{apx:Fernique}
We complete the proof by showing $(iii)\implies (ii)\implies (i) \implies (iii)$.
First, by Assumption-\ref{asmp:decaying_constant} and the proof of Theorem \ref{thm:qepint} we have
\begin{equation*}
    \Vert u\Vert_{s',q}^q = \sum_{\ell=1}^\infty \ell^{\tau_q(s')q} \Vert u_\ell(\cdot)\Vert_q^q = \sum_{\ell=1}^\infty \ell^{(\tau_q(s')-\tau_q(s))q} \Vert \xi_\ell(\cdot)\Vert_q^q = \sum_{\ell=1}^\infty \ell^{\frac{s'-s}{d}q} \sum_{\ell'=1}^\infty \lambda_{\ell'}^{\frac{q}{2}} \chi^2_{\ell\ell'} ,
\end{equation*}
where $\chi^2_{\ell\ell'}\overset{iid}{\sim}\chi^2(1)$.
Denote $\alpha_{\ell\ell'} = \alpha \ell^{\frac{s'-s}{d}q} \lambda_{\ell'}^{\frac{q}{2}}$. Then we have
\begin{equation*}
    \mbE[\exp(\alpha_{\ell\ell'} \chi^2_{\ell\ell'})] = M_{\chi^2(1)}(\alpha_{\ell\ell'}) = \left[1 -2\alpha_{\ell\ell'}\right]^{-\half}, \quad for \; \alpha_{\ell\ell'}<\half .
\end{equation*}

\paragraph{$(iii)\implies (ii)$.}
Now that
\begin{equation*}
    \mbE[\exp(\alpha \Vert u \Vert_{s',q}^q)] = \mbE\left[\exp\left(\sum_{\ell=1}^\infty \sum_{\ell'=1}^\infty \alpha_{\ell\ell'} \chi^2_{\ell\ell'})\right)\right] 
    = \prod_{\ell=1}^\infty \prod_{\ell'=1}^\infty \left[1 -2\alpha_{\ell\ell'}\right]^{-\half} .
\end{equation*}
Assume $\alpha>0$, we have each item in the product bigger than 1. To bound such infinite product, it suffices to bound the following infinite sum
\begin{equation*}
    \sum_{\ell=1}^\infty \sum_{\ell'=1}^\infty \left(\left[1 -2\alpha_{\ell\ell'}\right]^{-\half} -1\right) = \sum_{\ell=1}^\infty \sum_{\ell'=1}^\infty \frac{2\alpha_{\ell\ell'}}{\sqrt{1-2\alpha_{\ell\ell'}} + 1-2\alpha_{\ell\ell'}} \leq \frac{2\alpha \sum_{\ell=1}^\infty \ell^{\frac{s'-s}{d}q} \Vert\lambda\Vert_{\frac{q}{2}}^{\frac{q}{2}}}{\sqrt{1-2\alpha \Vert \lambda\Vert_\infty^{\frac{q}{2}}} + 1-2\alpha \Vert \lambda\Vert_\infty^{\frac{q}{2}}} .
\end{equation*}
By Assumptions \ref{asmp:eigen_summable}-(ii),
it is finite provided that $\frac{s'-s}{d}q<-1$ and $1-2\alpha \Vert \lambda\Vert_\infty^{\frac{q}{2}}>0$, that is, $s'<s-\frac{d}{q}$ and $\alpha<\half \Vert \lambda\Vert_\infty^{-\frac{q}{2}}$ where $\Vert \lambda\Vert_\infty = \sup_\ell \lambda_\ell$.

\paragraph{$(ii)\implies (i) \implies (iii)$.} One can follow \cite[Lemma 10 of][]{Lassas_2009} or \cite[Theorem 5 of][]{dashti2017} for the same argument.
\end{proof}

\embeddings*
\begin{proof}[Proof of Proposition \ref{prop:embeddings}]
\label{apx:embeddings}
If $q^\dagger=q$, we have $B^{s^\dagger,q^\dagger}(\mZ)\subset B^{s',q}(\mZ)$ when $s'<s^\dagger$; 
if $q^\dagger>q$, for $u\in B^{s^\dagger,q^\dagger}(\mZ)$ by H\"older inequality with $(\frac{q^\dagger}{q}, \frac{q^\dagger}{q^\dagger-q})$,
\begin{equation}\label{eq:qs_bigg_q}
\begin{aligned}
    \Vert u\Vert_{s',q}^q &= \sum_{\ell=1}^\infty \ell^{\tau_q(s')q}\Vert u_\ell(\cdot)\Vert_q^q \leq \sum_{\ell=1}^\infty \ell^{\tau_{q^\dagger}(s^\dagger)q}\Vert u_\ell(\cdot)\Vert_{q^\dagger}^q \ell^{(\tau_q(s')-\tau_{q^\dagger}(s^\dagger))q}\\
    &\leq \Vert u\Vert_{s^\dagger,q^\dagger}^q \left[\sum_{\ell=1}^\infty \ell^{(\tau_q(s')-\tau_{q^\dagger}(s^\dagger))q q^\dagger/(q^\dagger -q)} \right]^{1-\frac{q}{q^\dagger}} <\infty
\end{aligned}
\end{equation}
holds when $(\tau_q(s')-\tau_{q^\dagger}(s^\dagger))q q^\dagger/(q^\dagger -q) = \frac{(s'-s^\dagger)/d}{\frac{1}{q}-\frac{1}{q^\dagger}}-1<-1$, i.e., $s'<s^\dagger$;
if $q^\dagger<q$, for $u\in B^{s^\dagger,q^\dagger,q}(\mZ)$, we have
\begin{equation}\label{eq:qs_less_q}
\begin{aligned}
\Vert u \Vert_{s',q}^q &= \sum_\ell \ell^{\tau_q(s')q}\Vert u_\ell(\cdot)\Vert_q^q = \sum_\ell \ell^{\tau_{q^\dagger}(s^\dagger)q^\dagger}\Vert u_\ell(\cdot)\Vert_q^{q^\dagger} \ell^{(\tau_q(s')q-\tau_{q^\dagger}(s^\dagger)q^\dagger)} \Vert u_\ell(\cdot)\Vert_q^{(q-q^\dagger)} \\
&\leq \sum_\ell \ell^{\tau_{q^\dagger}(s^\dagger)q^\dagger}\Vert u_\ell(\cdot)\Vert_q^{q^\dagger} \ell^{(\tau_q(s')q-\tau_{q^\dagger}(s^\dagger)q^\dagger)} \Vert u\Vert_{s^\dagger, q^\dagger, q}^{(q-q^\dagger)} \ell^{-\tau_{q^\dagger}(s^\dagger)(q-q^\dagger)} \\
&= \Vert u\Vert_{s^\dagger, q^\dagger, q}^{(q-q^\dagger)} \sum_\ell \ell^{\tau_{q^\dagger}(s^\dagger)q^\dagger}\Vert u_\ell(\cdot)\Vert_q^{q^\dagger} \ell^{(\tau_q(s')-\tau_{q^\dagger}(s^\dagger))q} 
\lesssim \Vert u\Vert_{s^\dagger, q^\dagger, q}^q
\end{aligned}
\end{equation}
hold when $(\tau_q(s')-\tau_{q^\dagger}(s^\dagger))q=\left(\frac{s'-s^\dagger}{d}-\frac{1}{q}+\frac{1}{q^\dagger}\right)q<0$, i.e., $s'<s^\dagger +\frac{d}{q}-\frac{d}{q^\dagger}$.
\end{proof}

\KLSTBP*
\begin{proof}[Proof of Theorem \ref{thm:KLSTBP}]
\label{apx:KLSTBP}
The series expansion \eqref{eq:STBP_serexp} is the result of applying Theorem \ref{thm:KL} to each $\xi_\ell(\cdot)\overset{iid}{\sim} \qEP(0,\mC)$ and the convergence is in $L^2_\mbP(\Omega, B^{s,2}(\mZ))$.
We then directly compute the spatiotemporal covariance
\begin{equation*}
\begin{aligned}
     & ~~~~\mCv(u(\bz), u(\bz')) \\
    &=\E(u(\bz) u(\bz')) = \E\left[\sum_{\ell=1}^{\infty} \gamma_\ell \xi_\ell(t)\phi_\ell(\bx) \sum_{\ell'=1}^{\infty} \gamma_{\ell'} \xi_{\ell'}(t')\phi_{\ell'}(\bx')\right]\\
    &= \sum_{\ell,\ell'=1}^{\infty}\gamma_\ell \gamma_{\ell'} \phi_\ell(\bx)\phi_{\ell'}(\bx') \E[\xi_\ell(t)\xi_{\ell'}(t')] 
    = \sum_{\ell=1}^{\infty}\gamma_\ell^2 \phi_\ell(\bx)\phi_\ell(\bx') \E[\xi_\ell(t)\xi_\ell(t')] \\
    &= \sum_{\ell=1}^{\infty}\gamma_\ell^2 \phi_\ell(\bx)\phi_\ell(\bx') \mC(t,t') ,
\end{aligned}
\end{equation*}
where we use the iid assumption of $\xi_\ell(\cdot)$ so that $\E[\xi_\ell(t)\xi_{\ell'}(t')]=\E[\xi_\ell(t)\xi_\ell(t')]\delta_{\ell\ell'}$.
\end{proof}

The regularity of an STBP random draw $u(\bx ,t)$ as in \eqref{eq:STBP_serexp} also depends on the properties of spatial  ($\{\phi_\ell\}_{\ell=1}^\infty$) and temporal ($\{\psi_{\ell'}\}_{\ell'=1}^\infty$) bases.
To study its H\"older continuity, we make the following assumption that mainly states that the bases are H\"older continuous with summable Lipschitz constants.
\begin{asump}\label{asmp:bases}
In the series expansion \eqref{eq:STBP_serexp}, we assume the spatial  ($\{\phi_\ell\}_{\ell=1}^\infty$) and temporal ($\{\psi_{\ell'}\}_{\ell'=1}^\infty$) bases satisfy for $\alpha>0$ and $s'<s-\frac{d}{q}$:
\begin{enumerate}[label=(\roman*),nosep]
\item for $\forall\bx,\bx'\in\mX$, the following holds
\begin{equation*}
|\phi_\ell(\bx)-\phi_\ell(\bx')| \leq L(\phi_\ell)|\bx-\bx'|^{\alpha+\frac{d}{q}}, \quad \sum_{\ell=1}^\infty \ell^{\frac{s'-s}{d}q} L(\phi_\ell)<\infty, \quad \sum_{\ell=1}^\infty \ell^{\frac{s'-s}{d}q} \Vert \phi_\ell(\cdot)\Vert_\infty^q<\infty;
\end{equation*}
\item for $\forall t,t'\in\mT$, the following holds
\begin{equation*}
|\psi_{\ell'}(t)-\psi_{\ell'}(t')| \leq L(\psi_\ell)|t-t'|^{\alpha+\frac{d}{q}}, \quad \sum_{\ell'=1}^\infty \lambda_{\ell'}^{\frac{q}{2}} L(\psi_{\ell'})<\infty.
\end{equation*}
\end{enumerate}
\end{asump}

The following theorem regarding the H\"older continuity of STBP random functions can be proved by the Kolmogorov continuity test \citep[Theorem 30 in Section A.2.5 of][]{dashti2017}.
\begin{thm}[H\"older Continuity]
\label{thm:continuity}
Let $u\sim \STBP(\mC, B^{s,q}(\mZ))$ as in \eqref{eq:STBP_randf} satisfying both Assumptions \ref{asmp:eigen_summable}-(ii) and \ref{asmp:decaying_constant}. Suppose the spatial  ($\{\phi_\ell\}_{\ell=1}^\infty$) and temporal ($\{\psi_{\ell'}\}_{\ell'=1}^\infty$) bases in the series representation \eqref{eq:STBP_serexp} satisfy Assumption \ref{asmp:bases}.
Then for any $\beta<\alpha$, there exists a version \footnote{A version of stochastic process $u(\bz)$ is $\tilde u(\bz)$ such that $\Pi[\tilde u(\bz)=u(\bz)]=1$ for $\forall \bz\in \mZ$.}  $\tilde u(\bz)$ of $u(\bz)$ in $C^{0,\beta}(\mZ, B^{s',q}(\mZ))$.
\end{thm}
\begin{proof}
Based on the series representation of $u(\cdot)$ in \eqref{eq:STBP_randf}, we have by Jensen's inequality
\begin{equation*}
\begin{aligned}
    \mbE[\Vert u(\bz)-u(\bz')\Vert_{\tau_q(s'),q}^q] &\leq \sum_{\ell=1}^\infty \mbE[ \ell^{(\tau_q(s')-\tau_q(s))q}| \phi_\ell(\bx)\xi_\ell(t)-\phi_\ell(\bx')\xi_\ell(t') |^q] \\
    &\lesssim \sum_{\ell=1}^\infty \ell^{\frac{s'-s}{d}q} \mbE[|\Delta\phi_\ell|^q|\xi_\ell(t)|^q + |\phi_\ell(\bx')|^q|\Delta\xi_\ell|^q ] \quad (convexity \; of \; |\cdot|^q) \\
    &\lesssim \sum_{\ell=1}^\infty \ell^{\frac{s'-s}{d}q} [ L(\phi_\ell)|\bx-\bx'|^{q\alpha+d} + \Vert \phi_\ell(\cdot)\Vert_\infty^q \mbE|\Delta\xi_\ell|^q ] ,
\end{aligned}
\end{equation*}
where $\Delta \phi=\phi(\bx)-\phi(\bx')$ and $\Delta \xi=\xi(t)-\xi(t')$.
Now based on the series representation $\xi(\cdot)$ in \eqref{eq:KL}, by Jensen's inequality and the proof of Theorem \ref{thm:qepint} we have further for $\forall \ell\in\mbN$ (due to the iid assumption of $\xi_\ell(\cdot)$ in Assumption \ref{asmp:decaying_constant})
\begin{equation*}
    \mbE[|\Delta\xi_\ell|^q] \leq \sum_{\ell'=1}^\infty \mbE [|\xi_{\ell'}|^q |\psi_{\ell'}(t)-\psi_{\ell'}(t')|^q ] \leq \sum_{\ell'=1}^\infty \lambda_{\ell'}^{\frac{q}{2}} L(\psi_{\ell'})|t-t'|^{q\alpha+d} .
\end{equation*}
Therefore by Assumption \ref{asmp:bases} we have
\begin{equation*}
\begin{aligned}
    &\mbE[\Vert u(\bz)-u(\bz')\Vert_{\tau_q(s'),q}^q] \\
    &\lesssim |\bx-\bx'|^{q\alpha+d} \sum_{\ell=1}^\infty \ell^{\frac{s'-s}{d}q} L(\phi_\ell) + |t-t'|^{q\alpha+d} \sum_{\ell=1}^\infty \ell^{\frac{s'-s}{d}q} \Vert \phi_\ell(\cdot)\Vert_\infty^q \sum_{\ell'=1}^\infty \lambda_{\ell'}^{\frac{q}{2}} L(\psi_{\ell'}) 
    \lesssim |\bz -\bz'|^{q\alpha+d} .
\end{aligned}
\end{equation*}
The conclusion follows by the Kolmogorov continuity theorem \citep{{dashti2017}}.
\end{proof}

\subsection{Posterior Theorems of Inverse Problems with STBP Priors}\label{apx:posthm}

Here we re-examine the well-definedness and well-posedness of the posterior measure. 
Following \cite{Dashti_2012}, we impose some additional conditions on the potential (negative log-likelihood) function $\Phi:B^{s',q}(\mZ)\times\mbY\to\mbR$ as in \eqref{eq:potential} regarding its lower (i) and upper (ii) bounds, and the Lipschitz continuity in 
$y$ (iii) in the following assumption.
\begin{asump}\label{asmp:welldefpos}
The potential function $\Phi: B^{s',q}(\mZ)\times \mbY\to \mbR$ satisfies:
\begin{enumerate}[label=(\roman*),nosep]
\item there is an $\alpha_1>0$ and for every $r>0$, an $M\in \mbR$, such that for all $u\in B^{s',q}(\mZ)$, and for all $y\in \mbY$ such that $\Vert y\Vert_\mbY<r$, 
\begin{equation*}
    \Phi(u, y)\geq M-\alpha_1\Vert u\Vert_{s',q} \;;
\end{equation*}
\item for every $r>0$ there exists $K=K(r)>0$ such that for all $u\in B^{s',q}(\mZ), y\in\mbY$ with $\max\{\Vert u\Vert_{s',q}, \Vert y\Vert_\mbY\}<r$, 
\begin{equation*}
    \Phi(u,y)\leq K \;;
\end{equation*}
\item there is an $\alpha_2>0$ and for every $r>0$ a $C\in\mbR$ such that for all $y_1,y_2\in \mbY$ with $\max\{\Vert y_1\Vert_\mbY, \Vert y_2\Vert_\mbY\}<r$ and for every $u\in B^{s',q}(\mZ)$, 
\begin{equation*}
    |\Phi(u,y_1)-\Phi(u,y_2)|\leq \exp(\alpha_2\Vert u\Vert_{s',q}+C)\Vert y_1-y_2\Vert_\mbY \;.
\end{equation*}
\end{enumerate}
\end{asump}


\begin{thm}[Well-definedness of Posterior]
\label{thm:bayespost}
Let the potential function $\Phi$ in \eqref{eq:potential} satisfy Assumption \ref{asmp:welldefpos} (i)-(ii) and Assumption \ref{asmp:Lipschitz_u}. 
If $\Pi$ is an $\STBP(\mC, B^{s,q}(\mZ))$ measure with $s>s'+\frac{d}{q}$, 
then $\Pi(\cdot|y)\ll \Pi$ and satisfies
\begin{equation}\label{eq:Bayes_post}
\frac{d\Pi(\cdot|y)}{d\Pi}(u) = \frac{1}{Z(y)}\exp(-\Phi(u;y)),
\end{equation}
with the normalizing factor $0<Z(y)=\int_{B^{s',q}(\mZ)} \exp(-\Phi(u;y))\Pi(du)<\infty$. 
\end{thm}
\setstcolor{red}
\begin{proof}
The proof is based on \citep[Theorem 3.2 of][]{Dashti_2012} and \citep[Theorem 14 of][]{dashti2017}.
Define $\pi_0(du, dy)=\Pi(du)\otimes \mbQ_0(dy)$ and $\pi(du, dy)=\Pi(du)\otimes \mbQ_u(dy)$. 
We assume $\mbQ_0\ll \mbQ$ and the Radon-Nikodym derivative \eqref{eq:potential} holds for $\Pi$-a.s.. Thus, for fixed $u$, $\Phi(u;\cdot): \mbY\to \mbR$ is $\mbQ_0$-measurable and $\int_\mbY \exp(-\Phi(u; y))\mbQ_0(dy)=1$.
On other hand, by Assumption \ref{asmp:Lipschitz_u}, 
we have $\Phi(\cdot; y): B^{s',q}(\mZ) \to \mbR$ is continuous on $B^{s',q}(\mZ)$. By Corollary \ref{cor:STBP_supp} $\Pi(B^{s',q}(\mZ))=1$. Hence $\Phi(\cdot; y)$ is $\Pi$-measurable. Therefore, $\Phi$ is $\pi_0$-measurable and $\pi\ll \pi_0$ with
\begin{equation*}
    \frac{d\pi}{d\pi_0}(u; y) = \exp(-\Phi(u;y)), \quad \int_{B^{s',q}(\mZ)\times \mbY} \exp(-\Phi(u;y)) \pi_0(du, dy) =1 .
\end{equation*}

Then by 
\citep[Theorem 13 of][]{dashti2017}, the conditional distribution $\Pi(du|y):=\pi(du, dy'|y'=y)\ll \Pi(du)=\pi_0(du, dy'|y'=y)$ due to the definition of $\pi_0$. The same Lemma/Theorem implies \eqref{eq:Bayes_post} if the normalizing constant $Z(y)>0$.

First, by Assumption \ref{asmp:welldefpos}-(i), we have
\begin{equation*}
    Z(y) = \int_{B^{s',q}(\mZ)} \exp(-\Phi(u;y))\Pi(du) \leq \int_{B^{s',q}(\mZ)} \exp(-M+\alpha_1\Vert u\Vert_{s',q}) \Pi(du) <\infty ,
\end{equation*}
where the boundedness is the result of Theorem \ref{thm:Fernique}-(ii).
Now we show $Z(y)>0$. By Theorem \ref{thm:Lqbound_ST}, 
we have $R = \mbE\Vert u\Vert_{s',q} <\infty$. 
Since $\Vert u\Vert_{s',q}$ is nonnegative, we have $\Pi(\Vert u\Vert_{s',q}<R)>0$. Let $r=\max\{\Vert y\Vert_\mbY, R\}$. By Assumption \ref{asmp:welldefpos}-(ii), we have
\begin{equation*}
    \int_{B^{s',q}(\mZ)} \exp(-\Phi(u;y))\Pi(du) \geq \int_{\Vert u\Vert_{s',q}<R} \exp(-K)\Pi(du) = \exp(-K) \Pi(\Vert u\Vert_{s',q}<R) > 0 .
\end{equation*}
\end{proof}

Now we show the well-posedness of the posterior measure $\Pi(\cdot|y)$ with respect to the data $y$. Define the Hellinger metric as $d_H(\mu,\mu') = \sqrt{\half\int\left(\sqrt{\frac{d\mu}{d\nu}}-\sqrt{\frac{d\mu'}{d\nu}}\right)^2 d\nu}$. Note we require $\mu, \mu'\ll \nu$, but this definition is independent of the choice of the measure $\nu$.
The following theorem states that the posterior measure is Lipschitz with respect to data $y$, in the Hellinger metric.
\begin{thm}[Well-posedness of Posterior]
\label{thm:wellpose}
Let the potential function $\Phi$ in \eqref{eq:potential} satisfy Assumptions \ref{asmp:welldefpos} and \ref{asmp:Lipschitz_u}. 
If $\Pi$ is an $\STBP(\mC, B^{s,q}(\mZ))$ measure with $s>s'+\frac{d}{q}$, 
then
\begin{equation*}
d_H(\Pi(\cdot|y), \Pi(\cdot|y')) \leq C\Vert y-y'\Vert_\mbY ,
\end{equation*}
where $C=C(r)$ is a constant depending on $r$ such that $\max\{\Vert y\Vert_\mbY, \Vert y'\Vert_\mbY\}\leq r$. 
\end{thm}
\begin{proof}
The proof is based on \citep[Theorem 3.3 of ][]{Dashti_2012} and \citep[Theorem 16 of][]{dashti2017}.
As in Theorem \ref{thm:bayespost}, $Z(y), Z(y')\in(0,\infty)$. 
We directly compute
\begin{equation*}
\begin{aligned}
    2d_H^2(\Pi(\cdot|y), \Pi(\cdot|y')) =& \int_{B^{s',q}(\mZ)} \left[Z(y)^{-\half}\exp\left(-\half\Phi(u;y)\right) - Z(y')^{-\half}\exp\left(-\half\Phi(u;y')\right) \right]^2 \Pi(du) \\
    \leq & \frac{2}{Z(y)} \int_{B^{s',q}(\mZ)} \left[\exp\left(-\half\Phi(u;y)\right) - \exp\left(-\half\Phi(u;y')\right) \right]^2 \Pi(du) \\
    & + 2|Z(y)^{-\half} - Z(y')^{-\half}|^2 Z(y') .
\end{aligned}
\end{equation*}

By the mean value theorem and Assumptions \ref{asmp:welldefpos} (i) and (iii), we have
\begin{equation*}
\begin{aligned}
&\int_{B^{s',q}(\mZ)} \left[\exp\left(-\half\Phi(u;y)\right) - \exp\left(-\half\Phi(u;y')\right) \right]^2 \Pi(du) \\
&\leq \int_{B^{s',q}(\mZ)} \frac{1}{4} \exp(\alpha_1\Vert u\Vert_{s',q}-M) \exp(2\alpha_2\Vert u\Vert_{s',q}+2C)\Vert y_1-y_2\Vert_\mbY^2 \Pi(du) \leq C \Vert y_1-y_2\Vert_\mbY^2 ,
\end{aligned}
\end{equation*}
where the boundedness is the result of Theorem \ref{thm:Fernique}-(ii).
By the mean value theorem, Jesen's inequality and Assumptions \ref{asmp:welldefpos} (i) and (iii) we have
\begin{equation*}
\begin{aligned}
& 2|Z(y)^{-\half} - Z(y')^{-\half}|^2 Z(y')\leq C|Z(y)-Z(y')|^2 \\
&\leq C \left[\int_{B^{s',q}(\mZ)} |\exp(-\Phi(u;y)) - \exp(-\Phi(u;y'))| \Pi(du)\right]^2\\
&\leq C \int_{B^{s',q}(\mZ)} |\exp(-\Phi(u;y)) - \exp(-\Phi(u;y'))|^2 \Pi(du) \leq C \Vert y_1-y_2\Vert_\mbY^2 ,
\end{aligned}
\end{equation*}
where we used the above result.
\end{proof}

The posterior concentration theory developed by \cite{Agapiou_2021} for the $p$-exponential process applies to $\qEP(0, \mC)$ process.
For $\xi(\cdot)\in B^{s',q}(\mT)$ with $s'<s-\frac{d}{q}$, we define the concentration function of the $q$-exponential measure $\mu$ (hence $\mB(1, B^{s,q}(\mT))$) at $\xi=\xi^\dagger$ as follows:
\begin{equation*}\label{eq:contr_fun_t}
    \varphi_{\xi^\dagger}(\eps) = \inf_{h\in B^{s,q}(\mT): \Vert h-\xi^\dagger\Vert_{s',q}\leq \eps} \half \Vert h\Vert_{s,q}^q - \log \mu(\Vert \xi\Vert_{s',q}\leq\eps) .
\end{equation*}
Quoted from \citep[Theorem 3.1 and Lemma 5.14 of][]{Agapiou_2021}, the following general contraction theorem for $p$-exponential process (also BP, and hence Q-EP) priors will be used in the proof of posterior contraction Theorem \ref{thm:postcontr} for STBP priors.
\begin{thm}\label{thm:p-exp_contr}
Let $\mu$ be a $\qEP(0, \mC)$ measure satisfying Assumption \ref{asmp:eigen_summable}-(i) in the separable Banach space $(B^{s',q}(\mT), \Vert\cdot\Vert_{s',q})$, where $q\in[1,2]$. Let $\xi\sim \mu$ and the true parameter $\xi^\dagger\in B^{s',q}(\mT)$. Assume $\eps_n>0$ such that $\varphi_{\xi^\dagger}(\eps_n)\leq n\eps_n^2$, where $n\eps_n^2\gtrsim 1$.
Then for any $C>1$, there exists a measurable set $B_n\subset B^{s',q}(\mT)$ and a constant $R>0$ depending on $C$ and $q$, such that
\begin{eqnarray}
    \log N(4\eps_n, B_n, \Vert\cdot\Vert_{s',q}) &\leq& R n\eps_n^2 ; \label{eq1:LeCam_dim}\\
    \mu(\xi\notin B_n) &\leq& \exp(-Cn\eps_n^2) ; \label{eq2:residual}\\
    \mu(\Vert \xi-\xi^\dagger\Vert_{s',q}<2\eps_n) &\geq& \exp(-n\eps_n^2) , \label{eq3:small_ball}
\end{eqnarray}
where $N(4\eps_n, B_n, \Vert\cdot\Vert_{s',q})$ is the minimal number of $\Vert\cdot\Vert_{s',q}$-balls of radius $4\eps_n$ to cover $B_n$.
\end{thm}

\postcontr*

Before we proceed with the proof, we need the following lemma to bound the Hellinger distance, Kullback-Leibler (K-L) divergence and K-L variation.
\begin{lem}\label{lem:KL_bounds}
Suppose the inverse model \eqref{eq:bip} has the potential function $\Phi$ \eqref{eq:potential} satisfy Assumption \ref{asmp:Lipschitz_u}. 
Then we have
\begin{itemize}
    \item $d_H(p_u, p_{u'})\lesssim \Vert u-u'\Vert_{s',q}$ ;
    \item $K(p_u, p_{u'})\lesssim \Vert u-u'\Vert_{s',q}$ ;
    \item $V(p_u, p_{u'})\lesssim \Vert u-u'\Vert^2_{s',q}$ .
\end{itemize}
\end{lem}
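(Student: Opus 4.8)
The plan is to obtain all three inequalities from the single Lipschitz-in-$u$ bound in Assumption~\ref{asmp:welldefpos}-(iii). First I would take $\mbQ_0$ as the dominating measure, so that by the definition \eqref{eq:potential} the densities are $p_u(y)=\exp(-\Phi(u;y))$ and $p_{u'}(y)=\exp(-\Phi(u';y))$ with respect to $\mbQ_0$, whence $\log\bigl(p_u/p_{u'}\bigr)(y)=\Phi(u';y)-\Phi(u;y)$ for every $y$. The statement is to be read locally, exactly as it is invoked on bounded sieves in the proof of Theorem~\ref{thm:postcontr}: fixing $r>0$ with $\max\{\Vert u\Vert_{s',q},\Vert u'\Vert_{s',q}\}<r$, Assumption~\ref{asmp:welldefpos}-(iii) gives $\bigl|\log(p_u/p_{u'})(y)\bigr|=|\Phi(u';y)-\Phi(u;y)|\le L\Vert u-u'\Vert_{s',q}$ uniformly in $y$, with $L=L(r)$. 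The measurability of $y\mapsto\Phi(u;y)$ and the normalization $\int e^{-\Phi(u;y)}\,\mbQ_0(dy)=1$ are already available from the setup used in the proof of Theorem~\ref{thm:bayespost}, and $d_H$, $K$, $V$ do not depend on the choice of dominating measure.

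For the Hellinger distance I would apply the mean value theorem to $t\mapsto e^{-t/2}$, giving $\bigl|e^{-\Phi(u;y)/2}-e^{-\Phi(u';y)/2}\bigr|\le\tfrac12\max\{p_u(y),p_{u'}(y)\}^{1/2}\,|\Phi(u;y)-\Phi(u';y)|$; then squaring, integrating against $\mbQ_0$, extracting the Lipschitz constant, and bounding $\int\max\{p_u,p_{u'}\}\,d\mbQ_0\le\int(p_u+p_{u'})\,d\mbQ_0=2$ yields $d_H(p_u,p_{u'})\le\tfrac12 L\Vert u-u'\Vert_{s',q}$. For the Kullback--Leibler divergence, $K(p_u,p_{u'})=\int p_u\log(p_u/p_{u'})\,d\mbQ_0\le\int p_u\,|\Phi(u';\cdot)-\Phi(u;\cdot)|\,d\mbQ_0\le L\Vert u-u'\Vert_{s',q}$, since $\int p_u\,d\mbQ_0=1$. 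For the KL variation, dominating it by the second moment of the log-ratio, $V(p_u,p_{u'})\le\int p_u\bigl(\log(p_u/p_{u'})\bigr)^2\,d\mbQ_0\le L^2\Vert u-u'\Vert_{s',q}^2$. All three $\lesssim$ then hold with constants depending only on $r$.

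There is no genuine obstacle here: the lemma is a soft consequence of the Lipschitz assumption. The only care required is bookkeeping --- recording that the implied constants depend on $\max\{\Vert u\Vert_{s',q},\Vert u'\Vert_{s',q}\}$ through $L(r)$, which is precisely how the lemma is applied, and noting that the identical computation works coordinatewise if one prefers to read $p_u$ as a single component density $p_{u_j}=p_{\mN(\mG(u)(\bX,t_j),\Gamma_\mathrm{noise})}$ of the product model, each coordinate potential being Lipschitz with the same modulus. The substantive work in Theorem~\ref{thm:postcontr} lies instead in combining these bounds with the prior concentration estimates of Theorem~\ref{thm:p-exp_contr}.
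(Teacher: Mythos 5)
Your proposal is correct and follows essentially the same route as the paper: all three bounds are read off from the Lipschitz-in-$u$ condition of Assumption \ref{asmp:welldefpos}-(iii) applied to the log-likelihood ratio $\log(p_u/p_{u'})=\Phi(u';\cdot)-\Phi(u;\cdot)$, with the KL and KL-variation bounds obtained by direct integration against $p_u$ and the Hellinger bound via the mean value theorem on the exponential. Your handling of the Hellinger step is in fact marginally cleaner---bounding by $\max\{p_u,p_{u'}\}$ and integrating to $2$ avoids the paper's caveat that its constant only holds for $\Vert u-u'\Vert_{s',q}$ small enough---but this is a cosmetic refinement of the same argument, not a different proof.
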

\begin{proof}
First, we consider K-L divergence:
\begin{equation*}
    K(p_u, p_{u'}) = \int \log \frac{p_u}{p_{u'}} p_u d\mu =\int (\Phi(u';y)-\Phi(u;y)) p_u d\mu(y) \leq L\Vert u-u'\Vert_{s',q}
\end{equation*}
by Assumption \ref{asmp:Lipschitz_u}. 
Similarly, we have for K-L variation:
\begin{equation*}
    V(p_u, p_{u'}) = \int \left(\log \frac{p_u}{p_{u'}}\right)^2 p_u d\mu =\int |\Phi(u';y)-\Phi(u;y)|^2 p_u d\mu(y) \leq L^2\Vert u-u'\Vert_{s',q}^2 .
\end{equation*}
Lastly, we bound the Hellinger distance:
\begin{equation*}
\begin{aligned}
     2d^2_H(p_u, p_{u'}) &= \int(\sqrt{p_u}-\sqrt{p_{u'}})^2 d\mu = \int \left[1 - \exp\left(\half\Phi(u;y)-\half\Phi(u';y)\right)\right]^2 p_u d\mu(y) \\
     & \leq \int \frac{C}{4} |\Phi(u';y)-\Phi(u;y)|^2 p_u d\mu(y) \leq \frac{CL^2}{4}\Vert u-u'\Vert_{s',q}^2 ,
\end{aligned}
\end{equation*}
where the inequality holds for $\Vert u-u'\Vert_{s',q}^2$ small enough.
\end{proof}

\begin{proof}[Proof of Theorem \ref{thm:postcontr}]
\label{apx:postcontr}
Based on \citep[Theorem 1 of][]{ghosal2007}, it suffices to verify the following two conditions (the entropy condition (2.4), and the prior mass condition (2.5)) for some universal constants $\eta, K>0$ and sufficiently large $k\in \mbN$,
\begin{eqnarray}
\sup_{\eps>\eps_n}\log N(\eta \eps/2, \{u\in\Theta_n:d_{n,H}(u, u^\dagger)<\eps\}, d_{n,H}) &\leq & n\eps_n^2 ; \label{eq:entropy_cond}\\
\frac{\Pi_n(u\in\Theta_n: k\eps_n<d_{n,H}(u,u^\dagger)\leq 2k\eps_n)}{\Pi_n(B_n(u^\dagger, \eps_n))} &\leq & e^{Kn\eps_n^2k^2/2} , \label{eq:prior_mass}
\end{eqnarray}
where the left side of \eqref{eq:entropy_cond} is 
logarithm of the minimal number of $d_{n,H}$-balls of radius $\xi \eps/2$ needed to cover a ball of radius $\eps$ around the true value $u^\dagger=\sum_{\ell=1}^\infty \gamma_\ell \xi_\ell^\dagger(t)\phi_\ell(\bx)$; $B_n(u^\dagger,\eps_n)=\{u\in\Theta: \frac{1}{n}\sum_{j=1}^n K_j(u^\dagger,u)\leq \eps_n^2, \frac{1}{n}\sum_{j=1}^n V_j(u^\dagger,u)\leq \eps_n^2\}$ with $K_j(u^\dagger,u)=K(P_{u^\dagger,j},P_{u,j})$ and $V_j(u^\dagger,u)=V(P_{u^\dagger,j},P_{u,j})$.

We adopt the argument for infinite sequence of functions in \cite[Theorem 4 of][]{Lan_2022}.
For each $1\leq\ell\leq n$, $\xi_\ell(\cdot) \in B^{s',q}(\mT) \subset L^q(\mT)$ satisfy conditions for Theorem \ref{thm:p-exp_contr}. Therefore, there exists $B_{n, \ell}\subset B^{s',q}(\mT)$ such that \eqref{eq1:LeCam_dim}-\eqref{eq3:small_ball} hold for each $\ell$ with $B_n$ replaced by $B_{n, \ell}$, and $\eps_n$ replaced by $\eps_{n,\ell}=c 2^{-\ell}\eps_n$ for some constant $c>0$.
Note, for given spatial basis $\{\phi_\ell(\bx)\}_{\ell=1}^\infty$ and $\gamma=\{\gamma_\ell\}_{\ell=1}^\infty$ in \eqref{eq:STBP_randf}, $u\in \Theta=B^{s',q}(\mZ)$ can be identified with $\xi_\mT=\{\xi_\ell(\cdot)\}_{\ell=1}^\infty\in \ell^{q,\tau_q(s')}(L^q(\mT))$ through $f_\gamma$ in Definition \ref{dfn:STBP}, 
i.e. $\Theta\cong \ell^{q,\tau_q(s')}(L^q(\mT))$.
Now we set 
\begin{equation*}
    \Theta_n=\{u = f_\gamma(\xi_\mT) \in\Theta | \xi_\ell(\cdot)\in B_{n,\ell}, \; for \; \ell=1,\cdots,n\} \subset \Theta .
\end{equation*}
For $\forall u, u'\in \Theta_n$ such that $\Vert \xi_\ell(\cdot)-\xi'_\ell(\cdot)\Vert_{s',q}\leq \eps_{n,\ell}$ for $\ell=1,\cdots,n$, we have by Assumption-\ref{asmp:decaying_constant}
\begin{equation*}
\Vert u-u'\Vert_{s',q}^q = \sum_{\ell=1}^\infty \ell^{(\tau_q(s')-\tau_q(s))q} \Vert \xi_\ell(\cdot)-\xi'_\ell(\cdot)\Vert_q^q \lesssim \sum_{\ell=1}^n \Vert \xi_\ell(\cdot)-\xi'_\ell(\cdot)\Vert_{s',q}^q \lesssim \eps_n^q \sum_{\ell=1}^n 2^{-q\ell} .
\end{equation*}
Therefore,
$d_{n,H}(u,u')\lesssim \Vert u-u'\Vert_{s',q}\lesssim \eps_n$ by Lemma \ref{lem:KL_bounds}.
With $N(\eps_n, \Theta_n, d_{n,H})=\max_{1\leq\ell\leq n} N(4\eps_{n,\ell}, B_{n,\ell}, \Vert\cdot\Vert_{s',q})$, we have the following global entropy bound by \eqref{eq1:LeCam_dim}
\begin{equation*}
    \log N(\eps_n, \Theta_n, d_{n,H}) \leq Rn(c 2^{-\ell}\eps_n)^2 \lesssim n \eps_n^2 .
\end{equation*}
for some $1\leq\ell\leq n$ and $c>0$, which is stronger than the local entropy condition \eqref{eq:entropy_cond}.

Now by Lemma \ref{lem:KL_bounds} and \eqref{eq3:small_ball}, we have
\begin{equation*}
\begin{aligned}
    \Pi_n(B_n(u^\dagger, \eps_n)) & \geq \Pi_n(\Vert u^\dagger-u\Vert_{s',q}\leq \eps_n^2, \Vert u^\dagger-u\Vert_{s',q}^2\leq \eps_n^2) \\
    &= \Pi_n(\Vert u^\dagger-u\Vert_{s',q}^q\leq \eps_n^{2q})
    \gtrsim \exp\left\{\sum_{\ell=1}^n\log\mu(\Vert \xi_\ell(\cdot)-\xi'_\ell(\cdot)\Vert_{s',q}< 2\eps_{n,\ell}^2)\right\} \\
    &\geq e^{-n\sum_{\ell=1}^n\eps_{n,\ell}^4} \geq e^{-Knk^2\eps_n^2/2}, \quad with\; K=2, \; k^2 = c^4\sum_{\ell=1}^n2^{-4\ell} .
\end{aligned}
\end{equation*}
Then the prior mass condition \eqref{eq:prior_mass} is satisfied because the numerator is bounded by 1. 

Now we prove the complementary assertion $P_{u^\dagger}^{(n)}\Pi_n(\Theta\backslash \Theta_n|Y^{(n)})\to 0$ by \cite[Lemma 1 of][]{ghosal2007}.
It suffices to show $\frac{\Pi_n(\Theta\backslash\Theta_n)}{\Pi_n(B_n(u^\dagger, \eps_n))} = o(e^{-2n\eps_n^2})$.
By Theorem \ref{thm:p-exp_contr}, for $C_\ell = Kk^2\ell 2^{2\ell}c^{-2}$, we have $B_{n,\ell}\subset B^{s',q}(\mT)$ such that \eqref{eq2:residual} holds. Then
\begin{equation*}
\begin{aligned}
\Pi_n(\Theta\backslash\Theta_n) &= \Pi_n(u\in\Theta | \exists \ell\in\{1,\cdots,n\} \,such\, that\,  \xi_\ell(\cdot)\notin B_{n,\ell}) \\
&\leq \sum_{\ell=1}^n \mu(\xi_\ell\notin B_{n,\ell}) \leq \sum_{\ell=1}^n \exp(-C_\ell n\eps_{n,\ell}^2) \leq \frac{e^{-Kk^2n\eps_n^2}}{1-e^{-Kk^2n\eps_n^2}} .
\end{aligned}
\end{equation*}
By the above argument, we have $\Pi_n(B_n(u^\dagger, \eps_n)) \geq e^{-Knk^2\eps_n^2/2}$. Therefore
\begin{equation*}
\frac{\Pi_n(\Theta\backslash\Theta_n)}{\Pi_n(B_n(u^\dagger, \eps_n))} \leq \frac{e^{-Kk^2n\eps_n^2/2}}{1-e^{-Kk^2n\eps_n^2}} = o(e^{-2n\eps_n^2}) 
\end{equation*}
for chosen $c>0$ such that $k^2>2$.
The proof is hence completed.
\end{proof}

\postcontrate*

The following lemma studies the small ball probability in the concentration function \eqref{eq:contr_fun}.
\begin{lem}[Small ball probability]\label{lem:small_ball}
Let $\Pi$ be an $\STBP(\mC, B^{s,q}(\mZ))$ prior on $B^{s',q}(\mZ)$ with $s'<s-\frac{d}{q}$. Then as $\eps\to 0$, we have
\begin{equation*}
    -\log \Pi(\Vert u\Vert_{s',q}\leq \eps) \asymp \eps^{-\frac{1}{\frac{s-s'}{d} - \frac{1}{q}}} .
\end{equation*}
\end{lem}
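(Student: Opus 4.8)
The plan is to pass to the sequence picture via the series representation \eqref{eq:STBP_randf} and read off the small-ball rate from the decay of the weights. Put $p:=\left(\tfrac{s-s'}{d}\right)q$, which satisfies $p>1$ exactly because $s'<s-\tfrac{d}{q}$, and set $W_\ell:=\Vert\xi_\ell(\cdot)\Vert_q^q$. As in the proof of Theorem \ref{thm:Fernique}, $\Vert u\Vert_{s',q}^q=\kappa^q\sum_{\ell\ge1}\ell^{-p}W_\ell$ with the $W_\ell$ i.i.d.\ and $\mbE[W_\ell]=\Vert\lambda\Vert_{q/2}^{q/2}<\infty$ by Theorem \ref{thm:qepint} under Assumption \ref{asmp:eigen_summable}-(ii). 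Writing $t:=(\eps/\kappa)^q$, the claim becomes $-\log\mbP\big(\sum_{\ell\ge1}\ell^{-p}W_\ell\le t\big)\asymp t^{-1/(p-1)}$ as $t\to0$, since $q/(p-1)=1/\big(\tfrac{s-s'}{d}-\tfrac1q\big)$.

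For orientation, both inequalities can be sketched by truncating at level $N\asymp t^{-1/(p-1)}$. For a lower bound on the probability, Markov's inequality with $\sum_{\ell>N}\ell^{-p}\asymp N^{1-p}$ makes $\sum_{\ell>N}\ell^{-p}W_\ell$ exceed $t/2$ with probability at most $\tfrac12$ for a suitable constant in $N$, while the head $\sum_{\ell\le N}\ell^{-p}W_\ell$ is forced below $t/2$ by demanding $W_\ell\lesssim t$ for each $\ell\le N$, producing a factor $\mbP(W\le ct)^{N}$. For an upper bound on the probability, $\{\Vert u\Vert_{s',q}\le\eps\}$ forces $W_\ell\le\ell^{p}t$ for every $\ell$, so by independence $\mbP(\Vert u\Vert_{s',q}\le\eps)\le\prod_{\ell}\mbP(W\le\ell^{p}t)$, and one retains those modes for which $\ell^{p}t$ is small. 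Both reductions leave the near-origin behaviour of the single draw $W=\Vert\xi(\cdot)\Vert_q^q$, $\xi(\cdot)\sim\qEP(0,\mC)$, as the quantity to be pinned down.

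To get the sharp exponent without spurious logarithmic factors, I would argue instead through the Laplace transform. From the product formula
\begin{equation*}
\mbE\big[e^{-\theta\Vert u\Vert_{s',q}^q}\big]=\prod_{\ell,\ell'\ge1}\big(1+2\theta\kappa^q\ell^{-p}\lambda_{\ell'}^{q/2}\big)^{-1/2},
\end{equation*}
already obtained in the proof of Theorem \ref{thm:Fernique} (now valid for all $\theta>0$ since $\Vert u\Vert_{s',q}^q\ge0$), one studies the Laplace exponent $\psi(\theta):=\tfrac12\sum_{\ell,\ell'}\log\big(1+2\theta\kappa^q\ell^{-p}\lambda_{\ell'}^{q/2}\big)$ as $\theta\to\infty$. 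An integral comparison (splitting at $\ell\asymp(\theta\lambda_{\ell'}^{q/2})^{1/p}$ for each fixed $\ell'$, then summing over $\ell'$) gives $\psi(\theta)\asymp\theta^{1/p}$, an index in $(0,1)$; a de Bruijn--Kasahara Tauberian theorem then converts this into $-\log\mbP(\Vert u\Vert_{s',q}^q\le t)\asymp t^{-1/(p-1)}$. The Chernoff bound $\mbP(\Vert u\Vert_{s',q}^q\le t)\le e^{\theta t-\psi(\theta)}$, optimised over $\theta$, supplies one inequality; the matching lower bound follows from the standard small-ball lower estimate for weighted sums of independent $\chi^2(1)$ variables, applied coordinatewise in the Karhunen--Lo\`eve expansion of Theorem \ref{thm:KL}.

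The main obstacle is the hidden hypothesis in ``$\psi(\theta)\asymp\theta^{1/p}$''. More precisely $\psi(\theta)\asymp\theta^{1/\min(p,r)}$, where $r$ is the effective polynomial decay index of $\{\lambda_{\ell'}^{q/2}\}$, so the stated rate arises only when $p\le r$, i.e.\ when the spatial scaling is the binding constraint. This holds in the regime of interest since $p\downarrow1$ as $s'\uparrow s-\tfrac{d}{q}$, while $r>1$ because $\sum_{\ell'}\lambda_{\ell'}^{q/2}<\infty$ (Assumption \ref{asmp:eigen_summable}). Making this rigorous -- showing $\psi$ is regularly varying of index $1/p$ uniformly over the admissible $s'$, and that the $\chi^2$ small-ball lower bound matches the Chernoff upper bound up to constants -- is the technical heart; the truncation sketch of the second paragraph then serves as a cross-check and covers a general summable $\{\lambda_{\ell'}\}$ up to logarithmic factors.
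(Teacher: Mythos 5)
Your route is genuinely different from the paper's. The paper collapses the temporal direction first, writing $\Vert u\Vert_{s',q}^q=\kappa^q\sum_{\ell}\ell^{-p}W_\ell$ with $W_\ell=\Vert\xi_\ell(\cdot)\Vert_q^q$ i.i.d.\ (each an infinite mixture of $\chi^2(1)$ variables, by the proof of Theorem \ref{thm:qepint}) and $p=\tfrac{(s-s')q}{d}>1$, and then simply cites a ready-made small-deviation theorem for weighted sums of i.i.d.\ nonnegative random variables (Theorem 4.2 of Aurzada, 2007) to read off the exponent $q/(p-1)=1/(\tfrac{s-s'}{d}-\tfrac1q)$. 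You instead keep the full double array $\sum_{\ell,\ell'}\ell^{-p}\lambda_{\ell'}^{q/2}\chi^2_{\ell\ell'}$ and argue through the Laplace exponent plus an exponential Tauberian theorem, with a truncation argument as a cross-check. Your reduction to the sequence picture, the identification of the exponent, the product formula for $\mbE[e^{-\theta\Vert u\Vert_{s',q}^q}]$, and the asymptotics $\psi(\theta)\asymp\theta^{1/\min(p,r)}$ are all correct; the self-contained route makes the mechanism visible where the paper's one-line citation hides it.

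That visibility, however, exposes a genuine gap --- and it is a gap in the lemma as stated, not merely in your write-up. The rate $\eps^{-q/(p-1)}$ holds only when the spatial weights are the binding constraint: if the single-coordinate small-ball exponent of $W=\sum_{\ell'}\lambda_{\ell'}^{q/2}\chi^2_{\ell'}$ (which is $1/(r-1)$ when $\lambda_{\ell'}^{q/2}\asymp(\ell')^{-r}$) exceeds $1/(p-1)$, the temporal decay dominates and one gets $-\log \Pi(\Vert u\Vert_{s',q}\leq\eps)\asymp\eps^{-q/(r-1)}$, which is strictly larger. Your dismissal of this (``$p\downarrow1$ as $s'\uparrow s-\tfrac dq$ while $r>1$'') does not close it: the lemma is asserted for every $s'<s-\tfrac dq$, and for $s'$ well below that threshold $p$ exceeds any fixed $r>1$ that Assumption \ref{asmp:eigen_summable}-(ii) permits. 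The same issue lurks in the paper's proof, where the hypothesis of Aurzada's theorem --- a control on the lower tail of the i.i.d.\ summands near zero --- is declared ``trivially met'' even though $W$ is an infinite $\chi^2$ mixture whose own small-ball probability decays faster than any polynomial. To make either argument airtight one must add a hypothesis tying the decay of $\{\lambda_{\ell'}\}$ to $s-s'$ (for instance $\sum_{\ell'}(\ell')^{p-1}\lambda_{\ell'}^{q/2}<\infty$, or regular variation with index $r\ge p$), or work with finitely many temporal eigenvalues. Beyond that, your Tauberian step needs the $\asymp$-version of de Bruijn's theorem (the clean $\sim$-version requires regular variation of $\psi$, which $\asymp$ does not supply), and the matching lower bound on the probability is asserted rather than carried out; these are exactly the deferred steps you flag as the technical heart, so the proposal is a sound plan rather than a complete proof.
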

\begin{proof}
We can compute
\begin{equation*}
    \Pi(\Vert u\Vert_{s',q}\leq \eps) = \mbP\left[\sum_{\ell=1}^\infty (\ell^{\tau_q(s')-\tau_q(s)}\Vert\xi_\ell(\cdot)\Vert_q)^q \leq \eps^q\right] ,
\end{equation*}
where $\mbP$ is the probability measure on the infinite product space $\Omega=(L^q(\mT))^\infty$ as in Definition \ref{dfn:STBP}.
From the proof of Theorem \ref{thm:qepint} we know $\Vert \xi\Vert_q^q = \sum_{\ell=1}^\infty \lambda_\ell^{\frac{q}{2}} \chi^2_\ell$ is an infinite mixture of $\chi^2(1)$ random variables, so the condition of \citep[Theorem 4.2 of][]{Aurzada2007} is trivially met and we have
\begin{equation*}
    \log \mbP\left[\sum_{\ell=1}^\infty (\ell^{\tau_q(s')-\tau_q(s)}\Vert\xi_\ell(\cdot)\Vert_q)^q \leq \eps^q\right] \asymp \eps^{-\frac{1}{\tau_q(s)-\tau_q(s') - \frac{1}{q}}} .
\end{equation*}
\end{proof}

The second lemma gives a upper bound of the first term of the concentration function \eqref{eq:contr_fun}.
\begin{lem}[Decentering function]\label{lem:decenter}
Assume $u^\dagger\in B^{s^\dagger,q^\dagger}(\mZ)$ for some $s^\dagger>s'$ and $q^\dagger\in[1,2]$. Then as $\eps\to 0$, we have the following bounds
\begin{enumerate}[label=(\roman*)]
\item If $q^\dagger \geq q$, we require $s^\dagger>s'$:
\begin{equation*}
\inf_{h\in B^{s,q}(\mZ): \Vert h-u^\dagger \Vert_{s',q}\leq \eps} \Vert h\Vert_{s,q}^q \lesssim 
\begin{cases}
    1, & \! if \, s < s^\dagger \\
    (-\log \eps)^{1-\frac{q}{q^\dagger}}, & \! if \, s=s^\dagger \\
    \eps^{-\frac{s-s^\dagger}{s^\dagger-s'}(q\wedge q^\dagger)}, & \! if \, s>s^\dagger
\end{cases} ;
\end{equation*}
\item If $q^\dagger < q$, we restrict $u^\dagger\in B^{s^\dagger,q^\dagger, q}(\mZ)\subsetneq B^{s^\dagger,q^\dagger}(\mZ)$ and require $s^\dagger>s' -\frac{d}{q}+\frac{d}{q^\dagger}$:
\begin{equation*}
\inf_{h\in B^{s,q}(\mZ): \Vert h-u^\dagger \Vert_{s',q}\leq \eps} \Vert h\Vert_{s,q}^q \lesssim
\begin{cases}
    1, & \! if \, s \leq s^\dagger+\frac{d}{q} -\frac{d}{q^\dagger} \\
    \eps^{-\frac{\frac{s-s^\dagger}{d}-\frac{1}{q}+\frac{1}{q^\dagger}}{\frac{s^\dagger-s'}{d}+\frac{1}{q}-\frac{1}{q^\dagger}}q}, & \! if \, s>s^\dagger +\frac{d}{q} -\frac{d}{q^\dagger}
\end{cases} .
\end{equation*}
\end{enumerate}
\end{lem}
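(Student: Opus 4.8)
The plan is an approximation-theoretic one: exhibit an explicit near-minimizer $h$ by truncating $u^\dagger$ in the spatial series and then estimate the two competing quantities — the $B^{s',q}$-approximation error of $h$ (which must be $\le\eps$) and the $B^{s,q}$-norm of $h$ (which we want to bound) — by elementary $\ell^q$-sequence inequalities. Write $u^\dagger(\bz)=\sum_{\ell\ge1}u_\ell^\dagger(t)\phi_\ell(\bx)$ and, recalling $\tau_q(s)=\tfrac sd+\tfrac12-\tfrac1q$, set $a_\ell:=\Vert u_\ell^\dagger(\cdot)\Vert_q$ and $b_\ell:=\ell^{\tau_{q^\dagger}(s^\dagger)}a_\ell$. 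In case (i), $q^\dagger\ge q$ and $\mT$ is bounded, so $L^{q^\dagger}(\mT)\hookrightarrow L^q(\mT)$ and the hypothesis $u^\dagger\in B^{s^\dagger,q^\dagger}(\mZ)$ forces $b:=(b_\ell)_\ell\in\ell^{q^\dagger}$; in case (ii) the restriction $u^\dagger\in B^{s^\dagger,q^\dagger,q}(\mZ)$ gives the same conclusion directly, and in either case each $u_\ell^\dagger(\cdot)\in L^q(\mT)$. The candidate is the spatial truncation $h_N:=\sum_{\ell\le N}u_\ell^\dagger(\cdot)\phi_\ell$, which lies in $B^{s,q}(\mZ)$ for every $N$. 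With $c:=\tau_{q^\dagger}(s^\dagger)-\tau_q(s')$ and $e:=\tau_q(s)-\tau_{q^\dagger}(s^\dagger)$ the problem becomes purely scalar:
\[
\Vert h_N-u^\dagger\Vert_{s',q}^q=\sum_{\ell>N}\ell^{-cq}b_\ell^q,\qquad \Vert h_N\Vert_{s,q}^q=\sum_{\ell\le N}\ell^{eq}b_\ell^q.
\]

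First I would bound the tail and fix $N$. In case (i) ($q\le q^\dagger$; the subcase $q=q^\dagger$ is immediate) Hölder's inequality with exponents $(q^\dagger/q,\ q^\dagger/(q^\dagger-q))$ applied against $b\in\ell^{q^\dagger}$ gives, after summing the resulting power series, $\Vert h_N-u^\dagger\Vert_{s',q}\lesssim N^{-(s^\dagger-s')/d}$, the convergence of that series requiring exactly $s^\dagger>s'$. In case (ii) one has the nesting $\ell^{q^\dagger}\hookrightarrow\ell^q$, so $\sum_\ell b_\ell^q<\infty$, and bounding $\ell^{-cq}\le N^{-cq}$ uniformly over $\ell>N$ yields $\Vert h_N-u^\dagger\Vert_{s',q}\lesssim N^{-c}$, which tends to $0$ iff $c>0$, i.e.\ iff $s^\dagger>s'-\tfrac dq+\tfrac d{q^\dagger}$. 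In both cases I then choose $N\asymp\eps^{-1/\rho}$, with $\rho$ the exponent just obtained, so that $\Vert h_N-u^\dagger\Vert_{s',q}\le\eps$.

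Finally I would bound the head $\sum_{\ell\le N}\ell^{eq}b_\ell^q$ and substitute. In case (i), Hölder again reduces this to $(\sum_{\ell\le N}\ell^{eq\,q^\dagger/(q^\dagger-q)})^{1-q/q^\dagger}$, whose behaviour splits precisely at the value of $s$ making that exponent equal $-1$, which one checks is $s=s^\dagger$: for $s<s^\dagger$ it is $O(1)$; at $s=s^\dagger$ it is the harmonic sum, giving $(\log N)^{1-q/q^\dagger}\asymp(-\log\eps)^{1-q/q^\dagger}$; for $s>s^\dagger$ it is $\asymp N^{q(s-s^\dagger)/d}$, which after substituting $N\asymp\eps^{-d/(s^\dagger-s')}$ becomes $\eps^{-\frac{s-s^\dagger}{s^\dagger-s'}q}$ (here $q\wedge q^\dagger=q$). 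In case (ii), since $\sum_\ell b_\ell^q<\infty$ the head is $O(1)$ when $e\le0$, i.e.\ $s\le s^\dagger+\tfrac dq-\tfrac d{q^\dagger}$, and is $\le N^{eq}\sum_\ell b_\ell^q\lesssim N^{eq}=\eps^{-eq/c}$ otherwise, with $e$ and $c$ expanded exactly as written in the statement. I expect the only real difficulty to be the bookkeeping: verifying that the convergence/divergence thresholds of the $p$-series produced by Hölder coincide with the stated hypotheses on $s^\dagger$ and with the case splits on $s$, and handling the two degenerate endpoints $q=q^\dagger$ (no Hölder step) and $s=s^\dagger$ (harmonic-series logarithm). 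Beyond Hölder's inequality and summing power series there is no analytic obstacle, and the prior, the kernel $\mC$ and its eigenvalues play no role here — the statement is purely about the deterministic Besov-type spaces.
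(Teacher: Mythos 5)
Your proposal is correct and follows essentially the same route as the paper's proof: approximate $u^\dagger$ by its spatial truncation at a level $N$ chosen so the $B^{s',q}$-tail is at most $\eps$ (requiring exactly the stated conditions on $s^\dagger$), then bound the $B^{s,q}$-norm of the head via H\"older's inequality with exponents $(q^\dagger/q,\,q^\dagger/(q^\dagger-q))$ in case (i) and the $\ell^{q^\dagger}\hookrightarrow\ell^q$ nesting in case (ii), exactly as in the paper's displays \eqref{eq:qs_bigg_q}--\eqref{eq:qs_less_q} and \eqref{eq:L_lbd}. Your reparametrization via $b_\ell$ and the exponents $c,e$ is only a notational streamlining of the same computation.
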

\begin{proof}
First of all, by Proposition \ref{prop:embeddings} we have $B^{s^\dagger,q^\dagger\wedge q, q}(\mZ)\subset B^{s',q}$ for $s'<s^\dagger-\left(\frac{d}{q^\dagger}-\frac{d}{q}\right)_+$.

Next, for given spatial basis $\{\phi_\ell\}_{\ell=1}^\infty$ in Definition \ref{dfn:STBP}, we identify $u^\dagger\in B^{s^\dagger,q^\dagger}$ with $u^\dagger_\mT=\{u^\dagger_\ell(\cdot)\}_{\ell=1}^\infty\in \ell^{q^\dagger,\tau_{q^\dagger}(s^\dagger)}(L^{q^\dagger}(\mT))$. 
Then we follow \cite{Agapiou_2021} to approximate $u^\dagger_\mT$ with $h_{1:L}=\{u^\dagger_\ell(\cdot)\}_{\ell=1}^\infty$ where $u^\dagger_\ell(\cdot) \equiv 0$ for all $\ell>L$.
Note $h_{1:L}\in \ell^{q,\tau_q(s)}(L^q(\mT))$ for any finite $L\in\mbN$.
Identifying $h_{1:L}$ with $h\in B^{s,q}(\mZ)$, we use the similar argument as above to get
\begin{equation*}
\Vert h-u^\dagger \Vert_{s',q}^q = \sum_{\ell=L+1}^\infty \ell^{\tau_q(s')q}\Vert u^\dagger_\ell(\cdot)\Vert_q^q \leq 
\begin{cases}
    \Vert u^\dagger\Vert_{s^\dagger,q^\dagger}^{q^\dagger} L^{\frac{s'-s^\dagger}{d}q^\dagger}, & \! if \, q^\dagger=q \\
    \Vert u^\dagger\Vert_{s^\dagger,q^\dagger}^q L^{\frac{s'-s^\dagger}{d}q}, & \! if \, q^\dagger>q \\
    \Vert u^\dagger\Vert_{s^\dagger, q^\dagger, q}^q L^{\left(\frac{s'-s^\dagger}{d}-\frac{1}{q}+\frac{1}{q^\dagger}\right)q}, & \! if \, q^\dagger<q
\end{cases} .
\end{equation*}
Therefore, to have $\Vert h-u^\dagger \Vert_{s',q}\leq \eps$ we let 
\begin{equation}\label{eq:L_lbd}
L \gtrsim
\begin{cases}
    \eps^{-\frac{d}{s^\dagger-s'}}, & \! if \, q^\dagger \geq q\\
    \eps^{-\frac{1}{\frac{s^\dagger-s'}{d}+\frac{1}{q}-\frac{1}{q^\dagger}}} , & \! if \, q^\dagger <q
\end{cases} .
\end{equation}

On the other hand, the infimum is less than $\Vert h \Vert_{s,q}^q$ with above $h$, which can be bounded as follows.
If $q^\dagger=q$,
\begin{equation*}
\Vert h \Vert_{s,q}^q = \sum_{\ell=1}^L \ell^{\tau_{q^\dagger}(s)q^\dagger}\Vert u^\dagger_\ell(\cdot)\Vert_{q^\dagger}^{q^\dagger} \leq
\begin{cases}
    \Vert u^\dagger \Vert_{s^\dagger,q^\dagger}^{q^\dagger}, & \! if \, s\leq s^\dagger\\
    \Vert u^\dagger \Vert_{s^\dagger,q^\dagger}^{q^\dagger} L^{\frac{s-s^\dagger}{d}q^\dagger}, & \! if \, s>s^\dagger
\end{cases} .
\end{equation*}
If $q^\dagger>q$, by similar argument using H\"older inequality as in \eqref{eq:qs_bigg_q},
\begin{equation*}
\Vert h \Vert_{s,q}^q = \sum_{\ell=1}^L \ell^{\tau_q(s)q}\Vert u^\dagger_\ell(\cdot)\Vert_q^q \leq
\begin{cases}
    C \Vert u^\dagger \Vert_{s^\dagger,q^\dagger}^q, & \! if \, s< s^\dagger\\
    \Vert u^\dagger \Vert_{s^\dagger,q^\dagger}^q (\log L)^{1-\frac{q}{q^\dagger}}, & \! if \, s=s^\dagger \\
    \Vert u^\dagger\Vert_{s^\dagger,q^\dagger}^q L^{\frac{s-s^\dagger}{d}q}, & \! if \, s>s^\dagger
\end{cases} .
\end{equation*}
If $q^\dagger<q$, by similar argument as in \eqref{eq:qs_less_q},
\begin{equation*}
\Vert h \Vert_{s,q}^q = \sum_{\ell=1}^L \ell^{\tau_q(s)q}\Vert u^\dagger_\ell(\cdot)\Vert_q^q \leq
\begin{cases}
    \Vert u^\dagger \Vert_{s^\dagger,q^\dagger,q}^q, & \! if \, s \leq s^\dagger+\frac{d}{q} -\frac{d}{q^\dagger}\\
    \Vert u^\dagger\Vert_{s^\dagger,q^\dagger,q}^q L^{\left(\frac{s-s^\dagger}{d}-\frac{1}{q}+\frac{1}{q^\dagger}\right)q}, & \! if \, s > s^\dagger+\frac{d}{q} -\frac{d}{q^\dagger}
\end{cases} .
\end{equation*}
Substituting $L$ in \eqref{eq:L_lbd} to the above equations yields the conclusion.
\end{proof}

\begin{proof}[Proof of Theorem \ref{thm:postcontrate}]
\label{apx:postcontrate}
By Lemmas \ref{lem:small_ball} and \ref{lem:decenter}, we have the following bounds for the concentration function \eqref{eq:contr_fun} as $\eps\to 0$, if $q^\dagger\geq q$,
\begin{equation*}
    \varphi_{u^\dagger}(\eps) \lesssim 
\begin{cases}
    1 + \eps^{-\frac{1}{\frac{s-s'}{d} - \frac{1}{q}}}, & \! if \, s < s^\dagger \\
    (-\log \eps)^{1-\frac{q}{q^\dagger}} + \eps^{-\frac{1}{\frac{s-s'}{d} - \frac{1}{q}}}, & \! if \, s=s^\dagger \\
    \eps^{-\frac{s-s^\dagger}{s^\dagger-s'}(q\wedge q^\dagger)} + \eps^{-\frac{1}{\frac{s-s'}{d} - \frac{1}{q}}}, & \! if \, s>s^\dagger
\end{cases} .
\end{equation*}
For $s\leq s^\dagger$, the bound is dominated by $\eps^{-\frac{1}{\frac{s-s'}{d} - \frac{1}{q}}}$.
For the last case, we need to determine a balancing point of $s$ for the two terms by setting their powers equal. The calculation shows that if $s\leq s^\dagger +\frac{d}{q}$, the bound is still dominated by $\eps^{-\frac{1}{\frac{s-s'}{d} - \frac{1}{q}}}$, but otherwise is dominated by $\eps^{-\frac{s-s^\dagger}{s^\dagger-s'}q}$.
Therefore, we have
\begin{equation*}
    \varphi_{u^\dagger}(\eps) \lesssim 
\begin{cases}
    \eps^{-\frac{1}{\frac{s-s'}{d} - \frac{1}{q}}}, & \! if \, s \leq s^\dagger +\frac{d}{q} \\
    \eps^{-\frac{s-s^\dagger}{s^\dagger-s'}q}, & \! if \, s>s^\dagger +\frac{d}{q}
\end{cases} .
\end{equation*}
We need to determine minimal $\eps_n$ such that $\varphi_{u^\dagger}(\eps_n)\leq n\eps_n^2$. Hence for $q^\dagger \geq q$,
\begin{equation*}
    \eps_n \asymp 
\begin{cases}
    n^{-\frac{q(s-s')-d}{2q(s-s')+(q-2)d}}, & \! if \, s \leq s^\dagger +\frac{d}{q} \\
    n^{-\frac{s^\dagger-s'}{2(s^\dagger-s')+q(s-s^\dagger)}}, & \! if \, s>s^\dagger +\frac{d}{q}
\end{cases} .
\end{equation*}

Now if $q^\dagger<q$, by similar argument we have the concentration function \eqref{eq:contr_fun} as $\eps\to 0$
\begin{equation*}
    \varphi_{u^\dagger}(\eps) \lesssim 
\begin{cases}
    1 + \eps^{-\frac{1}{\frac{s-s'}{d} - \frac{1}{q}}}, & \! if \, s \leq s^\dagger+\frac{d}{q} -\frac{d}{q^\dagger} \\
    \eps^{-\frac{\frac{s-s^\dagger}{d}-\frac{1}{q}+\frac{1}{q^\dagger}}{\frac{s^\dagger-s'}{d}+\frac{1}{q}-\frac{1}{q^\dagger}}q} + \eps^{-\frac{1}{\frac{s-s'}{d} - \frac{1}{q}}}, & \! if \, s>s^\dagger +\frac{d}{q} -\frac{d}{q^\dagger}
\end{cases} .
\end{equation*}
Thus the contraction rate for $q^\dagger<q$ becomes
\begin{equation*}
    \eps_n \asymp 
\begin{cases}
    n^{-\frac{q(s-s')-d}{2q(s-s')+(q-2)d}}, & \! if \, s \leq s^\dagger +\frac{2d}{q} - \frac{d}{q^\dagger} \\
    n^{-\frac{s^\dagger-s'+\frac{d}{q}-\frac{d}{q^\dagger}}{2(s^\dagger-s')+q(s-s^\dagger) -(q-2)(\frac{d}{q}-\frac{d}{q^\dagger})}}, & \! if \, s>s^\dagger +\frac{2d}{q} - \frac{d}{q^\dagger}
\end{cases} .
\end{equation*}
Rewriting the equations into one yields the conclusion.
\end{proof}

\adapostcontrate*
\begin{proof}[Proof of Theorem \ref{thm:adapostcontrate}]
\label{apx:adapostcontrate}
Denote the upper bound of the first term in the concentration in Lemma \ref{lem:decenter} as $d(\eps)$. By re-examining the proof of Theorem \ref{thm:postcontrate}, we have the concentration function \eqref{eq:contr_fun_rescaled} bounded as
\begin{equation*}
\varphi_{u^\dagger,\kappa}(\eps) \lesssim \kappa^{-q} d(\eps) + (\eps/\kappa)^{-\frac{1}{\frac{s-s'}{d} - \frac{1}{q}}} .
\end{equation*}
The optimal choice, $\kappa \asymp d(\eps)^{\frac{1}{q}-\frac{d}{q^2(s-s')}} \eps^{\frac{d}{q(s-s')}}$, is made by balancing the above two terms. Hence the concentration function bound becomes $\varphi_{u^\dagger,\kappa}(\eps) \lesssim d(\eps)^{\frac{d}{q(s-s')}} \eps^{-\frac{d}{s-s'}}$.
Note, most bounds in Lemma \ref{lem:decenter} appears in the format of $d(\eps)\asymp \eps^{-b}$ except when $q^\dagger\geq q$ and $s=s^\dagger$. We substitute in and force the derived rate to be minimax:
\begin{equation*}
n^{-\frac{1}{2+\frac{db}{q(s-s')}+\frac{d}{s-s'}}} \asymp n^{-\frac{1}{2+\frac{d}{s^\dagger}}} ,
\end{equation*}
which implies that $b(s)=q(\frac{s-s'}{s^\dagger}-1)$ and the corresponding scaling factor $\kappa\asymp \eps^{-\frac{s-s'-\frac{d}{q}}{s^\dagger}+1}$.

Next we examine whether the bound $\eps^{-b(s)}=\eps^{-q(\frac{s-s'}{s^\dagger}-1)}$ can be achieved as those $d(\eps)$ in Lemma \ref{lem:decenter} .
If $q^\dagger\geq q$, setting $b(s)=0$ leads to $s=s'+s^\dagger$ contradicting with $s<s^\dagger$; $b(s)=\frac{s-s^\dagger}{s^\dagger-s'}(q\wedge q^\dagger)$ yields $s=s^\dagger+s'-(s^\dagger)^2/s'$ contradicting with $s>s^\dagger$; Lastly, $s=s^\dagger$ does not solve $(-\log \eps)^{(1-\frac{q}{q^\dagger})\frac{d}{q(s-s')}} \eps^{-\frac{d}{s-s'}}=\eps^{-\frac{1}{2+\frac{d}{s^\dagger}}}$.
If $q^\dagger<q$, $s=s'+s^\dagger$ does not satisfy $s \leq s^\dagger-\left(\frac{d}{q^\dagger} -\frac{d}{q}\right)$; solving $b(s)=\frac{\frac{s-s^\dagger}{d}-\frac{1}{q}+\frac{1}{q^\dagger}}{\frac{s^\dagger-s'}{d}+\frac{1}{q}-\frac{1}{q^\dagger}}q$ gives $s=\frac{s^\dagger(s^\dagger-s')}{s'+\left(\frac{d}{q^\dagger} -\frac{d}{q}\right)}+s'$, which can be shown $s>s^\dagger -\left(\frac{d}{q^\dagger} -\frac{d}{q}\right)$.
Hence, substitute the only feasible $s$ and the minimax rate $\eps_n^\dagger$ into the above expression of $\kappa$ and the scaling factor becomes $\kappa_n\asymp (\eps_n^\dagger)^{-\frac{s^\dagger-s'}{s'+\left(\frac{d}{q^\dagger} -\frac{d}{q}\right)}+\frac{d}{qs^\dagger}+1} \asymp n^{-\frac{1}{2s^\dagger+d}\left[-\frac{s^\dagger(s^\dagger-s')}{s'+\left(\frac{d}{q^\dagger} -\frac{d}{q}\right)}+\frac{d}{q}+s^\dagger\right]}$.
\end{proof}

\section{Inference}\label{apx:inference}
\begin{algorithm}[!ht]
\caption{White-noise dimension-independent MCMC (wn-$\infty$-MCMC)}
\label{alg:wn-infMC}
\centering
\begin{algorithmic}[1]
\STATE Initialize current state $u^{(0)}$ and transform it into the whitened space $\zeta^{(0)}=T^{-1}(u^{(0)})$
\STATE Sample velocity $\eta^{(0)}\sim \mN(0,I)$
\STATE Calculate current energy $E_0=\Phi(\zeta^{(0)}) - \frac{\eps^2}{8} \Vert g(\zeta^{(0)})\Vert^2 + \half \log |\mK(\zeta^{(0)})|$
\FOR{ $i = 0$ to $I-1$}
\STATE Run $\Psi_\eps: (\zeta^{(i)},\eta^{(i)})\mapsto (\zeta^{(i+1)}, \eta^{(i+1)})$ according to \eqref{eq:mHDdiscret}.
\STATE Update the energy $E_0 \gets E_0 + \frac{\eps}{2} (\langle g(\zeta^{(i)}), \eta^{(i)} \rangle + \langle g(\zeta^{(i+1)}), \eta^{(i+1)} \rangle)$
\ENDFOR
\STATE Calculate new energy $E_1=\Phi(\zeta^{(I)}) - \frac{\eps^2}{8} \Vert g(\zeta^{(I)})\Vert^2 + \half \log |\mK(\zeta^{(I)})|$
\STATE Calculate acceptance probability $a=\exp(-E_1+E_0)$. 
\STATE Accept $\zeta^{(I)}$ with probability $a$ for the next state $\zeta'$ or set $\zeta'=\zeta^{(0)}$.
\STATE Record the next state $u'=T(\zeta')$ in the original space.
\end{algorithmic}
\end{algorithm}


The following proposition permits conditional conjugacy for the variance magnitude ($\kappa$) given an appropriate hyper-prior.
\begin{prop}
\label{prop:condconj}
If we assume a inverse-gamma hyper-prior for the variance magnitude $\kappa^{\frac{q}{2}}\sim \Gamma^{-1}(\alpha,\beta)$ such that $\bxi_\ell|\kappa \overset{iid}{\sim} \qED_J(\bzero,\bC)$ in \eqref{eq:STBP_model}, then we have
\begin{equation}\label{eq:cond_kappa}
\kappa^{\frac{q}{2}}|\bu \sim \Gamma^{-1}(\alpha',\beta'), \quad 
\alpha'=\alpha+\frac{JL}{2}, \quad \beta'=\beta+\half\sum_{\ell=1}^L r_{0,\ell}^\frac{q}{2} .
\end{equation}
\end{prop}
\begin{proof}
We can compute the joint density of $\bXi$ and $\kappa$
\begin{align*}
& p(\bXi, \kappa)= \prod_{\ell=1}^L p(\bxi_\ell|\kappa) p(\kappa^q) \\
&= \left(\frac{q}{2}\right)^L (2\pi)^{-\frac{JL}{2}} |\bC_0|^{-\frac{L}{2}} \prod_{\ell=1}^L r_{0,\ell}^{(\frac{q}{2}-1)\frac{J}{2}} \kappa^{-\frac{q}{2}\cdot\frac{JL}{2}}\exp\left\{-\kappa^{-\frac{q}{2}} \sum_{\ell=1}^L \frac{r_{0,\ell}^\frac{q}{2}}{2}\right\} 
\frac{\beta^\alpha}{\Gamma(\alpha)} \kappa^{-\frac{q}{2}(\alpha+1)} \exp(-\beta \kappa^{-\frac{q}{2}}) \\
&\propto \left(\kappa^{\frac{q}{2}}\right)^{-(\alpha+\frac{JL}{2}+1)} \exp\left\{-\kappa^{-\frac{q}{2}}\left(\beta+\half\sum_{\ell=1}^L r_{0,\ell}^\frac{q}{2}\right)\right\} .
\end{align*}
By identifying the parameters for $\kappa^{\frac{q}{2}}$ we recognize that $\kappa^{\frac{q}{2}}|\bXi$ is another inverse-gamma with parameters $\alpha'$ and $\beta'$ as given.
\end{proof}

\section{More Numerical Results}

\subsection{Simulation}

\begin{figure}[!ht]
\begin{tabular}{ccccc}
\qquad Truth & \qquad \qquad $q=0.5$ & \quad \qquad $q=1$ (STBP) & \qquad $q=1.5$ & \quad \quad $q=2$ (STGP)  \\
\end{tabular}
\vspace{-5pt}
\begin{tabular}{ccccc}
\includegraphics[width=0.19\textwidth,height=.18\textwidth]{simulation/truth/simulation_009.png}
\includegraphics[width=0.19\textwidth,height=.18\textwidth]{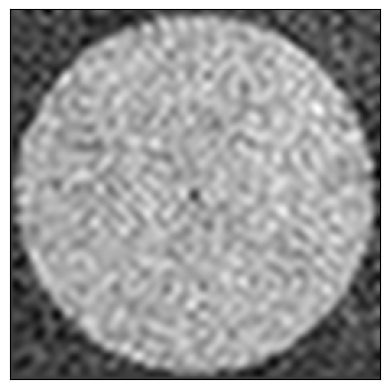}
\includegraphics[width=0.19\textwidth,height=.18\textwidth]{simulation/I256_J100/MAP_q1p1s1_whiten/simulation_009.png}
\includegraphics[width=0.19\textwidth,height=.18\textwidth]{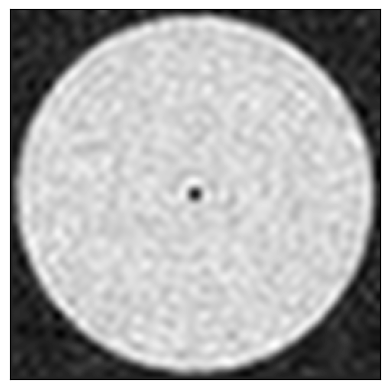}
\includegraphics[width=0.19\textwidth,height=.18\textwidth]{simulation/I256_J100/MAP_q2p2s1_whiten/simulation_009.png}
\end{tabular}
\vspace{-5pt}
\begin{tabular}{ccccc}
\includegraphics[width=0.19\textwidth,height=.18\textwidth]{simulation/truth/simulation_029.png}
\includegraphics[width=0.19\textwidth,height=.18\textwidth]{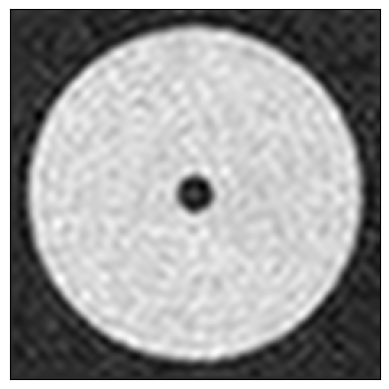}
\includegraphics[width=0.19\textwidth,height=.18\textwidth]{simulation/I256_J100/MAP_q1p1s1_whiten/simulation_029.png}
\includegraphics[width=0.19\textwidth,height=.18\textwidth]{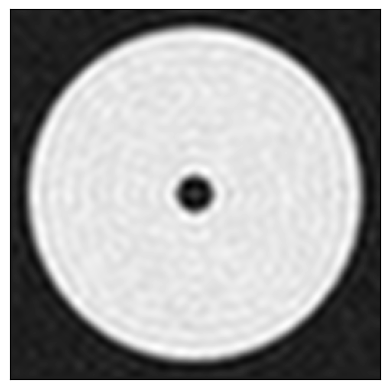}
\includegraphics[width=0.19\textwidth,height=.18\textwidth]{simulation/I256_J100/MAP_q2p2s1_whiten/simulation_029.png}
\end{tabular}
\vspace{-5pt}
\begin{tabular}{ccccc}
\includegraphics[width=0.19\textwidth,height=.18\textwidth]{simulation/truth/simulation_059.png}
\includegraphics[width=0.19\textwidth,height=.18\textwidth]{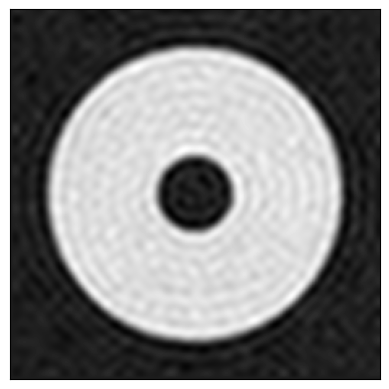}
\includegraphics[width=0.19\textwidth,height=.18\textwidth]{simulation/I256_J100/MAP_q1p1s1_whiten/simulation_059.png}
\includegraphics[width=0.19\textwidth,height=.18\textwidth]{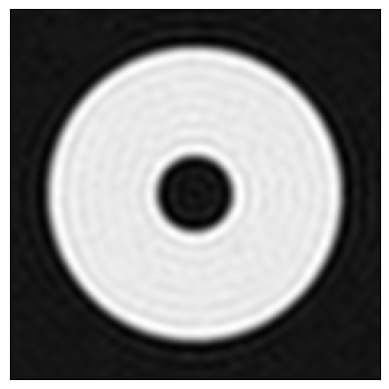}
\includegraphics[width=0.19\textwidth,height=.18\textwidth]{simulation/I256_J100/MAP_q2p2s1_whiten/simulation_059.png}
\end{tabular}
\vspace{-5pt}
\begin{tabular}{ccccc}
\includegraphics[width=0.19\textwidth,height=.18\textwidth]{simulation/truth/simulation_089.png}
\includegraphics[width=0.19\textwidth,height=.18\textwidth]{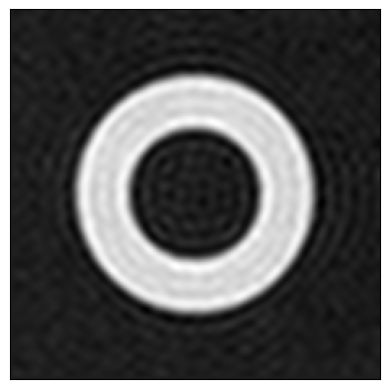}
\includegraphics[width=0.19\textwidth,height=.18\textwidth]{simulation/I256_J100/MAP_q1p1s1_whiten/simulation_089.png}
\includegraphics[width=0.19\textwidth,height=.18\textwidth]{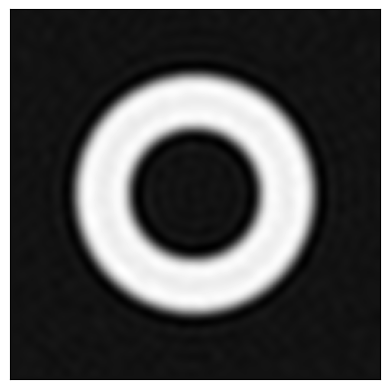}
\includegraphics[width=0.19\textwidth,height=.18\textwidth]{simulation/I256_J100/MAP_q2p2s1_whiten/simulation_089.png}
\end{tabular}
\vspace{-5pt}
\caption{MAP reconstruction of simulated annulus with $I=256\times 256$, $J=100$. Columns from left to right: true images, MAP estimates by STBP models with $q=0.5, 1, 1.5, 2$ respectively. Rows from top to bottom: time step $t_j = 0.1, 0.3, 0.6, 0.9$.}
\label{fig:simulation_MAP_whiten_q}
\end{figure}

Figure \ref{fig:simulation_MAP_whiten_q} illustrates the regularization effect of parameter $q>0$ of STBP priors in the simulated regression problem of a shrinking annulus. When the regularization parameter $q$ ranges in $(0,2]$, the smaller $q$ is, the more regularization it imposes hence the sharper MAP solution the corresponding model renders compared with the truth. When $q=2$, STBP reduces to STGP which returns the smoothest reconstruction with blurring boundaries. Even $q=0.5$ is not in the main range of interest $[1,2]$ where the associated priors have good properties, e.g. convexity, the resulting prior model still yields a solution (the second column) with the lowest error among the models for selective $q$'s.

\begin{figure}[t]
\begin{tabular}{ccccc}
\quad \quad Truth & \qquad \qquad $\overset{I=16\times 16}{J=10}$  & \qquad \qquad $\overset{I=32\times 32}{J=20}$ & \quad \qquad $\overset{I=128\times 128}{J=50}$ & \qquad \qquad $\overset{I=256\times 256}{J=100}$  \\
\end{tabular}
\vspace{-5pt}
\begin{tabular}{ccccc}
\includegraphics[width=0.19\textwidth,height=.18\textwidth]{simulation/truth/simulation_009.png}
\includegraphics[width=0.19\textwidth,height=.18\textwidth]{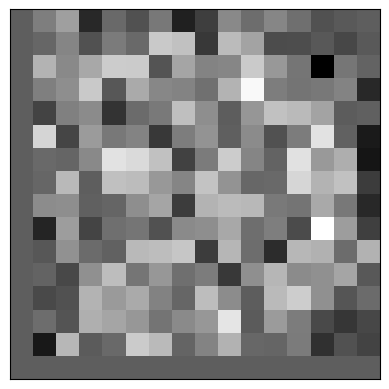}
\includegraphics[width=0.19\textwidth,height=.18\textwidth]{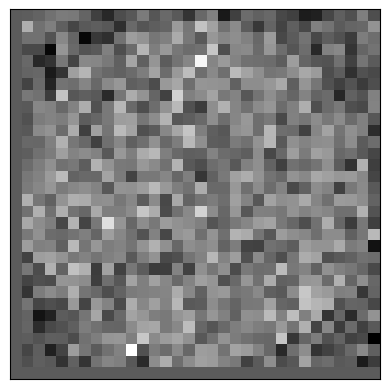}
\includegraphics[width=0.19\textwidth,height=.18\textwidth]{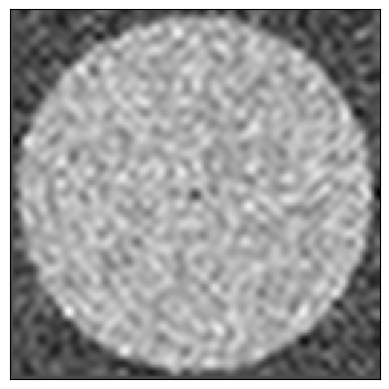}
\includegraphics[width=0.19\textwidth,height=.18\textwidth]{simulation/I256_J100/MAP_q1p1s1_whiten/simulation_009.png}
\end{tabular}
\vspace{-5pt}
\begin{tabular}{ccccc}
\includegraphics[width=0.19\textwidth,height=.18\textwidth]{simulation/truth/simulation_029.png}
\includegraphics[width=0.19\textwidth,height=.18\textwidth]{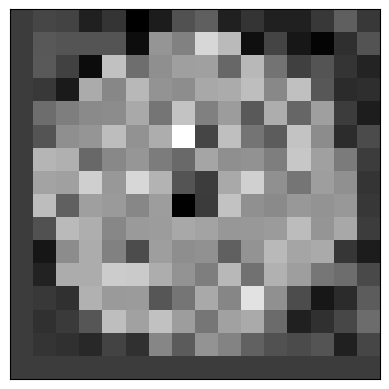}
\includegraphics[width=0.19\textwidth,height=.18\textwidth]{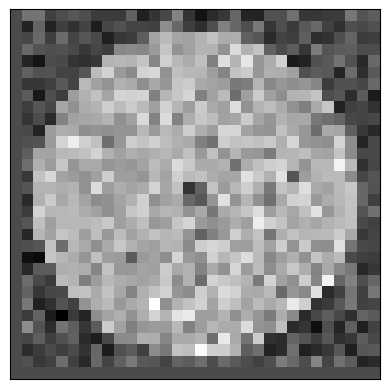}
\includegraphics[width=0.19\textwidth,height=.18\textwidth]{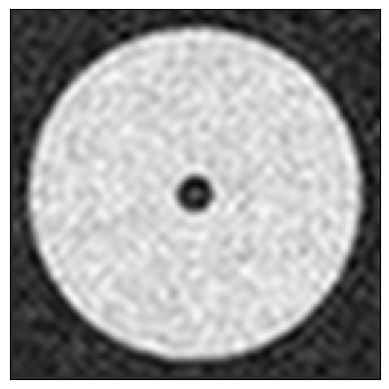}
\includegraphics[width=0.19\textwidth,height=.18\textwidth]{simulation/I256_J100/MAP_q1p1s1_whiten/simulation_029.png}
\end{tabular}
\vspace{-5pt}
\begin{tabular}{ccccc}
\includegraphics[width=0.19\textwidth,height=.18\textwidth]{simulation/truth/simulation_059.png}
\includegraphics[width=0.19\textwidth,height=.18\textwidth]{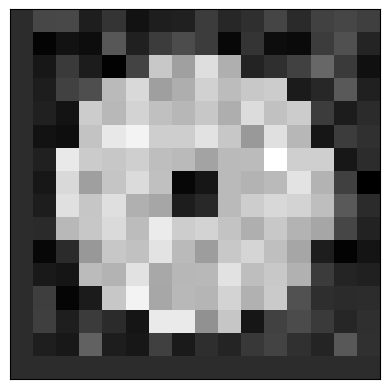}
\includegraphics[width=0.19\textwidth,height=.18\textwidth]{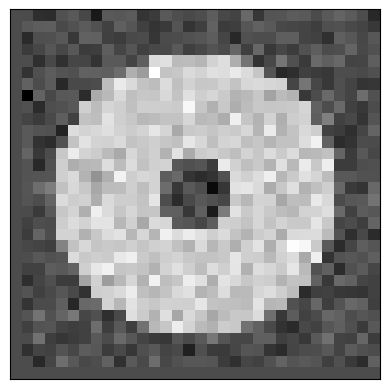}
\includegraphics[width=0.19\textwidth,height=.18\textwidth]{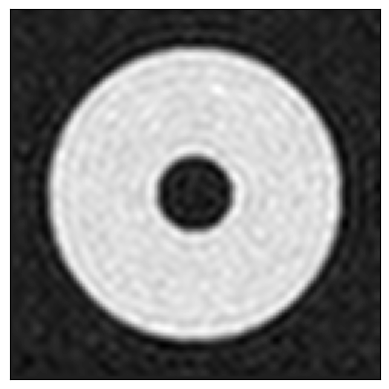}
\includegraphics[width=0.19\textwidth,height=.18\textwidth]{simulation/I256_J100/MAP_q1p1s1_whiten/simulation_059.png}
\end{tabular}
\vspace{-5pt}
\begin{tabular}{ccccc}
\includegraphics[width=0.19\textwidth,height=.18\textwidth]{simulation/truth/simulation_089.png}
\includegraphics[width=0.19\textwidth,height=.18\textwidth]{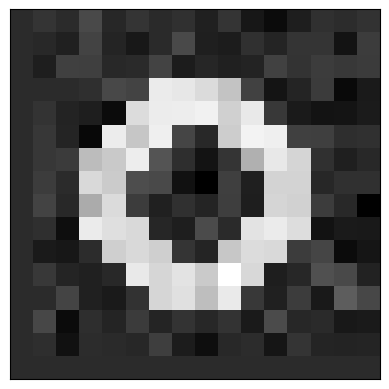}
\includegraphics[width=0.19\textwidth,height=.18\textwidth]{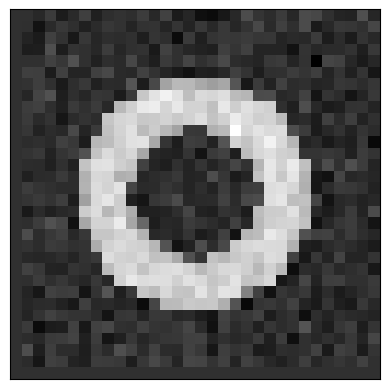}
\includegraphics[width=0.19\textwidth,height=.18\textwidth]{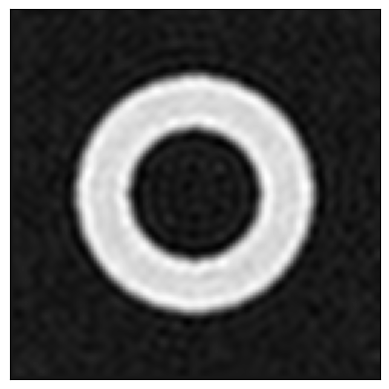}
\includegraphics[width=0.19\textwidth,height=.18\textwidth]{simulation/I256_J100/MAP_q1p1s1_whiten/simulation_089.png}
\end{tabular}
\vspace{-5pt}
\caption{MAP reconstruction of simulated annulus by STBP model with increasing data. Columns from left to right: true images, MAP estimates obtained at different spatiotemporal resolutions. Rows from top to bottom: time step $t_j = 0.1, 0.3, 0.6, 0.9$.}
\label{fig:simulation_postcontr_whiten}
\end{figure}

We increase the spatiotemporal resolution in Figure \ref{fig:simulation_postcontr_whiten} to illustrate the MAP estimates by the STBP model approximating the ground truth. This demonstrates the posterior contraction phenomenon in the infinitely informative data limit as described Theorem \ref{thm:postcontr}.

\subsection{Dynamic Tomography Reconstruction}

\begin{figure}[!ht]
\begin{tabular}{ccccc}
\quad Truth & \qquad \qquad Observations & \quad \qquad STBP & \qquad \qquad STGP & \quad \quad time-uncorrelated  \\
\end{tabular}
\vspace{-5pt}
\begin{tabular}{ccccc}
\includegraphics[width=0.19\textwidth,height=.18\textwidth]{STEMPO/truth/stempo_00.png}
\includegraphics[width=0.19\textwidth,height=.18\textwidth]{STEMPO/sino/stempo_00.png}
\includegraphics[width=0.19\textwidth,height=.18\textwidth]{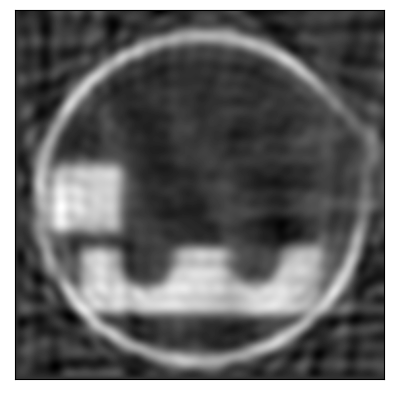}
\includegraphics[width=0.19\textwidth,height=.18\textwidth]{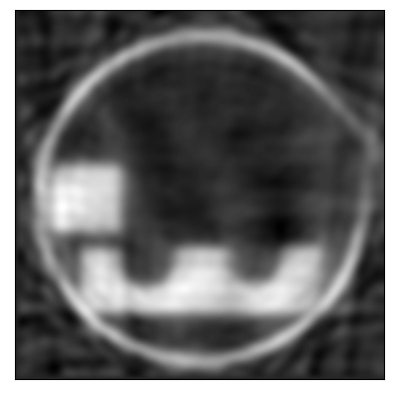}
\includegraphics[width=0.19\textwidth,height=.18\textwidth]{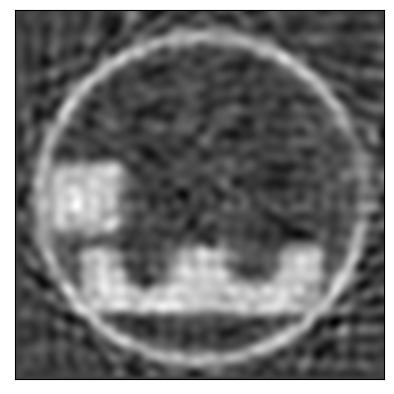}
\end{tabular}
\vspace{-5pt}
\begin{tabular}{ccccc}
\includegraphics[width=0.19\textwidth,height=.18\textwidth]{STEMPO/truth/stempo_06.png}
\includegraphics[width=0.19\textwidth,height=.18\textwidth]{STEMPO/sino/stempo_06.png} 
\includegraphics[width=0.19\textwidth,height=.18\textwidth]{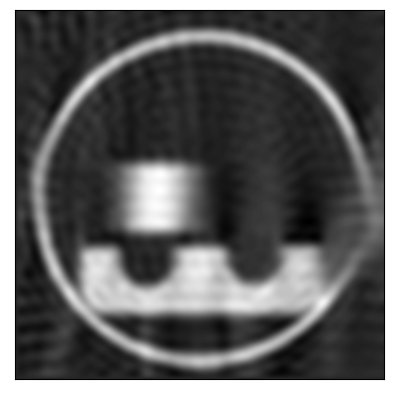} 
\includegraphics[width=0.19\textwidth,height=.18\textwidth]{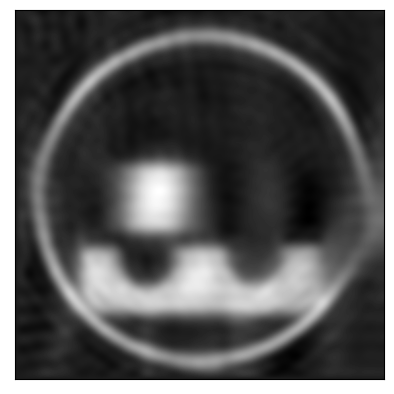} 
\includegraphics[width=0.19\textwidth,height=.18\textwidth]{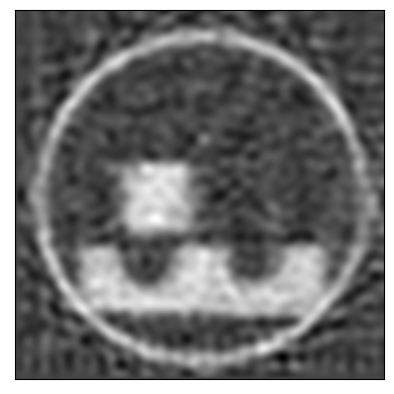} 
\end{tabular}
\vspace{-5pt}
\begin{tabular}{ccccc}
\includegraphics[width=0.19\textwidth,height=.18\textwidth]{STEMPO/truth/stempo_13.png}
\includegraphics[width=0.19\textwidth,height=.18\textwidth]{STEMPO/sino/stempo_13.png}
\includegraphics[width=0.19\textwidth,height=.18\textwidth]{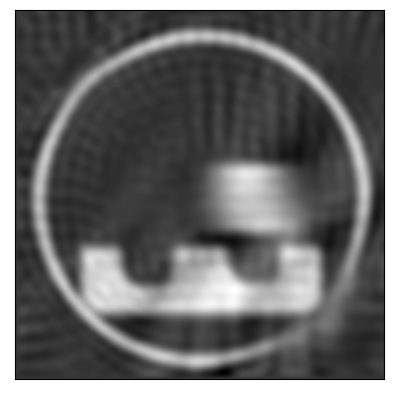}
\includegraphics[width=0.19\textwidth,height=.18\textwidth]{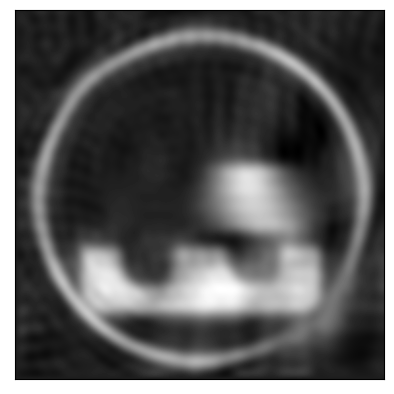}
\includegraphics[width=0.19\textwidth,height=.18\textwidth]{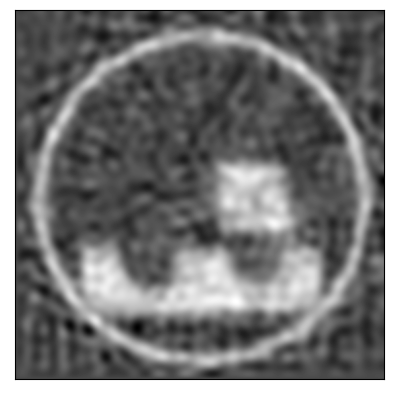}
\end{tabular}
\vspace{-5pt}
\begin{tabular}{ccccc}
\includegraphics[width=0.19\textwidth,height=.18\textwidth]{STEMPO/truth/stempo_19.png}
\includegraphics[width=0.19\textwidth,height=.18\textwidth]{STEMPO/sino/stempo_19.png}
\includegraphics[width=0.19\textwidth,height=.18\textwidth]{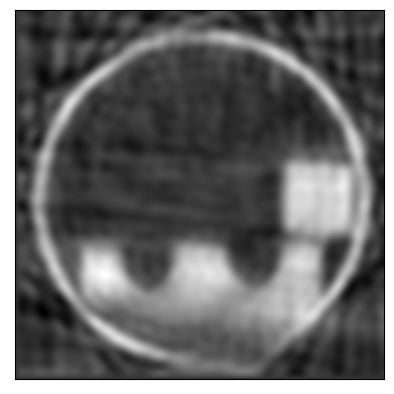}
\includegraphics[width=0.19\textwidth,height=.18\textwidth]{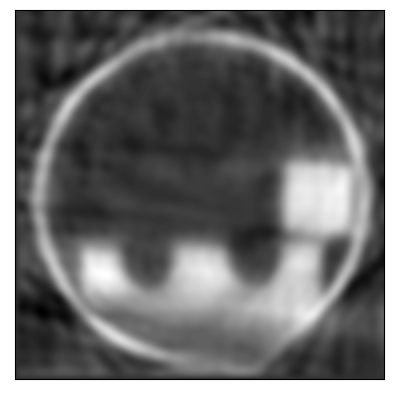}
\includegraphics[width=0.19\textwidth,height=.18\textwidth]{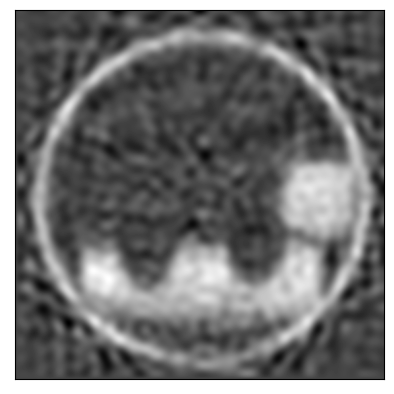}
\end{tabular}
\vspace{-5pt}
\caption{Reconstruction results of dynamic STEMPO tomography in the original space. Columns from left to right: true images, sinograms, MAP estimates by STBP, STGP and time-uncorrelated models respectively. Rows from top to bottom: time step $j = 0, 6, 13, 19$.}
\label{fig:STEMPO_MAP}
\end{figure}
\subsubsection{STEMPO Tomography}

In Figure \ref{fig:STEMPO_MAP}, MAP estimates of the dynamic STEMPO tompography obtained by optimizing the log-posterior \eqref{eq:logpost} in the original space of $\bXi$ are compared among STBP, STGP, and time-uncorrelated models. Although we still observe the better reconstruction by STBP (the third column) compared with the other two (the forth and the last columns), these results are generally more noisy with larger errors compared with those obtained by optimizing \eqref{eq:logpost_whiten} in the whitened space of $\bZeta$, as illustrated in Figure \ref{fig:STEMPO_MAP_whiten}. Such comparison not only supports the superior performance of STBP, but also highlights the benefit of the white noise representation \eqref{eq:Lambda}, which is also verified in Figure \ref{fig:STEMPO_err}.

\begin{figure}[t]
\centering
\includegraphics[width=.49\textwidth,height=.185\textwidth]{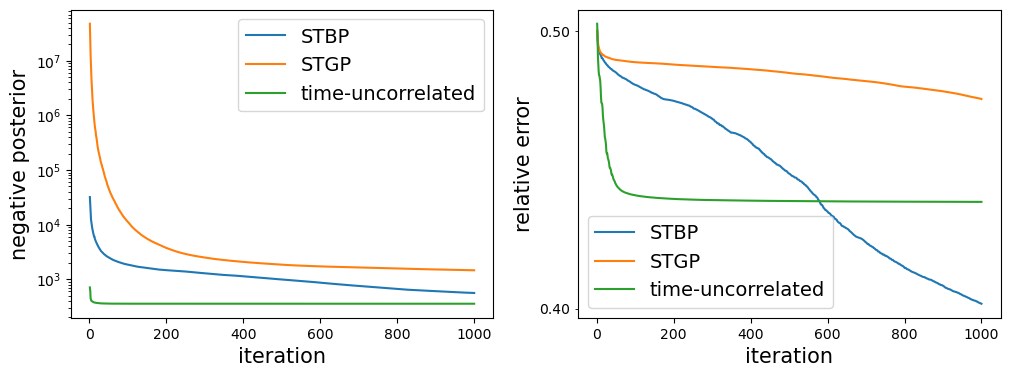}
\includegraphics[width=.49\textwidth,height=.185\textwidth]{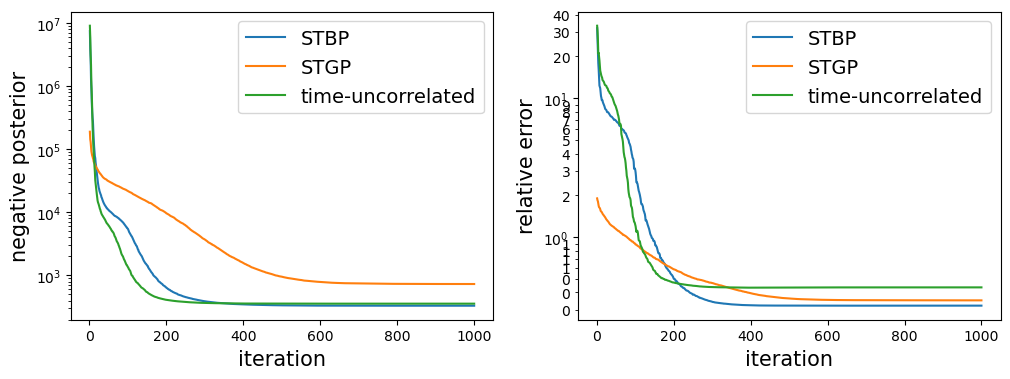}
\caption{Dynamic STEMPO tomography reconstruction: negative posterior densities and relative errors for the optimization in the original space (left two) and in the whitened space (right two) as functions of iterations in the BFGS algorithm used to obtain MAP estimates. Early termination is implemented if the error falls below some threshold or the maximal iteration (1000) is reached.}
\label{fig:STEMPO_err}
\end{figure}

Figure \ref{fig:STEMPO_err} compares minimizing the negative log-posterior \eqref{eq:logpost} in the original space (the left two panels) with minimizing the negative log-posterior \eqref{eq:logpost_whiten} in the whitened space (the right two panels). The speed-up of the convergence in the whitened space may be explained by the de-correlated coordinates. Though time-uncorrelated model converges faster, STBP and STGP could achieve lower relative errors by accounting for time correlations.

\renewcommand{\arraystretch}{0.6}
\begin{table}[htbp]
\caption{Comparison of MAP estimates for STEMPO tomography generated by STBP, STGP and time-uncorrelated prior models
in terms of RLE, log-likelihood, PSNR, and SSIM 
measures. Standard deviations (in bracket) are obtained by repeating the experiments for 10 times with different random seeds for initialization.}
\centering
\begin{tabular}{|c|c|c|c|}
\toprule
 & time-uncorrelated & STGP & STBP  \\
 \midrule
RLE &  0.4354 (2.91e-5)  & 0.3512(1.42e-4) & \cellcolor{lightgray} 0.3217 (2.72e-5) \\
log-likelihood & -39190.72 (0.65) & -39085.37 (5.49) & -39697.93 (0.71)\\
PSNR & 16.6235 (5.80e-4) & 18.4896 (3.50e-3) & \cellcolor{gray} 19.2532 (7.33e-4) \\
SSIM & 0.1469 (3.50e-5) & \cellcolor{gray} 0.3486 (3.47e-4) & 0.2318 (7.10e-5) \\
\bottomrule
\end{tabular}
\label{tab:STEMPO}
\end{table}

Table \ref{tab:STEMPO} compares the three models in terms of relative error and other image reconstruction metrics like PSNR, and SSIM. The STBP model performs the best and generates the best reconstruction with the lowest error. The same conclusion can be drawn with these high values of the image quality measurements.

\begin{figure}[!ht]
\begin{tabular}{ccccc}
\qquad Truth & \quad \qquad STBP (mean) & \quad STGP (mean) & \quad STBP (std) & \quad STGP (std)  \\
\end{tabular}
\vspace{-5pt}
\begin{tabular}{ccccc}
\includegraphics[width=0.19\textwidth,height=.18\textwidth]{STEMPO/truth/stempo_00.png}
\includegraphics[width=0.19\textwidth,height=.18\textwidth]{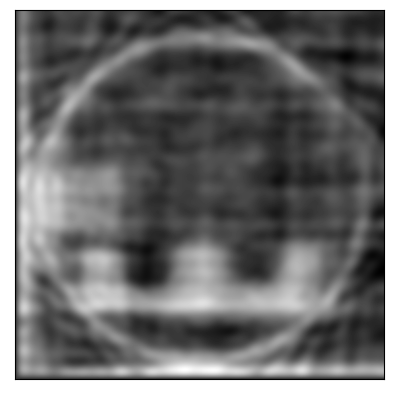}
\includegraphics[width=0.19\textwidth,height=.18\textwidth]{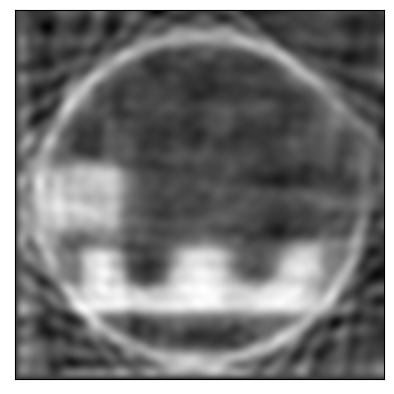}
\includegraphics[width=0.19\textwidth,height=.18\textwidth]{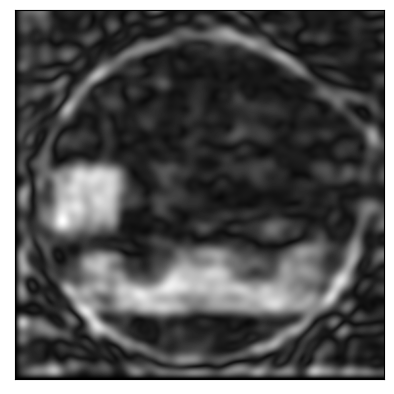}
\includegraphics[width=0.19\textwidth,height=.18\textwidth]{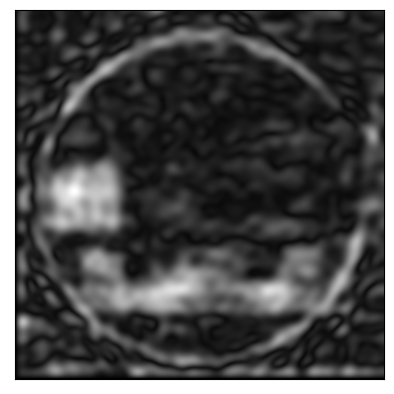}
\end{tabular}
\vspace{-5pt}
\begin{tabular}{ccccc}
\includegraphics[width=0.19\textwidth,height=.18\textwidth]{STEMPO/truth/stempo_06.png}
\includegraphics[width=0.19\textwidth,height=.18\textwidth]{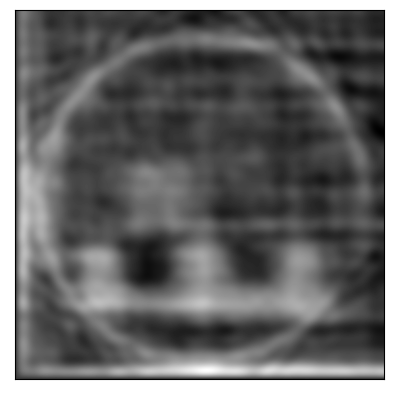}
\includegraphics[width=0.19\textwidth,height=.18\textwidth]{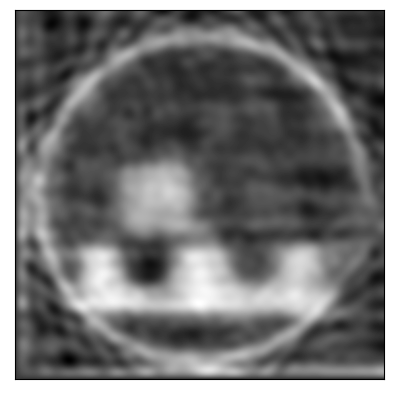}
\includegraphics[width=0.19\textwidth,height=.18\textwidth]{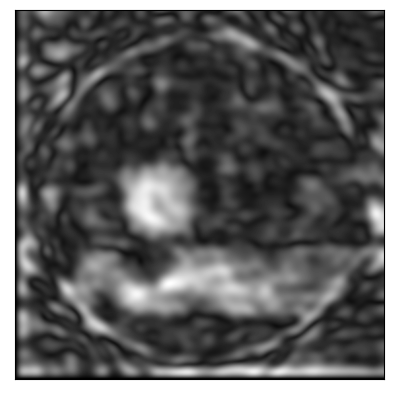}
\includegraphics[width=0.19\textwidth,height=.18\textwidth]{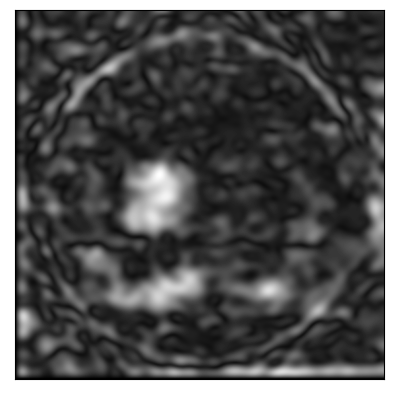}
\end{tabular}
\vspace{-5pt}
\begin{tabular}{ccccc}
\includegraphics[width=0.19\textwidth,height=.18\textwidth]{STEMPO/truth/stempo_13.png}
\includegraphics[width=0.19\textwidth,height=.18\textwidth]{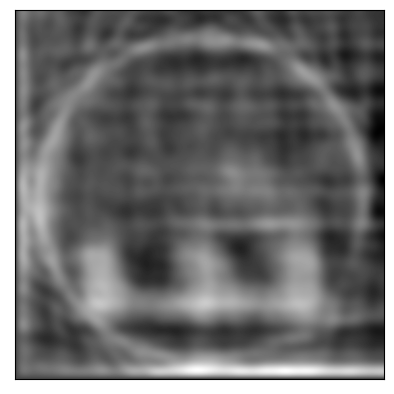}
\includegraphics[width=0.19\textwidth,height=.18\textwidth]{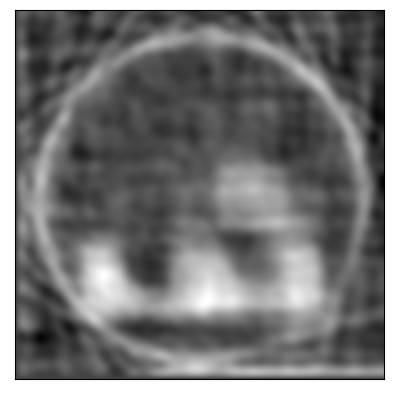}
\includegraphics[width=0.19\textwidth,height=.18\textwidth]{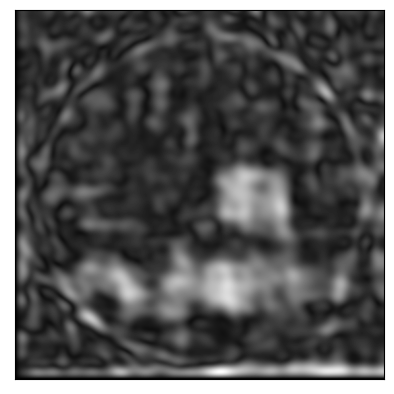}
\includegraphics[width=0.19\textwidth,height=.18\textwidth]{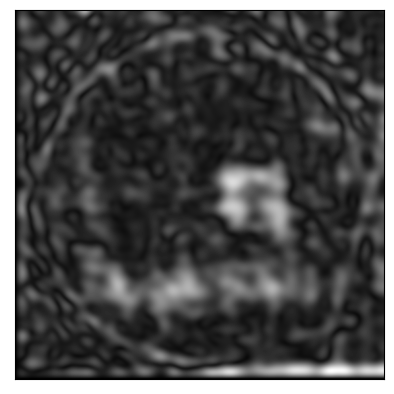}
\end{tabular}
\vspace{-5pt}
\begin{tabular}{ccccc}
\includegraphics[width=0.19\textwidth,height=.18\textwidth]{STEMPO/truth/stempo_19.png}
\includegraphics[width=0.19\textwidth,height=.18\textwidth]{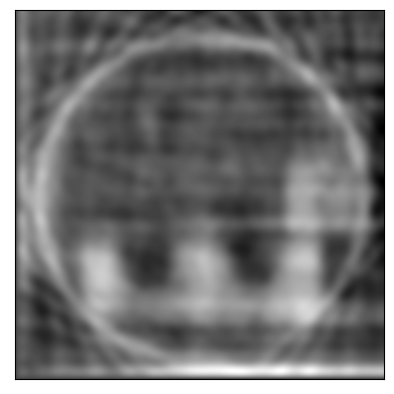}
\includegraphics[width=0.19\textwidth,height=.18\textwidth]{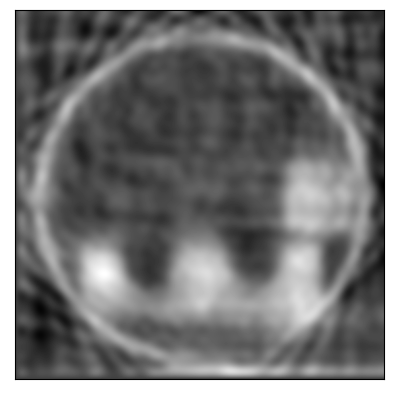}
\includegraphics[width=0.19\textwidth,height=.18\textwidth]{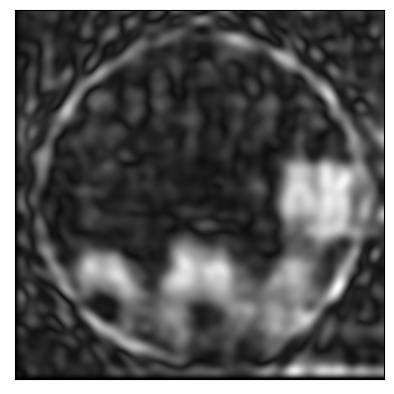}
\includegraphics[width=0.19\textwidth,height=.18\textwidth]{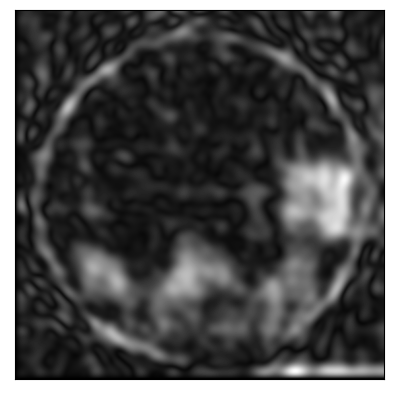}
\end{tabular}
\vspace{-5pt}
\caption{MCMC reconstruction of dynamic STEMPO tomography in the whitened space. Columns from left to right: true images, posterior mean estimates by STBP and STGP, posterior standard deviation estimates by STBP and STGP models respectively. Rows from top to bottom: time step $j = 0, 6, 13, 19$.}
\label{fig:STEMPO_MCMC}
\end{figure}

Figure \ref{fig:STEMPO_MCMC} compares the posterior estimates of the dynamic STEMPO tompography given by STBP (the second and forth columns) and STGP (the third and last columns) models.
Note although the posterior mean estimates are not as good as their MAP estimates, the posterior standard deviations by STBP (the forth column) have clearer spatial features compared with those by STGP model (the last column).

\begin{figure}[!ht]
\begin{tabular}{ccccc}
\quad Truth & \qquad \qquad Observations & \quad \qquad STBP & \qquad \qquad STGP & \quad \quad time-uncorrelated \\
\end{tabular}
\vspace{-5pt}
\begin{tabular}{ccccc}
\raisebox{2pt}{\includegraphics[width=0.185\textwidth,height=.17\textwidth]{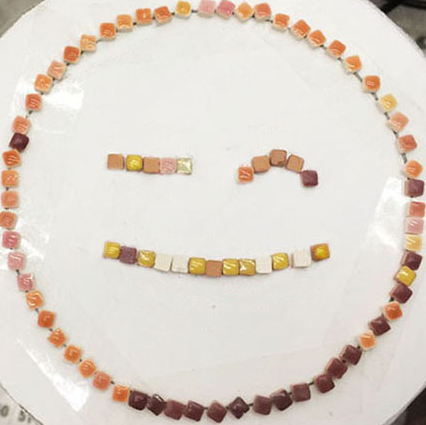}}
\includegraphics[width=0.19\textwidth,height=.18\textwidth]{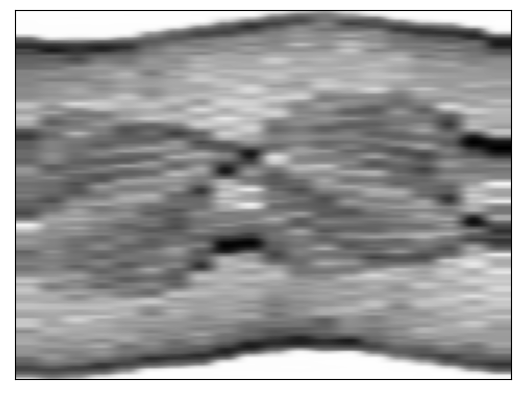}
\includegraphics[width=0.19\textwidth,height=.18\textwidth]{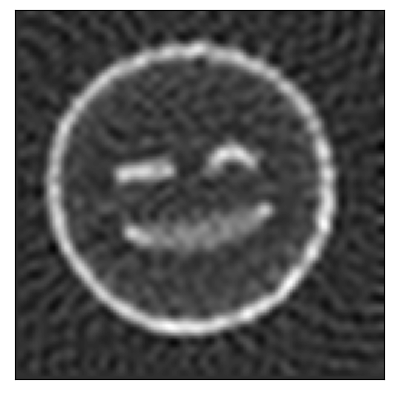}
\includegraphics[width=0.19\textwidth,height=.18\textwidth]{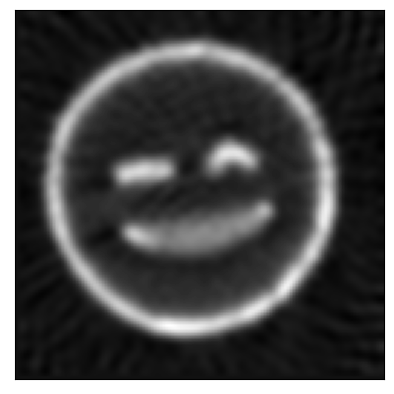}
\includegraphics[width=0.19\textwidth,height=.18\textwidth]{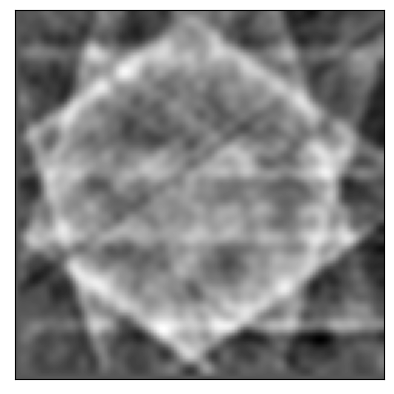}
\end{tabular}
\vspace{-5pt}
\begin{tabular}{ccccc}
\raisebox{2pt}{\includegraphics[width=0.185\textwidth,height=.17\textwidth]{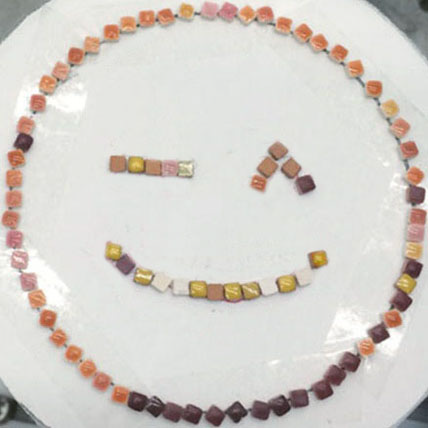}}
\includegraphics[width=0.19\textwidth,height=.18\textwidth]{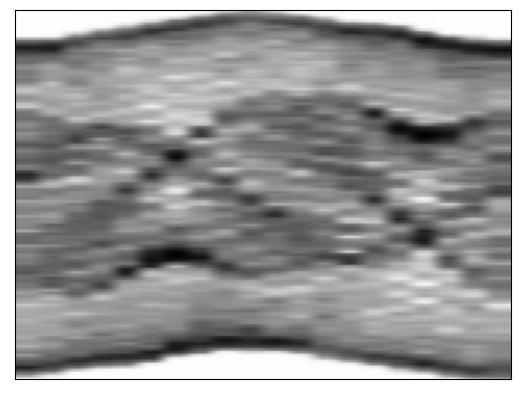} 
\includegraphics[width=0.19\textwidth,height=.18\textwidth]{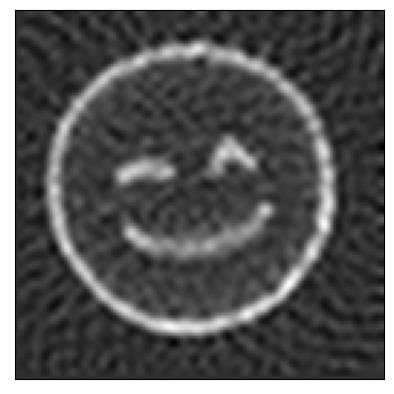} 
\includegraphics[width=0.19\textwidth,height=.18\textwidth]{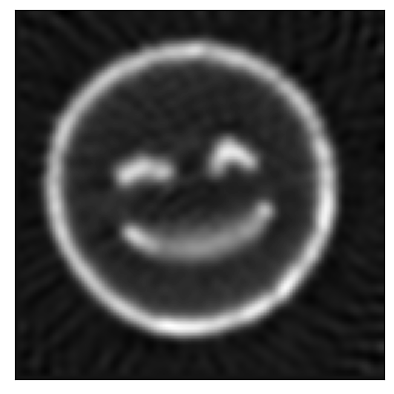} 
\includegraphics[width=0.19\textwidth,height=.18\textwidth]{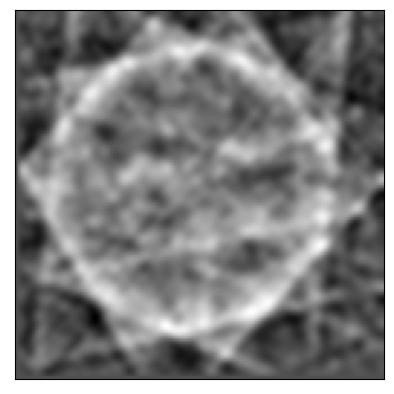} 
\end{tabular}
\vspace{-5pt}
\begin{tabular}{ccccc}
\raisebox{2pt}{\includegraphics[width=0.185\textwidth,height=.17\textwidth]{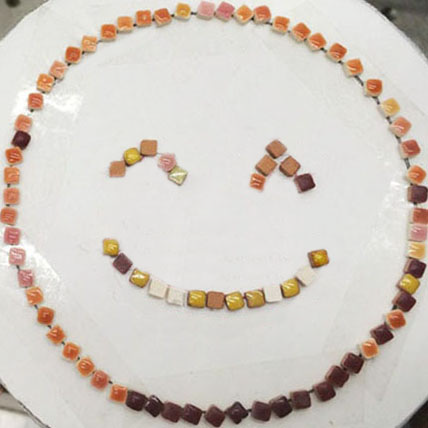}}
\includegraphics[width=0.19\textwidth,height=.18\textwidth]{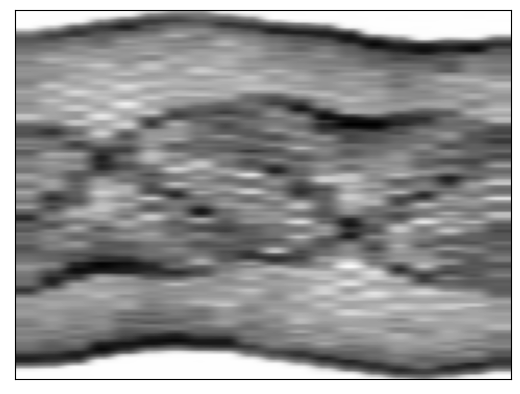}
\includegraphics[width=0.19\textwidth,height=.18\textwidth]{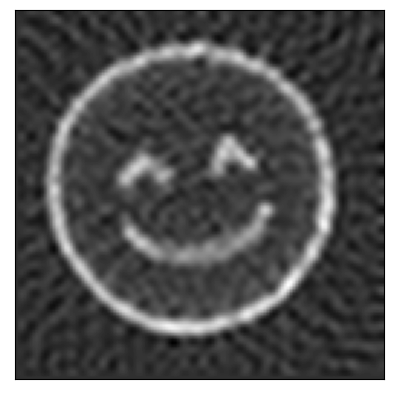}
\includegraphics[width=0.19\textwidth,height=.18\textwidth]{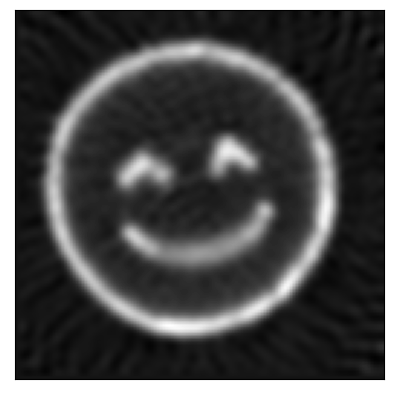}
\includegraphics[width=0.19\textwidth,height=.18\textwidth]{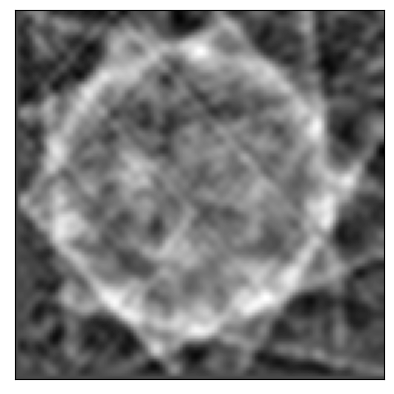}
\end{tabular}
\vspace{-5pt}
\begin{tabular}{ccccc}
\raisebox{2pt}{\includegraphics[width=0.185\textwidth,height=.17\textwidth]{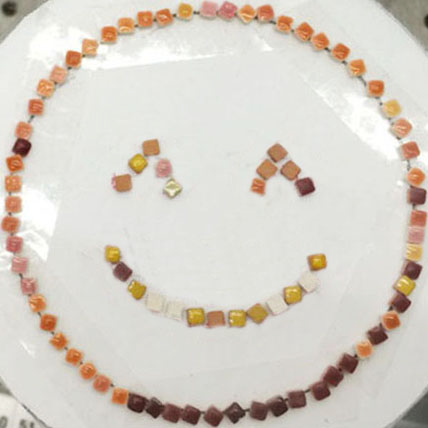}}
\includegraphics[width=0.19\textwidth,height=.18\textwidth]{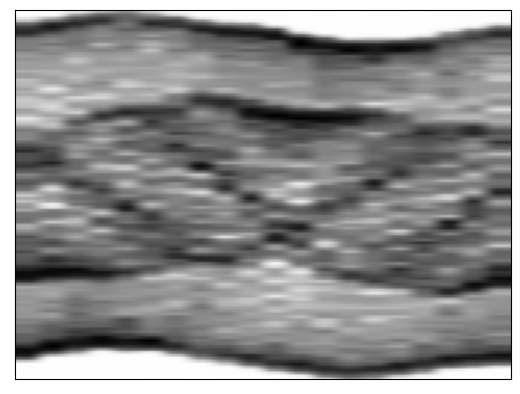}
\includegraphics[width=0.19\textwidth,height=.18\textwidth]{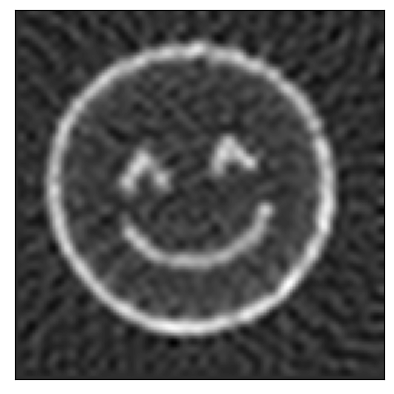}
\includegraphics[width=0.19\textwidth,height=.18\textwidth]{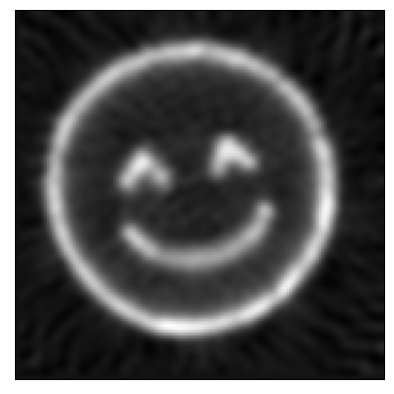}
\includegraphics[width=0.19\textwidth,height=.18\textwidth]{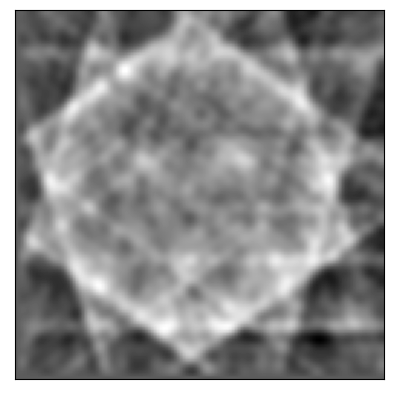}
\end{tabular}
\vspace{-5pt}
\caption{MAP reconstruction for the dynamic emoji tomography in the whitened space. Columns from left to right: true images, sinograms, MAP estimates by STBP, STGP and time-uncorrelated models respectively. Rows from top to bottom: time step $j = 6, 14, 22, 30$.
}
\label{fig:emoji_30proj_whiten_MAP}
\end{figure}
\subsubsection{Emoji Tomography}\label{apx:emoji}
Next, we test our methods on a real data of dynamic ``emoji'' phantom measured at the
University of Helsinki \citep[See more details in][about the machine (forward operator) set-up and data collection]{meaney2018tomographic}. 
The available spatiotemporal data represent $J=33$ time steps of a series of the X-ray sinogram of emojis made of small ceramic stones obtained by shining $n_s=217$ X-ray projections from $n_a=10$ angles.

\begin{figure}[ht]
\begin{tabular}{ccccc}
\quad Truth & \qquad \qquad Observations & \quad \qquad STBP & \qquad \qquad STGP & \quad \quad time-uncorrelated \\
\end{tabular}
\vspace{-5pt}
\begin{tabular}{ccccc}
\raisebox{2pt}{\includegraphics[width=0.185\textwidth,height=.17\textwidth]{emoji/true_pics/smile_true_3.png}}
\includegraphics[width=0.19\textwidth,height=.18\textwidth]{emoji/sino_60proj/emoji_06.png}
\includegraphics[width=0.19\textwidth,height=.18\textwidth]{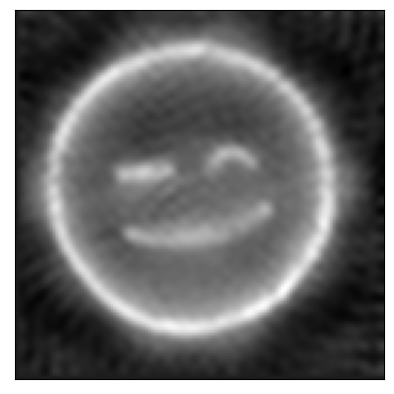}
\includegraphics[width=0.19\textwidth,height=.18\textwidth]{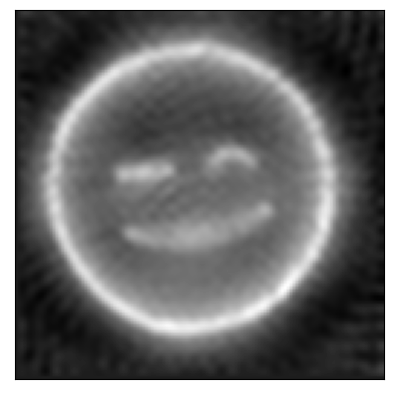}
\includegraphics[width=0.19\textwidth,height=.18\textwidth]{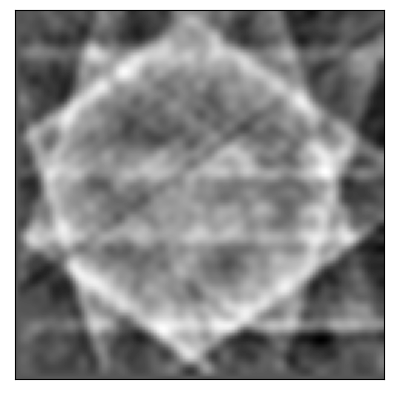}
\end{tabular}
\vspace{-5pt}
\begin{tabular}{ccccc}
\raisebox{2pt}{\includegraphics[width=0.185\textwidth,height=.17\textwidth]{emoji/true_pics/smile_true_7.png}}
\includegraphics[width=0.19\textwidth,height=.18\textwidth]{emoji/sino_60proj/emoji_14.png} 
\includegraphics[width=0.19\textwidth,height=.18\textwidth]{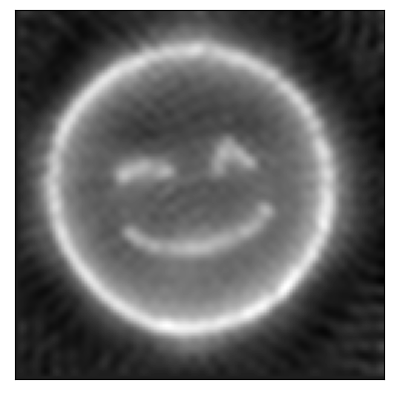} 
\includegraphics[width=0.19\textwidth,height=.18\textwidth]{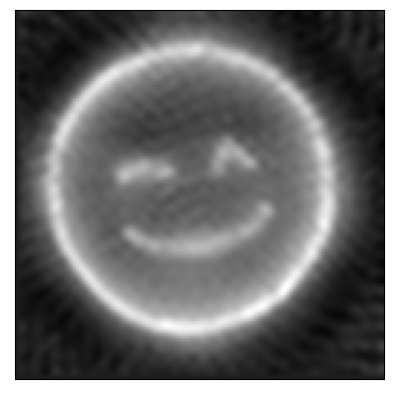} 
\includegraphics[width=0.19\textwidth,height=.18\textwidth]{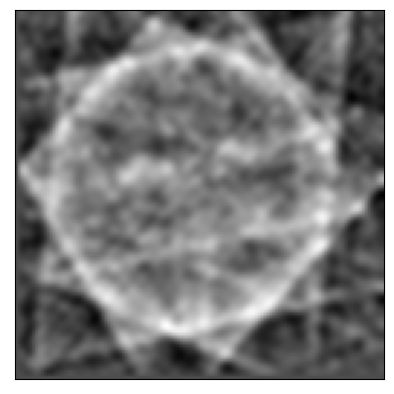} 
\end{tabular}
\vspace{-5pt}
\begin{tabular}{ccccc}
\raisebox{2pt}{\includegraphics[width=0.185\textwidth,height=.17\textwidth]{emoji/true_pics/smile_true_11.png}}
\includegraphics[width=0.19\textwidth,height=.18\textwidth]{emoji/sino_60proj/emoji_22.png}
\includegraphics[width=0.19\textwidth,height=.18\textwidth]{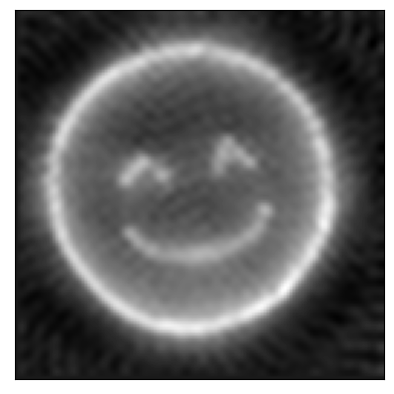}
\includegraphics[width=0.19\textwidth,height=.18\textwidth]{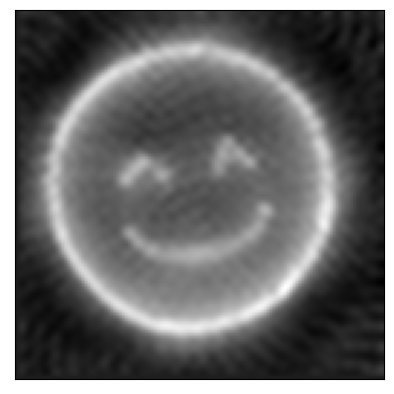}
\includegraphics[width=0.19\textwidth,height=.18\textwidth]{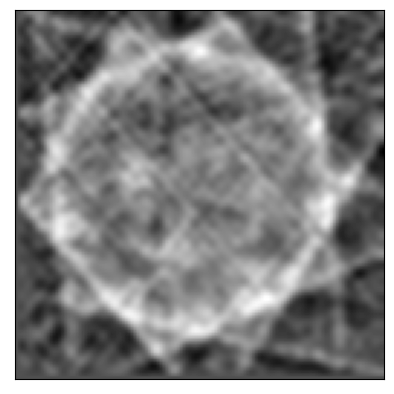}
\end{tabular}
\vspace{-5pt}
\begin{tabular}{ccccc}
\raisebox{2pt}{\includegraphics[width=0.185\textwidth,height=.17\textwidth]{emoji/true_pics/smile_true_15.png}}
\includegraphics[width=0.19\textwidth,height=.18\textwidth]{emoji/sino_60proj/emoji_30.png}
\includegraphics[width=0.19\textwidth,height=.18\textwidth]{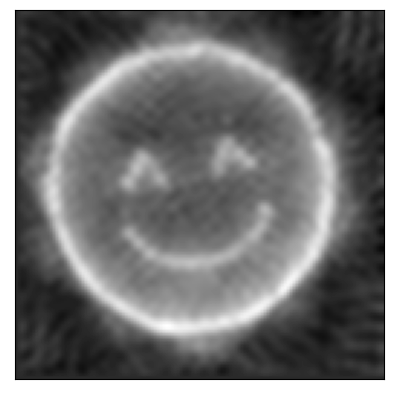}
\includegraphics[width=0.19\textwidth,height=.18\textwidth]{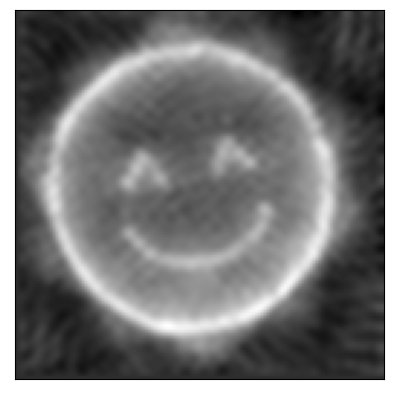}
\includegraphics[width=0.19\textwidth,height=.18\textwidth]{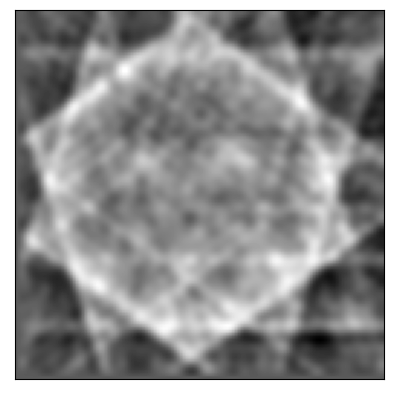}
\end{tabular}
\vspace{-5pt}
\caption{MAP reconstruction for the dynamic emoji tomography in the original space. Columns from left to right: true images, sinograms, MAP estimates by STBP, STGP and time-uncorrelated models respectively. Rows from top to bottom: time step $j = 6, 14, 22, 30$.
}
\label{fig:emoji_30proj_MAP}
\end{figure}

The inverse problem involves reconstructing a sequence of images $u(\bX, t_j)$, $t=1,2,\dots, J$, each of size $I = 128\times 128$, from low-dose observations measured at a limited number of $n_a$ angles. Hence, the unknown images are collected in $\bU = u(\bX, \bt) \in \mbR^{16,384\times 33}$, representing the dynamic sequence of the emoji images changing from an expressionless face with closed eyes and a straight mouth to a face with smiling eyes and mouth, where the outmost circular shape does not change. We refer to Figure \ref{fig:emoji_30proj_whiten_MAP} for a sample of 4 setup images (first column) and sinograms (second column) at time steps $t= 6,14,22,30$.
The low-dose observations are modeled as in the model \eqref{eq:STBP_model}: $\bY_{j}\sim \mN_{2170}(\mG_j(u^\dagger(\bX, t_j)), \Gamma_\mathrm{noise})$ with $\Gamma_\mathrm{noise}$ being the empirical covariance obtained from $J=33$ images, and measurement matrix $\mG_j$ being the result of the same Radon transform as above that represents line integrals \citep{meaney2018tomographic}.
Although the ground truth is not available, we can qualitatively compare the visual outputs from STBP, STGP and time-uncorrelated models.

\begin{figure}[htbp]
\begin{tabular}{ccccc}
\quad Truth & \qquad \qquad STBP (mean) & \quad STGP (mean) & \quad STBP (std) & \quad STGP (std)  \\
\end{tabular}
\vspace{-5pt}
\begin{tabular}{ccccc}
\raisebox{2pt}{\includegraphics[width=0.185\textwidth,height=.17\textwidth]{emoji/true_pics/smile_true_3.png}}
\includegraphics[width=0.19\textwidth,height=.18\textwidth]{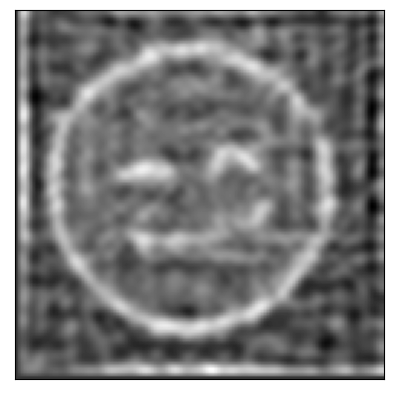}
\includegraphics[width=0.19\textwidth,height=.18\textwidth]{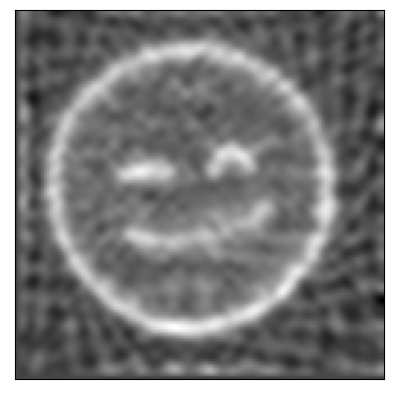}
\includegraphics[width=0.19\textwidth,height=.18\textwidth]{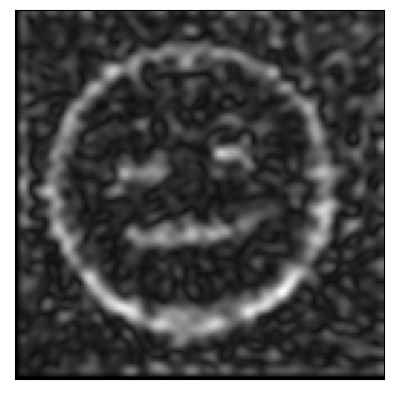}
\includegraphics[width=0.19\textwidth,height=.18\textwidth]{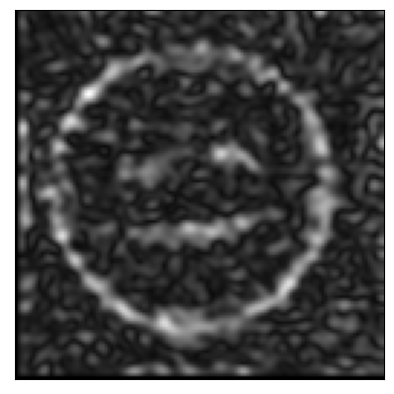}
\end{tabular}
\vspace{-5pt}
\begin{tabular}{ccccc}
\raisebox{2pt}{\includegraphics[width=0.185\textwidth,height=.17\textwidth]{emoji/true_pics/smile_true_7.png}}
\includegraphics[width=0.19\textwidth,height=.18\textwidth]{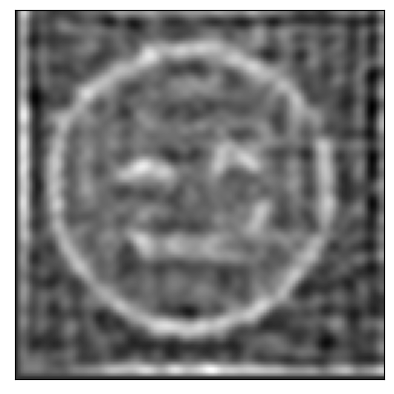}
\includegraphics[width=0.19\textwidth,height=.18\textwidth]{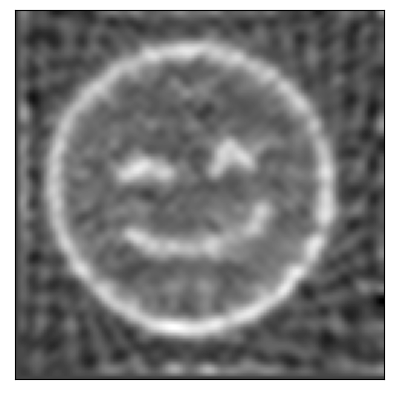}
\includegraphics[width=0.19\textwidth,height=.18\textwidth]{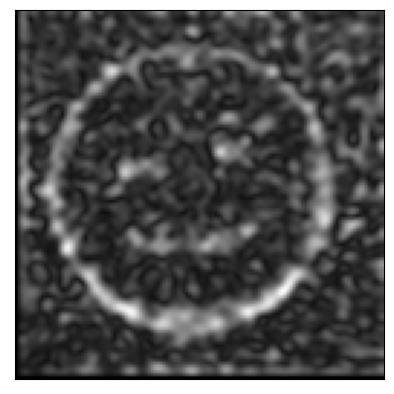}
\includegraphics[width=0.19\textwidth,height=.18\textwidth]{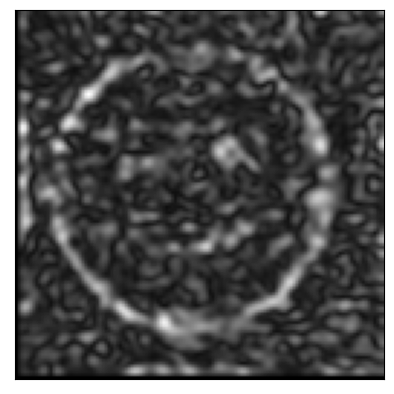}
\end{tabular}
\vspace{-5pt}
\begin{tabular}{ccccc}
\raisebox{2pt}{\includegraphics[width=0.185\textwidth,height=.17\textwidth]{emoji/true_pics/smile_true_11.png}}
\includegraphics[width=0.19\textwidth,height=.18\textwidth]{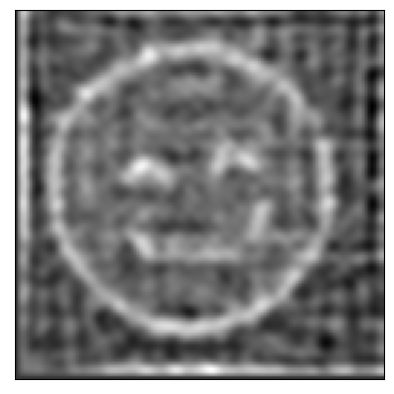}
\includegraphics[width=0.19\textwidth,height=.18\textwidth]{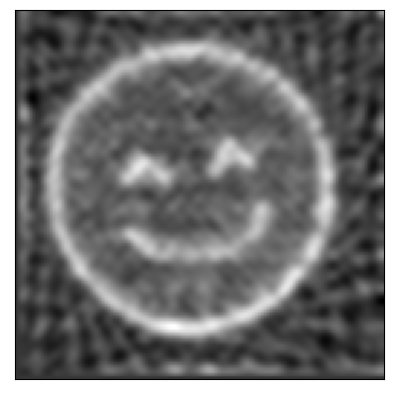}
\includegraphics[width=0.19\textwidth,height=.18\textwidth]{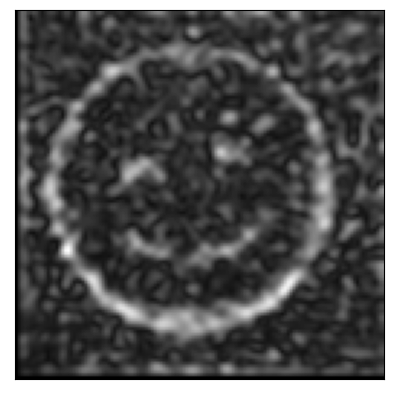}
\includegraphics[width=0.19\textwidth,height=.18\textwidth]{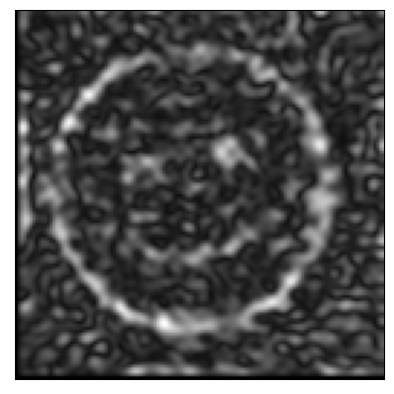}
\end{tabular}
\vspace{-5pt}
\begin{tabular}{ccccc}
\raisebox{2pt}{\includegraphics[width=0.185\textwidth,height=.17\textwidth]{emoji/true_pics/smile_true_15.png}}
\includegraphics[width=0.19\textwidth,height=.18\textwidth]{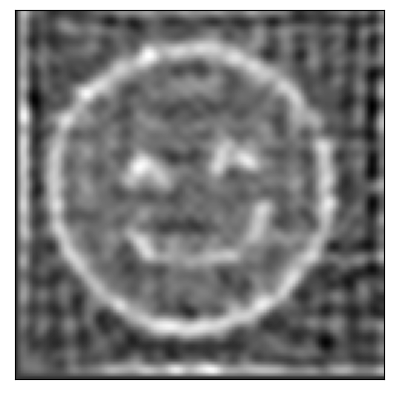}
\includegraphics[width=0.19\textwidth,height=.18\textwidth]{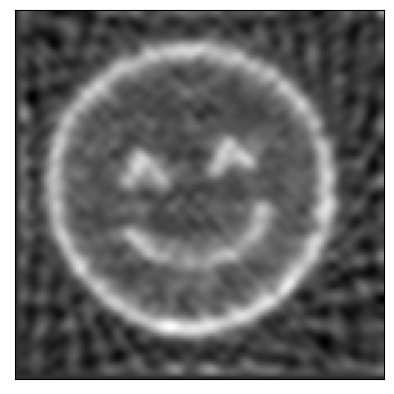}
\includegraphics[width=0.19\textwidth,height=.18\textwidth]{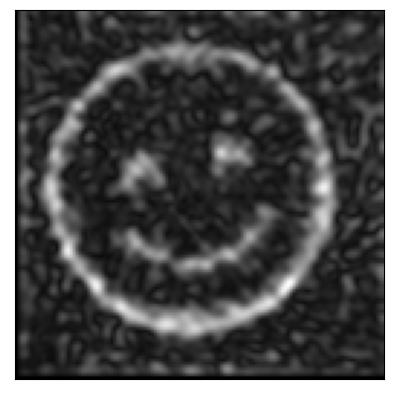}
\includegraphics[width=0.19\textwidth,height=.18\textwidth]{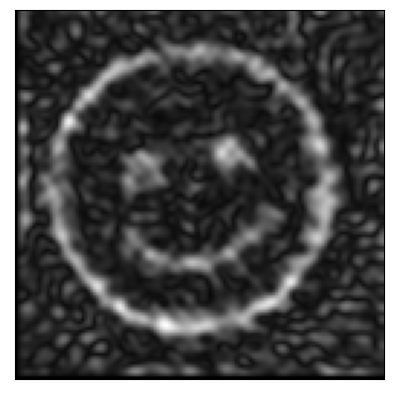}
\end{tabular}
\vspace{-5pt}
\caption{MCMC reconstruction of dynamic emoji tomography in the whitened space. Columns from left to right: true images, posterior mean estimates by STBP and STGP, posterior standard deviation estimates by STBP and STGP models respectively. Rows from top to bottom: time step $j = 6, 14, 22, 30$.}
\label{fig:emoji_60proj_MCMC}
\end{figure}

Figure \ref{fig:emoji_30proj_whiten_MAP} compares the MAP estimates by STBP (the third column), STGP (the forth row) and the time-uncorrelated (the last column) prior models in the whitened space. Again we observe similar advantage in reconstructing a sequence of sharper tomography images by STBP compared with those more blurry results by STGP. Note, due to the absence of temporal correlation, the time-uncorrelated prior model reconstructs images that are noisy and difficult to recognize.

Similarly in the example of dynamic reconstruction of STEMPO tomography, MAP estimates in the original space shown in Figure \ref{fig:emoji_30proj_MAP} demonstrate similar contrast among the three models, STBP, STGP and time-uncorrelated; but are also more blurry compared with those obtain in the whitened space as in Figure \ref{fig:emoji_30proj_whiten_MAP}.

We also compare the UQ results generated by wn-$\infty$-mMALA (See Algorithm \ref{alg:wn-infMC}) for the two models, STBP and STGP, respectively in Figure \ref{fig:emoji_60proj_MCMC}. Again we observe noisy posterior mean estimates (the second and the third columns) for both models compared with MAP estimates plotted in Figure \ref{fig:emoji_30proj_whiten_MAP} due to the limited samples. However, the posterior standard deviation estimates by STBP (the forth column) are slightly clearer than those by STGP (the last column) in characterizing the uncertainty field representing the changing smiling faces.


\subsection{Navier-Stokes Inverse Problems}

\begin{figure}[t]
\begin{tabular}{ccccc}
\quad Truth & \qquad \qquad Observations & \quad \qquad STBP & \qquad \qquad STGP & \quad \quad time-uncorrelated  \\
\end{tabular}
\vspace{-5pt}
\begin{tabular}{ccccc}
\includegraphics[width=0.19\textwidth,height=.18\textwidth]{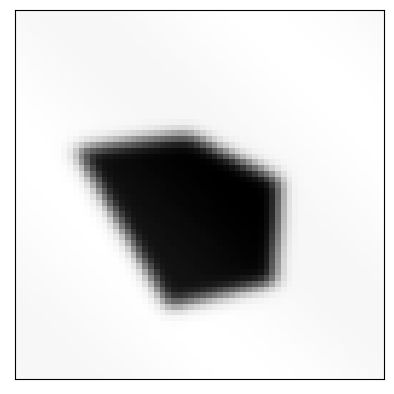}
\includegraphics[width=0.19\textwidth,height=.18\textwidth]{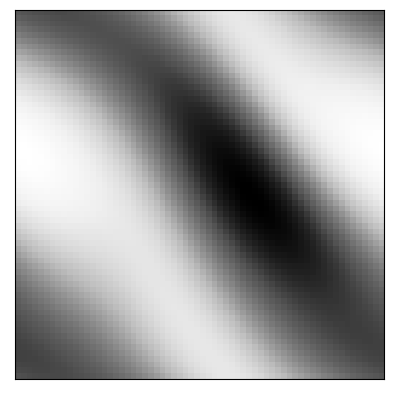}
\includegraphics[width=0.19\textwidth,height=.18\textwidth]{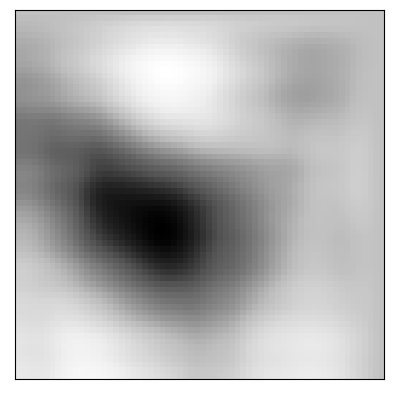}
\includegraphics[width=0.19\textwidth,height=.18\textwidth]{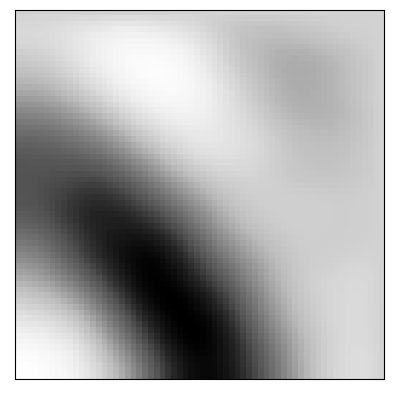}
\includegraphics[width=0.19\textwidth,height=.18\textwidth]{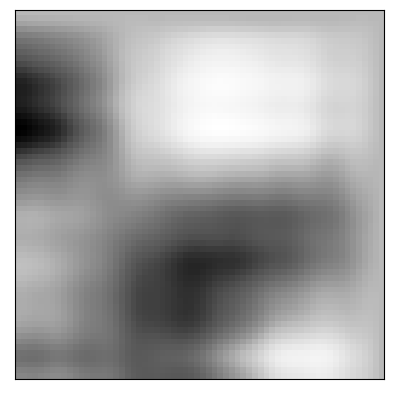}
\end{tabular}
\vspace{-5pt}
\begin{tabular}{ccccc}
\includegraphics[width=0.19\textwidth,height=.18\textwidth]{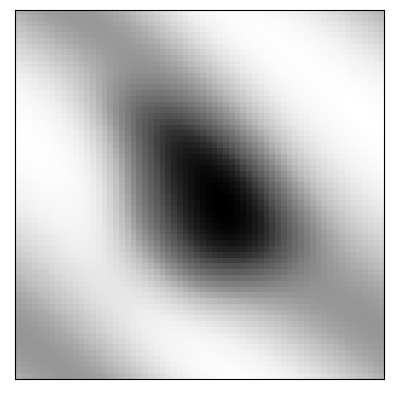}
\includegraphics[width=0.19\textwidth,height=.18\textwidth]{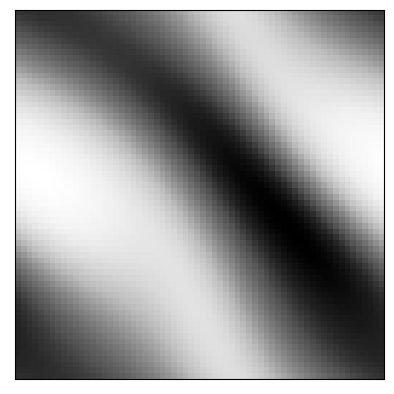} 
\includegraphics[width=0.19\textwidth,height=.18\textwidth]{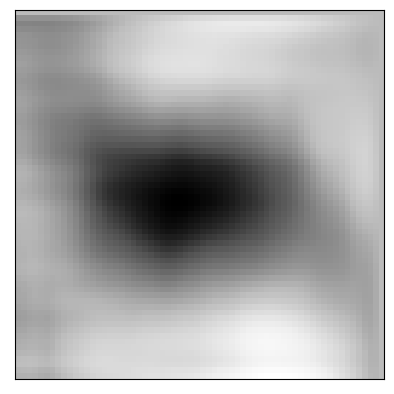} 
\includegraphics[width=0.19\textwidth,height=.18\textwidth]{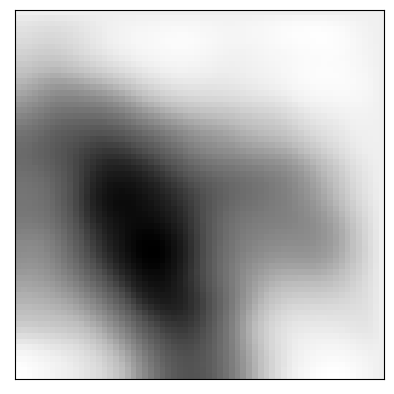} 
\includegraphics[width=0.19\textwidth,height=.18\textwidth]{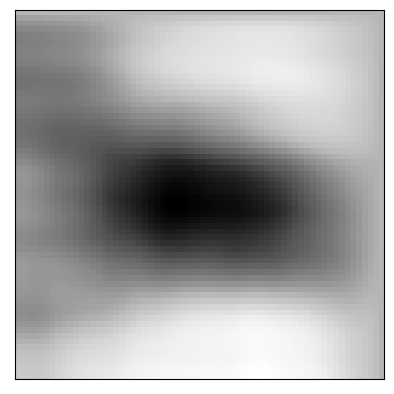} 
\end{tabular}
\vspace{-5pt}
\begin{tabular}{ccccc}
\includegraphics[width=0.19\textwidth,height=.18\textwidth]{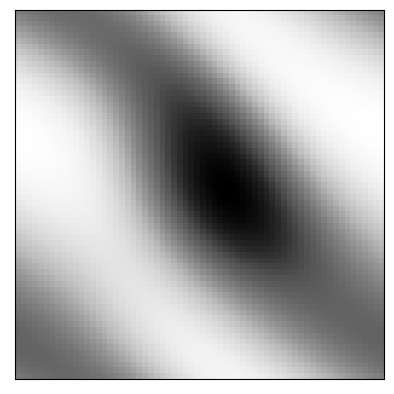}
\includegraphics[width=0.19\textwidth,height=.18\textwidth]{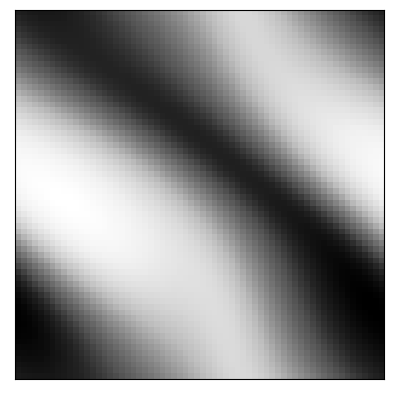}
\includegraphics[width=0.19\textwidth,height=.18\textwidth]{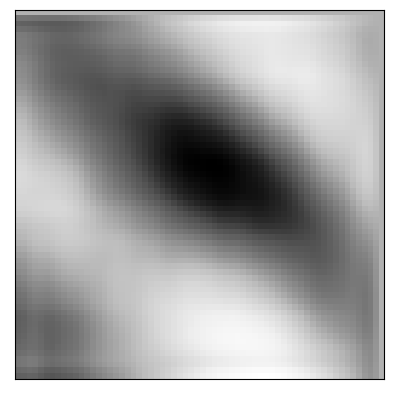}
\includegraphics[width=0.19\textwidth,height=.18\textwidth]{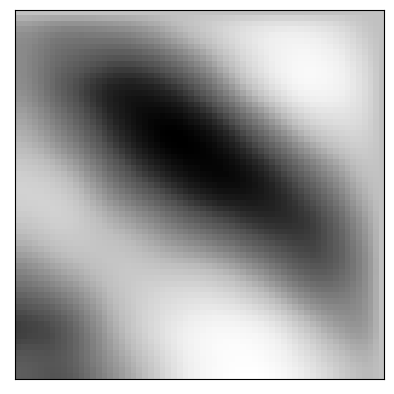}
\includegraphics[width=0.19\textwidth,height=.18\textwidth]{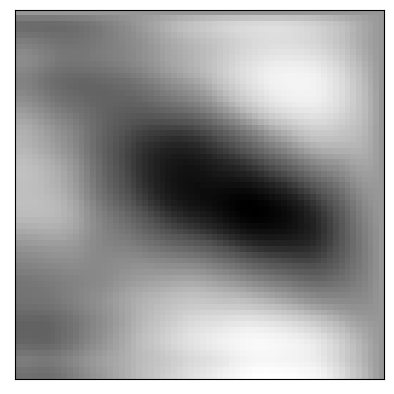}
\end{tabular}
\vspace{-5pt}
\begin{tabular}{ccccc}
\includegraphics[width=0.19\textwidth,height=.18\textwidth]{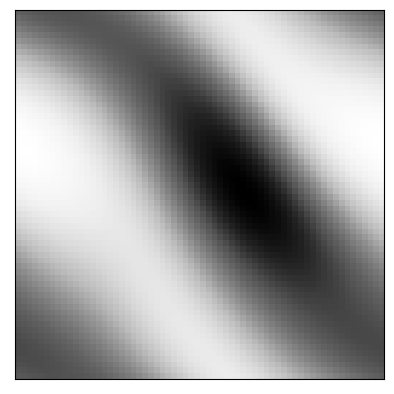}
\includegraphics[width=0.19\textwidth,height=.18\textwidth]{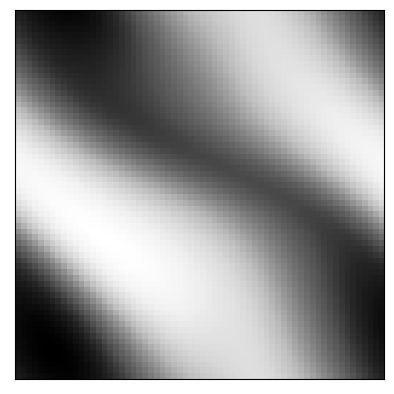}
\includegraphics[width=0.19\textwidth,height=.18\textwidth]{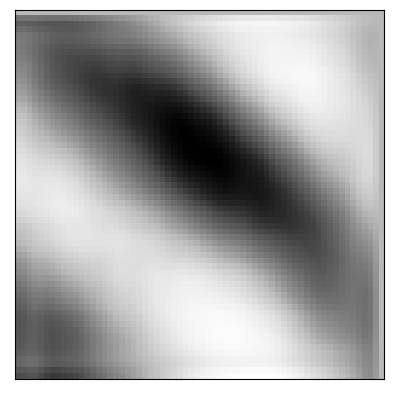}
\includegraphics[width=0.19\textwidth,height=.18\textwidth]{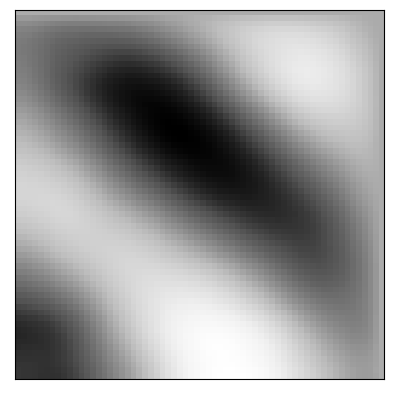}
\includegraphics[width=0.19\textwidth,height=.18\textwidth]{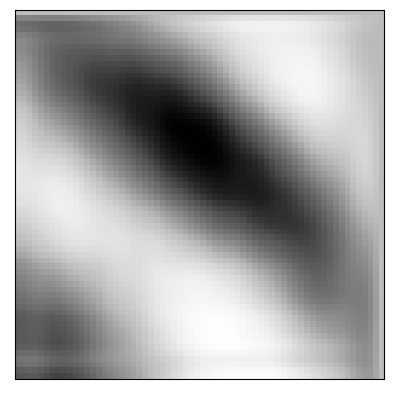}
\end{tabular}
\vspace{-5pt}
\caption{MAP inverse solutions of Navier-Stokes equation in the whitened space. Columns from left to right: true vorticity, observations, MAP estimates by STBP, STGP and time-uncorrelated models respectively. Rows from top to bottom: time step $j = 0, 3, 6, 9$.}
\label{fig:NSE_whiten_MAP}
\end{figure}

In the inverse problem governed by the Navier-Stokes equation (NSE),
we seek a spatiotemporal solution demonstrated in the first column of Figure \ref{fig:NSE_whiten_MAP}. It is more challenging than the traditional inverse problem for just the initial condition based on the same amount of downstream observations, illustrated in the second column.
The last three columns compare the MAP estimates the three models:
STBP (the third column) has the inverse solution closest to the truth, especially the initial condition at $t=0$ (the first row) which is the most challenging due the furthest distance to the observation window $(T_0,T]$. 
The time-uncorrelated prior model bears a solution trajectory that appears excessively erratic due to the lack of temporal correlation.

\begin{figure}[t]
\centering
\begin{tabular}{ccccc}
\quad $j = 0 $ & \qquad \qquad  $j =6$ & \quad \qquad \qquad  $j = 12$ &   \qquad \qquad  $j = 18$  &   \qquad \qquad  $j = 24$  \\
\end{tabular}
\vspace{-5pt}
{\includegraphics[width=0.19\textwidth,height=.19\textwidth]{NSE/observation/NSE_00.png}
\includegraphics[width=0.19\textwidth,height=.19\textwidth]{NSE/observation/NSE_06.png} 
\includegraphics[width=0.19\textwidth,height=.19\textwidth]{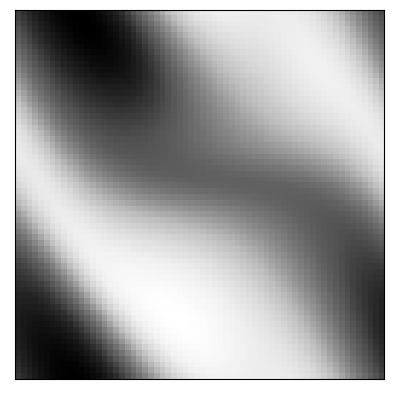}
\includegraphics[width=0.19\textwidth,height=.19\textwidth]{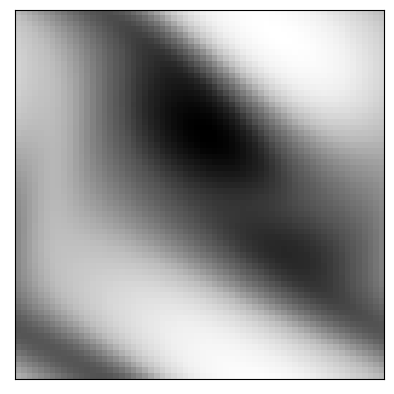}
\includegraphics[width=0.19\textwidth,height=.19\textwidth]{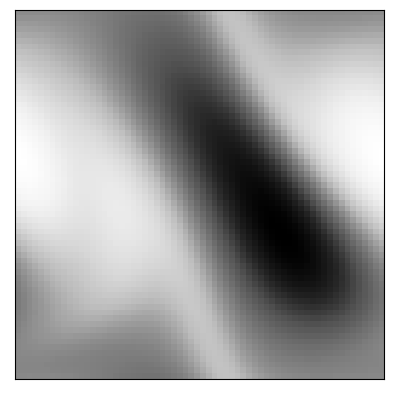}}
{\includegraphics[width=0.19\textwidth,height=.19\textwidth]{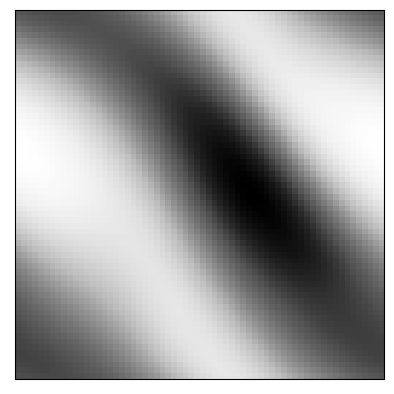}
\includegraphics[width=0.19\textwidth,height=.19\textwidth]{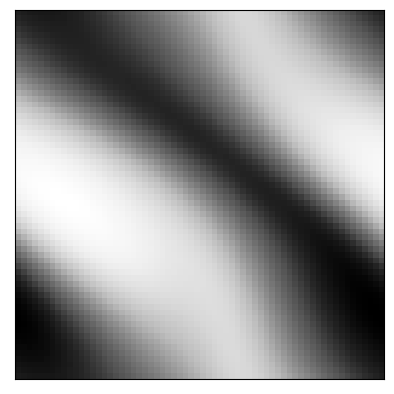} 
\includegraphics[width=0.19\textwidth,height=.19\textwidth]{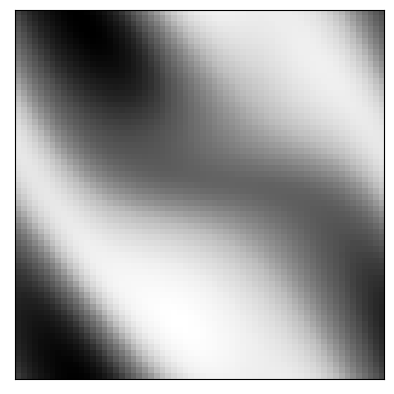}
\includegraphics[width=0.19\textwidth,height=.19\textwidth]{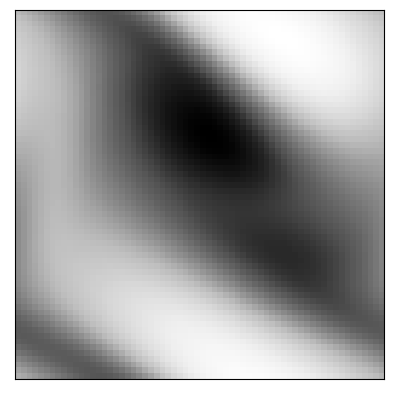}
\includegraphics[width=0.19\textwidth,height=.19\textwidth]{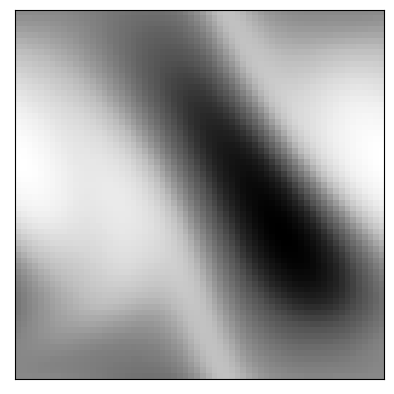}}
\caption{Observed and emulated Navier-Stokes equation solutions. Upper row: true NSE trajectory solved by classical PDE solver; Lower row: emulated NSE trajectory by FNO. Left to right: time step $j = 0, 6,12,18,24$ ($t_j\in(10,40]$).}
\label{fig:NSE_FNO}
\end{figure}

Figure \ref{fig:NSE_FNO} compares the true trajectory of the vorticity solved by the classical PDE solver (upper row) for the time window $(10,40]$ with the one emulated by FNO network (lower row). The negligible difference indicates that the trained FNO network serves as a very precise emulator of the PDE solver that can facilitate the Bayesian inference, which requires expensive repeated PDE forward solving but it is now replaced by cheap emulation.

\begin{figure}[ht]
\centering
\includegraphics[width=1\textwidth,height=.3\textwidth]{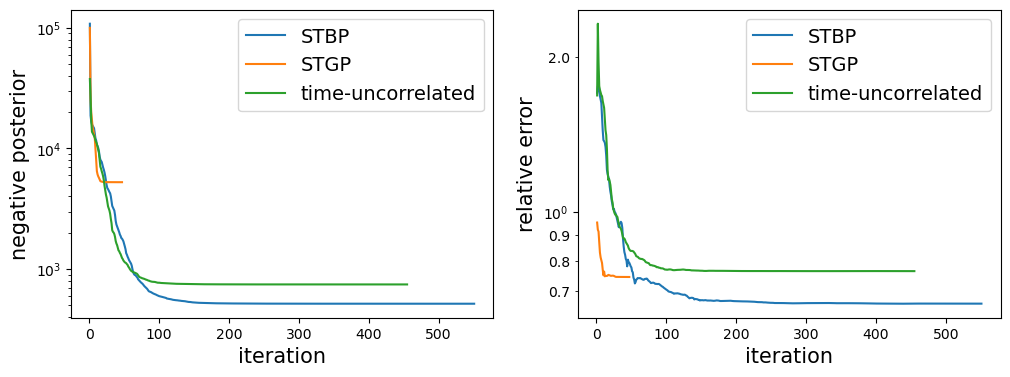}
\caption{Navier-Stokes inverse problem: negative posterior densities (left) and relative errors (right) for the optimization in the whitened space as functions of iterations in the BFGS algorithm used to obtain MAP estimates. Early termination is implemented if the error falls below some threshold or the maximal iteration (1000) is reached.}
\label{fig:NSE_err}
\end{figure}

Similarly as previous examples, Figure \ref{fig:NSE_err} shows the optimization in the whitened space converges faster to solutions with lower errors compared with that done in the original space. This confirms the benefit by our proposed white noise representation \eqref{eq:Lambda}.

\begin{figure}[ht]
\begin{tabular}{ccccc}
\quad Truth & \qquad \qquad STBP (mean) & \quad STGP (mean) & \quad STBP (std) & \quad STGP (std)  \\
\end{tabular}
\vspace{-5pt}
\begin{tabular}{ccccc}
\includegraphics[width=0.19\textwidth,height=.18\textwidth]{NSE/truth/NSE_00.png}
\includegraphics[width=0.19\textwidth,height=.18\textwidth]{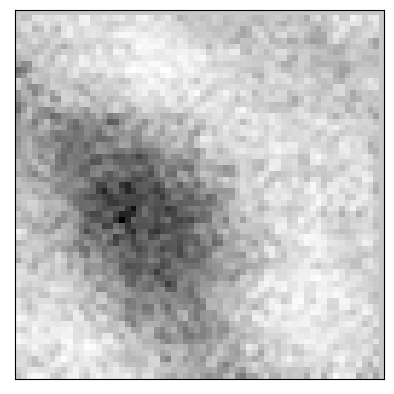}
\includegraphics[width=0.19\textwidth,height=.18\textwidth]{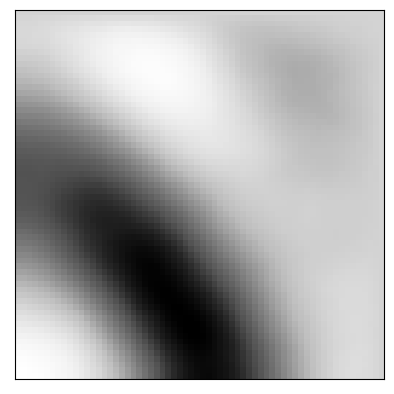}
\includegraphics[width=0.19\textwidth,height=.18\textwidth]{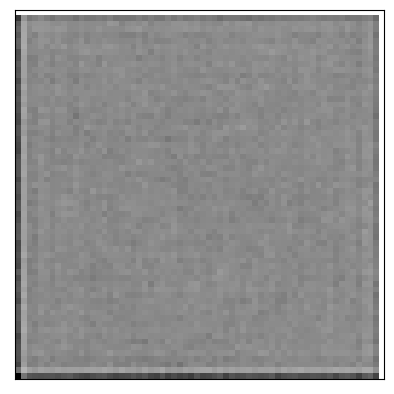}
\includegraphics[width=0.19\textwidth,height=.18\textwidth]{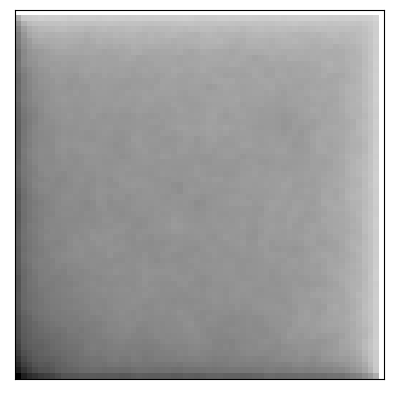}
\end{tabular}
\vspace{-5pt}
\begin{tabular}{ccccc}
\includegraphics[width=0.19\textwidth,height=.18\textwidth]{NSE/truth/NSE_03.png}
\includegraphics[width=0.19\textwidth,height=.18\textwidth]{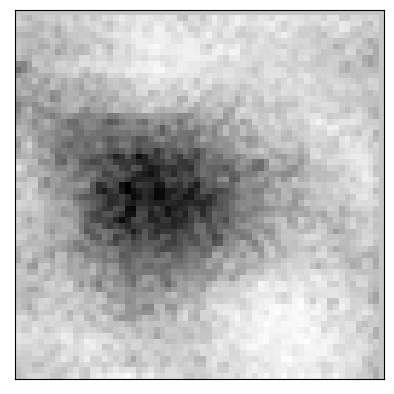}
\includegraphics[width=0.19\textwidth,height=.18\textwidth]{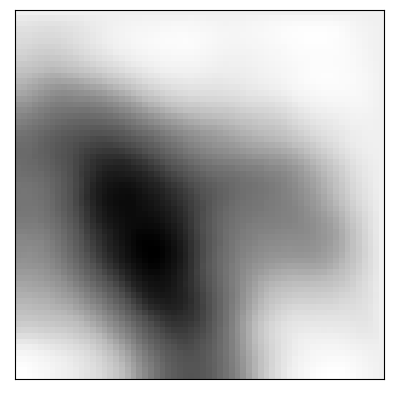}
\includegraphics[width=0.19\textwidth,height=.18\textwidth]{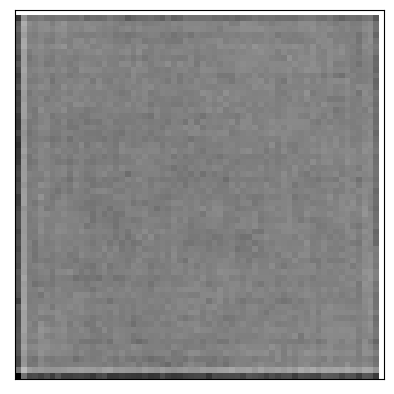} 
\includegraphics[width=0.19\textwidth,height=.18\textwidth]{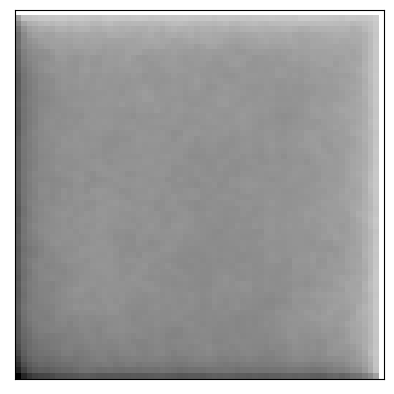} 
\end{tabular}
\vspace{-5pt}
\begin{tabular}{ccccc}
\includegraphics[width=0.19\textwidth,height=.18\textwidth]{NSE/truth/NSE_06.png}
\includegraphics[width=0.19\textwidth,height=.18\textwidth]{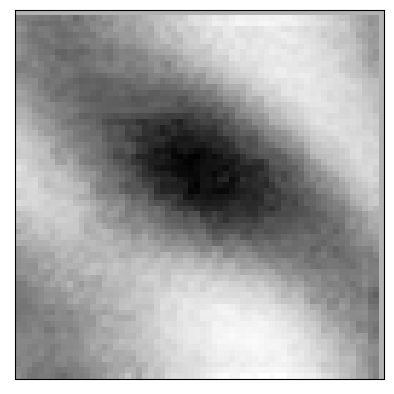}
\includegraphics[width=0.19\textwidth,height=.18\textwidth]{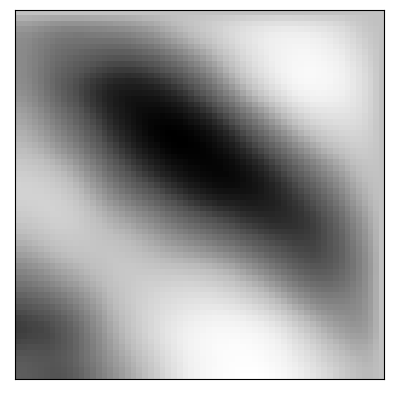}
\includegraphics[width=0.19\textwidth,height=.18\textwidth]{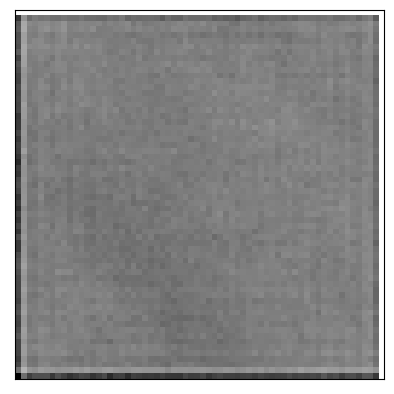}
\includegraphics[width=0.19\textwidth,height=.18\textwidth]{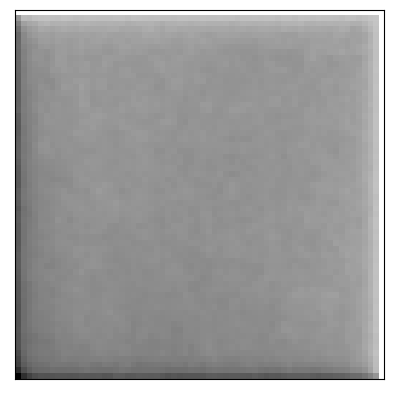}
\end{tabular}
\vspace{-5pt}
\begin{tabular}{ccccc}
\includegraphics[width=0.19\textwidth,height=.18\textwidth]{NSE/truth/NSE_09.png}
\includegraphics[width=0.19\textwidth,height=.18\textwidth]{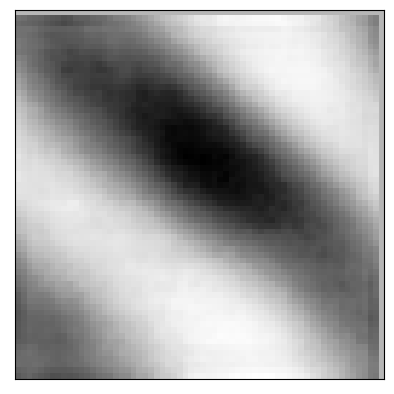}
\includegraphics[width=0.19\textwidth,height=.18\textwidth]{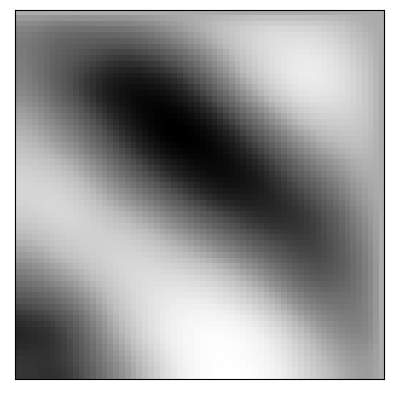}
\includegraphics[width=0.19\textwidth,height=.18\textwidth]{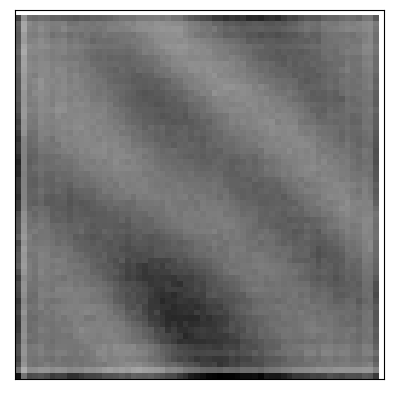}
\includegraphics[width=0.19\textwidth,height=.18\textwidth]{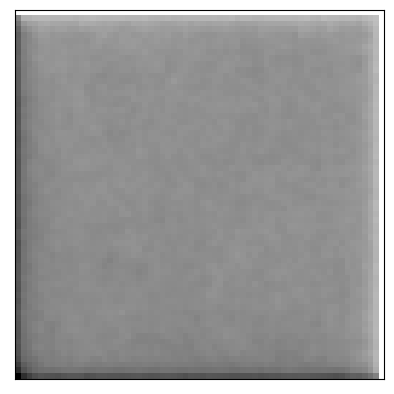}
\end{tabular}
\vspace{-5pt}
\caption{MCMC inverse solutions of Navier-Stokes equation in the whitened space. Columns from left to right: true vorticity, posterior mean estimates by STBP and STGP, posterior standard deviation estimates by STBP and STGP models respectively. Rows from top to bottom: time step $j = 0, 3, 6, 9$.}
\label{fig:NSE_MCMC}
\end{figure}

Figure \ref{fig:NSE_MCMC} compares the MCMC estimates for the first 10 unit time vorticity of NSE obtained in the whitened space by STBP and STGP models. Albeit noisy, the posterior mean estimates by STBP (the second column) manifest spatial features closer to the truth compared with the those by STGP (the thrid column) model. For the UQ, posterior standard deviation results are not very informative for both models, but a few of them at later time ($j=6, 9$) show more spatial traits in the STBP model than in the STGP model.

\subsection{NOAA Temperature Anomalies}

In the spatiotemporal imputation of NOAA temperature anomalies, Table \ref{tab:noaatmp} compares the three models in terms of relative error (RLE), negative log-posterior (NLP), and time. STBP yields the best result with the smallest RLE within time comparable to the other two models.

\renewcommand{\arraystretch}{0.6}
\begin{table}[htbp]
\caption{Comparison of MAP estimates of NOAA temperature anomalies generated by STBP, STGP and time-uncorrelated prior models in terms of RLE, negative log-posterior (NLP), and time. Standard deviations (in bracket) are obtained by repeating the experiments for 10 times with different random seeds for initialization.}
\centering
\begin{tabular}{|c|c|c|c|}
\toprule
 & time-uncorrelated  & STGP  & STBP \\
 \midrule
RLE  & 0.6356 (0.0137) &   0.3903 (0.0199) & \cellcolor{lightgray} 0.3008 (0.0008) \\
NLP  &   143172.66 (741.80) & 311310.7 (8776.36) &  146420.38 (746.16)\\
time &  \cellcolor{lightgray} 510.41 (1.51) &  555.77 (4.19) & 513.79 (4.95) \\
\bottomrule
\end{tabular}
\label{tab:noaatmp}
\end{table}

Posterior median estimates by applying wn-$\infty$-MALA (Algorithm \ref{alg:wn-infMC}) to these models are compared in Figure \ref{fig:noaatmp_MCMC}. The RLEs are $32.4\%$ for STBP, $32.43\%$ for STGP, and $40.83\%$ for time-uncorrelated model tested respectively on the $10\%$ held-out data. Similarly to Figure \ref{fig:noaatmp_MAP_whiten}, STBP generates the best imputation on both held-out data and missing values.

\begin{figure}[t]
\begin{tabular}{cccc}
\qquad Observations & \qquad \qquad STBP & \quad \qquad \qquad STGP & \quad \qquad \qquad time-uncorrelated  \\
\end{tabular}
\vspace{-5pt}
\begin{tabular}{ccccc}
\includegraphics[width=0.24\textwidth,height=.18\textwidth]{NOAATMP/obs/noaa_obs1999-1.png}
\includegraphics[width=0.24\textwidth,height=.18\textwidth]{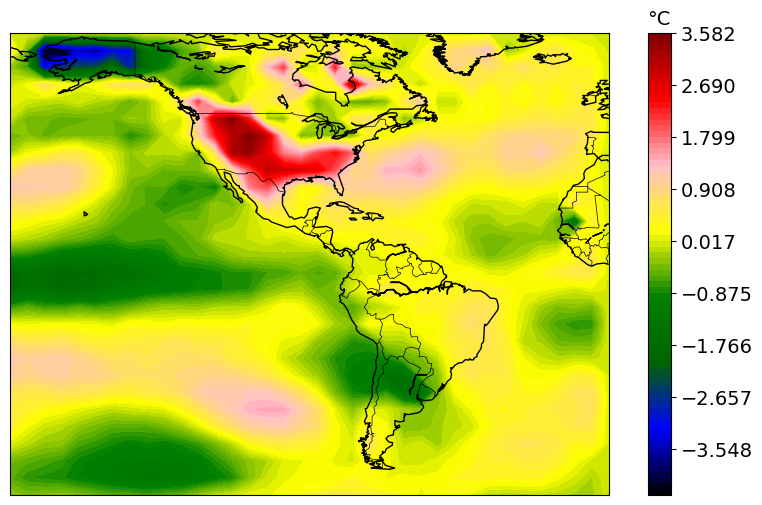}
\includegraphics[width=0.24\textwidth,height=.18\textwidth]{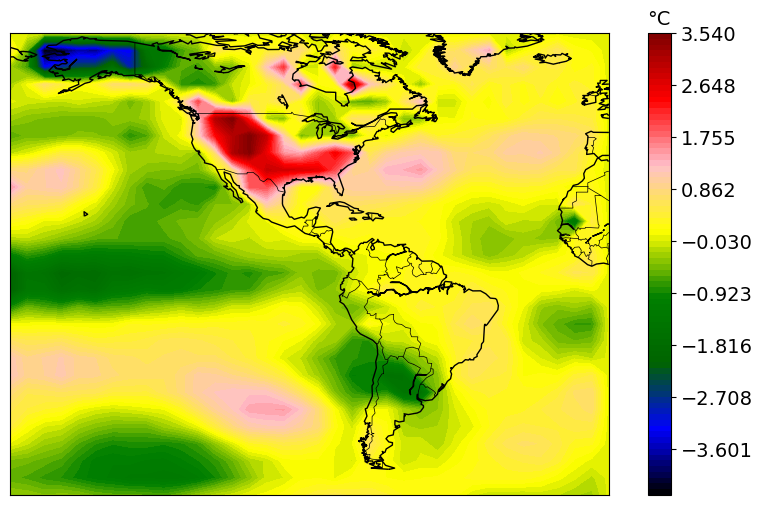}
\includegraphics[width=0.24\textwidth,height=.18\textwidth]{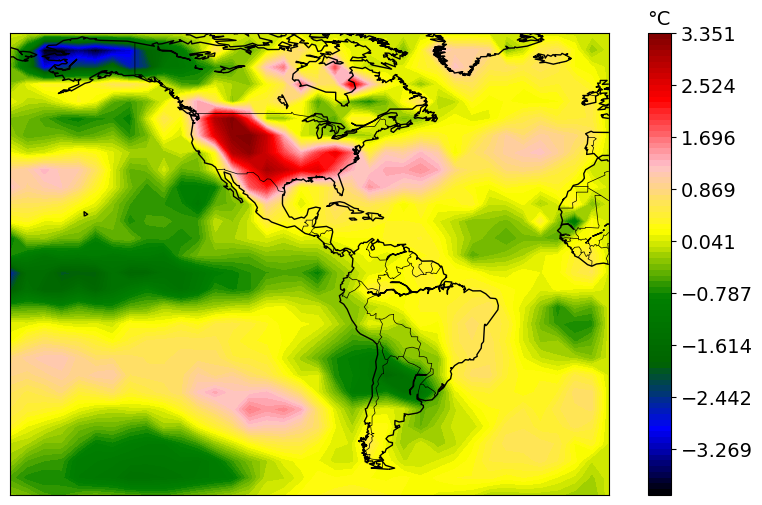}
\end{tabular}
\vspace{-5pt}
\begin{tabular}{ccccc}
\includegraphics[width=0.24\textwidth,height=.18\textwidth]{NOAATMP/obs/noaa_obs2005-1.png}
\includegraphics[width=0.24\textwidth,height=.18\textwidth]{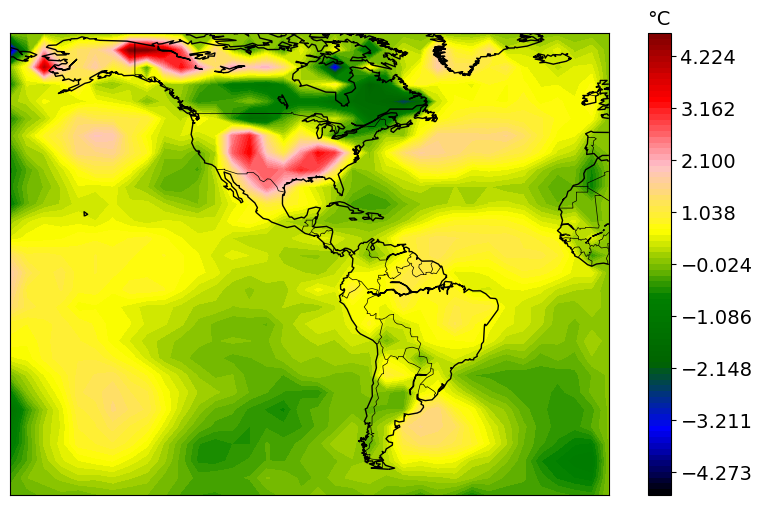}
\includegraphics[width=0.24\textwidth,height=.18\textwidth]{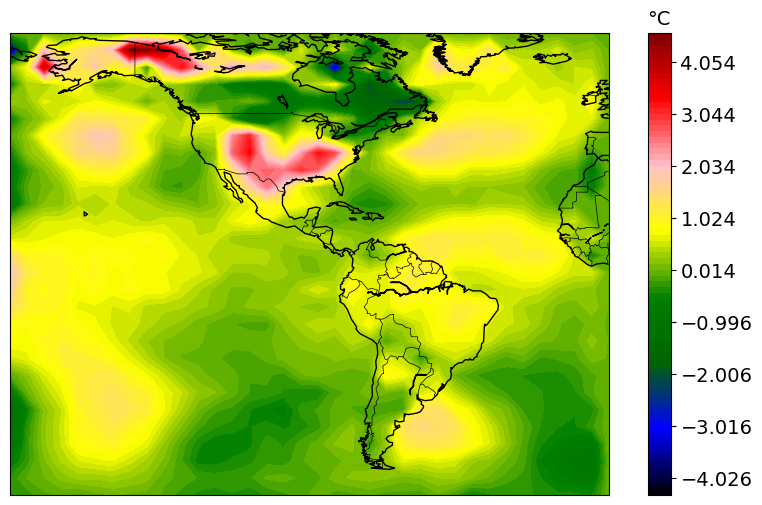}
\includegraphics[width=0.24\textwidth,height=.18\textwidth]{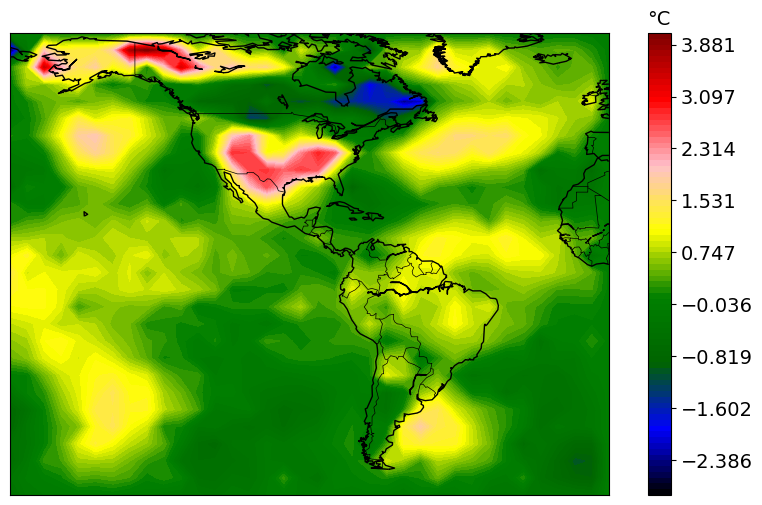}
\end{tabular}
\vspace{-5pt}
\begin{tabular}{ccccc}
\includegraphics[width=0.24\textwidth,height=.18\textwidth]{NOAATMP/obs/noaa_obs2011-1.png}
\includegraphics[width=0.24\textwidth,height=.18\textwidth]{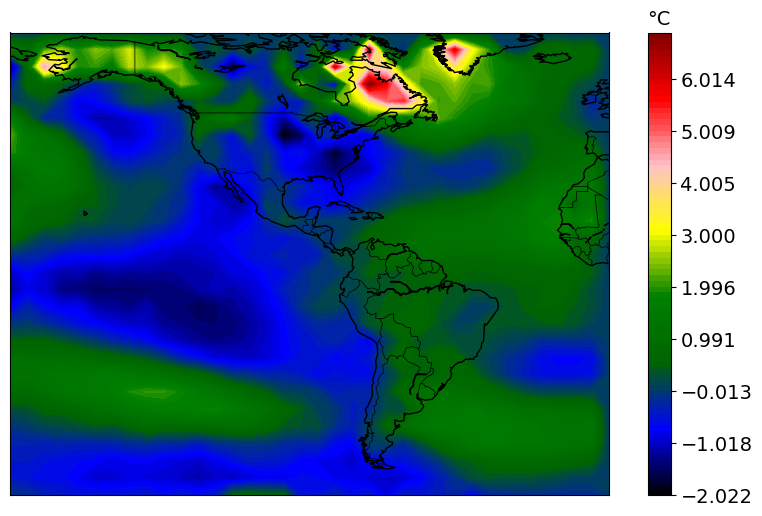}
\includegraphics[width=0.24\textwidth,height=.18\textwidth]{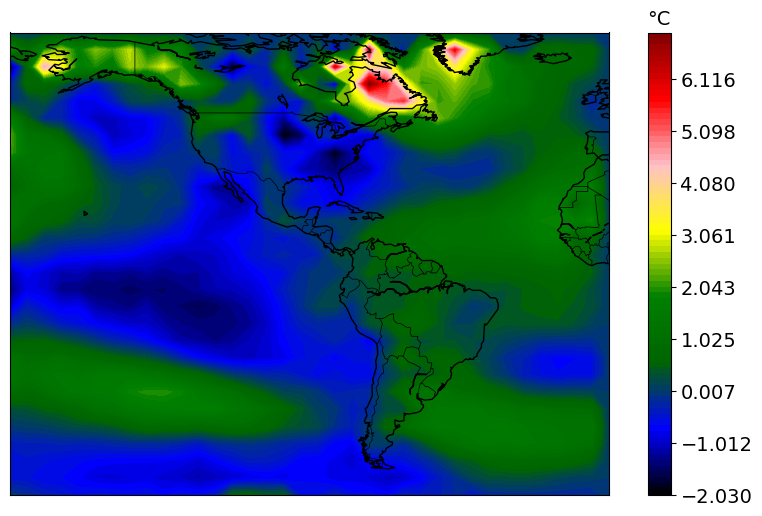}
\includegraphics[width=0.24\textwidth,height=.18\textwidth]{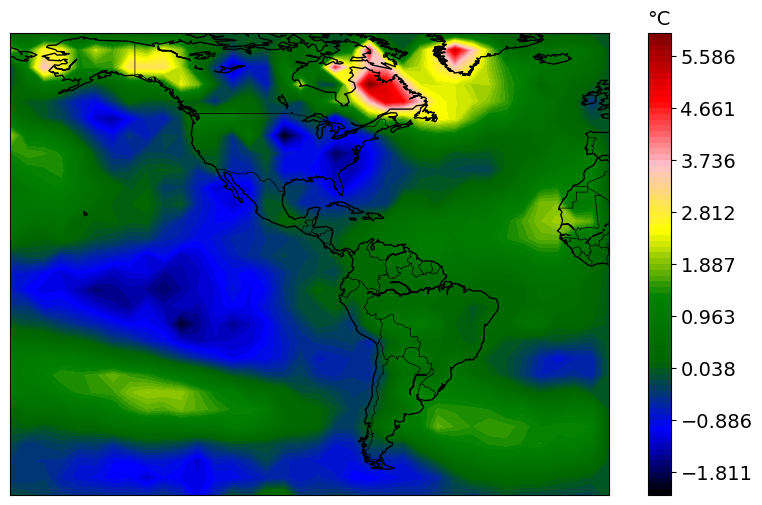}
\end{tabular}
\vspace{-5pt}
\begin{tabular}{ccccc}
\includegraphics[width=0.24\textwidth,height=.18\textwidth]{NOAATMP/obs/noaa_obs2017-1.png}
\includegraphics[width=0.24\textwidth,height=.18\textwidth]{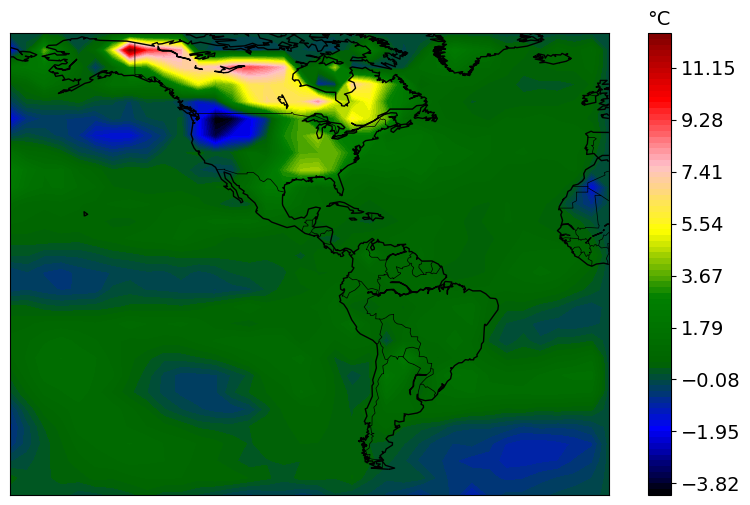}
\includegraphics[width=0.24\textwidth,height=.18\textwidth]{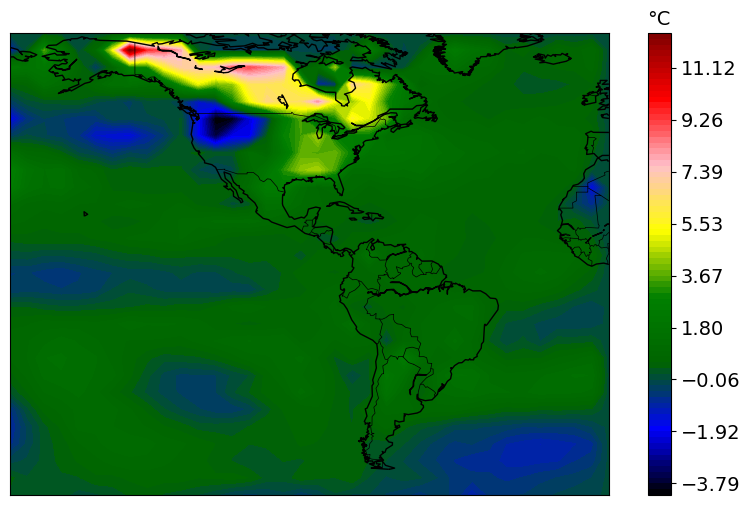}
\includegraphics[width=0.24\textwidth,height=.18\textwidth]{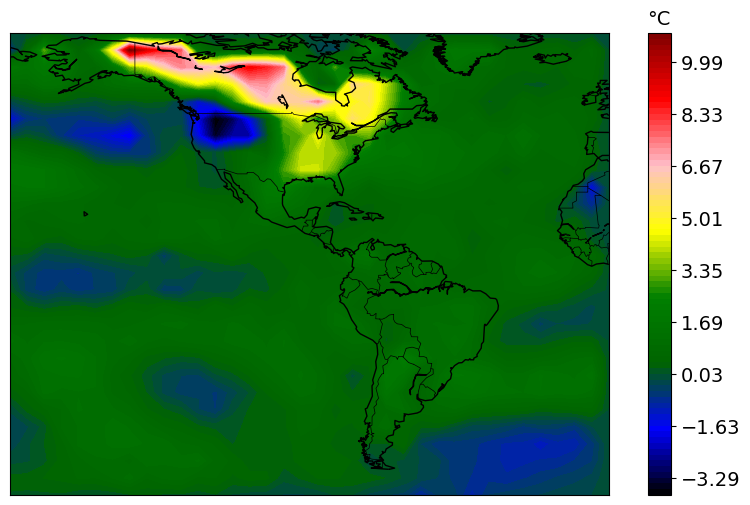}
\end{tabular}
\vspace{-5pt}
\caption{MCMC reconstruction of NOAA temperature anomalies. Columns: observations with missing values, posterior median estimates by STBP, STGP and time-uncorrelated models, respectively. Rows from top to bottom: time step $t_j = 1999, 2005, 2011$, and $2017$ (Januaries).}
\label{fig:noaatmp_MCMC}
\end{figure}

\end{document}